\newif\ifabstract
\newif\iffull
\newcommand{\myparskip}{3pt}
\newtheorem*{definition}{Definition.}[section]
\newenvironment{proofof}[1]{\noindent{\bf Proof of #1.}}%
        {\hspace*{\fill}$\Box$\par\vspace{4mm}}
\def\etal{et al.\xspace}
\def\floor#1{\lfloor {#1} \rfloor}
\def\ceil#1{\lceil {#1} \rceil}
\def\script#1{\mathcal{#1}}
\def\card#1{|#1|}
\def\mP{\script{P}}
\def\mX{\script{X}}
\DeclareMathOperator*{\argmax}{arg\,max}
\newcommand{\tc}{\tilde c}
\def\cKRV{\gamma_{\mathrm{CMG}}}
\newcommand{\gkrv}{\ensuremath{\gamma_{\mbox{\tiny{\sc CMG}}}}}
\newcommand{\gammaKRV}{\ensuremath{\gamma_{\mbox{\tiny{\sc CMG}}}}}
\newcommand{\alphaKRV}{\ensuremath{\alpha_{\mbox{\tiny{\sc CMG}}}}}
\newcommand{\alphawl}{\ensuremath{\alpha_{\mbox{\tiny{\sc BW}}}}}
\newcommand{\alphaWL}{\ensuremath{\alpha_{\mbox{\tiny{\sc BW}}}}}
\newcommand{\alphabw}{\ensuremath{\alpha_{\mbox{\tiny{\sc BW}}}}}
\newcommand{\alphaBW}{\ensuremath{\alpha_{\mbox{\tiny{\sc BW}}}}}
\newcommand{\partition}{\mathsf{PARTITION}}
\newcommand{\separate}{\mathsf{SEPARATE}}
\newcommand{\sconnect}{\overset{\mbox{\tiny{1:1}}}{\leadsto}}
\newcommand{\tw}{\mathrm{tw}}
\newcommand{\alphasc}{\ensuremath{\beta_{\mbox{\tiny{\sc ARV}}}}}
\newcommand{\algsc}{\ensuremath{{\mathcal{A}}_{\mbox{\textup{\scriptsize{ARV}}}}}\xspace}
\newcommand{\algSC}{\algsc}
\newcommand{\tH}{\tilde{H}}
\newcommand{\tpset}{\tilde{\pset}}
\newcommand{\nset}{\mathcal{N}}
\newcommand{\tU}{\tilde{U}}
\newcommand{\tT}{\tilde{T}}
\newcommand{\G}{{\mathbf{G}}}
\renewcommand{\H}{{\mathbf{H}}}
\newcommand{\event}{{\cal{E}}}
\newcommand{\polylog}[1]{\mathrm{polylog(#1)}}
\newcommand{\set}[1]{\left\{ #1 \right\}}
\newcommand{\sse}{\subseteq}
\newcommand{\tset}{{\mathcal T}}
\newcommand{\uset}{{\mathcal U}}
\newcommand{\gset}{{\mathcal G}}
\newcommand{\iset}{{\mathcal{I}}}
\newcommand{\pset}{{\mathcal{P}}}
\newcommand{\qset}{{\mathcal{Q}}}
\newcommand{\lset}{{\mathcal{L}}}
\newcommand{\bset}{{\mathcal{B}}}
\newcommand{\aset}{{\mathcal{A}}}
\newcommand{\cset}{{\mathcal{C}}}
\newcommand{\fset}{{\mathcal{F}}}
\newcommand{\mset}{{\mathcal M}}
\newcommand{\xset}{{\mathcal{X}}}
\newcommand{\wset}{{\mathcal{W}}}
\newcommand{\ttset}{\tilde{\mathcal T}}
\newcommand{\tZ}{\tilde{Z}}
\newcommand{\tA}{\tilde A}
\newcommand{\tB}{\tilde B}
\newcommand{\yset}{{\mathcal{Y}}}
\newcommand{\rset}{{\mathcal{R}}}
\newcommand{\hset}{{\mathcal{H}}}
\newcommand{\sset}{{\mathcal{S}}}
\newcommand{\nots}{\overline S}
\newcommand{\be}{\begin{enumerate}}
\newcommand{\ee}{\end{enumerate}}
\newcommand{\bd}{\begin{description}}
\newcommand{\ed}{\end{description}}
\newcommand{\bi}{\begin{itemize}}
\newcommand{\ei}{\end{itemize}}
\newtheorem{theorem}{Theorem}[section]
\newtheorem{lemma}[theorem]{Lemma}
\newtheorem{observation}[theorem]{Observation}
\newtheorem{corollary}[theorem]{Corollary}
\newtheorem{conjecture}{Conjecture}
\newtheorem{claim}[theorem]{Claim}
\newenvironment{proof}{\par \smallskip{\bf Proof:}}{\hfill\stopproof}
\def\stopproof{\square}
\def\square{\vbox{\hrule height.2pt\hbox{\vrule width.2pt height5pt \kern5pt
\vrule width.2pt} \hrule height.2pt}}
\renewcommand{\phi}{\varphi}
\newcommand{\eps}{\epsilon}
\newcommand{\half}{\ensuremath{\frac{1}{2}}}
\newcommand{\poly}{\operatorname{poly}}
\newcommand{\R}{\ensuremath{\mathbb R}}
\newcommand{\Z}{\ensuremath{\mathbb Z}}
\newcommand{\prob}[2][]{\text{\bf Pr}_{#1}\left [#2\right]}
\newenvironment{properties}[2][0]
{
\begin{enumerate} \setcounter{enumi}{#1}}{\end{enumerate}}
\newcommand{\mynote}[1]{{\sc\bf{[#1]}}}
\newcommand{\out}{\operatorname{out}}
\begin{document}

\title{
Polynomial Bounds for the Grid-Minor Theorem\footnote{A preliminary version of this paper appeared in {\em Proceedings of ACM STOC, 2014}.}
%
}
\author{Chandra Chekuri\thanks{Dept.\ of Computer Science, University
of Illinois, Urbana, IL 61801. {\tt chekuri@illinois.edu}.
Supported in part by NSF grants CCF-1016684 \& CCF-1319376,
and by TTI Chicago during a sabbatical visit in Fall 2013.}
\and 
Julia Chuzhoy\thanks{Toyota Technological Institute, Chicago, IL
60637. Email: {\tt cjulia@ttic.edu}. Supported in part by NSF CAREER 
grant CCF-0844872, NSF grant CCF-1318242 and Sloan Research Fellowship.}
}

\maketitle

\begin{abstract}
  One of the key results in Robertson and Seymour's seminal work on
  graph minors is the Grid-Minor Theorem (also called the Excluded
  Grid Theorem). The theorem states that for every grid $H$, every
  graph whose treewidth is large enough relative to $|V(H)|$
  contains $H$ as a minor. This theorem has found many applications in
  graph theory and algorithms. Let $f(k)$ denote the largest value
  such that every graph of treewidth $k$ contains a grid minor of size
  $(f(k)\times f(k))$. The best previous quantitative bound, due to
  recent work of Kawarabayashi and
  Kobayashi~\cite{KawarabayashiK-grid}, and Leaf and
  Seymour~\cite{LeafS12}, shows that $f(k)=\Omega(\sqrt{\log k/\log
    \log k})$. In contrast, the best known upper bound implies that
  $f(k) = O(\sqrt{k/\log k})$ \cite{RobertsonST94}. In this paper we
  obtain the first polynomial relationship between treewidth and grid
  minor size by showing that $f(k)=\Omega(k^{\delta})$ for some fixed
  constant $\delta > 0$, and describe a randomized algorithm, whose
  running time is polynomial in $|V(G)|$ and $k$, that with high
  probability finds a model of such a grid minor in $G$.
\end{abstract}

\label{-------------------------------------------------Intro-------------------------------------}
\section{Introduction}\label{sec: intro}

The seminal work of Roberston and Seymour on graph minors makes
essential use of the notions of tree-decompositions and treewidth. A
key structural result in their work is the Grid-Minor theorem (also
called the Excluded Grid theorem), which states that for every grid
$H$, every graph whose treewidth is large enough relative to $|V(H)|$
contains $H$ as a minor. This theorem has found many applications in
graph theory and algorithms.  Let $f(k)$ denote the largest value,
such that every graph of treewidth k contains a grid minor of size
$(f(k) \times f(k))$.
The quantitative estimate for $f$ given in the original
proof of Robertson and Seymour \cite{RS-grid} was substantially
improved by Robertson, Seymour and Thomas \cite{RobertsonST94} who
showed that $f(k) = \Omega(\log^{1/5} k)$; see
\cite{DiestelJGT99,Diestel-book} for a simpler proof with a slightly
weaker bound.  There have been recent improvements by Kawarabayashi
and Kobayashi \cite{KawarabayashiK-grid}, and by Leaf and Seymour
\cite{LeafS12},  giving the best previous bound of
$f(k) = \Omega(\sqrt{\log k/\log \log k})$. On the other hand, the known upper
bounds on $f$ are polynomial in $k$. It is easy to see, for example by
considering the complete graph on $n$ nodes, whose treewidth is $n-1$, that
$f(k) = O(\sqrt{k})$.  This can be slightly improved to $f(k) =
O(\sqrt{k/\log k})$ by considering sparse random graphs (or
$\Omega(\log n)$-girth constant-degree expanders)
\cite{RobertsonST94}. Robertson \etal \cite{RobertsonST94}
 suggest that this value may
be sufficient, and Demaine \etal \cite{DemaineHK09} conjecture that the bound of $f(k)=\Theta(k^{1/3})$ is both necessary and sufficient. It has been  an important open problem to prove a polynomial relationship between a graph's treewidth and the 
size of the largest grid minor in it.
 In this paper we prove the following theorem, which accomplishes this goal,
while also giving a polynomial-time randomized algorithm to find a model of the grid minor. Given a function $f: {\Z}^+\rightarrow {\Z}^+$, we say that $f(m)=O(\poly(m))$, if  $f(m)=O(m^c)$ for some constant $c$ independent of $m$. Similarly, we say that $f(m)=O(\poly\log m)$, if $f(m)=O(\log^cm)$ for some constant $c>0$ independent of $m$.
We use notation $\Omega(\poly m)$ and $\Omega(\poly\log m)$ analogously.

\begin{theorem}\label{thm: main}
There is a universal constant $\delta > 0$, such that for every $k\geq 1$, every graph $G$ of treewidth $k$ contains a grid of size
  $\left (\Omega(k^{\delta}/\poly\log k)\times\Omega(k^{\delta}/\poly\log k)\right )$ as a minor. Moreover, there is a randomized algorithm that, given $G$, with high probability outputs a model of the grid minor in time $O(\poly(|V(G)|\cdot k))$.

\end{theorem}

Our proof shows that $\delta$ is at least $1/98$ in the preceding
theorem.
 We note that the relationship between grid minors and treewidth is much tighter in some special classes of graphs. In planar graphs $f(k) = \Omega(k)$ \cite{RobertsonST94}; a similar linear relationship is known in bounded-genus graphs \cite{DemaineFHT05} and graphs that exclude a fixed graph $H$ as a minor \cite{DemaineH08} (see also \cite{KawarabayashiK-grid}).
  
We obtain the following corollary by observing that every simple planar graph $H$ is a minor of a grid of size
$(k' \times k')$, for $k' = O(|V(H)|)$ \cite{RobertsonST94}.
\begin{corollary}
  \label{cor:exclude-planar}
  There is a universal constant $c$ such that, if $G$ excludes a simple planar graph $H$ as a minor, then the treewidth of $G$ is $O(|V(H)|^c)$.
\end{corollary}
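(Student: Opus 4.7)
The plan is to prove the corollary by contrapositive, using Theorem~\ref{thm: main} together with the stated fact that every planar graph $H$ is a minor of an $O(|V(H)|) \times O(|V(H)|)$ grid.

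First, suppose $G$ has treewidth $k$. By Theorem~\ref{thm: main}, $G$ contains a grid-minor of size $t \times t$, where $t = \Omega(k^{\delta})$ for the universal constant $\delta > 0$ given by the theorem. Second, by the cited result of Robertson, Seymour and Thomas~\cite{RobertsonST94}, there is a universal constant $c_0$ such that the planar graph $H$ is a minor of the $c_0 |V(H)| \times c_0 |V(H)|$ grid. Since the $a \times a$ grid is a minor of the $b \times b$ grid whenever $a \leq b$, and since the minor relation is transitive, we conclude that whenever $t \geq c_0 |V(H)|$, the graph $G$ contains $H$ as a minor.

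Therefore, if $G$ excludes $H$ as a minor, we must have $\Omega(k^{\delta}) = t < c_0 |V(H)|$, which rearranges to $k = O(|V(H)|^{1/\delta})$. This proves the corollary with $c = 1/\delta$, which is a universal constant (independent of $G$ and $H$). Since Theorem~\ref{thm: main} gives $\delta \geq 1/98 - o(1)$, one obtains $c \leq 98 + o(1)$, although any finite constant suffices for the statement.

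There is essentially no obstacle here: the entire content of the corollary is the polynomial grid-minor bound of Theorem~\ref{thm: main}, combined with the classical embedding of a planar graph into a grid of linear size. The only thing to verify is the trivial monotonicity that a smaller grid is a minor of a larger one, and the transitivity of the minor relation, both of which are standard.
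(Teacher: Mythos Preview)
Your proof is correct and matches the paper's approach exactly: the paper derives the corollary directly from Theorem~\ref{thm: main} together with the cited fact from~\cite{RobertsonST94} that any planar graph $H$ is a minor of an $O(|V(H)|)\times O(|V(H)|)$ grid. Your contrapositive argument is the natural way to spell this out, and the constant $c = 1/\delta$ is precisely what the paper intends.
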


The Grid-Minor Theorem has several important applications in graph
theory and algorithms, and also in proving lower bounds. The quantitative
bounds in some of these applications can be directly improved by our
main theorem. We anticipate that there will be other applications for
our main theorem, and also for the algorithmic and graph-theoretic tools
that we develop here.

Our proof and algorithm are based on a combinatorial object, called a
path-of-sets system that we informally describe now; see
Figure~\ref{fig: path-set-system}. A path-of-sets system of width $w$
and length $\ell$ consists of a collection of $\ell$ disjoint sets of nodes
$S_1,\ldots,S_{\ell}$ together with collections of paths
$\pset_1,\ldots,\pset_{\ell-1}$ that are disjoint, which
connect the sets in a path-like fashion. The number of paths in each
set $\pset_i$ is $w$. Moreover, for each $i$, the induced graph $G[S_i]$
satisfies the following connectivity properties for the endpoints of the
paths $\pset_{i-1}$ and $\pset_{i}$ (sets $A_i$ and $B_i$ of vertices in the
figure): for every pair $A\subseteq A_i$, $B \subseteq B_i$ of vertex
subsets with $|A|=|B|$, there are $|A|$ node-disjoint paths connecting
$A$ to $B$ in $G[S_i]$.

\begin{figure}[h]
\begin{center}
  \scalebox{0.6}{\includegraphics{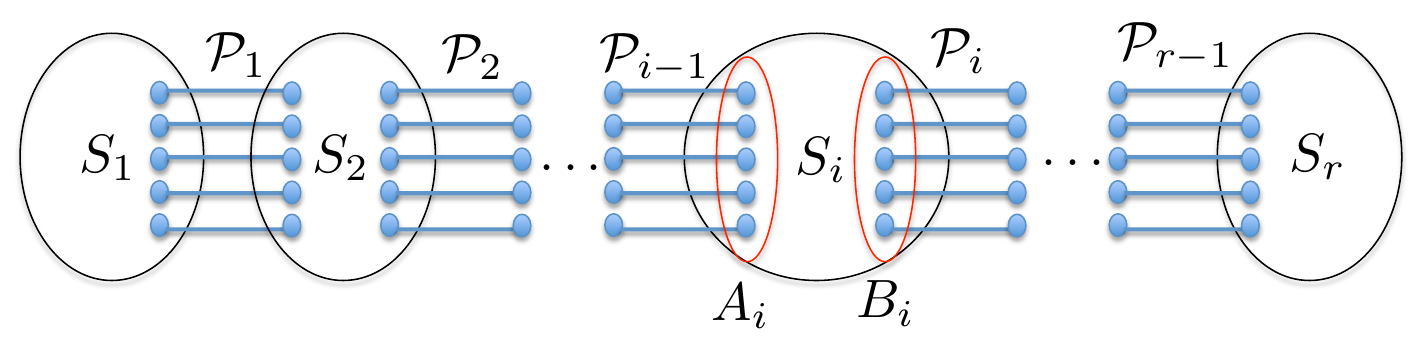}}\caption{A
    path-of-sets system of width $w$ and length $\ell$. Each set $\pset_i$ contains $w$ paths. All paths in
    $\bigcup_{i=1}^{\ell-1}\pset_i$ are node-disjoint and internally
    disjoint from $\bigcup_{i=1}^{\ell}S_i$. \label{fig:
      path-set-system}}
\end{center}
\end{figure}

Given a path-of-sets system of width $w$ and length $w$, we can
efficiently find a model of a grid minor of size $\left(\Omega(w^{1/2})\times
\Omega(w^{1/2})\right )$ in $G$, slightly strengthening a similar recent
result of Leaf and Seymour \cite{LeafS12}, who use a related
combinatorial object that they call a $(w,\ell)$-grill.
Our main contribution is to show that there is a randomized algorithm, that, given a graph $G$ of
treewidth $k$, with high probability constructs a path-of-sets system of
width $w$ and length $w$ in $G$, if $w^{c} \le O(k/\polylog k)$, where $c$ is a
fixed constant. The running time of the algorithm is polynomial in $|V(G)|$ and $k$. The central ideas for the construction build upon and
extend recent work on approximation algorithms for the Maximum Edge-Disjoint Paths
problem with constant congestion \cite{Chuzhoy11,ChuzhoyL12}, and
connections to treewidth \cite{ChekuriE13,ChekuriC13}.  In order to
construct the path-of-sets system, we use a closely related object,
called a tree-of-sets system. The definition of the tree-of-sets
system is very similar to the definition of the path-of-sets system,
except that, instead of connecting the clusters $S_i$ into a single
long path, we connect them into a tree whose maximum vertex degree is
at most $3$. We extend and strengthen the results of
\cite{Chuzhoy11,ChuzhoyL12,ChekuriE13}, by showing an efficient randomized 
algorithm, that, given a graph of treewidth $k$, with high probability constructs a large
tree-of-sets system. We then show how to construct a large
path-of-sets system, given a large tree-of-sets system.  We believe
that the tree-of-sets system is an interesting combinatorial object of
independent interest and hope that future work will yield simpler and
faster algorithms for constructing it, as well as improved parameters. This could lead to 
improvements in algorithms for related routing problems.

\paragraph{Subsequent Work.} 
Building on this work, the authors recently showed
in~\cite{tw-sparsifiers} an efficient randomized algorithm that, given
any graph $G$ of treewidth $k$, with high probability produces a
topological minor $H$ of $G$ (called a \emph{treewidth sparsifier}),
whose treewidth is $\Omega(k/\poly\log k)$, maximum vertex degree is
$3$, and $|V(H)|=O(k^4\poly\log k)$.

More recently, Chuzhoy~\cite{GMT2} has improved our bound on $\delta$
in Theorem~\ref{thm: main}, proving the theorem for $\delta=1/36$,
using a different construction of the path-of-sets system. By
combining some results and techniques from this work with this new
construction, she further improved the constant $\delta$ to
$1/19$. Her results use the treewidth sparsifier
from~\cite{tw-sparsifiers} as a starting point. We note that her proof
is non-constructive, and does not provide an algorithm to find a model
of the grid minor (though it is likely that it can be turned into an
algorithm whose running time is polynomial in $n$ and exponential in
$k$).

\label{-------------------------------------------------Prelims-------------------------------------}
\section{Preliminaries}
\label{sec: prelims}

In this paper we use the term ``efficient algorithm'' to refer
to a (possibly randomized)  algorithm that runs in time polynomial in
the length of its input.

All graphs in this paper are finite, and they do not have loops. We
say that a graph is simple to indicate that it does not have parallel
edges; otherwise, parallel edges are allowed.  Given a graph $G=(V,E)$
and a set $A\subseteq V$ of its vertices, we denote by $\out_G(A)$ the
set of all edges with exactly one endpoint in $A$ and by $E_G(A)$ the
set of all edges with both endpoints in $A$.  For disjoint sets of
vertices $A$ and $B$, the set of edges with one endpoint in $A$ and
the other in $B$ is denoted by $E_G(A,B)$.  For a vertex $v\in V$, we
denote the degree of $v$ by $d_G(v)$.  We may omit the subscript $G$
if it is clear from the context. Given a set $\pset$ of paths in $G$,
we denote by $V(\pset)$ the set of all vertices participating in paths
in $\pset$, and similarly $E(\pset)$ is the set of all edges that
participate in paths in $\pset$. We sometimes refer to sets of
vertices as \emph{clusters}. All logarithms are to the base of $2$.
We say that an event $\event$ holds with high probability, if the
probability of $\event$ is at least $1-1/n^c$ for some constant $c>1$,
where $n$ is the cardinality of vertex set of the graph in
question. We use the following simple claims several times.

\begin{claim}\label{claim: simple partition} 
  There is an efficient algorithm, that, given a set
  $\set{x_1,\ldots,x_n}$ of non-negative integers, with
  $\sum_{i}x_i=N$, and $x_i\leq 2N/3$ for all $i$, computes a
  partition $(A,B)$ of $\set{1,\ldots,n}$, such that $\sum_{i\in A}x_i
  \geq N/3$ and $\sum_{i\in B}x_i\geq N/3$.
\end{claim}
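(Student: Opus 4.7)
The plan is to give a short greedy argument with two cases, split according to whether any single $x_i$ already exceeds $N/3$.

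\textbf{Case 1.} Suppose some index $i^*$ has $x_{i^*} \geq N/3$. I would simply set $A = \{i^*\}$ and let $B$ be the complement. Then $\sum_{i \in A} x_i = x_{i^*} \geq N/3$, and using the hypothesis $x_{i^*} \leq 2N/3$, we get $\sum_{i \in B} x_i = N - x_{i^*} \geq N/3$, as required.

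\textbf{Case 2.} Otherwise, $x_i < N/3$ for every $i$. Here I would scan the indices in any fixed order and build $A$ by successively adding indices to it until the partial sum first reaches $N/3$. Let $j$ be the smallest index at which the running sum $\sum_{i=1}^{j} x_i$ is at least $N/3$; such a $j$ exists because the total is $N$ (the trivial case $N = 0$ being handled by any partition). Set $A = \{1,\ldots,j\}$ and let $B$ be the complement. By the minimality of $j$ we have $\sum_{i < j} x_i < N/3$, and since $x_j < N/3$ in this case, $\sum_{i \in A} x_i < 2N/3$, so $\sum_{i \in B} x_i > N/3$. On the other side, $\sum_{i \in A} x_i \geq N/3$ by the definition of $j$.

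Both cases run in $O(n)$ time, giving the efficient procedure claimed. The only subtlety is isolating the large-element case: a single $x_i$ with $x_i \geq N/3$ could otherwise push the running sum in the greedy step past $2N/3$ in one shot and leave $B$ too small, so it must be handled separately. Once that is done, the greedy step in Case 2 goes through without complication because the index that causes the partial sum to cross $N/3$ contributes strictly less than $N/3$ itself.
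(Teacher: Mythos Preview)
Your proof is correct. It differs from the paper's approach: the paper sorts the $x_i$ in decreasing order and greedily adds each element to whichever side currently has the smaller sum, then argues that the final imbalance is at most $x_1$ (with a separate observation when $x_1\geq N/3$). Your argument instead splits on whether some $x_{i^*}\geq N/3$; if so you isolate it as a singleton part, and otherwise you take a prefix until the partial sum first crosses $N/3$, using $x_j<N/3$ to bound the overshoot. Your route avoids sorting and runs in $O(n)$ rather than $O(n\log n)$, and the case analysis is arguably cleaner; the paper's balancing scheme, on the other hand, is a standard load-balancing argument that generalizes more readily to tighter balance guarantees. Either way the claim follows.
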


\begin{proof}
  We assume without loss of generality that $x_1\geq x_2\geq\cdots\geq x_n$, and process
  the integers in this order. When $x_i$ is processed, we add $i$ to
  $A$ if $\sum_{j\in A}x_j\leq \sum_{j\in B}x_j$, and we add it to $B$
  otherwise. We claim that at the end of this process, $\sum_{i\in
    A}x_i,\sum_{i\in B}x_i\geq N/3$. Indeed, 
  $1$ is always added to $A$. If if $x_1\geq N/3$, then, since $x_1\leq 2N/3$, it is easy to see
  that both subsets of integers sum up to at least $N/3$.  Otherwise,
  $|\sum_{i\in A}x_i-\sum_{i\in B}x_i|\leq \max_i\set{x_i}\leq x_1\leq
  N/3$.
\end{proof}

\begin{claim}
  \label{claim:path-leaves-in-tree}
  Let $T$ be a rooted tree, and $\ell,p\geq 1$ integers, such that $|V(T)| \ge \ell p$. Then either $T$ has at least $\ell$
  leaves, or there is a root-to-leaf path containing at least $p$ vertices in $T$.
\end{claim}

\begin{proof}
Suppose $T$ has fewer than $\ell$ leaves, and each root-to-leaf path
has fewer than $p$ vertices. Then, since every node belongs to some root-to-leaf
path of $T$, $|V(T)| < \ell p$, contradicting our assumption.
\end{proof}

The treewidth of a graph $G=(V,E)$ is typically defined via tree-decompositions.  A tree-decomposition of a graph $G$ consists of a tree
$T=(V(T),E(T))$ and a collection of vertex sets $\{X_v \subseteq V\}_{v \in
  V(T)}$ called \emph{bags}, such that the following two properties are
satisfied: (i) for each edge $(a,b) \in E$, there is some node $v \in
V(T)$ with both $a,b \in X_v$ and (ii) for each vertex $a \in V$, the
set of all nodes of $T$ whose bags contain $a$ induces a non-empty 
(connected) subtree of $T$. The {\em width} of a given tree-decomposition is
$\max_{v \in V(T)} |X_v| - 1$, and the \emph{treewidth} of a graph $G$,
denoted by $\tw(G)$, is the width of a minimum-width tree-decomposition of $G$.

We say that a simple graph $H$ is a \emph{minor} of a graph $G$, if $H$ can be
obtained from $G$ by a sequence of edge deletion, vertex deletion, and edge contraction
operations. 
Equivalently, a simple graph $H$ is a minor of $G$ if there is a map $\phi$, assigning to each vertex $v\in V(H)$ a subset $\phi(v)$ of vertices of $G$, and to each edge $e=(u,v)\in E(H)$ a path $\phi(e)$ connecting a vertex of $\phi(u)$ to a vertex of $\phi(v)$, such that:
\begin{itemize}
\item For every vertex $v\in V(H)$, the subgraph of $G$ induced by $\phi(v)$ is connected;
\item  If $u,v\in V(H)$ and $u\neq v$, then $\phi(u)\cap \phi(v)=\emptyset$; and
\item The paths in set $\set{\phi(e)\mid e\in E(H)}$ are internally node-disjoint, and they are internally disjoint from $\bigcup_{v\in V(H)}\phi(v)$.
\end{itemize}

A map $\phi$ satisfying these conditions is called \emph{a model of $H$ in $G$}. (We note that this definition is slightly different from the standard one, that requires that for each $e\in E(H)$, path $\phi(e)$ consists of a single edge; but it is immediate to verify that both definitions are equivalent, and it is more convenient for us to work with the above definition.) For convenience, we may sometimes refer to the map $\phi$ as the \emph{embedding} of $H$ into $G$, and specifically to $\phi(v)$ and $\phi(e)$ as the embeddings of the vertex $v\in V(H)$ and the edge $e\in E(H)$, respectively.

The $(g\times g)$-grid is a graph, whose vertex set is: $\set{v(i,j)\mid
  1\leq i,j\leq g}$.  The edge set consists of two subsets: a set of
\emph{horizontal edges} $E_1=\set{(v(i,j),v(i,j+1))\mid 1\leq i\leq g;
  1\leq j<g}$; and a set of \emph{vertical edges}
$E_2=\set{(v(i,j),v(i+1,j))\mid 1\leq i<g; 1\leq j\leq g}$. The
subgraph induced by $E_1$ consists of $g$ disjoint paths, that we
refer to as \emph{the rows of the grid}; the $i$th row is the row
incident with $v(i,1)$. Similarly, the subgraph induced by $E_2$
consists of $g$ disjoint paths, that we refer to as \emph{the columns
  of the grid}; the $j$th column is the column incident with $v(1,j)$.
We say that graph $G$ contains a $(g\times g)$-grid minor if some
minor $H$ of $G$ is isomorphic to the $(g\times g)$-grid.

\subsection{Flows and Cuts}
In this section we define standard single-commodity flows and discuss their relationships with the corresponding notions of cuts. Most definitions and results from this section can be found in standard textbooks; we refer the reader to~\cite{Schrijver} for more details.

Let $G=(V,E)$ be an edge-capacitated graph with $c(e)> 0$ denoting the
capacity of edge $e \in E$. Given two disjoint vertex subsets
$S,T\subseteq V$, let $\pset$ be the set of all paths that start at
$S$ and terminate at $T$. An $S$--$T$ flow $f: \pset \rightarrow
\mathbb{R}_+$ is an assignment of non-negative values to paths in
$\pset$. The \emph{value} of the flow is $\sum_{P\in
  \pset}f(P)$. Given a flow $f$, for each edge $e\in E$, we define a
flow through $e$ to be: $f'(e)=\sum_{P \in \pset: e \in P} f(P)$.  The
\emph{edge-congestion} of the flow is $\max_{e\in E}\set{f'(e)/c(e)}$.
We say that the flow $f$ is \emph{valid}, or that it causes no
edge-congestion, if its edge-congestion is at most $1$.  We note that
even though $|\pset|$ may be exponential in $|V|$, there are known
efficient algorithms to compute a valid flow of a specified value $F$
(if it exists), and to compute a flow of maximum value. Moreover, in
both cases, the number of paths in $\pset$ with non-zero flow value
$f(P)$ is guaranteed to be at most $|E|$. Such flows can be computed,
for example, by using an equivalent edge-based flow formulation
together with Linear Programming, and a flow-path decomposition of the
resulting solution (see~\cite{Schrijver} for more details). It is also
well known that if all edge capacities are integral, then whenever a
valid $S$--$T$ flow of an integral value $F$ exists in $G$, there is
also a valid $S$--$T$ flow $\tilde f$ of the same value, where $\tilde
f(P)$ is integral for all $P \in \pset$, and the number of paths $P$
with $\tilde f(P) > 0$ is at most $|E|$. Moreover, such a flow can be
found efficiently. Throughout the paper, whenever the edge capacities
of a given graph $G$ are not specified, we assume that they are all
unit.

A \emph{cut} in a graph $G$ is a bipartition $(A,B)$ of its vertices,
with $A,B\neq \emptyset$. We sometimes use $\overline{A}$ to denote
$V\setminus A$. The \emph{value} of the cut is the total capacity of
all edges in $E(A,B)$ (if the edge capacities of $G$ are not
specified, then the value of the cut is $|E(A,B)|$). We say that a cut
$(A,B)$ \emph{separates} $S$ from $T$ if $S\subseteq A$ and
$T\subseteq B$. The well-known max-flow min-cut theorem states that
for a graph $G$ and disjoint vertex subsets $S$, $T$, the value of
the maximum $S$--$T$ flow in $G$ is equal to the value of the minimum
cut separating $S$ from $T$ in $G$. Notice that if all edges of $G$
have unit capacities, and the value of the maximum flow from $S$ to
$T$ is $F$, then the maximum number of edge-disjoint paths connecting
the vertices of $S$ to the vertices of $T$ is also $F$, and if $E'$ is
a minimum-cardinality set of edges, such that $G\setminus E'$ contains
no path connecting a vertex of $S$ to a vertex of $T$, then $|E'|=F$.
When $S=\set{s}$ and $T=\set{t}$, then we sometimes refer to the
$S$-$T$ flow and $S$-$T$ cut as $s$-$t$ flow and $s$-$t$ cut
respectively.

Given a subset $\pset'\subseteq \pset$ of paths connecting vertices of
$S$ to vertices of $T$ in $G$, we say that the paths in $\pset'$ cause
edge-congestion at most $\eta$, if for every edge $e\in E$, the total
number of paths in $\pset'$ containing $e$ is at most $\eta \cdot c(e)$.

A variant of the $S$--$T$ flow that we sometimes use is when the
capacities are given on the graph vertices and not edges. Such a flow
$f$ is defined exactly as before, except that now, for every vertex
$v\in V$, we let $f'(v)=\sum_{\stackrel{P\in \pset:}{v\in P}}f(P)$,
and we define the congestion of the flow to be $\max_{v\in
  V}\set{f'(v)/c(v)}$. If the congestion of the flow is at most $1$,
then we say that it is a valid flow, or that the flow causes no
vertex-congestion. When all vertex capacities are integral, there is a
maximum flow $f$, such that all values $f(P)$ for all $P\in \pset$ are
integral. In particular, if all vertex-capacities are $1$, and there
is a valid $S$--$T$ flow of value $F$, then there are $F$
node-disjoint paths connecting vertices of $S$ to vertices of $T$, and
this set of paths can be found efficiently.

All the definitions and results about single-commodity flows mentioned
above carry over to directed graphs as well, except that cuts are
defined slightly differently. As before, a cut in $G$ is a
bipartition $(A,B)$ of the vertices of $G$. The value of the cut is
the total capacity of edges connecting vertices of $A$ to vertices of
$B$. The max-flow min-cut theorem remains valid in directed graphs,
with this definition of cuts. For every directed flow network, there
exists a maximum $S$--$T$ flow, in which for every pair $(e,e')$ of
anti-parallel edges, at most one of these edges carries non-zero flow;
if all edge capacities are integral, then there is a maximum flow that
is integral and has this property. This follows from the equivalent
edge-based definition of flows.  Flows in directed graphs with
capacities on vertices are defined similarly.

We will repeatedly use the following simple claim.

\begin{claim}\label{claim: large matching}
There is an efficient algorithm, that, given a bipartite graph $G=(V_1,V_2,E)$ with maximum vertex degree at most $\Delta$, computes a matching $E'\subseteq E$ of cardinality at least $|E|/\Delta$.
\end{claim}

\begin{proof}
We set up a directed flow network: start with graph $G$, assign each of its vertices capacity $1$ and direct its edges from $V_1$ to $V_2$. Add a source $s$ of infinite capacity that connects to every vertex in $V_1$ with a directed edge, and add a destination vertex $t$ of infinite capacity to which every vertex of $V_2$ connects with a directed edge. It is immediate to see that this network has a valid $s$-$t$ flow of value $|E|/\Delta$, by sending $1/\Delta$ flow units on each edge $e\in E$. From the integrality of flow, there is a valid integral flow of the same value, which defines the desired matching.
\end{proof}

\subsection {Sparsest Cut} 
Suppose we are given a graph $G=(V,E)$ and a subset $\tset\sse V$ of
$k$ vertices, called terminals. Given a cut $(S,\nots)$ in $G$ with
$S\cap \tset,\nots\cap \tset\neq \emptyset$, the \emph{sparsity} of
$(S,\nots)$ is
$\Phi_{\tset}(S,\nots)=\frac{|E(S,\nots)|}{\min\set{|S\cap
    \tset|,|\nots\cap \tset|}}$, and the value of the sparsest cut in
$G$ with respect to $\tset$ is:
$\Phi_{\tset}(G)=\min_{\stackrel{S\subseteq V:}{S\cap \tset,\nots\cap
    \tset\neq \emptyset}}\set{\Phi_{\tset}(S,\nots)}$.  In the
sparsest cut problem, the input is a graph $G$ with a set $\tset$ of
terminals, and the goal is to find a cut of minimum sparsity. Arora,
Rao and Vazirani~\cite{ARV} have shown an $O(\sqrt {\log
  k})$-approximation algorithm for the sparsest cut problem, where
$k=|\tset|$.  We use $\algsc$ to refer to their algorithm, and we denote by
$\alphasc(k)=O(\sqrt{\log k})$ its approximation factor.
%
%
%
We will repeatedly use the following observation.

\begin{observation}\label{obs: sparsest cut to flow}
Let $G$ be a graph, and $\tset \subseteq V(G)$ a subset of its vertices called terminals, where $|\tset|=k$ for some $k>0$. Assume further that for some $0<\alpha\leq 1$, $\Phi_{\tset}(G)\geq \alpha$. Then for every pair $\tset',\tset''\subseteq \tset$ of disjoint equal-sized subsets of terminals, there is a flow $f$ in $G$, where every terminal in $\tset'$ sends one flow unit, every terminal in $\tset''$ receives one flow unit, and the edge-congestion is bounded by $1/\alpha$.
\end{observation}

\begin{proof}
Let $\tset',\tset''\subseteq \tset$ be a pair of disjoint equal-sized sets of terminals.
We construct a directed flow network $H$ from $G$, by replacing each edge of $G$ with a pair of bi-directed edges, and setting the capacity $c(e)$ of each such edge $e$ to be $1/\alpha$. We then add two special vertices to the graph: the source $s$, that connects with a capacity-$1$ edge to every vertex of $\tset'$, and the destination $t$, to which every vertex of $\tset''$ connects with a capacity-$1$ edge.  Let $k'=|\tset'|=|\tset''|$, and let $f$ be the maximum $s$-$t$ flow in $H$. If the value of $f$ is at least $k'$, then we can use $f$ to define a flow $f'$, where  every terminal in $\tset'$ sends one flow unit, every terminal in $\tset''$ receives one flow unit, and the edge-congestion is bounded by $1/\alpha$ (as we can assume without loss of generality that for every pair $e',e''$ of anti-parallel edges, only one of these edges carries non-zero flow). Therefore, we assume from now on that the value of $f$ is less than $k'$. We will reach a contradiction by showing a cut whose sparsity with respect to $\tset$ is less than $\alpha$.

Let $(A',B')$ be the minimum $s$-$t$ cut in $H$, and let $E'$ be the set of all edges of $H$ from $A'$ to $B'$, so $\sum_{e\in E'}c(e)<k'$. Let $A=A'\setminus\set{s}$ and $B=B'\setminus\set{t}$, and  assume that $|\tset\cap A|\leq |\tset\cap B|$ - the other case is symmetric. Let $k_1=|\tset'\cap A'|$ and $k_2=|\tset'\cap B'|$. Then $\sum_{e\in E'}c(e)\geq k_2+|E_G(A,B)|/\alpha$. Therefore, $|E_G(A,B)|< \alpha (k'-k_2)= \alpha k_1= \alpha |\tset'\cap A|\leq \alpha|\tset\cap A|$, and the cut $(A,B)$ has sparsity less than $\alpha$, a contradiction.
\end{proof}

\subsection{Linkedness and Well-Linkedness}
We define the notion of linkedness and the different notions of
well-linkedness that we use.

\begin{definition}
  We say that a set $\tset$ of vertices is
  $\alpha$-well-linked\footnote{This notion of well-linkedness is
    based on edge-cuts and we distinguish it from node-well-linkedness
    that is directly related to treewidth. For technical reasons it is
    easier to work with edge-cuts and hence we use the term well-linked to mean
    edge-well-linkedness, and explicitly use the term
    node-well-linkedness when necessary.} in $G$, if for every
  partition $(A,B)$ of the vertices of $G$ into two subsets,
  $|E(A,B)|\geq \alpha\cdot \min\set{|A\cap \tset|,|B\cap \tset|}$.
\end{definition}

The following simple observation immediately follows from the definition of well-linkedness.

\begin{observation}\label{obs: wl-properties}
Let $G$ be a graph and $\tset\subseteq V(G)$ a subset of its vertices, so that $\tset$ is $\alpha$-well-linked in
$G$, for some $0<\alpha\leq 1$. Then:

\begin{itemize}
\item $\Phi_{\tset}(G)\geq \alpha$;

\item for every subset $\tset'\subseteq\tset$, $\tset'$ is $\alpha$-well-linked in $G$; and

\item $\tset$ is $\alpha'$-well-linked in $G$ for all $0<\alpha'<\alpha$.
\end{itemize}
\end{observation}

\begin{definition}
  We say that a set $\tset$ of vertices is \emph{node-well-linked} in
  $G$, if for every pair $(\tset_1,\tset_2)$ of equal-sized subsets of
  $\tset$, there is a collection $\pset$ of $|\tset_1|$ {\bf
    node-disjoint} paths, connecting the vertices of $\tset_1$ to the
  vertices of $\tset_2$. (Note that $\tset_1$, $\tset_2$ are not
  necessarily disjoint, and we allow paths consisting of a single vertex).
\end{definition}

\begin{definition}
  We say that two disjoint vertex subsets $A,B$ are \emph{linked} in
  $G$ if for every pair of equal-sized subsets $A'\subseteq A$,
  $B'\subseteq B$ there is a set $\pset$ of $|A'|=|B'|$ node-disjoint
  paths connecting $A'$ to $B'$ in $G$.
\end{definition}

Our algorithm starts with a graph $G$ of treewidth $k$, and then reduces its degree to $\poly\log(k)$, while preserving the treewidth to within a factor of $\poly\log(k)$. As we show below, in bounded-degree graphs, the notions of edge- and node-well-linkedness are closely related to each other, and we exploit this connection throughout the algorithm.

\begin{theorem}\label{thm: linkedness from node-well-linkedness}
  Suppose we are given a graph $G$ with maximum vertex degree at most $\Delta$, and two disjoint subsets $\tset_1,\tset_2$ of
  its vertices, such that
  $\tset_1\cup \tset_2$ is $\alpha$-well-linked in $G$ for some $0<\alpha\leq 1$, and each one
  of the sets $\tset_1,\tset_2$ is node-well-linked in $G$. Let
  $\tset'_1\subsetneq \tset_1$, $\tset_2'\subsetneq\tset_2$, be a pair of
  subsets with $|\tset_1'|\leq \frac{\alpha |\tset_1|}{2\Delta}$ and $|\tset_2'|\leq \frac{\alpha |\tset_2|}{2\Delta}$. Then $\tset'_1$ and $\tset_2'$ are
  linked in $G$.
\end{theorem}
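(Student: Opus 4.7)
The plan is to prove the theorem by contradiction via Menger's theorem. Suppose there exist equal-sized subsets $A\subseteq\tset_1'$ and $B\subseteq\tset_2'$ together with a vertex separator $X$ of $(A,B)$ in $G$ with $|X|<|A|\le\alpha\kappa/(4\Delta)$; since $\alpha\le\Delta$ (the $\alpha$-well-linkedness inequality applied to a partition with a single terminal on one side gives $\alpha\le d(v)\le\Delta$), this also yields $|X|<\kappa/4$. Let $S$ be the union of the connected components of $G\setminus X$ that contain some vertex of $A\setminus X$, and let $S'=V(G)\setminus(X\cup S)$. Then $A\setminus X\subseteq S$, $B\setminus X\subseteq S'$, and crucially $A\subseteq S\cup X$ (since $A\cap S'=\emptyset$) and $B\subseteq S'\cup X$ (since $B\cap S=\emptyset$).

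The first step uses the $\alpha$-well-linkedness of $\tset_1\cup\tset_2$ to show that one side of the cut has few terminals. Every edge between $S$ and $V\setminus S$ is incident to $X$, so $|E(S,V\setminus S)|\le|X|\Delta$; combined with $\alpha$-well-linkedness this gives $\min\{|S\cap(\tset_1\cup\tset_2)|,\,|(V\setminus S)\cap(\tset_1\cup\tset_2)|\}\le|X|\Delta/\alpha<\kappa/4$. Without loss of generality assume $|S\cap(\tset_1\cup\tset_2)|\le\kappa/4$ --- the other case is handled by the symmetric argument below using $\tset_2$, $B$, and $S'$ in place of $\tset_1$, $A$, and $S$ --- so in particular $|\tset_1\cap S|\le\kappa/4$.

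The crucial second step invokes the node-well-linkedness of $\tset_1$ on the pair $P_1=\tset_1\cap(S\cup X)$ and $P_2\subseteq\tset_1\cap S'$ with $|P_2|=|P_1|$; such a $P_2$ exists because $|\tset_1\cap S'|\ge|\tset_1|-|\tset_1\cap S|-|X|\ge\kappa-\kappa/4-\kappa/4=\kappa/2\ge|P_1|$. Node-well-linkedness produces $|P_1|$ node-disjoint paths in $G$ from $P_1$ to $P_2$, and each such path contains a distinct vertex of $X$: either as its starting endpoint (when the path starts at a vertex of $\tset_1\cap X\subseteq P_1$), or as an internal vertex (when the path starts in $S$, since any $S$-to-$S'$ path in $G$ must cross $X$). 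Hence $|X|\ge|P_1|=|\tset_1\cap(S\cup X)|\ge|A|$, contradicting $|X|<|A|$.

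The main subtlety --- and what I expect to be the key technical point --- is taking $P_1$ to include the vertices of $\tset_1\cap X$, not only those of $\tset_1\cap S$: with this enlargement, the bound $|P_1|\le|X|$ translates directly to $|A|\le|X|$ because $A\subseteq\tset_1\cap(S\cup X)=P_1$. The more naive choice $P_1=\tset_1\cap S$ would yield only $|A\setminus X|\le|X|$, and since $|A\cap X|$ may be as large as $|X|$, one would obtain only $|A|\le 2|X|$, too weak to contradict $|X|<|A|$.
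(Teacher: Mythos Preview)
Your proof is correct and follows essentially the same strategy as the paper's: assume a small separator exists by Menger, use the $\alpha$-well-linkedness of $\tset_1\cup\tset_2$ to bound the number of terminals on one side of the separator, and then invoke node-well-linkedness of the appropriate $\tset_i$ to obtain a contradiction. The paper packages the first step as an application of Corollary~\ref{corollary: small vertex cut in well-linked graph} (finding one large component $C$ of $G\setminus S$) and then argues that both $A$ and $B$ must intersect $C$, whereas you unpack that corollary and finish with the direct counting $|X|\ge |P_1|\ge |A|$; the two endgames are equivalent, and your observation about including $\tset_1\cap X$ in $P_1$ is exactly the right bookkeeping.
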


\begin{proof}
Let $\tset=\tset_1\cup \tset_2$. We refer to the vertices of $\tset$ as terminals. Denote $|\tset_1|=\kappa_1$, $|\tset_2|=\kappa_2$, and assume without loss of generality that $\kappa_1\leq \kappa_2$.
Assume for contradiction that $\tset'_1$ and $\tset_2'$ are not linked in $G$. Then there are two sets $A\subseteq\tset_1',B\subseteq \tset_2'$, with $|A|=|B|=\kappa'$ for some $\kappa'\leq \frac{\alpha\kappa_1}{2\Delta}$, and a set $S$ of $\kappa'-1$ vertices, separating $A$ from $B$ in $G$. 

Let $A'\subseteq \tset_1$ be the set of all terminals $t\in \tset_1$, such that $t$ lies in the same component of $G\setminus S$ as some vertex of $A$. We claim that $|A'|\geq \kappa_1-\kappa'$. Indeed, assume otherwise, and let $A''\subseteq \tset_1\setminus A'$ be a set of $\kappa'$ vertices. Since $\tset_1$ is node-well-linked in $G$, there is a set $\pset$ of $\kappa'$ node-disjoint paths, connecting the vertices of $A$ to the vertices of $A''$ in $G$. At most $\kappa'-1$ of these paths may contain the vertices of $S$, and so at least one vertex of $\tset_1\setminus A'$ is connected to some vertex of $A$ in $G\setminus S$, a contradiction.

Similarly, we let $B'\subseteq \tset_2$ be the set of all terminals $t\in \tset_2$, such that $t$ lies in the same component of $G\setminus S$ as some vertex of $B$. From the same arguments as above, $|B'|\geq \kappa_2-\kappa'$. Finally, we show that there is some pair $a\in A'$, $b\in B'$ of vertices that lie in the same connected component of $G\setminus S$. Indeed, since the terminals of $\tset$ are $\alpha$-well-linked in $G$, there is a set $\qset$ of at least $\kappa_1-\kappa'$ paths in $G$, where each path originates at a distinct vertex of $A'$ and terminates at a distinct vertex of $B'$, and every edge of $G$ participates in at most $1/\alpha$ paths. At most $(\kappa'-1)\Delta/\alpha$ of the paths in $\qset$ may contain the vertices of $S$. Since $|\qset|=\kappa_1-\kappa'>(\kappa'-1)\Delta/\alpha$, at least one path of $\qset$ belongs to $G\setminus S$. Therefore, there is a path in $G\setminus S$ from a vertex of $A$ to a vertex of $B$, a contradiction.
%
%
%
\end{proof}

\subsection{Boosting Well-Linkedness}
\label{subsec:boosting}
Suppose we are given a graph $G$ and a set $\tset$ of vertices of $G$
called terminals, where $\tset$ is $\alpha$-well-linked in
$G$. Boosting theorems allow us to boost the well-linkedness by
selecting an appropriate subset of the terminals, whose well-linkedness is greater than $\alpha$. 
We start with the following simple claim, that has been extensively used in past work to boost well-linkedness of terminals.

\begin{claim}\label{claim: better well-linkedness for trees}
  Suppose we are given a graph $G$ and a set $\tset$ of vertices of
  $G$, called terminals, such that $\tset$ is $\alpha$-well-linked for
  some $0<\alpha<1$. Assume further that we are given a collection
  $\sset$ of trees in $G$, and for every tree $T\in \sset$ we are
  given a subset $\lambda_{T}\subseteq V(T)\cap \tset$ of at least
  $\ceil{1/\alpha}$ terminals, such that for every pair $T\neq T'$ of
  the trees, $\lambda_{T}\cap \lambda_{T'}=\emptyset$. Assume further
  that each edge of $G$ belongs to at most $c$ trees, and let
  $\tset'\subseteq \tset$ be a subset of terminals, containing exactly
  one terminal from each set $\lambda_{T}$ for $T\in \sset$. Then
  $\tset'$ is $1/(c+1)$-well-linked in $G$.
\end{claim}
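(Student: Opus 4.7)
The plan is to take an arbitrary partition $(A,B)$ of $V(G)$, assume without loss of generality that $|\tset'\cap A|\le |\tset'\cap B|$, set $N=|\tset'\cap A|$, and show $|E(A,B)|\ge N/(c+1)$. For each tree $T\in\sset$ I will use the designated representative $t_T\in \tset'\cap \Gamma_T$ to identify $T$ as an $A$-tree or $B$-tree according to the side of $t_T$, so that there are exactly $N$ trees of $A$-type.

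Next I will classify trees more finely. Let $S_1$ be the set of $A$-trees whose entire $\Gamma_T$ lies in $A$, $S_2$ the set of $A$-trees with at least one terminal of $\Gamma_T$ in $B$, and symmetrically $S_3, S_4$ for $B$-trees. The key observation is that a tree in $S_2\cup S_4$ is a connected subgraph of $G$ that meets both $A$ and $B$, so it must contain at least one edge of $E(A,B)$. Since by hypothesis every edge of $G$ lies in at most $c$ trees of $\sset$, this gives $|S_2|\le c\,|E(A,B)|$ and $|S_4|\le c\,|E(A,B)|$. In particular $|S_1|\ge N-c\,|E(A,B)|$, and since $|\tset'\cap B|\ge N$, also $|S_3|\ge N-c\,|E(A,B)|$.

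Now I will feed this into the hypothesized $\alpha$-well-linkedness of $\tset$. Because the sets $\Gamma_T$ are pairwise disjoint and each $|\Gamma_T|=\lceil 1/\alpha\rceil$, the trees in $S_1$ contribute at least $|S_1|\,\lceil 1/\alpha\rceil$ distinct terminals to $\tset\cap A$, and similarly for $S_3$ with $\tset\cap B$. Hence
\[
|E(A,B)|\;\ge\;\alpha\cdot\min\bigl(|\tset\cap A|,|\tset\cap B|\bigr)\;\ge\;\alpha\cdot\min\bigl(|S_1|,|S_3|\bigr)\,\lceil 1/\alpha\rceil\;\ge\;\min(|S_1|,|S_3|).
\]
Combining with the bounds $|S_1|,|S_3|\ge N-c\,|E(A,B)|$ yields $|E(A,B)|\ge N-c\,|E(A,B)|$, i.e.\ $|E(A,B)|\ge N/(c+1)$, which is exactly the definition of $\tset'$ being $1/(c+1)$-well-linked.

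The only delicate point is the bookkeeping linking split trees to crossing edges: one must verify that an $A$-tree with a terminal in $B$ really does contain a crossing edge (this follows because $T$ is connected and the endpoint $t_T$ already lies in $A$), and that the uniform bound $|\Gamma_T|=\lceil 1/\alpha\rceil$ is exactly what is needed to convert the count of "pure" $A$-trees $|S_1|$ into a lower bound on $|\tset\cap A|$ that compensates for the factor $\alpha$ lost when applying well-linkedness of $\tset$. Apart from this, the argument is a short counting/double-counting step and should not present any obstacle.
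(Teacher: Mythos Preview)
Your proof is correct and takes a genuinely different route from the paper's argument. The paper proves the bound by exhibiting, for any partition $(A,B)$, an explicit flow of value $\kappa=|\tset'\cap A|$ from $\tset'\cap A$ to (a subset of) $\tset'\cap B$ with edge-congestion at most $c+1$: each representative $t\in\tset'$ first spreads one flow unit evenly over the $\lceil 1/\alpha\rceil$ terminals of its group $\Gamma_{T(t)}$ via the tree $T(t)$, then the $\alpha$-well-linkedness of $\tset$ is invoked (via max-flow/min-cut) to route between the resulting expanded sets with congestion $1$, and finally the flow is collected symmetrically on the $B$ side. The tree phases contribute congestion at most $c$ (each edge lies in at most $c$ trees), and the middle phase contributes $1$, yielding $c+1$.

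By contrast, you bypass flows entirely and argue combinatorially: trees whose $\Gamma_T$ straddles the cut must each contain a crossing edge, so at most $c\,|E(A,B)|$ of the $A$-trees can be ``impure''; the remaining pure $A$-trees dump at least $(N-c|E(A,B)|)\lceil 1/\alpha\rceil$ terminals into $A$ (and symmetrically for $B$), and the $\alpha$-well-linkedness of $\tset$ then gives $|E(A,B)|\ge N-c|E(A,B)|$ directly. This is shorter and more elementary, and it makes very transparent why the two hypotheses (disjoint groups of size $\lceil 1/\alpha\rceil$, and edge-multiplicity $\le c$) each buy exactly one of the two factors in $1/(c+1)$. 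The paper's flow construction, on the other hand, is closer in spirit to how the claim is used downstream (routing along trees and then across the graph), so it doubles as a template for later arguments; but for the claim itself your counting proof is perfectly adequate.
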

\begin{proof}
  The proof provided here was suggested by an anonymous referee, and
  it is somewhat simpler than our original proof.  Let $(A,B)$ be a
  partition of the vertices of $G$, and let $\tset_A=\tset'\cap A$,
  $\tset_B=\tset'\cap B$. Assume without loss of generality that
  $|\tset_A|\leq |\tset_B|$ and denote $|\tset_A|=\kappa$. Our goal is
  to show that $|E(A,B)|\geq \kappa/(c+1)$. Assume for contradiction
  that $|E(A,B)|< \kappa/(c+1)$.

Let $\sset_1\subseteq \sset$ be the set of trees $T$ with $E(T)\cap E(A,B)\neq \emptyset$. Since each edge of $G$ belongs to at most $c$ trees, $|\sset_1|<\kappa c/(c+1)$. Let $\sset_A\subseteq \sset$ be the set of all trees $T$ with $V(T)\cap \tset_A\neq \emptyset$, and define $\sset_B\subseteq \sset$ similarly for $\tset_B$. Then $|\sset_A|,|\sset_B|\geq \kappa$. Let $\sset'_A=\sset_A\setminus \sset_1$. Then $|\sset'_A|> \kappa-\kappa c/(c+1)=\kappa/(c+1)$, and every tree in $\sset_A'$ is contained in $A$. Similarly, let $\sset'_B=\sset_B\setminus \sset_1$, so $|\sset'_B|>\kappa/(c+1)$, and every tree in $\sset_B'$ is contained in $B$.

Let $\tset_A'\subseteq \tset\cap A$ be the set of all terminals participating in the trees of $\sset_A'$, and define $\tset_B'\subseteq \tset\cap B$ similarly for $\sset_B'$. Then $|\tset_A'|\geq \ceil{1/\alpha}\cdot |\sset'_1|\geq \frac{\kappa}{\alpha (c+1)}$, and similarly $|\tset_B'|\geq \frac{\kappa}{\alpha( c+1)}$. From the $\alpha$-well-linkedness of the terminals in $\tset$, $|E(A,B)|\geq \alpha\cdot \min\set {|\tset_A'|,|\tset_B'|}\geq \kappa/(c+1)$ must hold, a contradiction.
\end{proof}

This claim is already sufficient to boost the well-linkedness of a given set of terminals, as follows.

\begin{corollary}\label{cor: weak boosting w subsets}
There is an efficient algorithm, that, given a connected graph $G$,  a subset $\tset$ of its vertices, such that for some $0<\alpha\leq 1$, $\tset$ is $\alpha$-well-linked in $G$, and a partition $\tset_1,\ldots,\tset_{\ell}$ of $\tset$, computes,  for each $1\leq i\leq \ell$ a subset $\tset_i'\subseteq \tset_i$ of at least $\floor{\frac{|\tset_i|}{ 3\ceil{1/\alpha}}}$ vertices, so that $\bigcup_{i=1}^{\ell}\tset_i'$ is $1/2$-well-linked in $G$.
\end{corollary}

\begin{proof}
Throughout the proof, we refer to the vertices of $\tset$ as terminals. For $1\leq i\leq \ell$, denote $|\tset_i|=\kappa_i$. We start with the following simple observation.

\begin{observation} There is an efficient algorithm to compute a collection $\fset$ of trees in $G$, and for every tree $T\in \fset$, a subset $\lambda_T\subseteq V(T)\cap \tset$ of its vertices, such that:

\begin{itemize}
\item every edge of $G$ belongs to at most one tree;
\item for every tree $T\in \fset$, $\ceil{1/\alpha}\leq |\lambda_T|\leq 3\ceil{1/\alpha}$; and
\item the sets $\set{\lambda_T}_{T\in \fset}$ define a partition of $\tset$.
\end{itemize}
\end{observation}

\begin{proof}
Let $T^*$ be a spanning tree of $G$, that we root at some vertex $r$. We perform a number of iterations, where in every iteration we delete some edges and vertices from $T^*$. For each vertex $v$ of the tree $T^*$, let $T^*_v$ denote the sub-tree rooted at $v$, and let $w(T^*_v)$ denote the total number of terminals in $T^*_v$. We build the set $\fset$ of the trees gradually. At the beginning, $\fset=\emptyset$. While $w(T^*_r)>3\ceil{1/\alpha}$, we perform the following iteration:

\begin{itemize}
\item Let $v$ be the lowest vertex in the tree $T^*$, such that $w(T^*_v)>\ceil{1/\alpha}$.
\item If $w(T^*_v)\leq 2\ceil{1/\alpha}$, then we add the tree $T^*_v$ to $\fset$, set $\lambda_{T^*_v}=V(T^*_v)\cap \tset$, and delete all vertices and edges of $T^*_v$ from the tree $T^*$.

\item Otherwise, let $u_1,\ldots,u_k$ be the children of $v$, and let $j$ be the smallest index, such that $\sum_{i=1}^jw(T^*_{u_i})\geq \ceil{1/\alpha}$. We add a new tree $T'$ to $\fset$ --- a subtree of $T^*$ induced by $\set{v}\cup (\bigcup_{i=1}^jV(T^*_{u_i}))$, setting $\lambda_{T'}=\bigcup_{i=1}^j (V(T^*_{u_i})\cap \tset)$. We delete all edges of $T'$, and all vertices of $V(T')\setminus\set{v}$ from  the tree $T^*$.
\end{itemize}

Notice that since at the beginning of the current iteration $w(T^*_r)>3\ceil{1/\alpha}$,  at the end of the current iteration, $w(T^*_r)>\ceil{1/\alpha}$ must hold. In the last iteration, when $w(T^*_r)\leq 3\ceil{1/\alpha}$, we add the tree $T^*_r$ to $\fset$ and set $\lambda_{T^*_r}=V(T^*_r)\cap \tset$. It is easy to verify that all conditions of the observation hold for the final collection $\fset$ of trees.
\end{proof}

Next, we show that we can select at most one terminal from each set $\lambda_T$, for $T\in \fset$, such that enough terminals from every subset $\tset_i$ is selected.

\begin{observation}\label{obs: select the terminals}
There is an efficient algorithm that computes, for each $1\leq i\leq  \ell$, a subset $\tset_i'\subseteq \tset_i$ of at least $\floor{\frac{\kappa_i}{ 3\ceil{1/\alpha}}}$ vertices, so that, if we denote $\tset'=\bigcup_{i=1}^{\ell}\tset'_i$, then for every tree $T\in \fset$, $|\lambda_T\cap \tset'|\leq 1$.
\end{observation}

From Claim~\ref{claim: better well-linkedness for trees}, the resulting set $\tset'$ of terminals is $1/2$-well-linked in $G$. It now remains to prove Observation~\ref{obs: select the terminals}.

\begin{proof}
We build a node-capacitated directed flow network $\nset$, as follows. We start from a source vertex $s$ and a destination vertex $s'$ that have infinite capacity. We then add $\ell$ vertices $u_1,\ldots,u_{\ell}$, each of capacity $\kappa_i'=\frac{\kappa_i}{ 3\ceil{1/\alpha}}$, and connect $s$ to each of these vertices. Each vertex $u_i$ will represent the set $\tset_i$ of the terminals. For each terminal $t\in \tset$, we add a unit-capacity vertex $v_t$ to $\nset$, and, if $t\in \tset_i$, then we connect $u_i$ to $v_t$ with a directed edge.

For every tree $T\in \fset$, we add a unit-capacity vertex $x_{T}$, that connects to the destination vertex $s'$ with a directed edge. Finally, for every tree $T\in \fset$, and for every terminal $t\in \lambda_T$, we add a directed edge $(v_t,x_{T})$. We claim that there is a valid flow of value $\sum_{i=1}^{\ell}\kappa_i'$ from $s$ to $s'$ in $\nset$. Indeed, consider a directed $s$-$s'$ path, and assume that the path is $(s,u_i,v_t,x_{T},s')$. We send $\frac{1}{ 3\ceil{1/\alpha}}$ flow units along this path. Since for all $T\in \fset$, $|\lambda_T|\leq  3\ceil{1/\alpha}$, we obtain a valid $s$-$s'$ flow of value $\sum_{i=1}^{\ell}\kappa_i'$. If we reduce the capacity of every vertex $u_i$, for $1\leq i\leq \ell$, to $\floor{\kappa_i'}$, we can still obtain a valid $s$-$s'$ flow of value   $\sum_{i=1}^{\ell}\floor{\kappa_i'}$, by appropriately reducing flows on some paths. Since all vertex capacities are now integral, there is an integral flow $f$ of the same value. We are now ready to define the set $\tset'_i$ of terminals for each $1\leq i\leq \ell$: it contains all terminals $t\in \tset_i$, such that the edge $(s_i,v_t)$ carries one flow unit in $f$. It is immediate to verify that $|\tset'_i|=\floor{\kappa_i'}\geq \floor{\frac{\kappa_i}{ 3\ceil{1/\alpha}}}$, and since the capacities of all vertices $\set{x_{T'}}_{T'\in \fset}$ are unit, if we denote by $\tset'=\bigcup_{i=1}^{\ell}\tset'_i$, then for every tree $T\in \fset$, $|\lambda_T\cap \tset'|\leq 1$.
\end{proof}
\end{proof}

The above claim gives a way to boost the well-linkedness of a given
set $\tset$ of terminals to $1/2$-well-linkedness. This type of
argument has been used before extensively, usually under the name of
the ``grouping technique''~\cite{ANF,CKS, RaoZhou, Andrews,
  Chuzhoy11}.  However, we need a stronger result: given a set $\tset$
of terminals, that are $\alpha$-well-linked in $G$, we would like to
find a large subset $\tset'\subseteq \tset$, such that $\tset'$ is
{\bf node-well-linked} in $G$. The following theorem allows us to
achieve this, generalizing a similar theorem for edge-disjoint routing
in~\cite{ANF}. The proof\footnote{Some of our theorems on well-linked
  sets including this one appear to have alternate proofs via tangles
  and related matroids from graph minor theory \cite{graphminorsx};
  this was suggested to us by a reviewer. However, it is unclear
  whether the alternate proofs yield polynomial-time algorithms.}
appears in the Appendix.

\begin{theorem}\label{thm: grouping}
  Suppose we are given a connected graph $G=(V,E)$ with maximum vertex
  degree at most $\Delta$, where $\Delta\geq 3$, and a subset $\tset$
  of $\kappa$ vertices called terminals, such that $\tset$ is
  $\alpha$-well-linked in $G$, for some $0<\alpha\leq 1$. Then there
  is a subset $\tset'\subseteq \tset$ of $\ceil{\frac{3\alpha\kappa
    }{10\Delta}}$ terminals, such that $\tset'$ is node-well-linked in
  $G$. Moreover, there is an algorithm whose running time is
  polynomial in $|V|$ and $\kappa$, that computes a subset
  $\tset'\subseteq \tset$ of at least
  $\frac{\alpha}{32\Delta^4\alphasc(\kappa)} \cdot \kappa$ terminals,
  such that $\tset'$ is node-well-linked in
  $G$. 
\end{theorem}

\begin{corollary}\label{cor: ultimate boosting}
There is an efficient algorithm, that, given a connected graph $G=(V,E)$ with maximum vertex degree $\Delta\geq 3$, a subset $\tset$ of $\kappa$ vertices of $G$ that are $\alpha$-well-linked in $G$ for some $0<\alpha\leq 1$, and a partition $\tset_1,\ldots,\tset_{\ell}$ of $\tset$ into disjoint subsets, computes, for each $1\leq i\leq \ell$, a subset $\tset'_i\subseteq \tset_i$ of $\floor{\frac{\alpha \cdot |\tset_i|}{2^{10}\Delta^5\alphasc(\kappa)}}$ terminals, such that: (i) for all $1\leq i\leq \ell$, set $\tset'_i$ is node-well-linked in $G$; (ii) $\bigcup_{i=1}^{\ell}\tset'_i$ is $1/2$-well-linked in $G[S_i]$; and (iii) for all $1\leq i<j\leq \ell$, $\tset'_i$ and $\tset'_j$ are linked in $G$.
\end{corollary}
\begin{proof}
For all $1\leq i\leq \ell$, let $\kappa_i=|\tset_i|$. We use Corollary~\ref{cor: weak boosting w subsets} to compute, for each $1\leq i\leq \ell$, a subset $\tset^1_i\subseteq \tset_i$ of at least $\floor{\frac{\kappa_i}{3\ceil{1/\alpha}}}\geq \floor{\frac{\alpha \kappa_i}{8}}$ terminals, such that $\bigcup_{i=1}^{\ell}\tset^1_i$ is $1/2$-well-linked in $G$.

Next, for each $1\leq i\leq \ell$, we apply Theorem~\ref{thm: grouping} to $\tset^1_i$, to compute a subset $\tset^2_i\subseteq \tset^1_i$ of at least $\frac{ |\tset^1_i|}{32\Delta^4\alphasc(\kappa_i)}\geq \frac{\alpha \kappa_i}{256\Delta^4\alphasc(\kappa)}$ terminals, so that $\tset^2_i$ is node-well-linked in $G$. We then let $\tset'_i\subseteq \tset^2_i$ be a subset of $\floor{\frac{|\tset^2_i|}{4\Delta}}\geq \floor{\frac{\alpha\kappa_i}{2^{10}\Delta^5\alphasc(\kappa)}}$ terminals. Since the terminals of $\bigcup_{j=1}^{\ell}\tset^1_j$ are $1/2$-well-linked in $G$, from Theorem~\ref{thm: linkedness from node-well-linkedness}, for all $1\leq j<j'\leq \ell$, $\tset_j'$ and $\tset_{j'}'$ are linked in $G$.
\end{proof}

\subsection{Treewidth and Well-Linkedness}

The following lemma summarizes an important connection between the
graph treewidth, and the size of the largest node-well-linked set of
vertices.

\begin{lemma}\cite{Reed-chapter}
  \label{lem:tw-wl}
  Let $k$ be the size of the largest node-well-linked vertex set in $G$. Then
  $\frac{k} 4 -1 \le \tw(G) \le k-1$.
\end{lemma}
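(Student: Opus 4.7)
The lemma consists of two inequalities, which I would prove separately. For the lower bound $k \leq \tw(G)$, I would start with a node-well-linked set $S$ of size $k$ and an arbitrary tree decomposition $(T,\{X_v\}_{v\in V(T)})$ of $G$, and exhibit a bag of size at least $k$. The natural first attempt is a centroid argument on $T$: find a node $v^*\in V(T)$ such that each connected component of $T\setminus v^*$ holds at most $|S|/2$ vertices of $S$ in bags outside $X_{v^*}$. Using Claim~\ref{claim: simple partition} on the $S$-counts of these subtrees, one groups them into two sides $S_1,S_2\subseteq S\setminus X_{v^*}$ of roughly equal size that are separated by $X_{v^*}$ in $G$, so node-well-linkedness together with Menger's theorem forces $|X_{v^*}|\geq \min(|S_1|,|S_2|)\gtrsim k/3$. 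To push this to $|X_{v^*}|\geq k$, I would pass to a \emph{lean} tree decomposition, i.e.\ one minimizing a lexicographic ordering of bag sizes; adjacent bags in such a decomposition enjoy a matching/linkage property ensuring that no appreciable portion of $S$ can leak across an edge of $T$ without being fully captured by some bag.

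For the upper bound $\tw(G)\leq 4k$, I would construct a tree decomposition of width at most $4k$ by recursive balanced-separator decomposition. The key local claim is: for every $W\subseteq V(G)$ with $|W|=3k+1$, there is a vertex set $C$ with $|C|\leq k$ such that every connected component of $G\setminus C$ contains at most $2|W|/3$ vertices of $W$. Since $|W|>k$, the set $W$ cannot be node-well-linked, so Menger's theorem yields equal-sized $W_1,W_2\subseteq W$ and a separator of size strictly less than $|W_1|$; choosing such a witness with $|W_1|$ minimum produces a balanced separator of the stated form. Given this claim, the tree decomposition is built top-down: each recursive call processes a subgraph $H$ together with an inherited boundary of at most $3k$ vertices, sets $W$ to be this boundary padded to size $3k+1$, produces the balanced separator $C$, declares the current bag to be boundary $\cup\, C$ (size at most $4k+1$), and recurses on each component of $H\setminus C$ with the new boundary (the old boundary on that side together with $C$).

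The step I expect to be the main obstacle is the bookkeeping in the upper-bound recursion: one must show (i) that $W$ can be padded to $3k+1$ vertices so that the resulting balanced separator is balanced with respect to the \emph{original} boundary, not merely with respect to the padded set, and (ii) that each recursive subproblem's new boundary (old boundary restricted to that side, plus $C$) is again at most $3k$, so that the recursion invariant is preserved and bag sizes remain bounded by $4k+1$. Property (ii) hinges on the fact that $|C|<|W_1|$, which is strictly less than the size of each of the two sides $C$ separates, so adding $C$ to the smaller boundary side does not overrun $3k$. Finally, closing the lower bound to the exact constant $k$ via lean tree decompositions is similarly delicate and would follow the standard Robertson--Seymour--Reed template.
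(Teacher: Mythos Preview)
The paper does not prove Lemma~\ref{lem:tw-wl} at all; it is quoted from Reed~\cite{Reed-chapter} and used as a black box, so there is no in-paper proof to compare against. Your outline follows the standard Reed template (centroid bag for the lower bound, recursive balanced-separator decomposition for the upper bound), which is indeed the argument in the cited reference.

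One remark on your sketch: in the upper-bound step, the assertion that ``choosing such a witness with $|W_1|$ minimum produces a balanced separator'' is not quite how the balance is obtained. The usual argument fixes $|W|=3k+1$ and shows directly that some separator $C$ of size at most $k$ leaves each component with at most $2k$ vertices of $W$; this is proved by a shifting/exchange argument (or via the equivalence with havens), not by minimizing $|W_1|$ among failed linkages. Your invariant analysis in (i)--(ii) is otherwise the right shape.
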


Lemma~\ref{lem:tw-wl} guarantees that a graph $G$ of treewidth $k$
contains a set $X$ of $\Omega(k)$ vertices, that is node-well-linked
in $G$. Kreutzer and Tazari~\cite{KreutzerT10} give a constructive
version of this lemma, obtaining a set $X$ with slightly weaker
properties. Lemma~\ref{lem: find well-linked set} below rephrases, in
terms convenient to us, Lemma 3.7 in~\cite{KreutzerT10}\footnote{Lemma~\ref{lem: find well-linked set} is slightly weaker than what was shown in~\cite{KreutzerT10}. We use it since it suffices for our purposes and avoids the introduction of additional notation.}.
 
 \begin{lemma}\label{lem: find well-linked set} There is an
   efficient algorithm, that, given a graph $G$ of treewidth $k$,
   finds a set $X$ of $\Omega(k)$ vertices, such that $X$ is
   $\alpha^*=\Omega(1/\log k)$-well-linked in $G$ and $|X|$ is even. Moreover, for
   every partition $(X_1,X_2)$ of $X$ into two equal-sized subsets,
   there is a collection $\pset$ of paths connecting every vertex of
   $X_1$ to a distinct vertex of $X_2$, such that every vertex of $G$
   participates in at most $1/\alpha^*$ paths in $\pset$.
 \end{lemma}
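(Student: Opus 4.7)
The plan is to make Reed's existence result (Lemma~\ref{lem:tw-wl}) constructive by using the ARV sparsest-cut approximation \algsc as an oracle for finding balanced separators, following the general strategy of~\cite{KreutzerT10}. I would initialize $X_0 = V(G)$ (or the vertex set of the largest connected component of $G$) and then repeatedly invoke \algsc on the pair $(G,X)$: if the returned cut $(A,B)$ has sparsity less than $\alpha^* \cdot \alphasc(|X|)$ with respect to $X$, replace $X$ by $X \cap A$ or $X \cap B$, whichever retains more terminals; otherwise, \algsc certifies that $X$ is $\alpha^*$-well-linked and we stop.

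The analytic heart of the argument is a potential/charging step. The sequence of splits forms a binary recursion tree of depth $O(\log |V(G)|)$, and the total weight of cut edges across all splits can be charged to the original edges of $G$ in a manner reminiscent of region-growing analyses. If at some point $|X|$ dropped below $c k$ for a small absolute constant $c$, the sparse cuts accumulated along the recursion could be assembled into a tree decomposition of $G$ of width $o(k)$ (a constructive converse of Reed's lemma), contradicting $\tw(G) = k$. Setting $\alpha^* = \Theta(1/\log k)$ makes $\alpha^* \cdot \alphasc(k) \cdot \log |V(G)|$ a sufficiently small constant and keeps $|X| = \Omega(k)$ throughout.

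For the ``moreover'' clause, given any equal-sized partition $(X_1, X_2)$ of the resulting well-linked set $X$, I would interpret the perfect matching between $X_1$ and $X_2$ as a multicommodity demand. Since $X$ is $\alpha^*$-well-linked, combining the min-cut/max-flow theorem with the vertex-capacitated variant of the $O(\log k)$ flow-cut gap $\betaFCG$ (obtained by passing to a vertex-split auxiliary graph in which each $v \in V(G)$ is replaced by an edge of unit capacity) yields a fractional routing of this matching with congestion $O(\betaFCG(|X|)/\alpha^*) = O(1/\alpha^*)$. A standard Raghavan--Thompson randomized rounding applied to the flow decomposition then produces an integral collection $\pset$ of paths in which every vertex of $G$ participates in at most $1/\alpha^*$ paths.

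The main obstacle is the potential-loss analysis in the first step. Because \algsc is only an $\alphasc(k) = O(\sqrt{\log k})$-approximation, the effective sparsity threshold at each recursive level is inflated by a factor of $\alphasc(k)$, and the recursion tree can have depth $\Omega(\log |V(G)|)$. Keeping the cumulative loss of terminals within a constant factor of $k$ across all such levels is exactly what forces the weaker well-linkedness parameter $\alpha^* = \Theta(1/\log k)$, rather than a stronger bound like $\Omega(1/\sqrt{\log k})$ or $\Omega(1)$; this tradeoff is the source of the $1/\log k$ in the statement.
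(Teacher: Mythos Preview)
The paper does not prove this lemma; it is stated as a rephrasing of Lemma~3.7 in Kreutzer--Tazari~\cite{KreutzerT10}, with no argument given. Evaluating your sketch on its own, there is a genuine gap.

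The mismatch is edge versus vertex. In this paper \algsc\ is the \emph{edge} sparsest-cut approximation, so your iterative procedure certifies at best that the surviving $X$ is $\alpha^*$-\emph{edge}-well-linked. Neither conclusion you need follows from that. The ``moreover'' clause is a \emph{vertex}-congestion bound, and edge-well-linkedness does not imply it in unbounded-degree graphs: a star on $n$ leaves has a $1$-edge-well-linked leaf set, yet every leaf-to-leaf path must pass through the centre, forcing vertex congestion $\Theta(n)$. Your appeal to the vertex-capacitated flow--cut gap is circular --- to invoke it you would already need a lower bound on vertex sparsest cuts, which your edge-based step never establishes --- and the Raghavan--Thompson rounding is both unnecessary (once the right vertex-cut lower bound is in hand, integral routing comes directly from integral max-flow) and lossy (it costs an extra $\log n$ factor). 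Likewise, the bound $|X|=\Omega(k)$ must come from treewidth, which is governed by \emph{vertex} separators; sparse edge cuts cannot in general be ``assembled into a tree decomposition of width $o(k)$''. There is also an arithmetic slip: you claim $\alpha^*\cdot\alphasc(k)\cdot\log|V(G)|$ is a small constant when $\alpha^*=\Theta(1/\log k)$, but this product equals $\Theta(\log n/\sqrt{\log k})$, which is unbounded when $n$ is not $\poly(k)$; the lemma demands $\alpha^*=\Omega(1/\log k)$ independent of $n$. The Kreutzer--Tazari argument works directly with approximate balanced \emph{vertex} separators, which is what ties the construction to treewidth and delivers the vertex-routing guarantee in one stroke, without any flow--cut gap or randomized rounding.
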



\subsection{A Tree with Many Leaves or a Long $2$-Path}
Suppose we are given a connected $n$-vertex graph $Z$. A path
$P$ in $Z$ is called a \emph{$2$-path} if every vertex $v\in P$ has degree $2$ in $Z$. The following theorem, due to Leaf and
Seymour~\cite{LeafS12} states that we can find either a spanning tree
with many leaves or a long $2$-path in $Z$. For completeness, the
proof appears in the Appendix.

\begin{theorem}\label{thm: many leaves or a long 2-path}
  There is an efficient algorithm, that, given a connected $n$-vertex graph $Z$, and integers $L\geq 1,p\geq 1$ 
  with $\frac n {2L}\geq p+5$, either finds a spanning tree $T$ with at least $L$
  leaves in $Z$, or a $2$-path containing at least $p$ vertices in $Z$.
\end{theorem}


\subsection{Re-Routing Two Sets of Disjoint Paths}
Suppose we are given a directed graph $\hat G$, a set $U\subseteq V(\hat G)$ of its vertices, and an additional vertex $s\in V(\hat G)\setminus U$. A set $\xset$ of directed paths that originate at the vertices of $U$ and terminate at $s$ is called a set of $U$-$s$ paths. We say that the paths in $\xset$ are \emph{nearly disjoint}, if except for vertex $s$ they do not share other vertices.
We need the following lemma, that was proved by Conforti, Hassin and Ravi~\cite{CHR}. We provide a simpler proof, suggested to us by Paul Seymour \cite{PS-comm} in the Appendix.

\begin{lemma}
\label{lemma: re-routing of vertex-disjoint paths} 
There is an efficient algorithm, that, given a directed graph $\hat G$, two subsets $U_1,U_2$ of its vertices, and an additional vertex $s\in V(\hat G)\setminus(U_1\cup U_2)$, together with a set $\xset_1$ of $\ell_1$ nearly disjoint $U_1$-$s$ paths and a set $\xset_2$ of $\ell_2$ nearly disjoint $U_2$-$s$ paths in $\hat G$, where $\ell_1>\ell_2\geq 1$, finds a set $\xset'$ of $\ell_1$ nearly-disjoint $(U_1\cup U_2)$-$s$ paths, and  a partition $(\xset'_1,\xset_2')$ of $\xset'$, such that $|\xset_2'|=\ell_2$, the paths of $\xset_2'$ originate from $U_2$, and $\xset_1'\subseteq \xset_1$.\end{lemma}

\subsection{Cut-Matching Game and Degree Reduction}
We say that a graph $G=(V,E)$ is an $\alpha$-expander, if
$\min_{\stackrel{S\sse V:}{0<|S|\leq |V|/2}}\set{\frac{|E(S,\nots)|}{|S|}}\geq \alpha$. Equivalently, $G$ is an $\alpha$-expander if $\Phi_V(G)\geq \alpha$.

We use the cut-matching game of Khandekar, Rao and Vazirani~\cite{KRV}. In this game, we are given a set $V$ of $N$ vertices, where $N$ is even, and two players: a cut player, whose goal is to construct an expander $X$ on the set $V$ of vertices, and a matching player, whose goal is to delay its construction. The game is played in iterations. We start with the graph $X$ containing the set $V$ of vertices, and no edges.
In each iteration $j$, the cut player computes a bipartition $(A_j,B_j)$ of $V$ into two equal-sized sets, and the matching player returns some perfect matching $M_j$ between the two sets. The edges of $M_j$ are then added to $X$. Khandekar, Rao and Vazirani have shown that there is a strategy for the cut player, guaranteeing that after $O(\log^2N)$ iterations we obtain a $\half$-expander with high probability. Subsequently, Orecchia et al.~\cite{better-CMG} have shown the following improved bound:


\begin{theorem}[\cite{better-CMG}]\label{thm: CMG}
There is a randomized algorithm for the cut player, such that, no matter how the matching player plays, after $\gkrv(N)=O(\log^2N)$ iterations, graph $X$ is an $\alphaKRV(N)=\Omega(\log N)$-expander, with constant probability.
\end{theorem}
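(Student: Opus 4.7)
My plan is to analyze the cut-matching game via a spectral potential function in the matrix multiplicative weights framework. Let $L_j$ denote the Laplacian of the graph $X_j$ after $j$ iterations, and let $\Pi$ project onto the subspace orthogonal to the all-ones vector $\mathbf{1}$. I would track the potential $\Phi_j = \operatorname{Tr}\bigl(\Pi \exp(-\eta L_j)\Pi\bigr)$ for a carefully chosen parameter $\eta$. When $\Phi_j$ drops below $N^{-c}$ for a suitable constant $c$, all non-trivial eigenvalues of $L_j$ must exceed $\Omega(\log N)$, which by a discrete Cheeger-style inequality yields $\alphakrv(N) = \Omega(\log N)$-edge expansion.

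The cut player's strategy at step $j$ is randomized and spectrally motivated: sample a random unit vector $r \perp \mathbf{1}$, compute $u = \exp(-\eta L_j/2)\, r$, and set $A_j$ to be the $N/2$ vertices with the largest coordinates in $u$, with $B_j$ the complement. Heuristically, $u$ is dominated by the low-eigenvalue eigenvectors of $L_j$, so the bipartition $(A_j,B_j)$ surfaces a slow-mixing direction of the current graph. Whatever perfect matching $M_j$ the matching player returns, its edges must all cross this cut and therefore add non-trivial spectral mass precisely along the directions that were still lagging.

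The crux is a per-round progress lemma of the form
\[
\expect{\Phi_{j+1} \mid X_j} \;\le\; \Bigl(1 - \Omega\bigl(1/\log N\bigr)\Bigr)\,\Phi_j,
\]
proved by using the Golden--Thompson inequality to linearize $\exp(-\eta(L_j + L(M_j)))$ and then bounding the effect of the rank-one updates contributed by the edges of $M_j$. Iterating this bound for $\gkrv(N) = O(\log^2 N)$ rounds and applying Markov's inequality drives $\Phi_{\gkrv(N)}$ below the desired threshold with constant probability, yielding the stated expansion.

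The main obstacle will be calibrating $\eta$ and the cut player's strategy so that the adversarial matching produces the right constant in the progress lemma. Specifically, one must argue that even when the matching player picks the worst possible perfect matching across $(A_j,B_j)$, the added edges still lie with sufficient probability along the slow eigendirections of $L_j$ --- rather than being absorbed by directions that are already fast-mixing --- so as to shave off a $1/\log N$ factor from the potential per step (rather than only $1/\log^2 N$, which would merely recover the original KRV bound of constant expansion). Concentrating the random projection $u$ along low-eigenvalue directions while controlling higher-order fluctuations is where matrix concentration and the full force of matrix exponential inequalities seem to be required.
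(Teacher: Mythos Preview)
The paper does not prove this theorem at all: it is quoted verbatim from~\cite{better-CMG} (Orecchia, Schulman, Vazirani, Vishnoi) and used as a black box, so there is no ``paper's own proof'' to compare against.

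That said, your sketch is a fair high-level outline of the actual argument in~\cite{better-CMG}: they too track a matrix-exponential potential of the Laplacian, choose cuts via a random projection of (essentially) the heat kernel, and use Golden--Thompson to linearize the update and obtain a multiplicative $(1-\Omega(1/\log N))$ drop per round. A couple of places where your sketch would need tightening if you actually carried it out: the precise potential in~\cite{better-CMG} is not quite $\operatorname{Tr}(\Pi e^{-\eta L_j}\Pi)$ but a closely related quantity tailored so that the per-round progress argument goes through against an adversarial matching; and the step ``Cheeger-style inequality yields $\Omega(\log N)$ expansion'' should be stated as the easy direction $h \ge \lambda_2/2$, since the hard direction would only give $\Omega(\sqrt{\log N})$. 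The identification of the main obstacle --- showing that any perfect matching across the chosen cut contributes $\Omega(1/\log N)$ to the slow directions rather than $\Omega(1/\log^2 N)$ --- is exactly right, and is the content that separates~\cite{better-CMG} from the original KRV analysis.
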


\subsection{Starting Point}
Let $G$ be a graph with $\tw(G) = k$. The proof of Theorem~\ref{thm:
  main} uses the notion of edge-well-linkedness as well as
node-well-linkedness. In order to be able to translate between both
types of well-linkedness and the treewidth, we need to reduce the
maximum vertex degree of the input graph $G$.  Using the cut-matching
game, one can reduce the maximum vertex degree to $O(\log^3 k)$, while
only losing a $\poly\log k$ factor in the treewidth, as was noted in
\cite{ChekuriE13} (see Remark 2.2). The following theorem, whose proof
appears in the Appendix, provides the starting point for our
algorithm.

\begin{theorem}\label{thm: starting point}
  There is an efficient randomized algorithm, that, given a graph $G$ with $\tw(G)=k$, computes a subgraph $G'$ of $G$ with maximum
  vertex degree $\Delta=O(\log^3k)$, and a subset $Z$ of
  $\Omega(k/\poly\log k)$ vertices of $G'$, such that $Z$ is
  node-well-linked in $G'$, with high probability.
\end{theorem}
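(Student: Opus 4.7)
The plan is to combine three ingredients in sequence: a cut-matching-game-based degree reduction, Lemma~\ref{lem: find well-linked set} for producing a large edge-well-linked set, and Theorem~\ref{thm: grouping} for upgrading edge-well-linkedness to node-well-linkedness.

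First, I would perform the degree reduction on $G$ using the cut-matching game, following Remark 2.2 of \cite{ChekuriE13}. For each vertex $v\in V(G)$ whose degree exceeds $\Delta := \Theta(\log^3 k)$, the idea is to locally replace the star around $v$ by a low-degree ``routing gadget'' on the $d_G(v)$ edges incident to $v$, obtained by running the Khandekar--Rao--Vazirani game on these edges. By Theorem~\ref{thm: CMG}, such a gadget has maximum degree $O(\log^3 d_G(v))$ and behaves as an $\Omega(\log d_G(v))$-expander among the edge endpoints, so any unit-capacity flow through $v$ in $G$ can be rerouted through the gadget with congestion at most $\poly\log k$. Performing this replacement at every high-degree vertex (in parallel) and passing to the result $G'$ yields a graph with maximum degree $\Delta=O(\log^3 k)$ and, by the flow-based characterization of treewidth together with Lemma~\ref{lem:tw-wl}, $\tw(G')=\Omega(\tw(G)/\poly\log k)=\Omega(k/\poly\log k)$, with high probability.

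Second, I would apply Lemma~\ref{lem: find well-linked set} to $G'$, obtaining a set $X_0\subseteq V(G')$ of size $\Omega(\tw(G'))=\Omega(k/\poly\log k)$ that is $\alpha_0$-edge-well-linked in $G'$ with $\alpha_0=\Omega(1/\log |X_0|)=\Omega(1/\log k)$. Third, since $G'$ has bounded degree $\Delta$ and $|X_0|$ is large enough to satisfy the hypothesis $|X_0|\geq 64\Delta^4\alphasc(|X_0|)/\alpha_0$, I can apply Theorem~\ref{thm: grouping} to $X_0$ in $G'$. This produces a subset $X\subseteq X_0$ of size
\[
|X|\;=\;\Omega\!\left(\frac{\alpha_0}{\Delta^5\,\alphasc(|X_0|)}\cdot |X_0|\right)\;=\;\Omega(k/\poly\log k),
\]
that is node-well-linked in $G'$, giving exactly the conclusion of Theorem~\ref{thm: starting point}.

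The main obstacle is the first step: one has to install the cut-matching-game gadget at each high-degree vertex so that the resulting object is genuinely a subgraph of $G$ (after a harmless preprocessing such as subdividing edges and splitting each high-degree $v$ into $d_G(v)$ ``virtual'' copies joined by the internal expander structure), and then verify that edge-well-linkedness, and hence treewidth, survives this transformation up to a $\poly\log k$ factor. Because the gadget at $v$ routes any matching on its boundary with congestion $\poly\log k$, a standard flow-vs.-cut argument transports any demand that was $\alpha$-well-linked in $G$ to one that is $\Omega(\alpha/\poly\log k)$-well-linked in $G'$; tracking these losses and taking a union bound over the $O(|V(G)|)$ applications of Theorem~\ref{thm: CMG} gives the high-probability guarantee.
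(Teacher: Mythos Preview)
Your plan uses the right ingredients (Lemma~\ref{lem: find well-linked set}, the cut-matching game, and Theorem~\ref{thm: grouping}), but Step~1 has a genuine gap, and the order of Steps~1 and~2 is the reverse of what actually works.

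The problem is that local degree reduction by installing an expander gadget at each high-degree vertex does \emph{not} produce a subgraph of $G$. The matchings output by the cut-matching game on $N(v)$ are new edges between neighbors of $v$ that need not exist in $G$; and your proposed fix of ``subdividing edges and splitting $v$ into virtual copies'' yields a graph $\hat G$ of which $G$ is a \emph{minor}, not a supergraph. A subgraph of $\hat G$ is not a subgraph of $G$, so anything you later build there (path-of-sets system, grid minor) does not transfer back to $G$. There is also no way to realize the gadget inside $G$ while actually lowering the degree of $v$: if you route each matching edge as the $2$-path through $v$ using only edges of $G$, every edge incident to $v$ is still present and $v$ keeps its full degree.

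The paper runs the cut-matching game once, \emph{globally} on the well-linked set, rather than locally at each vertex; the degree bound is then a byproduct of the embedding, not a separate preprocessing step. Concretely: first apply Lemma~\ref{lem: find well-linked set} in the original graph $G$ to obtain a set $X$ of $\Omega(k)$ vertices with the property that any bipartition of $X$ can be connected by paths of vertex congestion $1/\alpha^*=O(\log k)$ in $G$. These low-congestion paths serve as the matching player in the cut-matching game on $X$; after $\gkrv(|X|)=O(\log^2 k)$ rounds, an expander on $X$ is embedded into $G$, and $G'$ is defined as the union of all the embedding paths. Since each vertex of $G$ lies on at most $O(\log k)\cdot O(\log^2 k)=O(\log^3 k)$ such paths, $G'$ is a genuine subgraph of $G$ with $\Delta=O(\log^3 k)$. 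The embedded expander shows that $X$ is $\Omega(1/\log^3 k)$-edge-well-linked in $G'$, and only then does Theorem~\ref{thm: grouping} upgrade a large subset of $X$ to node-well-linkedness.
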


We note that one can also reduce the degree to a constant with an
additional $\polylog k$ factor loss in the treewidth \cite{ChekuriE13},
however that result also relies on the preceding theorem as a starting
point. The constant can be made $4$ with a polynomial factor loss in
treewidth \cite{KreutzerT10} which we would not wish to lose.  We also
note that in~\cite{tw-sparsifiers} the authors have shown that the
degree can be reduced to $3$, and a set $X$ of cardinality
$\Omega(k/\poly\log k)$ as in the theorem can be computed efficiently,
but that proof builds on the present work.


\label{---------------------------------sec: from Path of Sets to grid----------------------------------------}
\section{A Path-of-Sets System}\label{sec: from path of sets to grid}

In this section we define our main combinatorial object, called a
path-of-sets system. We start with a few definitions.

Suppose we are given a collection $\sset=\set{S_1,\ldots,S_{\ell}}$ of
disjoint vertex subsets of $V(G)$. Let $S_i,S_j\in \sset$ be two such
subsets. We say that a path $P = (v_1,\ldots,v_h)$ \emph{connects} $S_i$
to $S_j$ if and only if the first vertex $v_1$ belongs to $S_i$ and
the last vertex $v_h$ belongs to $S_j$. We say that $P$ connects $S_i$
to $S_j$ \emph{directly}, if additionally $P$ does not contain
vertices of $\bigcup_{S \in \sset}S$ as inner vertices.

\begin{definition}
A path-of-sets system of width $w$ and length $\ell$ consists of:

\begin{itemize}
\item A sequence\footnote{We also interpret $\sset$ as a collection of
    sets for notational ease.} $\sset= (S_1,\ldots,S_{\ell})$ of
  $\ell$ disjoint vertex subsets of $G$, where for each $i$, $G[S_i]$
  is connected;

\item For each $1\leq i\leq \ell$, two disjoint sets $A_i,B_i\subseteq
  S_i$ of vertices of cardinality $w$ each, such that sets $A_i$ and
  $B_i$ are linked in $G[S_i]$; and

\item For each $1\leq i<\ell$, a set $\pset_i$ of $w$ disjoint paths,
  connecting the vertices of $B_i$ to the vertices of $A_{i+1}$ {\bf
    directly} (that is, paths in $\pset_i$ do not contain the vertices
  of $\bigcup_{S\in \sset}S$ as inner vertices), such that all paths
  in $\bigcup_i\pset_i$ are mutually disjoint.  (See Figure~\ref{fig:
    path-set-system}).
\end{itemize}

We say that it is a {\bf strong} path-of-sets system, if additionally
for each $1\leq i\leq \ell$, $A_i$ is node-well-linked in $G[S_i]$,
and so is $B_i$.
\end{definition}

Notice that a path-of-sets system is completely determined by the
sequence $\sset$ of vertex subsets; the collection
$\bigcup_{i=1}^{\ell-1}\pset_i$ of paths; and the sets $A_1\subseteq
S_1, B_{\ell}\subseteq S_{\ell}$ of vertices. In the following we will
denote path-of-sets systems by
$(\sset,\bigcup_i\pset_i,A_1,B_{\ell})$.

We note that Leaf and Seymour~\cite{LeafS12} have defined a very
similar object, called a $(w,\ell)$-grill, and they showed that the two
objects are roughly equivalent. Namely, a path-of-sets system with
parameters $w$ and $\ell$ contains a $(w,\ell)$-grill as a minor, while a
$(w,\ell)$-grill contains a path-of-sets system of width $w$ and length
$\Omega(\ell/w)$. They also show an efficient algorithm, that, given a $(w,\ell)$-grill with
$w=\Omega(g^3)$ and $\ell=\Omega(g^4)$, finds a model of the
$(g\times g)$-grid minor in the grill\footnote{In fact \cite{LeafS12}
  shows a slightly stronger result that a $(w,\ell)$-grill with $w \ge
  (2g+1)(2r-5)+2$ and $\ell \ge r (2g+r-2)$ contains a $(g \times
  g)$-grid minor or a bipartite-clique $K_{r,r}$ as a minor. This
  can give slightly improved bounds on the grid minor size if the
  given graph excludes bipartite-clique minors for small $r$.}.

Our goal is to show that a graph containing
a large enough path-of-sets system must also contain a large grid
minor. The following theorem is a starting point. 
The proof appears in Appendix.
\begin{theorem}\label{thm: find grid minor or good linkage}
  There is an efficient algorithm, that, given a connected graph
  $G=(V,E)$, two disjoint subsets $A,B\subseteq V$ of its vertices
  with $|A|=|B|=w$, such that $A,B$ are linked in $G$, and integers
  $h_1,h_2>1$ with $(16h_1+10)h_2\leq w$, either returns a model of
  the $(h_1\times h_1)$-grid minor in $G$, or computes a collection
  $\pset$ of $h_2$ node-disjoint paths, connecting vertices of $A$ to
  vertices of $B$, such that for every pair $P,P'\in \pset$ of paths
  with $P\neq P'$, there is a path $\beta_{P,P'}\subseteq G$,
  connecting a vertex of $P$ to a vertex of $P'$, where $\beta_{P,P'}$
  is internally disjoint from $\bigcup_{P''\in \pset}V(P'')$.
\end{theorem}

Given a path-of-sets system $(\sset,\bigcup_{i=1}^{\ell-1}\pset_i, A_1,B_{\ell})$ in $G$, we say that $G'$ is a subgraph of $G$ \emph{spanned} by the path-of-sets system, if $G'$ is the union of $G[S_i]$ for all $1\leq i\leq \ell$ and all paths in $\bigcup_{i=1}^{\ell-1}\pset_i$. 

The following corollary of Theorem~\ref{thm: find grid minor or good linkage} allows us to obtain a grid minor from a Path-of-Sets system.  Its proof appears in Appendix.

\begin{corollary}\label{cor: paths from the path-set system}
There is an efficient algorithm, that, given a  graph $G$, a path-of-sets system $(\sset,\bigcup_{i=1}^{\ell-1}\pset_i, A_1,B_{\ell})$ of length $\ell\geq 2$ and width $w$ in $G$, and integers $h_1,h_2$ with  $(16h_1+10)h_2\leq w$, either returns a model of the $(h_1\times h_1)$-grid minor in the subgraph $G'$ of $G$ spanned by the path-of-sets system, or returns a collection $\qset$ of $h_2$ node-disjoint paths in $G'$, connecting vertices of $A_1$ to vertices of $B_{\ell}$, such that for all $1\leq i\leq \ell$, for every path $Q\in \qset$, $S_i\cap Q$ is a path, and $S_1\cap Q,S_2\cap Q,\ldots,S_{\ell}\cap Q$ appear on $Q$ in this order. Moreover, for every $1\leq j\leq \floor{\ell/2}$, for every pair $Q,Q'\in \qset$ of paths, there is a path $\beta_{2i}(Q,Q')\subseteq G'[S_{2i}]$, connecting a vertex of $Q$ to a vertex of $Q'$, such that $\beta_{2i}(Q,Q')$ is internally disjoint from all paths in $\qset$.
\end{corollary}

The following corollary completes the construction of the grid minor, slightly improving upon a similar result of~\cite{LeafS12}. The proof is included in Appendix.

\begin{corollary}\label{cor: from path-set system to grid minor}
There is an efficient algorithm, that, given a graph $G$, an integer $g>1$ and a path-of-sets system $\left(\sset,\bigcup_{i=1}^{\ell-1}\pset_i, A_1,B_{\ell}\right)$  of width  $w=16g^2+10g$ and length $\ell=2g(g-1)$ in $G$, computes a model of the $\left (g\times g\right )$-grid minor in $G$.
\end{corollary}

The main technical contribution of our paper is summarized in the following theorem.

\begin{theorem}\label{thm: path-of-sets: main}
There are constants $c', c''>0$ and an efficient randomized algorithm, that, given a graph $G$ of treewidth $k$ and integral parameters $w^*,\ell^*>2$, such that $k/\log^{c'}k>c''w^*(\ell^*)^{48}$, with high probability returns a strong path-of-sets system of width $w^*$ and length $\ell^*$ in $G$.
\end{theorem}

Choosing $w^*,\ell^*=\Omega(k^{1/49}/\poly\log k)$, from Theorem~\ref{thm: path-of-sets: main}, we can efficiently construct a path-of-sets system of width $w^*$ and length $\ell^*$ in $G$ with high probability. From Corollary~\ref{cor: from path-set system to grid minor}, we can then efficiently construct a model of a grid minor of size $\left (\Omega(k^{1/98}/\poly\log k)\times \Omega(k^{1/98}/\poly\log k)\right )$. The rest of this paper is mostly dedicated to proving Theorem~\ref{thm: path-of-sets: main}. In Section~\ref{sec: extensions} we provide some extensions to this theorem, that we believe may be useful in various applications, such as, for example, algorithms for routing problems.

\label{------------------------------------sec: getting path of sets from tree------------------------------------}
\section{Constructing a Path-of-Sets System}\label{sec: getting path of sets from tree}

We can view a path-of-sets system as a meta-path, whose vertices
$v_1,\ldots,v_{\ell}$ correspond to the sets $S_1,\ldots,S_{\ell}$,
and each edge $e=(v_i,v_{i+1})$ corresponds to the collection
$\pset_i$ of $w$ disjoint paths. Unfortunately, we do not know how to
find such a meta-path directly (except for $\ell=O(\log k)$, which is
not enough for us). As we show below, a generalization of the work
of~\cite{ChuzhoyL12}, combined with some ideas from~\cite{ChekuriE13}
gives a construction of a meta-tree of degree at most $3$, instead of
the meta-path. We define the corresponding object that we call a
\emph{tree-of-sets system}. We start with the following definitions.

\begin{definition}
  Given a set $S$ of vertices in graph $G$, the \emph{interface} of
  $S$ is $\Gamma_G(S)=\set{v\in S\mid \exists e=(u,v)\in \out_G(S)}$. We
  say that $S$ has the $\alpha$-bandwidth property in $G$ if its
  interface $\Gamma_G(S)$ is $\alpha$-well-linked in $G[S]$.
\end{definition}


\begin{definition}
A tree-of-sets system with parameters $\ell,w,\alphawl$ ($\ell,w\geq 1$ are
integers and $0<\alphawl <1$ is real-valued) consists of:

\begin{itemize}
\item A collection $\sset=\set{S_1,\ldots,S_{\ell}}$ of $\ell$ disjoint vertex
  subsets of $G$, where for each $1\leq i\leq \ell$, $G[S_i]$ is connected;

\item A tree $T$ with $V(T)=\set{v_1,\ldots,v_{\ell}}$, whose
  maximum vertex degree is at most $3$;

\item For each edge $e=(v_i,v_j)$ of $T$, a set $\pset_e$
  of $w$ disjoint paths, connecting $S_i$ to $S_{j}$ {\bf directly}
  (that is, paths in $\pset_e$ do not contain the vertices of
  $\bigcup_{S\in \sset}S$ as inner vertices). Moreover, all paths in
  $\pset=\bigcup_{e\in E(T)}\pset_e$ are pairwise disjoint,
\end{itemize}

and has the following additional property.
Let $G'$ be the subgraph of $G$ obtained by the union of $G[S_i]$ for
all $S_i\in \sset$ and $\bigcup_{e\in E(T)}\pset(e)$. Then each
$S_i\in \sset$ has the $\alphawl$-bandwidth property in $G'$.
\end{definition}

We say that the graph $G'$ defined above is a subgraph of $G$ \emph{spanned} by the tree-of-sets system.

We remark that a tree-of-sets system is closely related to the path-of-sets system: a path-of-sets system is a tree-of-sets system where the tree is restricted to be a path; it is easy to verify that the linkedness property of sets $A_i,B_i$ inside every cluster $S_i$ guarantee the $1/4$-bandwidth property of $S_i$ in $G'$.

The following theorem describes our construction of a tree-of-sets
system.  It strengthens the results of~\cite{ChuzhoyL12} and its
proof appears in Section~\ref{sec: finding the tree}.

\begin{theorem}\label{thm: meta-tree}
  There is a constant $c$ and an efficient randomized
  algorithm that takes as input (i) a graph $G$ of maximum degree
  $\Delta$; (ii) a subset $\tset$ of $k$ vertices in $G$ called
  terminals, such that $\tset$ is node-well-linked in $G$ and the
  degree of every vertex in $\tset$ is $1$; and (iii) two integer
  parameters $\ell>1, w>4\log k$, such that
  $k/\log^{4}k>cw\ell^{19}\Delta^8$, and with high probability outputs a
  tree-of-sets system $(\sset,T,\bigcup_{e\in E(T)}\pset_e)$ in $G$,
  with parameters $w,\ell$ and $\alphawl=\Omega\left
    (\frac 1 {\ell^2\log^{1.5}k}\right )$. Moreover, for all $S_i\in \sset$,
  $S_i\cap \tset=\emptyset$.
\end{theorem}

We prove Theorem~\ref{thm: meta-tree} in the following section, and
show how to construct a path-of-sets system using this theorem
here.

Suppose we are given a tree-of-sets system $(\sset,T,\bigcup_{e\in
  E(T)}\pset(e))$, and an edge $e\in E(T)$, incident on a vertex
$v_i\in V(T)$. We denote by $\delta_{S_i}(e)\subseteq S_i$ the set of all vertices of $S_i$ that serve as endpoints of paths in $\pset(e)$.

\begin{definition}
A tree-of-sets system $(\sset,T,\bigcup_{e\in E(T)}\pset(e))$ with parameters $w,\ell,\alphawl$ is a {\bf strong} tree-of-sets system, if and only if for each $S_i \in \sset$:

\begin{itemize}
\item for each edge $e\in E(T)$ incident to $v_i$, the set $\delta_{S_i}(e)\subseteq S_i$ of vertices is node well-linked in $G[S_i]$; and
\item for every pair $e,e' \in E(T)$ of edges incident
    to $v_i$, the sets $\delta_{S_i}(e),\delta_{S_i}(e')\subseteq S_i$ of vertices are linked
    in $G[S_i]$.
\end{itemize}
\end{definition}

The following lemma allows us to transform an arbitrary tree-of-sets
system into a strong one.

\begin{lemma}
  \label{lem:strong-tree-of-set-system} There is an efficient
  algorithm, that, given a graph $G$ with maximum vertex degree at
  most $\Delta$ and a tree-of-sets system $(\sset, T, \bigcup_{e \in
    E(T)} \pset_e)$ with parameters $\ell,w,\alphabw$ in $G$, outputs
  a strong tree-of-sets system $(\sset, T, \bigcup_{e \in E(T)}
  \pset^*_e)$ with parameters $\ell,\tilde{w},\half$ such that for
  each $e \in E(T)$ $\pset^*_e \subseteq \pset_e$, and $\tilde{w} =
  \Omega\left(\frac{\alphabw^2}{\Delta^{10} (\alphasc(w))^2} \cdot
    w\right)$.
\end{lemma}

\begin{proof}
%
  We assume that tree $T$ is rooted at some vertex whose degree is
  greater than $1$.  We process the vertices of the tree $T$ in the
  bottom-up fashion: that is, we only process a vertex $v_i$ after all
  its descendants have been processed. Assume first that $v_i$ is a
  leaf vertex, and let $e$ be the unique edge incident to $v$ in
  $T$. We use Corollary~\ref{cor: ultimate boosting} to compute a
  subset $\delta'\subseteq \delta_{S_i}(e)$ of at least
  $\floor{\frac{\alphawl w}{2^{11}\Delta^5\alphasc(w)}}$ vertices,
  such that the vertices of $\delta'$ are node-well-linked in
  $G[S_i]$. We then discard from $\pset(e)$ all paths except those
  whose endpoint lies in $\delta'$.

  Consider now some non-leaf vertex $v_i$ of the tree, and assume that
  it has degree $3$ (the case where $v_i$ has degree $2$ is dealt with
  similarly). Let $e_1,e_2,e_3$ be the edges incident to $v_i$ in $T$,
  and for each $1\leq j\leq 3$, let $\delta_j=\delta_{S_i}(e_j)$; note
  that $\delta_j$ is based on the current set of paths $\pset(e)$ as
  we process the tree.  Recall that the set $\delta_1\cup \delta_2\cup
  \delta_3$ of vertices is $\alphawl$-well-linked in $G[S_i]$. We use
  Corollary~\ref{cor: ultimate boosting} to compute, for each $1\leq
  j\leq 3$ a subset $\delta'_j$ of at least $\floor{\frac{\alphawl
      |\delta_j|}{2^{11}\Delta^5\alphasc(w)}}$ vertices, so that each
  of the sets $\delta_j'$ is node-well-linked in $G[S_i]$, every pair
  of such sets is linked in $G[S_i]$, and $\delta_1'\cup \delta_2'\cup
  \delta_3'$ is $1/2$-well-linked in $G[S_i]$. For $1 \le j \le 3$ we
  discard paths from $\pset(e_j)$ that do not have an endpoint in
  $\delta'_j$.  Once all vertices of $T$ are processed, we claim that
  for every edge $e\in E(T)$ the resulting set $\pset(e)$ contains at
  least $\Omega\left(\frac{\alphabw^2}{\Delta^{10} (\alphasc(w))^2}
    \cdot w\right )$ paths, and the new tree-of-sets system is
  guaranteed to be strong. The latter property is easy to see. For the
  former, consider an edge $e = (v_i,v_{i'}) \in E(T)$ where
  $v_{i'}$ is the parent of $v_i$. Before processing $v_i$ there
  are $w$ paths in $\pset(e)$. After processing $v_i$ 
  there are least $\delta' = \floor{\frac{\alphawl
      w}{2^{11}\Delta^5\alphasc(w)}}$ paths left in $\pset(e)$.  
  After processing $v_{i'}$ there are at least 
  $\delta'' = \floor{\frac{\alphawl
      |\delta'|}{2^{11}\Delta^5\alphasc(w)}}$ paths that remain in $\pset(e)$.
  Paths in $\pset(e)$ are only eliminated when processing $v_i$ and $v_{i'}$
  and this gives us the desired claim.
\end{proof}

The following theorem allows us to obtain a strong path-of-sets system from a strong tree-of-sets system.

\begin{theorem}
  \label{thm:tree-of-set-to-path-of-set}
  There is an efficient algorithm, that, given a graph $G$ and a strong tree-of-sets system $(\sset, T, \bigcup_{e \in E(T)} \pset^*_e)$ with parameters $\ell,\tilde{w},\half$, and integers $w^*, \ell^*>1$, such that  $(\ell^*)^2 \leq \ell$ and $\tilde{w} > 16 w^* (\ell^*)^2+1$,  outputs a
  strong path-of-sets system $(\sset',\bigcup_{i=1}^{\ell^*-1}\pset_i,A_1,B_{\ell^*})$ of length $\ell^*$ and width $w^*$, with $\sset'\subseteq \sset$.
\end{theorem}

Before we prove the preceding theorem we use the results stated so far
to complete the proof of Theorem~\ref{thm: path-of-sets: main}.

\begin{proofof}{Theorem~\ref{thm: path-of-sets: main}}
  We assume that $k$ is large enough, so, e.g. $k^{1/30}>c^*\log k$
  for some large enough constant $c^*$.  Given a graph $G=(V,E)$ with
  treewidth $k$, we use Theorem~\ref{thm: starting point} to compute a
  subgraph $G'$ of $G$ with maximum vertex degree $\Delta=O(\log^3k)$,
  and a set $X$ of $\Omega(k/\poly\log k)$ vertices, such that $X$ is
  node-well-linked in $G'$. We add a new set $\tset$ of $|X|$
  vertices, each of which connects to a distinct vertex of $X$ with an
  edge. For convenience, we denote this new graph by $G$, and
  $|\tset|$ by $k$, and we refer to the vertices of $\tset$ as
  \emph{terminals}. Clearly, the maximum vertex degree of $G$ is at
  most $\Delta=O(\log^3k)$, the degree of every terminal is $1$, and
  $\tset$ is node-well-linked in $G$. 
  We can now assume that $\frac{k}{\Delta^{19}\log^8k}>\hat c w^*(\ell^*)^{48}$ for some large enough constant $\hat c$.
    
 We set $\ell=(\ell^*)^2$ and $w=\frac{\hat c}{c}\cdot w^*(\ell^*)^{10}\Delta^{11}\log^4k$, so $w>4\log k$ holds, where $c$ is the constant from Theorem~\ref{thm: meta-tree}.
 Clearly:
 
 \[cw\ell^{19}\Delta^{8}=(\hat c w^*(\ell^*)^{10}\Delta^{11}\log^4k)\cdot (\ell^*)^{38}\Delta^{8}=\hat c w^*(\ell^*)^{48}\Delta^{19}\log^4k.\]
 
 Therefore, $\frac{k}{\log^4k}>cw\ell^{19}\Delta^{8}$.
  We then apply Theorem~\ref{thm:
    meta-tree} to $G$ and $\tset$ to obtain a tree-of-sets system
  $(\sset,T,\bigcup_{e\in E(T)}\pset_e)$, with parameters
  $\ell$, $w$ and $\alphabw =
  \Omega(\frac{1}{\ell^2 \log^{1.5} k})$. 

  We use Lemma~\ref{lem:strong-tree-of-set-system} to convert
  $(\sset,T,\bigcup_{e\in E(T)}\pset_e)$ into a strong tree-of-set
  system $(\sset,T,\bigcup_{e\in E(T)}\pset^*_e)$ with parameters $\ell$
  and $\tilde{w} = \Omega(\frac{\alphabw^2}{\Delta^{10}
    (\alphasc(w))^2} \cdot w)$. If $\hat c$ is chosen to be large enough, 
  $\tilde w> 16w^*(\ell^*)^2+1$ must hold. We then apply
  Theorem~\ref{thm:tree-of-set-to-path-of-set} to obtain a path-of-set
  system with width $w^*$ and length
  $\ell^*$.
  \end{proofof}

We now prove Theorem~\ref{thm:tree-of-set-to-path-of-set}. 

\begin{proofof}{Theorem~\ref{thm:tree-of-set-to-path-of-set}}
Let
$(\sset, T, \bigcup_{e \in E(T)}\pset^*_e)$ be the tree-of-set
system with parameters $\ell$ and $\tilde{w}$. 
%
For convenience, for each set $S\in \sset$, we denote the
corresponding vertex of tree $T$ by $v_S$.  If tree $T$ contains a
root-to-leaf path of length at least $\ell^*$, then we are done, as this
path gives a path-of-sets system of width $\tilde{w} \ge w^*$ and length
$\ell^*$. The path-of-sets system is strong, since for every edge $e=(v_i,v_{i'})\in E(T)$, $\delta_{S_i}(e)$ is node-well-linked in $G[S_i]$.

Otherwise, since $|V(T)| = \ell \ge (\ell^*)^2$, $T$ must contain at
least $\ell^*+1$ leaves (see Claim~\ref{claim:path-leaves-in-tree}). Let
$L$ be a subset of $\ell^*$ leaves of $T$, and let $\lset\subseteq \sset$ be the
collection of $\ell^*$ clusters, whose corresponding vertices belong to
$L$, so $\lset=\set{S\in \sset\mid v_S\in L}$. We next show how to build
a path-of-sets system $(\sset',\bigcup_{i=1}^{\ell^*-1}\pset_i,A_1,B_{\ell^*})$, whose collection of clusters is $\sset'=\lset$.

Intuitively, we would like to perform a depth-first-search (DFS) tour on our
meta-tree $T$. This should be done with many paths in parallel. In
other words, we want to build $w^{*}$ disjoint paths, that visit the
clusters in $\sset$ in the same order --- the order of the tour. The
clusters in $\lset$ will then serve as the sets $\sset'$ in our final
path-of-sets system, and the collection of $w^*$ paths that we build
will be used for the path sets $\pset_i$. In order for this to work, we
need to route up to three sets of paths across clusters $S\in
\sset$. For example, if the vertex $v_S$ corresponding to the cluster
$S$ is a degree-3 vertex in $T$, then for the DFS tour, we need to
route three sets of paths across $S$: one set connecting the paths
coming from the parent of $v_S$ to its first child, one set connecting the
paths coming back from the first child to the second child, and one
set connecting the paths coming back from the second child to the
parent of $v_S$ (see Figure~\ref{fig: DFS-whole}). Even though every pair of
relevant vertex subsets on the interface of $S$ is linked, this
property only guarantees that we can route one such set of paths,
which presents a major technical difficulty in using this approach
directly. 
 
 \begin{figure}[h]
\centering
\scalebox{0.4}{\includegraphics{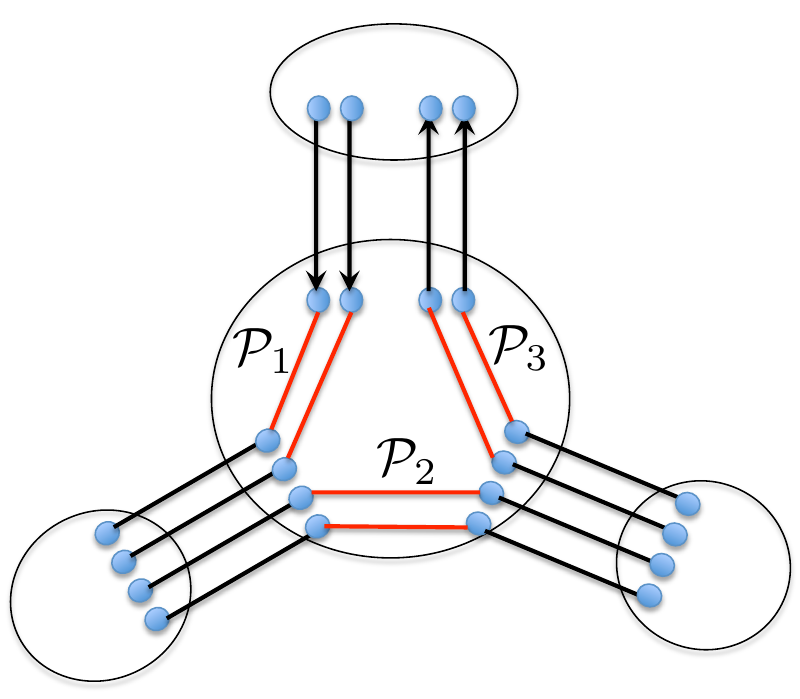}}\label{fig: DFS2}
\caption{Routing paths of the DFS tour inside $S$\label{fig: DFS-whole}}
\end{figure}

 Our algorithm consists of two phases. In the first phase, we build a collection of disjoint paths, connecting the cluster corresponding to the root of the tree $T$ to the clusters in $\lset$, along the root-to-leaf paths in $T$. In the second phase, we build the path-of-sets system by exploiting the paths constructed in Step 1, to simulate the tree tour.
 
\subsection{Step 1}
Let $G'$ be the graph obtained from the union of $G[S]$ for all $S\in
\sset$, and the sets $\pset^*_e$ of paths, for all $e\in E(T)$. We
root $T$ at a degree-$1$ vertex that does not belong to $L$ (since $T$ has at least $\ell^*+1$ leaves and $|L|=\ell^*$, such a vertex exists), and we let $S^*$ be the cluster
corresponding to the root of $T$. The goal of the first step is
summarized in the following theorem.
 
 \begin{theorem}
There is an efficient algorithm to compute, for each $S\in \lset$, a collection $\qset_S$ of $\floor{\tilde{w}/\ell^*}$ paths  in graph $G'$, that have the following properties:
 
 \begin{itemize}
 \item Each path $Q\in \qset_S$ starts at a vertex of $S^*$ and terminates at a vertex of $S$; its inner vertices are disjoint from $S$ and $S^*$.
 
 \item For each path $Q\in \qset_S$, for each cluster $S'\in \sset$, such that $v_{S'}$ lies on the path connecting $v_{S^*}$ to $v_S$ in $T$, $Q\cap G[S']$ is a (non-empty) path. For all other clusters $S'\in \sset$, $Q\cap G[S']=\emptyset$.
 
 \item The paths in ${\qset}=\bigcup_{S\in \lset}\qset_S$ are vertex-disjoint.
 \end{itemize}
 \end{theorem}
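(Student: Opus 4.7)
The plan is to build $\qset$ in two stages: a global allocation of ``path budgets'' along the edges of $T$, and then local routings through each on-path cluster that stitch the budgets together. First I would root $T$ at $v_{S^*}$, and for each edge $e \in E(T)$ let $L_e \subseteq \lset$ be the set of chosen leaves whose root-to-leaf path in $T$ uses $e$; since $|\lset| = r^*$, we have $|L_e| \leq r^*$. Then I would partition the $\tilde h$ paths of $\pset^*_e$ into disjoint groups $\{\pset^*_{e,S}\}_{S \in L_e}$, each of size at least $\floor{\tilde h/r^*}$. Intuitively, $\pset^*_{e,S}$ is the share of $e$'s paths reserved for $\qset_S$; it reserves a corresponding interface-vertex ``budget'' of size $\floor{\tilde h/r^*}$ on each of the two clusters incident to $e$.

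Next, I would traverse $T$ top-down and process each internal cluster $S_i$ that lies on the path from $v_{S^*}$ to some leaf of $\lset$. Let $e_p$ be the parent edge of $v_i$ and $e_1,\ldots,e_d$ its child edges, with $d \in \{1,2\}$. The internal routing at $S_i$ must connect the interface endpoints of $\pset^*_{e_p,S}$ to those of $\pset^*_{e_j,S}$ for every leaf $S$ through $v_i$, where $e_j$ is the child edge on $S$'s tree path. The total number of paths routed inside $G[S_i]$ is $|L_{e_p}| \cdot \floor{\tilde h/r^*} \leq \tilde h$. When $d = 1$, this is a single demand between subsets of $\Gamma_{S_i}(e_p)$ and $\Gamma_{S_i}(e_c)$, realized directly by their pairwise linkedness in $G[S_i]$; the matching produced by linkedness can be relabeled freely within each leaf-group since only the aggregate counts per leaf matter.

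The main obstacle will be the degree-$3$ case ($d = 2$), which is genuinely a two-commodity vertex-disjoint routing: I need disjoint path families $\Pi_1$ from $A_1 \subset \Gamma_{S_i}(e_p)$ to $B_1 \subset \Gamma_{S_i}(e_1)$ and $\Pi_2$ from $A_2 \subset \Gamma_{S_i}(e_p) \setminus A_1$ to $B_2 \subset \Gamma_{S_i}(e_2)$, for which pairwise linkedness alone does not suffice. My plan is to exploit the large slack $\tilde h > 16 h^* (r^*)^2$: since $|A_j| \leq |L_{e_j}| \cdot \floor{\tilde h/r^*} \leq \tilde h/(16 r^*)$, the demands are a tiny fraction of the interface sizes. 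I would first route $\Pi_1$ using the pairwise linkedness of $\Gamma_{S_i}(e_p)$ and $\Gamma_{S_i}(e_1)$, and then argue that the residual graph $G[S_i] \setminus V(\Pi_1)$ still contains linked subsets of $\Gamma_{S_i}(e_p)$ and $\Gamma_{S_i}(e_2)$ of the required size. This step would combine the node-well-linkedness of $\Gamma_{S_i}(e_p)$ and $\Gamma_{S_i}(e_2)$ with the $\alphabw$-bandwidth property of $S_i$, applying Theorem~\ref{thm: linkedness from node-well-linkedness} and Corollary~\ref{corollary: small vertex cut in well-linked graph} to the residual graph in order to absorb the vertices of $\Gamma_{S_i}(e_2)$ that $\Pi_1$ incidentally consumes.

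Finally, I would concatenate, for each leaf $S \in \lset$, the selected paths of $\pset^*_e$ along the tree path from $v_{S^*}$ to $v_S$ with the internal routings at each on-path cluster, producing $\qset_S$ of size $\floor{\tilde h/r^*}$. Contiguity of $Q \cap S'$ for on-path clusters $S'$ and disjointness from off-path clusters are immediate, since each internal routing lies wholly in $G[S']$ and the paths in $\pset^*_e$ are \emph{direct} (they avoid all clusters except their two endpoints). Mutual disjointness of all paths in $\qset$ follows because different leaves receive disjoint groups on every edge of $T$, the $\pset^*_e$'s across different edges are mutually disjoint by the definition of a tree-of-sets system, and internal routings at different clusters occupy disjoint vertex sets $G[S_i]$.
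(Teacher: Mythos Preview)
Your overall structure—top-down processing, stitching $\pset^*_e$-segments with in-cluster routings—matches the paper, but the pre-allocation step introduces a difficulty that is both unnecessary and, as you have set it up, fatal.

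The inequality $|A_j|\le |L_{e_j}|\cdot\floor{\tilde h/r^*}\le \tilde h/(16r^*)$ is simply wrong: $|L_{e_j}|$ can be as large as $r^*-1$, so $|A_j|$ can be essentially $\tilde h$, not a tiny fraction of it. The hypothesis $\tilde h>16h^*(r^*)^2$ gives no bound on $|L_{e_j}|$; that slack is used only later, in Step~2 of the paper. Consequently your ``route $\Pi_1$, then route $\Pi_2$ in the residual'' plan has no slack to work with. Moreover, Corollary~\ref{corollary: small vertex cut in well-linked graph} and Theorem~\ref{thm: linkedness from node-well-linkedness} bound things in terms of the number of \emph{removed vertices}, and $|V(\Pi_1)|$ is not controlled by $|A_1|$ (the paths can be long), so the residual argument does not go through even qualitatively.

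The paper sidesteps the two-commodity issue entirely by \emph{not} pre-allocating budgets to individual leaves. It only maintains the invariant that, at each node $v_S$, the number of incoming paths equals $n(S)\cdot\floor{\tilde h/r^*}$, where $n(S)$ is the number of chosen leaves below $v_S$. At a degree-$3$ node with children $v_{S'},v_{S''}$, the task is then a \emph{single} node-disjoint routing: connect the incoming endpoints $\Gamma'$ to some $\Gamma_1\subset\Gamma_S(e_1)$ of size $n(S')\floor{\tilde h/r^*}$ together with some $\Gamma_2\subset\Gamma_S(e_2)$ of size $n(S'')\floor{\tilde h/r^*}$. This is solved by a single-source/single-sink flow with vertex capacities $1$ and capacity caps $n(S')\floor{\tilde h/r^*}$, $n(S'')\floor{\tilde h/r^*}$ on the two branches; the fractional solution is the convex combination $\frac{n(S')}{n(S)}\cdot\pset_1+\frac{n(S'')}{n(S)}\cdot\pset_2$ of the two full linkages guaranteed by pairwise linkedness, and integrality of flow yields the disjoint paths. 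You essentially already noticed the key point in your $d=1$ case (``only the aggregate counts per leaf matter''); applying that same insight at $d=2$ dissolves the two-commodity obstacle.
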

 
 Notice that from the structure of graph $G'$, if $P$ is the path connecting $v_{S^*}$ to $v_S$ in the tree $T$, then every path in $\qset_S$ visits every cluster $S'$ with $v_{S'}\in P$ exactly once, in the order in which they appear on $P$, and it does not visit other clusters of $\sset$.

\begin{proof}
Recall that for every vertex $v_S\in V(T)$, and for each edge $e$ incident to $v_S$, we have defined a subset $\delta_S(e)\subseteq S$ of vertices that serve as endpoints of the paths in $\pset^*_e$.
  For each cluster $S'\in \sset$, let $n(S')$ be the number of the
  descendants of $v_{S'}$ in the tree $T$ that belong to $L$. If
  $S'\neq S^*$, then let $e$ be the edge of the tree $T$ connecting
  $v_{S'}$ to its parent, and denote $\delta_{S'}=\delta_{S'}(e)$. We process the tree in top to bottom order, while
  maintaining a set $\qset$ of disjoint paths.  We ensure that the
  following invariant holds throughout the algorithm. Let $S,S'\in \sset$ be
  a pair of clusters, such that $v_S$ is the parent of $v_{S'}$ in
  $T$. Assume that so far the algorithm has processed $v_S$ but it has
  not processed $v_{S'}$ yet. Then there is a collection
  $\qset_{S'}\subseteq\qset$ of $n(S')\cdot \floor{\tilde w/\ell^*}$ paths
  connecting $S^*$ to $S'$ in $\qset$. Each such path does not share
  vertices with $S'$, except for its last vertex, which must belong to
  $\delta_{S'}$. Moreover, for every path $Q\in \qset_{S'}$, for every
  cluster $S''\in \sset$, such that $v_{S''}$ lies on the path connecting $v_{S^*}$
  to $v_{S'}$ in $T$, $Q\cap G[S'']$ is a (non-empty) path, and for every other cluster $S''$, $Q\cap
  G[S'']=\emptyset$.

  In the first iteration, we start with the root vertex $v_{S^*}$. Let
  $v_S$ be its unique child, and let $e=(v_{S^*},v_S)$ be the corresponding
  edge of $T$. We let $\qset_{S}$ be an arbitrary subset of $n(S)\cdot
  \floor{\tilde w/\ell^*}$ paths of $\pset^*_{e}$, and we set
  $\qset=\qset_S$. (Notice that $|L|\cdot \floor{\tilde w/\ell^*}\leq
  \tilde w=|\pset^*_{e}|$, since $|L|=\ell^*$, so we can always find such a
  subset of paths).

  Consider now some non-leaf vertex $v_S$, and assume that its parent has
  already been processed. We assume that $v_S$ has two children. The case where
  $v_S$ has only one child is treated similarly. Let $\qset_S\subseteq\qset$ be the subset of
  paths currently connecting $S^*$ to $S$, and let $\Gamma'\subseteq
  \delta_S$ be the endpoints of these paths that belong to $S$. Let
  $v_{S'},v_{S''}$ be the children of $v_S$ in $T$, and let
  $e_1=(v_S,v_{S'}),e_2=(v_S,v_{S''})$ be the corresponding edges of
  $T$.  We need the following claim.

  \begin{claim}\label{claim: routing in sets}
    We can efficiently find a subset $\Gamma_1\subseteq\delta_S(e_1)$ of
    $n(S')\cdot \floor{\tilde w/\ell^*}$ vertices and a subset
    $\Gamma_2\subseteq\delta_S(e_2)$ of $n(S'')\cdot \floor{\tilde w/\ell^*}$
    vertices, together with a set $\rset$ of $|\Gamma'|$ disjoint
    paths contained in $G[S]$, where each path connects a vertex of
    $\Gamma'$ to a distinct vertex of $\Gamma_1\cup \Gamma_2$.
\end{claim}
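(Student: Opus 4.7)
My plan is to prove the claim by reformulating the routing as a single-commodity max-flow computation in a carefully designed auxiliary network, and then lower-bounding the min-cut via the pairwise linkedness guarantees of the strong tree-of-sets system. Set $n_1=n(S')\cdot\floor{\th/r^*}$ and $n_2=n(S'')\cdot\floor{\th/r^*}$; since the leaves of $T$ under $v_S$ partition between its two children $v_{S'},v_{S''}$, we have $|\Gamma'|=n_1+n_2=n(S)\cdot\floor{\th/r^*}\le r^*\cdot\floor{\th/r^*}\le\th$. In particular $|\Gamma'|$ is at most the common size $\th$ of each of $\Gamma_S(e_1)$ and $\Gamma_S(e_2)$, so subsets $\Gamma_1,\Gamma_2$ of the prescribed sizes will fit inside them. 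I will not commit to $\Gamma_1,\Gamma_2$ in advance; instead the flow will pick them.

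From $G'[S]=G[S]$ I build an auxiliary vertex-capacitated network $\hG$ as follows: every internal vertex of $G[S]$ gets unit capacity; a super-source $s^*$ is joined to each vertex of $\Gamma'$ by a unit-capacity edge; two intermediate sinks $t_1,t_2$ receive a unit-capacity edge from every vertex of $\Gamma_S(e_1)$ and $\Gamma_S(e_2)$ respectively; and a final sink $t$ receives an edge of capacity $n_1$ from $t_1$ and an edge of capacity $n_2$ from $t_2$. Any integer $s^*$--$t$ flow of value $|\Gamma'|=n_1+n_2$ must saturate both capped edges entering $t$, and its vertex-disjoint flow decomposition produces $|\Gamma'|$ vertex-disjoint paths in $G[S]$ from $\Gamma'$ to sets $\Gamma_1\subseteq\Gamma_S(e_1)$ and $\Gamma_2\subseteq\Gamma_S(e_2)$ of sizes $n_1$ and $n_2$, which is exactly the conclusion of the claim. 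Since all capacities are integer, it suffices to show that the min $s^*$--$t$ cut in $\hG$ has value at least $|\Gamma'|$; the algorithm then follows from any polynomial-time max-flow computation on $\hG$.

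The min-cut bound is established by a short case analysis based on which side of the cut contains $t_1$ and $t_2$. When both lie on the source side, the edges $t_1\to t$ and $t_2\to t$ are both severed and contribute $n_1+n_2=|\Gamma'|$. When exactly one of them, say $t_1$, lies on the source side, the cut pays $n_1$ for $t_1\to t$ and must additionally disconnect $\Gamma'$ from every vertex of $\Gamma_S(e_2)$ inside $G[S]$; by the pairwise linkedness of $\Gamma_S(e_0)$ and $\Gamma_S(e_2)$ in $G[S]$ (applied to $\Gamma'\subseteq\Gamma_S(e_0)$ and any $|\Gamma'|$-subset of $\Gamma_S(e_2)$), this secondary vertex cut costs at least $|\Gamma'|$, giving total cost $\ge n_1+|\Gamma'|\ge|\Gamma'|$; the symmetric sub-case is identical. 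Finally, when both $t_1,t_2$ lie on the sink side, the cut must separate $\Gamma'$ from $\Gamma_S(e_1)\cup\Gamma_S(e_2)$, and hence in particular from $\Gamma_S(e_1)$ alone, so linkedness of $\Gamma_S(e_0)$ and $\Gamma_S(e_1)$ once more yields cost at least $|\Gamma'|$. The main subtlety is the two-tier sink gadget $t_1,t_2,t$: the caps $n_1,n_2$ on the edges entering $t$ force any max-flow to split correctly between the two children, while still allowing each case of the min-cut analysis to invoke pairwise linkedness on a single side at a time; once this gadget is in place, the whole argument reduces to four applications of Menger's theorem.
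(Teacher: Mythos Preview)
Your proof is correct and builds essentially the same auxiliary flow network as the paper (with source and sink roles swapped, which is immaterial). The one difference is in how the flow value is certified: you argue via a min-cut case analysis on the location of $t_1,t_2$, whereas the paper directly exhibits a feasible fractional $s$--$t$ flow by taking the $|\Gamma'|$ node-disjoint $\Gamma'$--$\Gamma_S(e_1)$ paths guaranteed by linkedness and pushing $n(S')/n(S)$ units on each, and likewise $n(S'')/n(S)$ units on each of the $|\Gamma'|$ node-disjoint $\Gamma'$--$\Gamma_S(e_2)$ paths; the two contributions sum to vertex-congestion exactly $1$ and saturate the caps $n_1,n_2$, giving a one-line certificate in place of your four cases.
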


\begin{proof}
  We build the following flow network, starting with $G[S]$. Set the
  capacity of every vertex in $S$ to $1$. Add a sink $t$, and connect
  every vertex in $\Gamma'$ to $t$ with a directed edge. Add a new
  vertex $s_1$ of capacity $n(S')\cdot \floor{\tilde w/\ell^*}$ and connect it
  with a directed edge to every vertex of
  $\delta_{S}(e_1)$. Similarly, add a new vertex $s_2$ of capacity
  $n(S'')\cdot \floor{\tilde w/\ell^*}$ and connect it with a directed edge to
  every vertex of $\delta_{S}(e_2)$. Finally, add a source $s$ and
  connect it to $s_1$ and $s_2$ with directed edges.
  From the integrality of flow, it is enough to show that there is an
  $s$-$t$ flow of value $|\Gamma'|=n(S)\cdot
  \floor{\tilde w/\ell^*}=(n(S')+n(S''))\cdot \floor{\tilde w/\ell^*}$ in this flow
  network. Since $\Gamma'$ and $\delta_S(e_1)$ are linked, there is a
  set $\pset_1$ of $|\Gamma'|$ disjoint paths connecting the vertices
  of $\Gamma'$ to the vertices of $\delta_S(e_1)$. We send
  $n(S')/n(S)$ flow units along each such path. Similarly, there is a
  set $\pset_2$ of $|\Gamma'|$ disjoint paths connecting vertices of
  $\Gamma'$ to vertices of $\delta_S(e_2)$. We send $n(S'')/n(S)$ flow
  units along each such path. It is immediate to verify that this
  gives a feasible $s$-$t$ flow of value $|\Gamma'|$ in this network.
\end{proof}

Let $\pset_1\subseteq \pset^*(e_1)$ be the subset of paths whose
endpoints belong to $\Gamma_1$, and define $\pset_2\subseteq
\pset^*(e_2)$ similarly for $\Gamma_2$.  Concatenating the paths in
$\qset_S$, $\rset$, and $\pset_1\cup \pset_2$, we obtain two
collections of paths: set $\qset_{S'}$ of $n(S')\cdot \floor{\tilde w/\ell^*}$
paths, connecting $S^*$ to $S'$, and set $\qset_{S''}$ of $n(S'')\cdot
\floor{\tilde w/\ell^*}$ paths, connecting $S^*$ to $S''$, that have the
desired properties. We delete the paths of $\qset_S$ from $\qset$, and
add the paths in $\qset_{S'}$ and $\qset_{S''}$ instead.

Once all non-leaf vertices of the tree $T$ are processed, we obtain
the desired collection of paths.
\end{proof}

\subsection{Step 2}
 In this step, we process the tree $T$ in the bottom-up order,
 gradually building the path-of-sets system. We will imitate the
 depth-first-search tour of the tree, and exploit the sets
 $\set{\qset_S\mid S\in \lset}$ of paths constructed in Step 1 to
 perform this step.
  
 For every vertex $v_S$ of the tree $T$, let $T_{v_S}$ be the subtree
 of $T$ rooted at $v_S$. Define a subgraph $G_S$ of $G'$ to be the
 union of all clusters $G'[S']$ with $v_{S'}\in V(T_{v_S})$, and all
 sets $\pset^*_e$ of paths with $e\in E(T_{v_S})$. We also define
 $L_S\subseteq L$ to be the set of all descendants of $v_S$ that
 belong to $L$, and $\lset_S=\set{S'\mid v_{S'}\in L_{S}}$ the
 collection of the corresponding clusters.

 We process the tree $T$ in a bottom to top order, maintaining the
 following invariant. Let $v_S$ be a vertex of $T$, and let $\ell_S$
 be the length of the longest simple path connecting $v_S$ to 
 its descendant in $T$. Once vertex $v_S$ is processed, we have
 computed a path-of-sets system $(\lset_S,\pset^S)$ of width $w^*$
 and length $|\lset_S|$, that is completely contained in $G_S$. (That
 is, the path-of-sets system is defined over the collection $\lset_S$
 of vertex subsets - all subsets $S'\in \lset$ where $v_{S'}$ is a
 descendant of $v_S$ in $T$). Let $X,Y\in \lset_S$ be the first and
 the last set on the path-of-sets system. Then we also compute subsets
 $\qset'_X\subseteq\qset_X$, $\qset'_Y\subseteq\qset_Y$ of paths of
 cardinality at least $\floor{\frac{\tilde w}{2\ell^*}}-8\ell_S\cdot w^*$, such that
 the paths in $\qset_X'\cup \qset_Y'$ are completely disjoint from the
 paths in $\pset^S$ (see Figure~\ref{fig: invariant}). Note that
 $\qset_X,\qset_Y$ are the sets of paths computed in Step 1, so the
 paths in $\qset_X\cup\qset_Y$ are also disjoint from $\bigcup_{S'\in
   \lset}S'$, except that one endpoint of each such path must belong
 to $X$ or $Y$. We note that since the tree height is less than
 $\ell^*$, $\floor{\frac{\tilde w}{2\ell^*}}-8\ell_S\cdot w^*>
 \floor{\frac{\tilde w}{2\ell^*}}-8\ell^*\cdot w^*\geq0$ where the latter inequality
 is based on the assumption that $\tilde{w} > 16w^* (\ell^*)^2+1$.

\begin{figure}[h]
\begin{center}
\scalebox{0.4}{\includegraphics{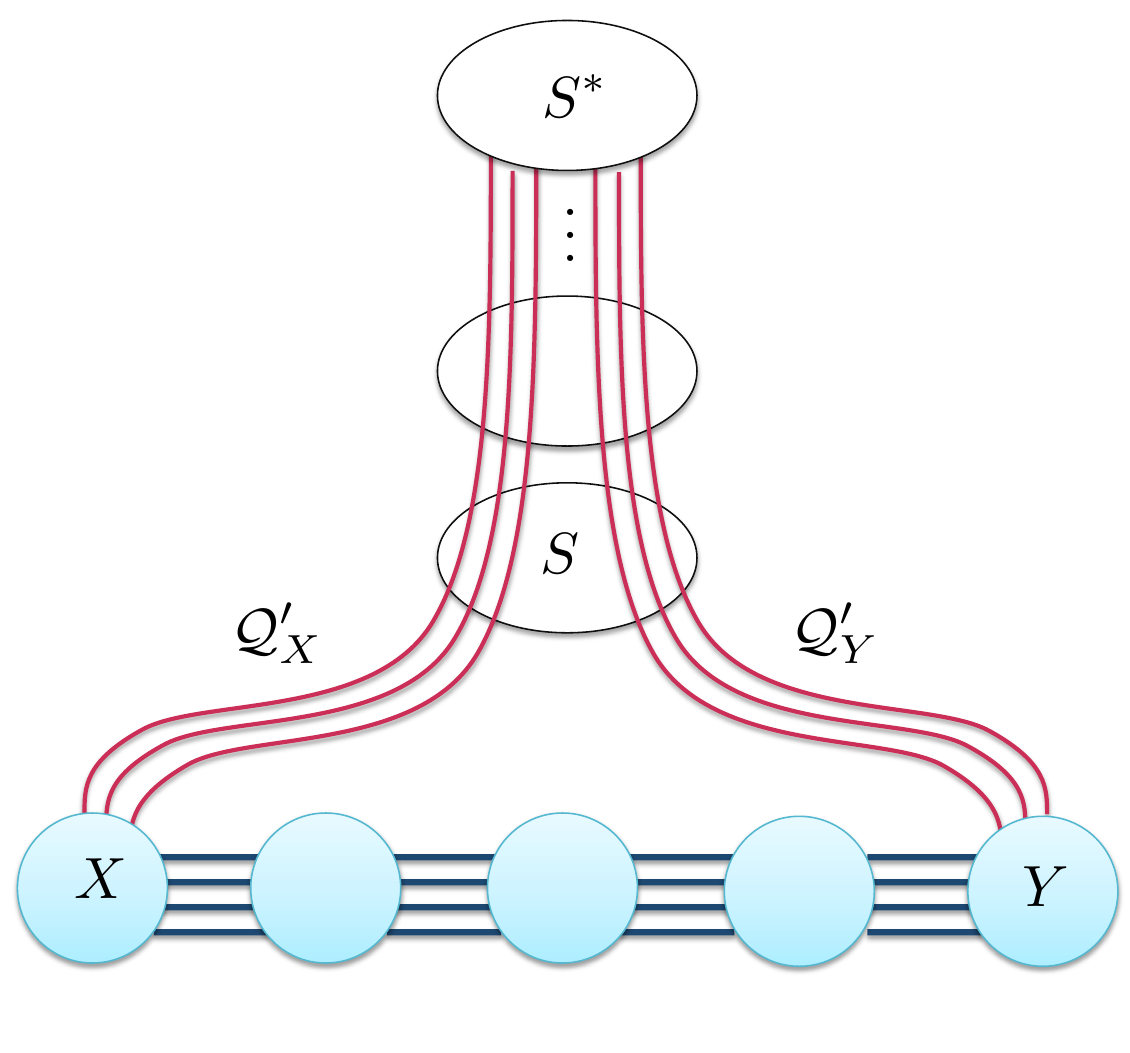}}\caption{Invariant for Step 2.\label{fig: invariant}}
\end{center}
\end{figure}

Clearly, once all vertices of the tree $T$ are processed, we obtain
the desired path-of-sets system $(\lset,\pset, A_1,B_{\ell^*})$ of length $\ell^*$ and
width $w^*$. We now describe the algorithm for processing each
vertex.

If $v_S$ is a leaf of $T$, then we do nothing. If $v_S\in L$, then the
path-of-sets system consists of only $\sset=\set{S}$, with $X=Y=S$. We let $\qset'_X,\qset'_Y$ be an arbitrary pair of disjoint subsets of $\qset_S$ containing $\floor{\frac{\tilde w}{2\ell^*}}$ paths each. If $v_S$ is a degree-2 vertex of $T$, then we also
do nothing. The path-of-sets system is inherited from its child, and
the corresponding sets $\qset'_X,\qset'_Y$ remain unchanged. Assume
now that $v_S$ is a degree-3 vertex, and let $v_{S'},v_{S''}$ be its
two children. Consider the path-of-sets systems that we computed for
its children: $(\lset_{S'},\pset^{S'})$ for $S'$ and
$(\lset_{S''},\pset^{S''})$ for $S''$. Let $X_1,Y_1$ be the first and
the last cluster of the first system, and $X_2,Y_2$ the first and the
last cluster of the second system (see Figure~\ref{fig:
  step2-start}). The idea is to connect the two path-of-sets systems
into a single system, by joining one of $\set{X_1,Y_1}$ to one of
$\set{X_2,Y_2}$ by $w^*$ disjoint paths. These paths are constructed by
concatenating sub-paths of some paths from $\qset'_{X_1}\cup \qset'_{Y_1}\cup
\qset'_{X_2}\cup \qset'_{Y_2}$, and additional paths contained in
$G[S]$.

Consider the paths in $\qset'_{X_1}$ and direct these paths from $X_1$
towards $S^*$. For each such path $Q$, let $v_Q$ be the first vertex
of $Q$ that belongs to $S$. Let $\Gamma_1=\set{v_Q\mid Q\in
  \qset'_{X_1}}$. We similarly define $\Gamma_2$,
$\Gamma'_1,\Gamma'_2$ for $\qset'_{Y_1}$, $\qset'_{X_2}$ and
$\qset'_{Y_2}$, respectively. Denote $\Gamma=\Gamma_1\cup \Gamma_2$,
and $\Gamma'=\Gamma_1'\cup \Gamma_2'$.  For simplicity, we denote the
portions of the paths in $\qset'_{X_1}\cup \qset'_{Y_1}$ that are
contained in $G[S]$ by $\pset$, and the portions of paths in
$\qset'_{X_2}\cup \qset'_{Y_2}$ that are contained in $G[S]$ by $\pset'$
(see Figure~\ref{fig: step2-middle}). That is,

\[\pset=\set{P\cap G[S]\mid P\in \qset'_{X_1}\cup \qset'_{Y_1}}; \quad\quad\quad \pset'=\set{P\cap G[S]\mid P\in \qset'_{X_2}\cup \qset'_{Y_2}}\]

Our goal is to find a set $\rset$ of $4w^*$ disjoint paths in $G[S]$
connecting $\Gamma$ to $\Gamma'$, such that the paths in $\rset$
intersect at most $8w^*$ paths in $\pset$, and at most $8w^*$ paths in
$\pset'$. Notice that in general, since sets $\Gamma,\Gamma'$ are
linked in $G[S]$, we can find a set $\rset$ of $4w^*$ disjoint paths
in $G[S]$ connecting $ \Gamma$ to $\Gamma'$, but these paths may
intersect many paths in $\pset\cup \pset'$. We start from an arbitrary
set $\rset$ of $4w^*$ disjoint paths connecting $\Gamma$ to $\Gamma'$
in $G[S]$. We next re-route these paths, using Lemma~\ref{lemma: re-routing of vertex-disjoint paths}.

\begin{figure}[h]
\centering
\subfigure[The beginning]{\scalebox{0.4}{\includegraphics{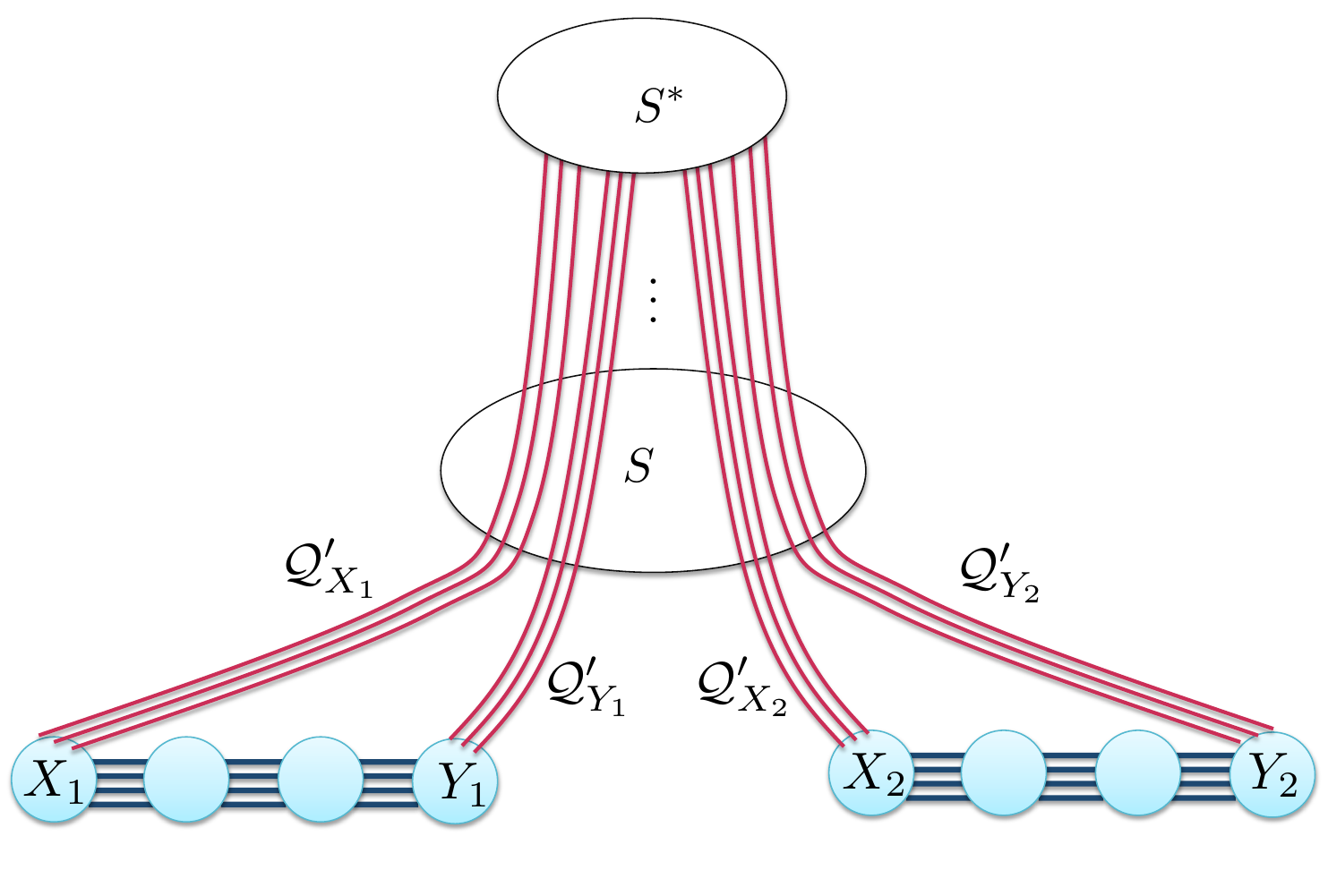}}\label{fig: step2-start}}
\subfigure[Finding the set $\rset$ of paths]{
\scalebox{0.5}{\includegraphics{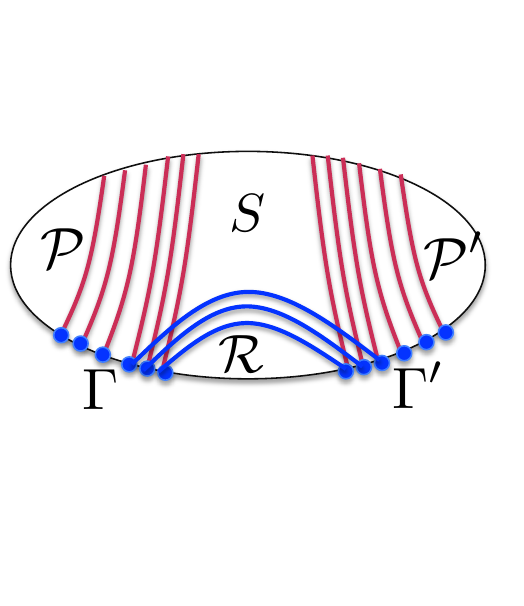}}\label{fig: step2-middle}}
\hspace{1cm}
\subfigure[The end]{\scalebox{0.4}{\includegraphics{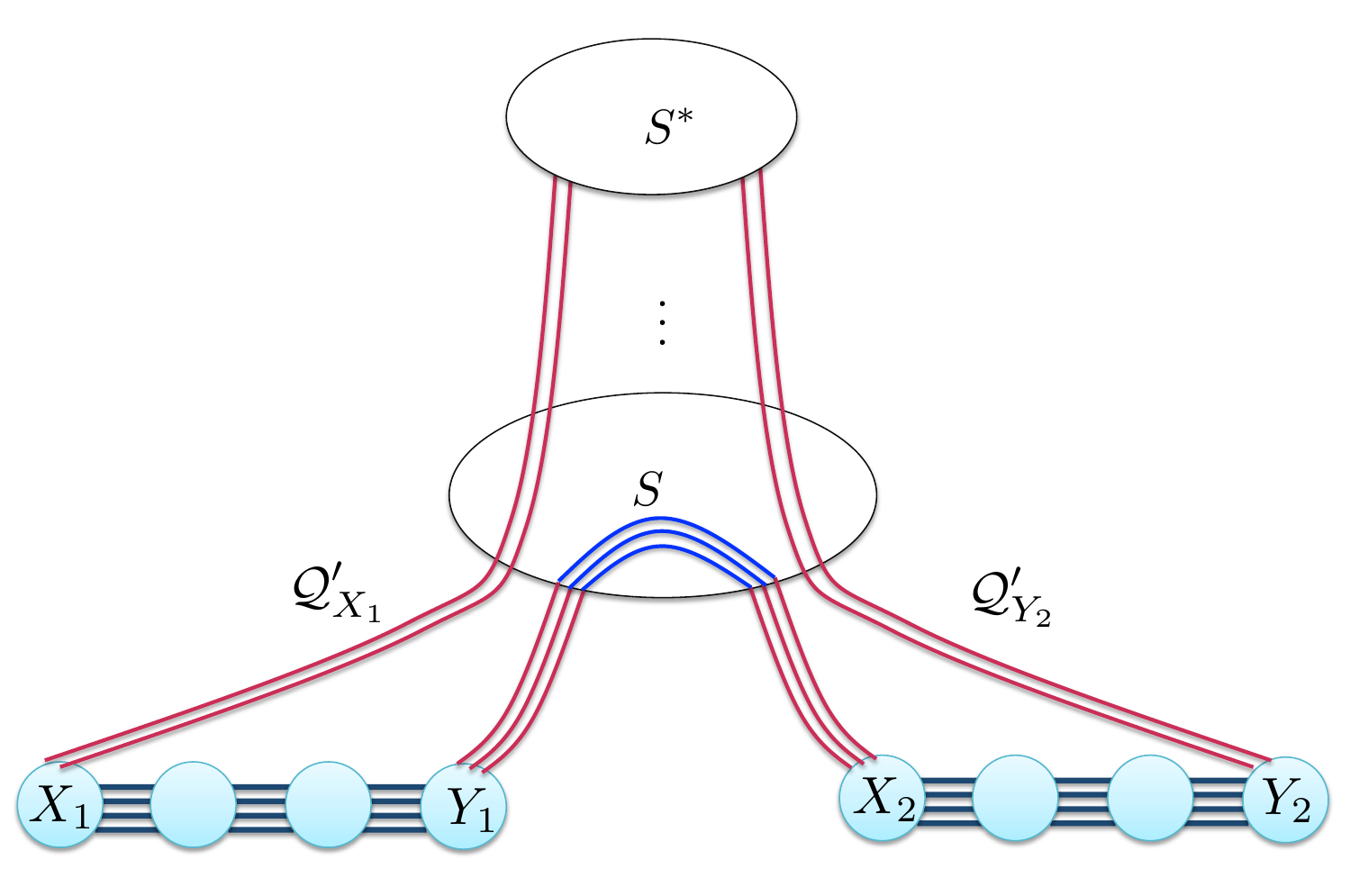}}\label{fig: step2-end}}
\caption{Processing a degree-3 vertex $v_S$. \label{fig: step 2}}
\end{figure}

We apply Lemma~\ref{lemma: re-routing of vertex-disjoint paths}
twice. First, we unify all vertices of $\Gamma$ into a single vertex
$s$, and direct the paths in $\pset$ and the paths in $\rset$ towards
it. We then apply Lemma~\ref{lemma: re-routing of vertex-disjoint
  paths} to the two sets of paths, with $\pset$ as $\xset_1$ and $\rset$
as $\xset_2$. Let $\tilde{\pset}\subseteq \pset$, $\rset'$ be the two
resulting sets of paths. We discard from $\tpset$ paths that share
endpoints with paths in $\rset'$ (at most $|\rset'|$ paths). Then
$|\tpset|\geq |\pset|-2|\rset|=|\pset|-8w^*$, and $\rset'$ contains
$4w^*$ disjoint paths connecting vertices in $\Gamma$ to vertices in
$\Gamma'$. Moreover, the paths in $\tpset\cup \rset'$ are completely
disjoint.

Next, we unify all vertices in $\Gamma'$ into a single vertex $s$, and
direct all paths in $\pset'$ and $\rset'$ towards $s$. We then apply
Lemma~\ref{lemma: re-routing of vertex-disjoint paths} to the two
resulting sets of paths, with $\pset'$ serving as $\xset_1$ and $\rset'$
serving as $\xset_2$. Let $\tpset'\subseteq\pset'$ and $\rset''$ be the two
resulting sets of paths. We again discard from $\tpset'$ all paths
that share an endpoint with a path in $\rset''$ -- at most $|\rset''|$
paths. Then $|\tpset'|\geq |\pset'|-2 |\rset''| \ge |\pset'|-8w^*$, and
the paths in $\tpset'\cup \rset''$ are completely disjoint from each
other. Notice also that the paths in $\rset''$ remain disjoint from
the paths in $\tpset$, since the paths in $\rset''$ only use vertices
that appear on the paths in $\rset'\cup \pset'$, which are disjoint
from $\tpset$.

Consider now the final set $\rset''$ of paths. The paths in $\rset''$
connect the vertices of $\Gamma_1\cup \Gamma_2$ to the vertices of
$\Gamma_1'\cup \Gamma_2'$. There must be two indices
$i,j\in\set{1,2}$, such that at least a quarter of the paths in
$\rset''$ connect vertices of $\Gamma_i$ to vertices of
$\Gamma'_j$. We assume without loss of generality that $i=2,j=1$, so at least $w^*$ of
the paths in $\rset''$ connect vertices of $\Gamma_2$ to
vertices of $\Gamma_1'$. Let $\rset^*\subseteq\rset''$ be the set of
these paths. We obtain a collection $\pset^*$ of $w^*$ paths
connecting $Y_1$ to $X_2$, by concatenating the prefixes of the paths
in $\qset'_{Y_1}$, the paths in $\rset''$, and the prefixes of the
paths in $\qset'_{X_2}$ (see Figure~\ref{fig: step2-end}). Notice that
the paths in $\pset^*$ are completely disjoint from the two
path-of-sets systems, except for their endpoints that belong to $Y_1$
and $X_2$. This gives us a new path-of-sets system, whose collection
of vertex sets is $\sset=\lset_S$. The first and the last sets in this system
are $X_1$ and $Y_2$, respectively. In order to define the new set
$\qset_{X_1}'$, we discard from $\qset_{X_1}'$ all paths that share
vertices with paths in $\rset''$ (as observed before, there are at
most $8w^*$ such paths). Since at the beginning of the current
iteration, $|\qset_{X_1}'|\geq \floor{\frac{\tilde w}{2\ell^*}}-8w^*\ell_{S'}\geq
\floor{\frac{\tilde w}{2\ell^*}}-8w^*(\ell_{S}-1)$, at the end of the current iteration,
$|\qset_{X_1}'|\geq \floor{\frac{\tilde w}{2\ell^*}}-8w^*\ell_{S}$ as required. The new
set $\qset'_{Y_2}$ is defined similarly.  From the construction, the
paths in $\qset_{X_1}'\cup \qset_{Y_2}'$ are completely disjoint from
the paths in $\rset^*$, and hence they are completely disjoint form
all paths participating in the new path-of-sets system. 

Notice that each vertex $v_i\in L$ is only incident on one edge $e\in E(T)$, and from the definition of strong tree-of-sets system, $\delta_{S_i}(e)$ is node-well-linked in $G[S_i]$. These are the only vertices of $S_i$ that may participate in the paths $\pset_j$ of the path-of-sets system, so we obtain a strong path-of-sets system.
\end{proofof}

In order to
complete the proof of Theorem~\ref{thm: main}, it now suffices to
prove Theorem~\ref{thm: meta-tree}

\label{------------------------------------sec: finding the tree----------------------------------------------}
\section{Proof of Theorem~\ref{thm: meta-tree}}
\label{sec: finding the tree}

This part mostly follows the algorithm of~\cite{ChuzhoyL12}. The main
difference is a change in the parameters, so that the number of
clusters in the tree-of-sets system is polynomial in $k$ and not
polylogarithmic, and extending the arguments of~\cite{ChuzhoyL12} to
handle vertex connectivity instead of edge connectivity. We also
improve and simplify some of the arguments of~\cite{ChuzhoyL12}.  Some
of the proofs and definitions are identical to or closely follow those
in~\cite{ChuzhoyL12} and are provided here for the sake of
completeness.  For simplicity, if $(\sset,T,\bigcup_{e\in
  E(T)}\pset_e)$ is a tree-of-sets system in $G$, with parameters
$w,\ell,\alphawl$ as in the theorem statement, and for each $S_i\in
\sset$, $S_i\cap \tset=\emptyset$, then we say that it is a \emph{good
  tree-of-sets system}.
  
\subsection{High-Level Overview}  
In this subsection we provide a high-level overview and intuition for
the proof of Theorem~\ref{thm: meta-tree}. We also describe a
non-constructive proof of the theorem, which is somewhat simpler than
the constructive proof that appears below. This high-level description
oversimplifies some parts of the algorithm for the sake of
clarity. This subsection is not necessary for understanding the
algorithm and is only provided for the sake of intuition. A formal
self-contained proof appears in the following subsections.

Recall that the starting point is a graph $G=(V,E)$ and a set 
 $\tset \subseteq V$ of $k$ terminals, such that $\tset$ is node-well-linked in
$G$. Set $\tset$ certifies that $G$ has treewidth $\Omega(k)$. There can
be portions of the graph that are not well-connected to $\tset$ and
hence are irrelevant to its well-linkedness property.  We can assume
without loss of generality that $G$ is edge-minimal subject to
satisfying the condition that $\tset$ is node-well-linked.  However,
there is no easy structural or algorithmic way to characterize this
minimality condition. For this reason, in various parts of the proof,
we will delete or suppress irrelevant portions of the graph.  Recall
that the goal is to prove that given $G$ and $\tset$, there is a
tree-of-sets system with appropriate parameters.  Loosely speaking, a
tree-of-sets system with parameters $\ell,w, \alphawl$ consists of $\ell$
vertex-disjoint subgraphs with vertex sets $S_1,\ldots, S_{\ell}$ stitched
together with collections of paths in a tree-like fashion. From the
definition we note that each $S_i$ has the property that $G[S_i]$ contains
a well-linked vertex set of size $\Omega(\alphawl \cdot w/\Delta)$.  Thus, we
need as a building block, a procedure that allows us to take a graph
$G$ with a well-linked set of size $k$ and decomposes $G$ into $\ell$
disjoint subgraphs each of which has a well-linked set of size
$\Omega(\alphawl \cdot w/\Delta)$.  The fact that this can be done was
first shown in \cite{Chuzhoy11}, and stated explicitly with additional
refinements in \cite{ChekuriC13}. We make the discussion more precise below.

The proof uses two main parameters: $\ell_0=\ell^2$, and $w_0=w\cdot
\poly(\ell\cdot \Delta\cdot\log k)$. We say that a subset $S$ of vertices
of $G$ is a \emph{good router} if and only if the following three conditions
hold: (1) $S\cap \tset=\emptyset$; (2) $S$ has the
$\alphaWL$-bandwidth property; and (3) $S$ can send a large amount of
flow (say at least $w_0/2$ flow units) to $\tset$ with no
edge-congestion in $G$. A collection of $\ell_0$ disjoint good routers is
called a \emph{good family of routers}. Roughly, the proof consists of
two parts. The first part shows how to find a good family of routers,
and the second part shows that, given a good family routers, we can
build a good tree-of-sets system. We start by describing the second
part, which is somewhat simpler.

\subsubsection*{From a Good Family of Routers to a Good Tree-of-Sets System} 
Suppose we are given a good family $\rset=\set{S_1,\ldots,S_{\ell_0}}$ of
routers. We now give a high-level description of an algorithm to
construct a good tree-of-sets system from $\rset$ (a formal proof
appears in Section~\ref{sec: proof of iteration theorem}). The
algorithm consists of two phases. We start with the first phase.

Since every set $S_i\in \rset$ can send $w_0/2$ flow units to the
terminals with no edge-congestion, and the terminals are
$1$-well-linked in $G$, it is easy to see that every pair $S_i,S_j\in
\rset$ of sets can send $w_0/2$ flow units to each other with
edge-congestion at most $3$, and so there are at least
$\frac{w_0}{6\Delta}$ node-disjoint paths connecting $S_i$ to
$S_j$. We build an auxiliary graph $H$ from $G$, by contracting each cluster
$S_i\in \rset$ into a super-node $v_i$. We view the super-nodes
$v_1,\ldots,v_{\ell_0}$ as the terminals of $H$, and denote
$\ttset=\set{v_1,\ldots,v_{\ell_0}}$. We then use standard splitting
procedures in graph $H$ repeatedly, to obtain a new graph $H'$, whose
vertex set is $\ttset$, every pair of vertices remains
$\frac{w_0}{\poly(\Delta)}$-edge-connected, and every edge
$e=(v_i,v_j)\in E(H')$ corresponds to a path $P_e$ in $G$, connecting
a vertex of $S_i$ to a vertex of $S_j$. Moreover, the paths
$\set{P_e\mid e\in E(H')}$ are node-disjoint, and they do not contain
the vertices of $\bigcup_{S\in \rset}S$ as inner vertices. More
specifically, graph $H$ is obtained from $H'$ by first performing a
sequence of edge contraction and edge deletion steps that preserve
element-connectivity of the terminals, and then performing standard
edge-splitting steps that preserves edge-connectivity. Let $Z$ be a
graph whose vertex set is $\ttset$, and there is an edge $(v_i,v_j)$
in $Z$ if and only if there are many (say $\frac{w_0}{\ell_0^2\poly(\Delta)}$)
parallel edges $(v_i,v_j)$ in $H'$. We show that $Z$ is a connected
graph, and so we can find a spanning tree $T$ of $Z$. Since $\ell_0=\ell^2$,
either $T$ contains a path of length $\ell$, or it contains at least $\ell$
leaves. Consider the first case, where $T$ contains a path $P$ of
length $\ell$. We can use the path $P$ to define a tree-of-sets system
(in fact, it will give a path-of-sets system directly, after we apply
Theorem~\ref{thm: grouping} to boost the well-linkedness of the boundaries of the clusters that participate in $P$, and Theorem~\ref{thm: linkedness
  from node-well-linkedness} to ensure the linkedness of the
corresponding vertex subsets inside each cluster). From now on, we
focus on the second case, where $T$ contains $\ell$ leaves. Assume
without loss of generality that the good routers that are associated with the leaves of
$T$ are $\rset'=\set{S_1,\ldots,S_{\ell}}$. We show that we can find, for
each $1\leq i\leq \ell$, a subset $E_i\subseteq \out_G(S_i)$ of
$w_3=w\poly(\ell\cdot \Delta)$ edges, such that for each pair $1\leq
i<j\leq \ell$, there are $w_3$ node-disjoint paths connecting $S_i$ to
$S_j$ in $G$, where each path starts with an edge of $E_i$ and ends
with an edge of $E_j$. In order to compute the sets $E_i$ of edges, we
show that we can simultaneously connect each set $S_i$ to the set
$S^*\in \rset$ corresponding to the root of tree $T$ with many
paths. For each $i$, let $\pset_i$ be the collection of paths
connecting $S_i$ to $S^*$. We will ensure that all paths in
$\bigcup_i\pset_i$ are node-disjoint. The existence of the sets
$\pset_i$ of paths follows from the fact that all sets $S_i$ can
simultaneously send large amounts of flow to $S^*$ (along the
leaf-to-root paths in the tree $T$) with relatively small
congestion. After boosting the well-linkedness of the endpoints of
these paths in $S^*$ using Theorem~\ref{thm: grouping} for each
$\pset_i$ separately, and ensuring that, for every pair
$\pset_i,\pset_j$ of such path sets, their endpoints are linked inside
$S^*$ using Theorem~\ref{thm: linkedness from node-well-linkedness},
we obtain somewhat smaller subsets $\pset'_i\subseteq\pset_i$ of paths
for each $i$. The desired set $E_i$ of edges is obtained by taking the
first edge on every path in $\pset'_i$. We now proceed to the second
phase.

The execution of the second phase is very similar to the execution of
the first phase, except that the initial graph $H$ is built slightly
differently.  We will ignore the clusters in $\rset\setminus
\rset'$. For each cluster $S_i\in \rset'$, we delete all edges in
$\out_G(S_i)\setminus E_i$ from $G$, and then contract the vertices of
$S_i$ into a super-node $v_i$. As before, we consider the set
$\ttset=\set{v_1,\ldots,v_{\ell}}$ of supernodes to be the terminals of the
resulting graph $\tH$. Observe that now the degree of every terminal
$v_i$ is exactly $w_3$, and the edge-connectivity between every pair
of terminals is also exactly $w_3$. It is this additional property
that allows us to build the tree-of-sets system in this phase. As
before, we perform standard splitting operations to reduce graph $\tH$
to a new graph $\tH'$, whose vertex set is $\ttset$. As before, every
edge $e=(v_i,v_j)$ in $\tH'$ corresponds to a path $P_e$ connecting a
a vertex of $S_i$ to a vertex of $S_j$ in $G$; all paths in  $\set{P_e\mid e\in E(\tH')}$ are node-disjoint, and they do not
contain the vertices of $\bigcup_{S\in \rset'}S$ as inner
vertices. However, we now have the additional property that the degree
of every vertex $v_i$ in $\tH'$ is $w_3$, and the edge-connectivity of
every pair of vertices is also $w_3$. We build a graph $\tZ$ on the
set $\ttset$ of vertices as follows: for every pair $(v_i,v_j)$ of
vertices, if there number of edges $(v_i,v_j)$ in $\tH'$ is
$n_{i,j}>w_3/\ell^3$, then we add $n_{i,j}$ parallel edges $(v_i,v_j)$ to
$\tZ$. Otherwise, if $n_{i,j}<w_3/\ell^3$, then we do not add an edge
connecting $v_i$ to $v_j$. We then show that the degree of every
vertex in $\tZ$ remains very close to $w_3$, and the same holds for
edge-connectivity of every pair of vertices in $\tZ$. Note that every
pair $v_i,v_j$ of vertices of $\tZ$ is either connected by many
parallel edges, or there is no edge $(v_i,v_j)$ in $\tZ$. In the final
step, we show that we can construct a spanning tree of $\tZ$ with
maximum vertex degree bounded by $3$. This spanning tree immediately
defines a good tree-of-sets system. The construction of the spanning
tree is performed using a result of Singh and Lau~\cite{Singh-Lau},
who showed an approximation algorithm for constructing a
minimum-degree spanning tree of a graph. Their algorithm is based on
an LP-relaxation of the problem. They show that, given a feasible
solution to the LP-relaxation, one can construct a spanning tree with
maximum degree bounded by the maximum fractional degree plus
$1$. Therefore, it is enough to show that there is a solution to the
LP-relaxation on graph $\tZ$, where the fractional degree of every
vertex is bounded by $2$. The fact that the degree of every vertex,
and the edge-connectivity of every pair of vertices are very close to
the same value allows us to construct such a solution.

An alternative way of seeing that graph $\tZ$ has a spanning tree of
degree at most $3$ is to observe that graph $\tZ$ is $1$-tough (that
is, if we remove $q$ vertices from $\tZ$, there are at most $q$
connected components in the resulting graph, for every $q$). It is known
that a $1$-tough graph has a spanning tree of degree at most
$3$~\cite{graph-toughness}.

 \subsubsection*{Finding a Good Family of Routers}
 One of the main tools that we use in this part is a good clustering
 of the graph $G$ and a legal contracted graph associated with it. We
 say that a subset $C\subseteq V(G)$ of vertices is a small cluster
 if and only if $|\out(C)|\leq w_0$, and we say that it is a large cluster
 otherwise. A partition $\cset$ of $V(G)$ is called a \emph{good
   clustering} if and only if each terminal $t\in \tset$ belongs to a separate
 cluster $C_t\in \cset$, where $C_t=\set{t}$, all clusters in $\cset$
 are small, and each cluster has the $\alphawl$-bandwidth
 property. Given a good clustering $\cset$, the corresponding legal
 contracted graph is obtained from $G$ by contracting every cluster
 $C\in \cset$ into a super-node $v_C$ (notice that terminals are not
 contracted, since each terminal is in a separate cluster). The legal
 contracted graph can be seen as a model of $G$, where we ``hide''
 some irrelevant parts of the graph inside the contracted clusters.
 The main idea of the algorithm is to exploit the legal contracted
 graph in order to find a good family of routers, and, if we fail to
 do so, to construct a smaller legal contracted graph. We start with a
 non-constructive proof of the existence of a good family of routers
 in $G$.

\paragraph{Non-Constructive Proof} 
We assume that $G$ is minimal inclusion-wise, for which the set
$\tset$ of terminals is $1$-well-linked. That is, for an edge $e\in
E(G)$, if we delete $e$ from $G$, then $\tset$ is not $1$-well-linked
in the resulting graph.  Let $\cset^*$ be a good clustering of $V(G)$
minimizing the total number of edges in the corresponding legal
contracted graph (notice that a partition where every vertex belongs
to a separate cluster is a good clustering, so such a clustering
exists).  Consider the resulting legal contracted graph $G'$. The
degree of every vertex in $G'$ is at most $w_0$, and, from the
well-linkedness of the terminals in $G$, it is not hard to show that
$G'\setminus \tset$ must contain at least $\Omega(k)$ edges. Then
there is a partition $\set{X_1,\ldots,X_{\ell_0}}$ of $V(G')\setminus
\tset$, where for each $1\leq i\leq \ell_0$, $|\out_{G'}(X_i)|<O(\ell_0
|E_{G'}(X_i)|)$ (a random partition of $V(G')\setminus \tset$ into
$\ell_0$ subsets will have this property with constant probability. This
is since, if we denote $m=|E(G')\setminus \tset|$, then we expect
roughly $\frac{m+k}{\ell_0}$ edges in set $\out_{G'}(X_i)$, and roughly
$m/\ell_0^2$ edges with both endpoints inside $X_i$.)

For each set $X_i$, let $X'_i\subseteq V(G)\setminus \tset$ be the
corresponding subset of vertices of $G$, obtained by un-contracting
each supernode $v_C$ (that is, $X'_i=\bigcup_{v_C\in X_i}C$). If
$\Gamma_i$ is the interface of $X'_i$ in $G$, then we still have that
$|\Gamma_i|\leq O(\ell_0 |E_{G'}(X_i)|)$. As our next step, we would like to
find a partition $\wset_i$ of the vertices of $X'_i$ into clusters,
such that each cluster $W\in \wset_i$ has the $\alphawl$-bandwidth
property, and the total number of edges connecting different clusters
is at most $O(|\Gamma_i|/\ell_0)<|E_{G'}(X_i)|$. We call this procedure
bandwidth-decomposition. Assume first that we are able to find such a
decomposition. We claim that $\wset_i$ must contain at least one good
router $S_i$. If this is the case, then we have found the desired
family $\set{S_1,\ldots,S_{\ell_0}}$ of good routers. In order to show
that $\wset_i$ contains a good router, assume first that at least one
cluster $S_i\in \wset_i$ is large. The decomposition $\wset_i$ already
guarantees that $S_i$ has the $\alphaWL$-bandwidth property. If $S_i$
is not a good router, then it must be impossible to send large amounts
of flow from $S_i$ to $\tset$ in $G$. In this case, using known
techniques (see appendix of \cite{ChekuriNS13}), we can show that we
can delete an edge from $G[S_i]$, while preserving the
$1$-well-linkedness of the terminals\footnote{The technical statement
  here is that if there is a small cut separating two large
  well-linked sets in a graph then there is an edge that can be
  removed without affecting the well-linkedness of one of the sets.},
contradicting the minimality of $G$. Therefore, if $\wset_i$ contains
at least one large cluster, then it contains a good router. Assume now
that all clusters in $\wset_i$ are small. Then we show a new good
clustering $\cset'$ of $V(G)$, whose corresponding contracted graph
contains fewer edges than $G'$, leading to a contradiction. The new
clustering contains all clusters $C\in \cset^*$ with $C\cap
X_i'=\emptyset$, and all clusters in $\wset_i$. In other words, we
replace the clusters contained in $X_i'$ with the clusters of
$\wset_i$. The reason the number of edges goes down in the legal
contracted graph is that the total number of edges connecting
different clusters of $\wset_i$ is less than $|E_{G'}(X_i)|$.

The final part of the proof that we need to describe is the
bandwidth-decomposition procedure. Given a cluster $X_i'$, we would
like to find a partition $\wset_i$ of $X_i'$ into clusters that have
the $\alphawl$-bandwidth property, such that the number of edges
connecting different clusters is bounded by $O(|\Gamma_i|/\ell_0)$. There
are by now standard algorithms for finding such a decomposition, where
we repeatedly select a cluster in $\wset_i$ that does not have the
desired bandwidth property, and partition it along a sparse cut
\cite{Raecke,CKS}. Unfortunately, since our bandwidth parameter
$\alphaWL$ is independent of $n$, such an approach can only work when
$|\Gamma_i|$ is bounded by $\poly(k)$, which is not necessarily true
in our case. In order to overcome this difficulty, as was done in
\cite{Chuzhoy11}, we slightly weaken the bandwidth condition, and define
a $(k,\alphawl)$-bandwidth property as follows: We say that cluster
$C$ with interface $\Gamma$ has the $(k,\alphawl)$-bandwidth property,
if and only if for every pair $A,B\subseteq \Gamma$ of equal-sized disjoint subsets,
with $|A|,|B|\leq k$, the minimum edge-cut separating $A$ from $B$ in
$G[C]$ has at least $\alphawl\cdot |A|$ edges. Alternatively, we can
send $|A|$ flow units from $A$ to $B$ inside $G[C]$ with
edge-congestion at most $1/\alphawl$. Notice that if $C$ does not have
the $(k,\alphawl)$-bandwidth property, then there is a partition
$(C_1,C_2)$ of $C$, and two disjoint equal-sized subsets $A\subseteq
\Gamma\cap C_1$, $B\subseteq \Gamma\cap C_2$, with $|A|,|B|\leq k$, such that
$|E_G(C_1,C_2)|<\alphawl\cdot |A|$. We call such a partition a
$(k,\alphawl)$-violating cut of $C$. Even if we weaken the definition
of the good routers, and replace the $\alphawl$-bandwidth property
with the weaker $(k,\alphawl)$-bandwidth property, we can still
construct a good tree-of-sets system from a family of good
routers. This is since the construction algorithm only uses the
$\alphawl$-bandwidth property of the routers in the weak sense, by
sending small amounts of flow (up to $k$ units) across the
routers. Given the set $X_i'$, we can now show that there is a
partition $\wset_i$ of $X_i'$ into clusters that have
the $(k,\alphawl)$-bandwidth property, such that the number of edges
connecting different clusters is bounded by $O(|\Gamma_i|/\ell_0)$.

\paragraph{Constructive Proof}
A constructive proof is more difficult, for the following two
reasons. First, given a large cluster $S_i$, that has the
$\alphawl$-bandwidth property, but cannot send large amounts of flow
to the terminals in $G$, we need an efficient algorithm for finding an edge that
can be removed from $G[S_i]$ without violating the $1$-well-linkedness
of the terminals. While we know that such an edge must exist, we do
not have a constructive proof that allows us to find it. The second
problem is related to the bandwidth-decomposition procedure. While we
know that, given $X_i'$, there is a desired partition $\wset_i$ of
$X_i$ into clusters that have the $(k,\alphawl)$-bandwidth property,
we do not have an algorithmic version of this result. In particular,
we need an efficient algorithm that finds a $(k,\alphawl)$-violating cut in a
cluster that does not have the $(k,\alphawl)$-bandwidth property. (An efficient
algorithm that gives a $\poly(\log k)$ approximation, by returning an
$(\Omega(k),\alphawl\cdot \poly\log k)$-violating cut would be sufficient,
but as of now we do not have such an algorithm).

In addition to a good clustering defined above, our algorithm uses a
notion of acceptable clustering. An acceptable clustering is defined
exactly like a good clustering, except that large clusters are now
allowed. Each small cluster in an acceptable clustering must have the
$\alphawl$-bandwidth property, and each large cluster must induce a
connected graph in $G$.

In order to overcome the difficulties described above, we define a
potential function $\phi$ over partitions $\cset$ of $V(G)$. Given
such a partition $\cset$, $\phi(\cset)$ is designed to be a good
approximation of the number of edges connecting different clusters of
$\cset$. Additionally, $\phi$ has the following two useful
properties. If we are given an acceptable clustering $\cset$, a large
cluster $C\in \cset$, and a $(k,\alphawl)$-violating cut $(C_1,C_2)$
of $C$, then we can efficiently find a new acceptable clustering
$\cset'$ with $\phi(\cset')<\phi(\cset)-1/n$. Similarly, if we are
given an acceptable clustering $\cset$, and a large cluster $C\in
\cset$, such that $C$ cannot send $w_0/2$ flow units to the terminals,
then we can efficiently find a new acceptable clustering $\cset'$ with
$\phi(\cset')<\phi(\cset)-1/n$.

The algorithm consists of a number of phases. In every phase, we start
with some good clustering $\cset$, where in the first phase,
$\cset=\set{\set{v}\mid v\in V(G)}$. In each phase, we either find a
good tree-of-sets system, of find a new good clustering $\cset'$, with
$\phi(\cset')\leq\phi(\cset)-1$. Therefore, after $O(|E(G)|)$ phases, we
are guaranteed to find a good tree-of-sets system.

We now describe an execution of each phase. Let $\cset$ be the current
good clustering, and let $G'$ be the corresponding legal contracted
graph. As before, we find a partition $\set{X_1,\ldots,X_{\ell_0}}$ of
$V(G')\setminus \tset$, where for each $1\leq i\leq \ell_0$,
$|\out_{G'}(X_i)|<O(\ell_0 |E_{G'}(X_i)|)$, using a simple randomized
algorithm.  For each set $X_i$, let $X'_i\subseteq V(G)\setminus
\tset$ be the corresponding set of vertices in $G$, obtained by
un-contracting each supernode $v_C\in X_i$. For each $1\leq i\leq
\ell_0$, we also construct an acceptable clustering $\cset_i$, containing
all clusters $C\in \cset$ with $C\cap X'_i=\emptyset$, and all
connected components of $G[X_i]$ (if any such connected component is a
small cluster, we further partition it into clusters with
$\alphawl$-bandwidth property). We show that $\phi(\cset_i)\leq
\phi(\cset)-1$ for each $i$. We then perform a number of iterations.

In each iteration, we are given as input, for each $1\leq i\leq \ell_0$,
an acceptable clustering $\cset_i$, with $\phi(\cset_i)\leq
\phi(\cset)-1$, where each large cluster of $\cset_i$ is contained in
$X'_i$. An iteration is executed as follows. If, for some $1\leq i\leq
\ell_0$, the clustering $\cset_i$ contains no large clusters, then
$\cset_i$ is a good clustering, with $\phi(\cset_i)\leq
\phi(\cset)-1$. We then finish the current phase and return the good
clustering $\cset_i$. Otherwise, for each $1\leq i\leq \ell_0$, there is
at least one large cluster $S_i\in \cset_i$. We treat the clusters
$\set{S_1,\ldots,S_{\ell_0}}$ as a potential good family of routers, and
try to construct a tree-of-sets system using them. If we succeed in
building a good tree-of-sets system, then we are done, and we
terminate the algorithm. Otherwise, we will obtain a certificate that
one of the clusters $S_i$ is not a good router. The certificate is
either a $(k,\alphawl)$-violating partition of $S_i$, or a small cut
(containing fewer than $w_0/2$ edges), separating $S_i$ from the
terminals. In either case, using the properties of the potential
function, we can obtain a new acceptable clustering $\cset'_i$ with
$\phi(\cset'_i)\leq \phi(\cset_i)-1/n$, to replace the current
acceptable clustering $\cset_i$. We then continue to the next
iteration.

Overall, as long as we do not find a good tree-of-sets system, and do
not find a good clustering $\cset'$ with $\phi(\cset')\leq
\phi(\cset)-1$, we make progress by lowering the potential of one of
the acceptable clusterings $\cset_i$ by at least $1/n$. Therefore,
after polynomially-many iterations, we are guaranteed to complete the
phase.

We note that Theorem 6 in~\cite{Chuzhoy11} provides an algorithm,
that, given a cluster $X_i'$, and an access to an oracle for computing
$(k,\alphawl)$-violating cuts, produces a partition $\wset_i$ of
$X'_i$ into clusters that have the $(k,\alphawl)$-bandwidth property,
with the number of edges connecting different clusters suitably
bounded. The bound on the number of edges is computed by using a
charging scheme. The potential function that we use here, whose definition may appear non-intuitive, is
modeled after this charging scheme.

In the following subsections, we provide a formal proof of
Theorem~\ref{thm: meta-tree}. We start by defining the different types
of clusterings that we use and the potential function, and analyze its
properties. We then turn to describe the algorithm itself.

\label{------------------------------------sec: legal contracted graphs---------------------------------------}
\subsection{Vertex Clusterings and Legal Contracted Graphs}
\label{subsec: clusterings}
Let $n=|V(G)|$.  Our algorithm uses a parameter $\ell_0=\ell^2$. We use the
following two parameters for the bandwidth property:
$\alpha=\frac{1}{2^{11}\ell_0\log k}$, used to perform
bandwidth-decomposition of clusters, and
$\alphaWL=\frac{\alpha}{\alphasc(k)}=\Omega\left (\frac 1
  {\ell^2\log^{1.5}k}\right )$ - the value of the bandwidth parameter we
achieve.  Finally, we use a parameter $w_0=\frac{k}{192\ell_0^3\log k}$.
We say that a cluster $C\sse V(G)$ is \emph{large} if and only if $|\out(C)|\geq
w_0$, and we say that it is \emph{small} otherwise.  From the
statement of the theorem, we can assume that $w_0>\Delta$, and:

\begin{equation}
w=O\left (\frac{w_0\alpha^2}{\ell^9\Delta^8\log k}\right ). \label{eq: bound on h}
\end{equation}


Next, we define acceptable and good vertex clusterings and legal
contracted graphs, exactly as in~\cite{ChuzhoyL12}.

\begin{definition}
  Given a partition $\cset$ of the vertices of $V(G)$ into clusters,
  we say that $\cset$ is an \emph{acceptable clustering} of $G$ iff:

\begin{itemize}
\item  Every terminal $t\in \tset$ is in a separate cluster, that is, $\set{t}\in \cset$; 
\item Each small cluster $C\in \cset$ has the the $\alphaWL$-bandwidth property; and 
\item For each large cluster $C\in \cset$, $G[C]$ is connected.
\end{itemize}

An acceptable clustering that contains no large clusters is called a \emph{good clustering}.
\end{definition}

\begin{definition}
  Given a {\bf good} clustering $\cset$ of $G$, a graph $H_{\cset}$ is
  a legal contracted graph of $G$ associated with $\cset$, if and only if we can
  obtain $H_{\cset}$ from $G$ by contracting every $C\in\cset$ into a
  super-node $v_C$. We remove all self-loops, but we do not remove
  parallel edges. (Note that the terminals are not contracted since
  each terminal has its own cluster).
\end{definition}



\begin{claim}\label{claim: legal graph has many edges}
  If $G'$ is a legal contracted graph for $G$, then $G'\setminus
  \tset$ contains at least $k/3$ edges.
\end{claim}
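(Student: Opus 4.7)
The plan is to exploit the $1$-edge-well-linkedness of $\tset$ in $G'$ (inherited from the node-well-linkedness of $\tset$ in $G$, which is preserved under cluster contraction since contraction only merges vertices) together with the fact that every non-terminal cluster $C \in \cset$ is small, i.e., $|\out_G(C)| \le h_0 \ll k$.

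First, I would set up notation: since each terminal has degree $1$ in $G$ and sits in its own cluster, every $t \in \tset$ has a unique neighbor $u_t \in V'' := V(G')\setminus \tset$ in $G'$. For each super-node $v = v_C \in V''$, let $\deg_\tset(v)$ denote the number of terminals adjacent to $v$ in $G'$; this equals the number of edges in $G$ from $C$ to $\tset$, and is therefore bounded by $|\out_G(C)| \le h_0$. Since $h_0 = k/(192 r_0^3 \log k) \le 2k/3$, no super-node in $V''$ absorbs more than $2k/3$ of the terminals, while $\sum_{v \in V''} \deg_\tset(v) = k$. Applying Claim~\ref{claim: simple partition} to the multiset $\{\deg_\tset(v)\}_{v \in V''}$ yields a partition $(A,B)$ of $V''$ such that the sets $\tset_A := \{t \in \tset : u_t \in A\}$ and $\tset_B := \{t \in \tset : u_t \in B\}$ each have size at least $k/3$.

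Finally, consider the cut $(A \cup \tset_A,\ B \cup \tset_B)$ in $G'$. Because terminals have degree $1$, there are no terminal--terminal edges and no edges from $A$ to $\tset_B$ or from $B$ to $\tset_A$, so every edge crossing this cut lies in $E_{G'}(A,B) \subseteq E(G' \setminus \tset)$. The node-well-linkedness of $\tset$ in $G$ implies $\tset$ is $1$-edge-well-linked in $G$ and hence in $G'$, so this cut must have at least $\min(|\tset_A|, |\tset_B|) \ge k/3$ edges, giving $|E(G' \setminus \tset)| \ge k/3$. The only subtlety is verifying that contracting clusters preserves the relevant edge-well-linkedness and that the cut avoids terminal-incident edges; both are straightforward from the definitions.
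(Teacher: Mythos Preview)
Your proof is correct and follows essentially the same approach as the paper: group terminals by their unique non-terminal neighbor, use the bound $h_0 \le 2k/3$ together with Claim~\ref{claim: simple partition} (the paper spells out the same greedy argument inline) to split into two sides each containing at least $k/3$ terminals, and then invoke the $1$-edge-well-linkedness of $\tset$ in $G'$. The only cosmetic difference is that the paper phrases the final step via a flow of value $k/3$ whose flow-paths must each use a non-terminal edge, whereas you take the dual cut viewpoint and observe directly that every edge crossing $(A\cup\tset_A,\,B\cup\tset_B)$ lies in $E_{G'}(A,B)$; both arguments are equivalent.
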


\begin{proof}
  For each terminal $t\in \tset$, let $e_t$ be the unique edge
  adjacent to $t$ in $G'$, and let $u_t$ be the other endpoint of
  $e_t$. We partition the terminals in $\tset$ into groups, where two
  terminals $t,t'$ belong to the same group if and only if $u_t=u_{t'}$. Let
  $\gset$ be the resulting partition of the terminals. Since the
  degree of every vertex in $G'$ is at most $w_0$, each group $U\in
  \gset$ contains at most $w_0$ terminals. Next, we partition the
  terminals in $\tset$ into two subsets $X,Y$, where $|X|,|Y|\geq
  k/3$, and for each group $U\in \gset$, either $U\sse X$, or $U\sse
  Y$ holds. We can find such a partition by greedily processing each
  group $U\in \gset$, and adding all terminals of $U$ to one of the
  subsets $X$ or $Y$, that currently contains fewer
  terminals. Finally, we remove terminals from set $X$ until
  $|X|=k/3$, and we do the same for $Y$. Since the set $\tset$ of
  terminals is node-well-linked in $G$, it is $1$-edge-well-linked in
  $G'$, so we can route $k/3$ flow units from $X$ to $Y$ in $G'$, with
  no edge-congestion. Since no group $U$ is split between the two sets
  $X$ and $Y$, each flow-path must contain at least one edge of
  $G'\setminus \tset$. Therefore, $|E(G'\setminus \tset)|\geq k/3$.
\end{proof}

Given a partition $\cset$ of the vertices of $G$, we define a
potential $\phi(\cset)$ for this clustering, exactly as
in~\cite{ChuzhoyL12}. The idea is that $\phi(\cset)$ will serve as a
tight bound on the number of edges connecting different clusters in
$\cset$. At the same time, the potential function is designed in such
a way, that we can perform a number of useful operations on the
current clustering, without increasing the potential.

Suppose we are given {\bf any} partition $\cset$ of the vertices of
$G$. 
We define $\phi(\cset)$ as $\sum_{e\in E(G)}\phi(\cset,e)$ where
$\phi(\cset,e)$ assigns a potential to each edge $e$; to avoid
notational overload we use $\phi(e)$ for $\phi(\cset,e)$.  If both
endpoints of $e$ belong to the same cluster of $\cset$, then we set
its potential $\phi(e)=0$. Otherwise, if $e=(u,v)$, and $u\in C$ with
$|\out(C)|=z$, while $v\in C'$ with $|\out(C')|=z'$, then we set
$\phi(e)=1+\rho(z)+\rho(z')$ where $\rho$ is a non-decreasing
real-valued function that we define below. We think of $\rho(z)$ as
the contribution of $u$, and $\rho(z')$ the contribution of $v$ to
$\phi(e)$. The function $\rho$ will be chosen to give a small (compared to $1$)
contribution to $\phi(e)$ depending on the out-degree of 
the clusters that $e$ connects.

  For an integer $z>0$, we define a potential $\rho(z)$, as
follows. For $z< w_0$, $\rho(z)=4\alpha\log z$.  In order to define $\rho(z)$
for $z\geq w_0$, we consider the sequence $\set{n_0,n_1,\ldots}$ of
numbers, where $n_i=\left(\frac 3 2\right )^i w_0$. The potentials for
these numbers are $\rho(n_0)=\rho(w_0)=4\alpha\log w_0+4\alpha$, and
for $i>0$, $\rho(n_i)=4\frac{\alpha w_0}{n_i}+\rho(n_{i-1})$. Notice
that for all $i$, $\rho(n_i)\leq 12\alpha+4\alpha\log w_0\leq
8\alpha\log w_0\leq \frac{1}{2^8\ell_0}$.  We now partition all integers
$z>w_0$ into sets $Z_1,Z_2,\ldots$, where set $Z_i$ contains all
integers $z$ with $n_{i-1}\leq z< n_i$. For $z\in Z_i$, we define
$\rho(z)=\rho(n_{i-1})$.  This finishes the definition of $\rho$. 
Clearly, for all $z$, $\rho(z)\leq \frac{1}{2^8\ell_0}$.

\begin{observation}
 For a partition $\cset$ of the vertices of $G$ and for an edge $e=(u,v)\in E(G)$, if $u,v$ belong to the same cluster of $\cset$, then $\phi(e)=0$. Otherwise,  $1\leq \phi(e)\leq 1.1$.   
\end{observation}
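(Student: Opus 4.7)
The plan is to verify the observation by direct substitution into the definition of $\phi(e)$ just given, using the universal bound $\rho(z) \le 1/(2^8 r_0)$ that was established in the paragraph immediately preceding the observation.

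First, if $u$ and $v$ lie in the same cluster of $\cset$, then by the very definition of $\phi$ we set $\phi(e) = 0$, so there is nothing to prove.

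Now suppose $u \in C$ and $v \in C'$ with $C \neq C'$, and write $z = |\out(C)|$, $z' = |\out(C')|$. By definition $\phi(e) = 1 + \rho(z) + \rho(z')$. The lower bound $\phi(e) \geq 1$ is immediate because $\rho$ is non-negative (on the range $z<h_0$, $\rho(z) = 4\alpha \log z \ge 0$ since $z \ge 1$; on the range $z \ge h_0$, $\rho$ is a sum of non-negative terms starting from $\rho(h_0) > 0$). For the upper bound, I would invoke the already-established inequality $\rho(z) \le 1/(2^8 r_0)$ valid for every integer $z \ge 1$. Applying this to both $z$ and $z'$ gives
\[
\phi(e) \;=\; 1 + \rho(z) + \rho(z') \;\le\; 1 + \frac{2}{2^8 r_0} \;=\; 1 + \frac{1}{2^7 r_0}.
\]
Since $r_0 = r^2 \ge 1$, this is at most $1 + 1/128 < 1.1$, yielding the claimed upper bound.

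There is no real obstacle here: the observation is a one-line consequence of unwrapping the definition of $\phi(e)$ and plugging in the numerical bound on $\rho$ that was derived when $\rho$ was introduced. The only small point worth double-checking is that the bound $\rho(z) \le 1/(2^8 r_0)$ is indeed uniform in $z$, which follows because for $z < h_0$ one has $\rho(z) = 4\alpha \log z < 4\alpha \log h_0 < \rho(h_0)$, and for $z \ge h_0$ the sequence of values $\rho(n_i) = 4\alpha \log h_0 + 4\alpha \sum_{j \le i} h_0/n_j$ is dominated by the geometric sum $4\alpha \sum_{j\ge 0} (2/3)^j = 12\alpha$, giving $\rho(n_i) \le 12\alpha + 4\alpha \log h_0 \le 8\alpha \log h_0$, which combined with $\alpha = 1/(2^{11} r_0 \log k)$ and $h_0 \le k$ produces the stated bound.
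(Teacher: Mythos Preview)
Your proof is correct and follows exactly the reasoning the paper intends: the observation is stated without proof in the paper, being an immediate consequence of the definition $\phi(e)=1+\rho(z)+\rho(z')$ together with the uniform bound $\rho(z)\le 1/(2^8 r_0)$ established just before. Your additional verification of that uniform bound on $\rho$ is accurate and matches the paper's own derivation.
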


Suppose we are given a partition $\cset$ of $V(G)$. The following
theorem allows us to partition a small cluster $C$ into a collection
of sub-clusters, each of which has the $\alphawl$-bandwidth property,
without increasing the overall potential. We call this procedure a
\emph{bandwidth decomposition}.

\begin{theorem}\label{thm: well-linked decomposition of small clusters}
  Let $\cset$ be a partition of $V(G)$, and let $C\in \cset$ be any
  small cluster, such that $G[C]$ is connected. Then there is an
  efficient algorithm that finds a partition $\wset$ of $C$ into small
  clusters, such that each cluster $R\in \wset$ has the
  $\alphaWL$-bandwidth property, and additionally, if $\cset'$ is a
  partition obtained from $\cset$ by removing $C$ and adding the
  clusters of $\wset$ to it, then $\phi(\cset')\leq \phi(\cset)$.
\end{theorem}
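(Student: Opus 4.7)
The plan is to perform a standard iterative sparsest-cut decomposition of $C$. I start with $\wset = \{C\}$ and repeatedly pick any $R \in \wset$ whose interface $\Gamma_R = \{v \in R : \out_G(R) \ni e = (u,v)\}$ is not $\alphaWL$-well-linked in $G[R]$. For such an $R$, the minimum-sparsity cut of $(G[R], \Gamma_R)$ has sparsity less than $\alphaWL$, so running $\algsc$ on $(G[R], \Gamma_R)$ produces a cut $(R_1, R_2)$ of sparsity at most $\alphaWL \cdot \alphasc(|\Gamma_R|) \le \alphaWL \cdot \alphasc(k) = \alpha$; I then replace $R$ by $R_1, R_2$ in $\wset$. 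The process terminates in polynomial time because each split strictly decreases the number of ordered vertex pairs sharing a cluster, and at termination every cluster has the $\alphaWL$-bandwidth property by construction.

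The bulk of the proof is showing (a) monotonicity of $\phi$ under each split and (b) smallness of every cluster produced. For (a), fix a split of $R$ with $|\out(R)| = z$ via an $\alpha$-sparse cut of $m$ edges into $R_1, R_2$. Write $a = |\out(R) \cap \out(R_1)|$, $b = |\out(R) \cap \out(R_2)|$, so $a + b = z$, $z_1 := |\out(R_1)| = a + m$, $z_2 := |\out(R_2)| = b + m$, and sparsity gives $m \le \alpha \min(a,b)$. The potential change is
\[
\Delta\phi \;=\; m\bigl(1 + \rho(z_1) + \rho(z_2)\bigr) + a\bigl(\rho(z_1) - \rho(z)\bigr) + b\bigl(\rho(z_2) - \rho(z)\bigr),
\]
and the definition of $\rho$ is engineered exactly so that this quantity is $\le 0$. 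In the small regime $z_1, z_2, z < h_0$ where $\rho(\cdot) = 4\alpha \log(\cdot)$, the inequality reduces, after grouping the $\rho(z_i)$ terms, to an elementary bound controlled by $\alpha = 1/(2^{11} r_0 \log k)$; in the large regime $z \ge h_0$, when $z_i$ and $z$ lie in different buckets $Z_j$, the gap $\rho(z) - \rho(z_i) \ge 4\alpha h_0/z$ per edge (summed over $a$ or $b$ edges) dominates the $m$-term because $m \le \alpha z \le \alpha h_0 \cdot (z/h_0)$. The mixed case where $z$ and $z_i$ straddle $h_0$ is handled by the continuity built into $\rho(h_0) = 4\alpha \log h_0 + 4\alpha$.

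For (b), I combine the $(1+\alpha)$ multiplicative blowup $\max(z_1, z_2) \le (1+\alpha) z$ per split with the observation that only $O(\log(1/\alpha))$ effective levels of recursion can occur before a cluster's interface becomes too small to support an $\alpha$-sparse cut; since $\alpha \log k \ll 1$, every intermediate $R$ satisfies $|\out(R)| < h_0$. Alternatively one can read smallness directly off the potential: if some $R \in \wset$ had $|\out(R)| \ge h_0$, one large-regime split that reduces $z$ by a factor of $3/2$ is available and strictly decreases $\phi$, contradicting the terminal decomposition being the output; this is precisely the argument used in the analogous lemma of~\cite{Chuzhoy11}.

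The main obstacle is the monotonicity of $\phi$ (part (a)): the piecewise structure of $\rho$ (logarithmic for $z < h_0$, stepwise decreasing in buckets of geometric ratio $3/2$ for $z \ge h_0$) has to be checked against the sparsity parameter $\alpha$ in every regime, and the constants are tight. This is exactly the charging scheme underlying the bandwidth decomposition of~\cite{Chuzhoy11}, adapted to the present values $\alpha = 1/(2^{11} r_0 \log k)$ and $h_0 = k/(192 r_0^3 \log k)$; the remaining claims are then a routine unpacking of the definitions.
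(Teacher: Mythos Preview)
Your approach is the paper's approach—iterative sparsest-cut decomposition with a potential-based accounting—but you have over-engineered two of the three pieces and introduced one genuine algorithmic slip.

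\textbf{The stopping criterion.} You write ``pick any $R$ whose interface is not $\alphaWL$-well-linked,'' and then argue that for such an $R$ the algorithm $\algsc$ returns a cut of sparsity $<\alpha$. But testing $\alphaWL$-well-linkedness exactly is not known to be efficient, so as written your algorithm is not polynomial-time. The paper reverses the logic: it simply runs $\algsc$ on every cluster and splits whenever the returned cut has sparsity below $\alpha$; when no such cluster remains, the approximation guarantee certifies $\alphaWL$-well-linkedness of every piece. Same decomposition, but the termination test is on the algorithm's output, not on an unverifiable structural property.

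\textbf{Smallness.} Your argument (b) is much harder than necessary, and the ``alternative'' via the terminal decomposition does not work (a large cluster need not admit an $\alpha$-sparse cut, so nothing contradicts termination). The one-line proof: with $a+b=z<h_0$ and $m\le\alpha\min(a,b)<\min(a,b)$, we have $z_1=a+m<a+b=z$ and likewise $z_2<z$, so every piece produced from a small cluster is strictly smaller and in particular still small. This is exactly what the paper does (``since $S$ is a small cluster, $X$ and $Y$ are also small clusters'').

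\textbf{Potential monotonicity.} Your formula for $\Delta\phi$ is correct, but because smallness is preserved you are always in the regime $z,z_1,z_2<h_0$ where $\rho(\cdot)=4\alpha\log(\cdot)$. The paper's accounting then reduces to: assuming $z_1\le z_2$ (so $z_1\le 2z/3$), the $a$ edges in $\out(R_1)\cap\out(R)$ each drop by $\rho(z)-\rho(z_1)\ge 4\alpha\log(3/2)>2.3\alpha$, while the $m\le\alpha a$ new cut edges each gain at most $2$; net change $\le 2\alpha a-2.3\alpha a<0$. Your large-regime and mixed-regime cases never arise in this theorem; they belong to the analysis of $\partition(C,X,Y)$ on large clusters (the paper's Claim~\ref{claim: bound on potential for partition}), not here.
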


\begin{proof}
We maintain a partition $\wset$ of $C$ into small clusters, where at the beginning, $\wset=\set{C}$. We then perform a number of iterations.

In each iteration, we select a cluster $S\in \wset$, and set up the
following instance of the sparsest cut problem. Let $\Gamma$ be the
set of the interface vertices of $S$ in $G$. We then consider the
graph $G[S]$, where the vertices of $\Gamma$ serve as terminals. We
run the algorithm \algsc on the resulting instance of the sparsest cut
problem. If the sparsity of the cut produced by the algorithm is less
than $\alpha$, then we obtain a partition $(X,Y)$ of $S$, with
$|E(X,Y)|<\alpha\cdot\min\set{|\Gamma\cap X|,|\Gamma\cap Y|}\leq
\alpha\cdot\min\set{|\out(S)\cap \out(X)|,|\out(S)\cap \out(Y)|}$. In
this case, we remove $S$ from $\wset$, and add $X$ and $Y$ to $\wset$
instead. Notice that since $S$ is a small cluster, $X$ and $Y$ are also small clusters. The algorithm ends when for every cluster $S\in \wset$,
algorithm $\algsc$ returns a partition of sparsity at least
$\alpha$. We are then guaranteed that every cluster in $\wset$ has the
$\alpha/\alphasc(k)=\alphaWL$-bandwidth property, and it is easy to
verify that all resulting clusters are small.

It now only remains to show that the potential does not increase. Each
iteration of the algorithm is associated with a partition of the
vertices of $G$, obtained from $\cset$ by removing $C$ and adding all
clusters of the current partition $\wset$ of $C$ to it. It suffices
to show that if $\cset'$ is the current partition of $V(G)$, and
$\cset''$ is the partition obtained after one iteration, where a set
$S\in \cset$ was replaced by two sets $X$ and $Y$, then
$\phi(\cset'')\leq \phi(\cset')$.

Assume without loss of generality that $|\out(X)|\leq |\out(Y)|$, so
$|\out(X)|\leq 2|\out(S)|/3$. Let $z=|\out(S)|,z_1=|\out(X)|,
z_2=|\out(Y)|$, and recall that $z,z_1,z_2<w_0$.  The changes to the
potential are the following:

\begin{itemize}
\item The potential of the edges in $\out(Y)\cap \out(S)$ only goes down.

\item The potential of every edge in $\out(X)\cap \out(S)$ goes down
  by $\rho(z)-\rho(z_1)=4\alpha\log z-4\alpha\log z_1=4\alpha\log
  \frac{z}{z_1}\geq 4\alpha\log 1.5\geq 2.3\alpha$, since $z_1\leq
  2z/3$. So the total decrease in the potential of the edges in
  $\out(X)\cap \out(S)$ is at least $2.3\alpha\cdot |\out(X)\cap
  \out(S)|$.

\item The edges in $E(X,Y)$ did not contribute to the potential
  initially, and now contribute $1+\rho(z_1)+\rho(z_2)\leq 2$
  each. Notice that $|E(X,Y)|\leq \alpha \cdot |\out(X)\cap \out(S)|$,
  and so they contribute at most $2\alpha \cdot |\out(X)\cap \out(S)|$
  in total.
\end{itemize}
Clearly, the overall potential decreases.
\end{proof}

Assume that we are given an acceptable clustering $\cset$ of $G$. We now define two operations on $G$,  each of which produces a new acceptable clustering of $G$, whose potential is strictly smaller than $\phi(\cset)$.

{\bf Action 1: Partitioning a large cluster.}
Suppose we are given a large cluster $C$, and let $\Gamma$ be the interface of $C$ in $G$. We say that a partition $(X,Y)$ of $C$ is a $(w_0,\alpha)$-violating partition, if and only if there are two subsets $\Gamma_X\subseteq \Gamma\cap X,\Gamma_Y\sse \Gamma\cap Y$ of vertices, with $|\Gamma_X|+|\Gamma_Y|\leq w_0$, and $|E(X,Y)|<\alpha\cdot \min\set{|\Gamma_X|,|\Gamma_Y|}$. Equivalently, $(X,Y)$ is an $(w_0,\alpha)$-violating partition, if and only if $|E(X,Y)|<\alpha\cdot \min\set{|\Gamma\cap X|,|\Gamma\cap Y|, \floor{w_0/2}}$.

Suppose we are given an acceptable clustering $\cset$ of $G$, a large cluster $C\in \cset$, and an $(w_0,\alpha)$-violating partition $(X,Y)$ of $C$. 
In order to perform this operation, we first replace $C$ with $X$ and $Y$ in $\cset$. If, additionally, either of the clusters, $X$ or $Y$, become small, then we perform a bandwidth decomposition of that cluster using Theorem~\ref{thm: well-linked decomposition of small clusters}, and update $\cset$ with the resulting partition. 
Clearly, the final partitioning $\cset'$ is an acceptable clustering. 
We denote this operation by $\partition(C,X,Y)$.

\begin{claim}\label{claim: bound on potential for partition}
Let $\cset'$ be the outcome of operation $\partition(C,X,Y)$. Then $\phi(\cset')<\phi(\cset)-1/n$. 
\end{claim}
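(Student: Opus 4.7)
\textbf{Proof sketch for Claim~\ref{claim: bound on potential for partition}.}
My plan is to isolate the change $\Delta\phi:=\phi(\cset')-\phi(\cset)$ and bound it via a short case analysis based on whether $X$ stays large. The starting point is the identity
\[\phi(\cset)=|\{e:e\text{ is inter-cluster}\}|+\sum_{C\in\cset}|\out(C)|\cdot\rho(|\out(C)|),\]
obtained by summing edge potentials cluster-by-cluster. When $C$ is replaced by $\{X,Y\}$, only the $C$-term and the set of inter-cluster edges change. Writing $z=|\out(C)|$, $z_1=|\out(X)|$, $z_2=|\out(Y)|$, $m_X=|\out(X)\cap\out(C)|$, $m_Y=|\out(Y)\cap\out(C)|$, and $m_{XY}=|E(X,Y)|$, so that $z_1+z_2=z+2m_{XY}$, the partition step alone contributes
\[\Delta\phi_0:=m_{XY}+z_1\rho(z_1)+z_2\rho(z_2)-z\rho(z).\]
The subsequent bandwidth decomposition on any side that becomes small does not increase $\phi$ by Theorem~\ref{thm: well-linked decomposition of small clusters}, so it suffices to show $\Delta\phi_0<-1/n$.

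The key strengthening of the violating-cut hypothesis I would use is that each interface vertex of $C$ in $X$ contributes at least one edge to $\out(C)$, giving $|\Gamma_X|\le|\Gamma\cap X|\le m_X\le z_1$ and, symmetrically, $|\Gamma_Y|\le m_Y\le z_2$. Combined with $m_{XY}<\alpha\min(|\Gamma_X|,|\Gamma_Y|)$, this yields the strengthened bound $m_{XY}<\alpha\min(z_1,z_2)$. After relabeling so $z_1\le z_2$, we have $m_{XY}<\alpha z_1$ and, using $m_{XY}<\alpha h_0\le\alpha z$, also $z_1\le z/2+m_{XY}<z(1/2+\alpha)<2z/3$ (since $\alpha<1/6$). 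In particular $2m_{XY}<z_1$, so $z_2=z+2m_{XY}-z_1<z$ and hence $\rho(z_2)\le\rho(z)$ by monotonicity. Substituting,
\[\Delta\phi_0\le m_{XY}\bigl(1+2\rho(z)\bigr)-z_1\bigl(\rho(z)-\rho(z_1)\bigr)\le 2\,m_{XY}-z_1\bigl(\rho(z)-\rho(z_1)\bigr),\]
so the task reduces to lower-bounding $z_1(\rho(z)-\rho(z_1))$.

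I then split based on $z_1\ge h_0$ versus $z_1<h_0$. If $z_1\ge h_0$, then $z\in Z_i$ for some $i\ge 2$, and $z_1<2z/3<n_{i-1}$ forces $z_1\in Z_j$ for some $1\le j\le i-1$; telescoping the definition of $\rho$ gives $\rho(z)-\rho(z_1)\ge 4\alpha h_0/n_j$, and combined with $z_1\ge n_{j-1}=(2/3)n_j$ this yields $z_1(\rho(z)-\rho(z_1))\ge(8/3)\alpha h_0$. Using $m_{XY}<\alpha h_0/2$ (from $|\Gamma_X|+|\Gamma_Y|\le h_0$), I obtain $\Delta\phi_0\le-\tfrac{5}{3}\alpha h_0$, and the parameter assumptions make $\alpha h_0=\Omega(k/\polylog{k})$ much larger than $1/n$. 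If instead $z_1<h_0$, then $\rho(z)-\rho(z_1)\ge\rho(h_0)-4\alpha\log z_1\ge 4\alpha$, and the strengthened bound $z_1\ge m_{XY}/\alpha$ gives $z_1(\rho(z)-\rho(z_1))\ge 4m_{XY}$, hence $\Delta\phi_0\le -2m_{XY}\le-2$, since connectedness of $G[C]$ forces $m_{XY}\ge 1$.

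The main subtlety I expect is the argument that $\rho(z_2)\le\rho(z)$: it requires $z_2<z$ and relies critically on the strengthened inequality $m_{XY}<\alpha z_1$ derived from $|\Gamma_X|\le m_X$. Without this strengthening, $z_2$ could enter the next bucket $Z_{i+1}$, contributing a term of order $\alpha h_0$ that would be hard to absorb in the borderline regime $z_1\approx h_0$. A minor technicality is the boundary $i=1$ in the first branch: there $z_1<z(1/2+\alpha)<(3/2)h_0\cdot(1/2+\alpha)<h_0$ automatically routes us into the second branch, so no separate argument is needed.
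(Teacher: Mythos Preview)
Your proof is correct and follows essentially the same approach as the paper's: reduce to the single split step via Theorem~\ref{thm: well-linked decomposition of small clusters}, take $z_1\le z_2$ so that $z_1<2z/3$, and then case-split on whether $z_1<h_0$ or $z_1\ge h_0$, using $\rho(z)-\rho(z_1)\ge 4\alpha$ in the first case and the bucket structure of $\rho$ in the second. The only difference is presentational: you work with the aggregate identity $\phi(\cset)=\#\{\text{inter-cluster edges}\}+\sum_C|\out(C)|\rho(|\out(C)|)$ and bound $z_2\rho(z_2)\le z_2\rho(z)$ via $z_2<z$, whereas the paper does the equivalent edge-by-edge accounting (edges in $\out(Y)\cap\out(C)$ don't increase, edges in $\out(X)\cap\out(C)$ each drop by $\rho(z)-\rho(z_1)$, edges in $E(X,Y)$ each gain at most $1.1$). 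Your observation that $|\Gamma_X|\le m_X$ is exactly what the paper uses when it writes $|E(X,Y)|<\alpha|\Gamma_X|\le\alpha|\out(X)\cap\out(C)|$.
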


\begin{proof}
  Let $\cset''$ be the clustering obtained from $\cset$, by replacing
  $C$ with $X$ and $Y$. From Theorem~\ref{thm: well-linked
    decomposition of small clusters}, it is sufficient to prove that
  $\phi(\cset'')<\phi(\cset)-1/n$.

Assume without loss of generality that $|\out(X)|\leq |\out(Y)|$.  Let
$z=|\out(C)|$, $z_1=|\out(X)|$, $z_2=|\out(Y)|$, so $z_1<
2z/3$. Assume that $z\in Z_i$. Then either $z_1\in Z_{i'}$ for $i'\leq
i-1$, or $z_1<w_0$. The potential of the edges in $\out(Y)\cap
\out(C)$ does not increase. The only other changes in the potential
are the following: the potential of each edge in $\out(X)\cap \out(C)$
decreases by $\rho(z)-\rho(z_1)$, and the potential of every edge in
$E(X,Y)$ increases from $0$ to at most $1.1$. We consider two cases.

First, if $z_1<w_0$, then $\rho(z)\geq 4\alpha+\rho(z_1)$. So the
potential of each edge in $\out(X)\cap \out(C)$ decreases by at least
$4\alpha$, and the overall decrease in potential due to these edges is
at least $4\alpha|\out(X)\cap \out(C)|$. The total increase in
potential due to the edges in $E(X,Y)$ is bounded by
$1.1|E(X,Y)|<1.1\alpha|\Gamma_X|\leq1.1\alpha|\out(X)\cap \out(C)|$,
so the overall potential decreases by at least $2\alpha|\out(X)\cap
\out(C)|>1/n$

The second case is when $z_1\geq w_0$. Assume that $z_1\in
Z_{i'}$. Then $n_{i'}\leq 3z_1/2$, and, since $i'\leq i-1$ must hold,
$\rho(z)\geq \frac{4\alpha w_0}{n_{i'}}+\rho(n_{i'-1})=\frac{4\alpha
  w_0}{n_{i'}}+\rho(z_1)\geq \frac{8\alpha w_0}{3z_1}+\rho(z_1)$. So
the potential of each edge in $\out(X)\cap \out(C)$ decreases by at
least $\frac{8\alpha w_0}{3z_1}$, and the total decrease in potential
due to these edges is at least $\frac{8\alpha
  w_0}{3z_1}\cdot|\out(X)\cap \out(C)|\geq \frac{4\alpha w_0}{3}$,
since $|\out(X)\cap \out(C)|\geq z_1/2$. The total increase in the
potential due to the edges in $E(X,Y)$ is bounded by
$1.1|E(X,Y)|<0.55\alpha w_0$, since $|E(X,Y)|\leq \alpha
w_0/2$. Overall, the total potential decreases by at least
$\frac{2\alpha w_0}{3}>1/n$.

\end{proof}

{\bf Action 2: Separating a large cluster.}
Let $\cset$ be an acceptable clustering, and let $C\in \cset$ be a large cluster in $\cset$. Assume further that we are given a partition $(A,B)$ of $V(G)$, with $C\sse A$, $\tset\sse B$, and $|E_G(A,B)|< w_0/2$. We perform the following operation, that we denote by $\separate(C,A)$.

Consider a cluster $S\in \cset$. If $S\setminus A\neq \emptyset$, and $|\out(S\setminus A)|>|\out(S)|$, then we modify $A$ by removing all vertices of $S$ from it. Notice that in this case, the number of edges in $E(S)$ that originally contributed to the cut $(A,B)$,  $|E(S\cap A,S\cap  B)|>|\out(S)\cap E(A)|$ must hold, so $|\out(A)|$ only goes down as a result of this modification. We assume from now on that if $|S\setminus A|\neq \emptyset$, then $|\out(S\setminus A)|\leq |\out(S)|$. In particular, if $S$ is a small cluster, and $S\setminus A\neq\emptyset$, then $S\setminus A$ is also a small cluster.


We build a new partition
$\cset'$ of $V(G)$ as follows. First, we add every connected component
of $G[A]$ to $\cset'$. Notice that all these clusters are small, as
$|\out(A)|<w_0/2$. Next, for every cluster $S\in\cset$, such that
$S\setminus A\neq \emptyset$, we add every connected component of
$G[S\setminus A]$ to $\cset'$. Notice that every terminal $t\in \tset$
is added as a separate cluster to $\cset'$. So far we have defined a
new partition $\cset'$ of $V(G)$. This partition may not be
acceptable, since we are not guaranteed that every small cluster of
$\cset'$ has the bandwidth property. In our final step, we perform the
bandwidth decomposition of every small cluster of $\cset'$, using
Theorem~\ref{thm: well-linked decomposition of small clusters}, and
obtain the final acceptable partition $\cset''$ of vertices of $G$.
Notice that if $S\in \cset''$ is a large cluster, then there must be
some {\bf large} cluster $S'$ in the original partition $\cset$ with
$S\sse S'$.

\begin{claim}\label{claim: bound on potential for separation}
 Let $\cset''$ be the outcome of operation  $\separate(C,A)$. Then $\phi(\cset'')\leq \phi(\cset)-1$. 
 \end{claim}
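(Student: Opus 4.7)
I plan to apply Theorem~\ref{thm: well-linked decomposition of small clusters} to reduce the claim to showing $\phi(\cset') \le \phi(\cset) - 1$, where $\cset'$ is the partition produced by $\separate(C,A)$ before the final bandwidth-decomposition step. The proof will then proceed by an edge-by-edge comparison of $\phi(\cset)$ and $\phi(\cset')$, exploiting three structural facts about $\cset'$: every edge in $E(A)$ is within-cluster in $\cset'$, since its two endpoints lie in a common connected component of $G[A]$; every edge in $E(A,B)$ is cross-cluster in $\cset'$ (its $A$-endpoint is in a component of $G[A]$ and its $B$-endpoint is in a component of $G[S \setminus A]$ or a terminal); and every edge in $E(B)$ retains the same within/cross status as in $\cset$, since any edge with both endpoints in $S \cap B$ stays inside a single component of $G[S \setminus A]$. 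Moreover, every new cluster of $\cset'$ containing a vertex of $A$ is a component of $G[A]$ and hence has out-degree at most $|E(A,B)| < h_0/2$.

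The dominant decrease comes from $\out(C)$. Since $|\out(C)| \ge h_0$ and $|\out(C) \cap E(A,B)| \le |E(A,B)| < h_0/2$, the contribution of $\out(C)$ to $\phi(\cset)$ is at least $h_0(1+\rho(h_0))$, while its contribution to $\phi(\cset')$ is at most $(h_0/2)(1+2\max\rho)$ (the edges of $\out(C) \cap E(A)$ become within-cluster and contribute $0$). The dominant increase comes from edges in $E(A,B)$ that were within-cluster in $\cset$ (each necessarily lying inside some cluster $S$ straddling the cut): there are at most $|E(A,B)| < h_0/2$ of them, and each contributes $0$ in $\cset$ and at most $1+2\max\rho$ in $\cset'$. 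Using $\max\rho = \rho(h_0) + 8\alpha$, the combined effect of these two sources yields a net decrease of at least $h_0/2 - 8\alpha h_0$.

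The remaining potential change comes from edges that are cross-cluster in both $\cset$ and $\cset'$, on which only the $\rho$-values can change. Since $\rho(z_v^{\cset}) = \rho(z_v^{\cset'})$ for any vertex $v$ whose cluster $S_v$ satisfies $S_v \cap A = \emptyset$, only edges with an endpoint in a changed cluster contribute, and each contributes at most $2\max\rho \le 1/(2^7 r_0)$ in magnitude. By localizing the $\rho$-changes to vertices in changed clusters, and observing that the additional cross-degree a vertex $v$ accrues from the splitting of its $\cset$-cluster is bounded by the number of $E(A,B)$-edges incident to $v$, the total magnitude of this residual correction is $O(\max\rho \cdot |E(A,B)|) = O(h_0 / r_0)$. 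Combining everything gives $\phi(\cset) - \phi(\cset') \ge h_0/2 - O(h_0/r_0) - 8\alpha h_0$, which, under the parameter choices of Theorem~\ref{thm: meta-tree} (in particular $\alpha = 1/(2^{11} r_0 \log k)$ and $h_0 \ge 3$), exceeds $1$. The main obstacle in the argument is this last bound on the residual $\rho$-corrections: a naive bound in terms of $E_{\mathrm{cross}}(\cset)$ is insufficient because the total cross-edge count of $\cset$ is not a priori controlled, so one must carefully restrict the accounting to vertices in changed clusters and charge the aggregate change against the much smaller quantity $|E(A,B)|$.
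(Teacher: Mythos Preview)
Your plan matches the paper's overall strategy—reduce to the intermediate partition $\cset'$ via Theorem~\ref{thm: well-linked decomposition of small clusters} and compare potentials edge by edge—but the paper's argument is far shorter than what you outline. It simply records two facts: every edge of $\out(C)$ contributes at least $1$ to $\phi(\cset)$ and there are at least $h_0$ of them; every edge of $\out(A)$ contributes at most $1.1$ to $\phi(\cset')$ and there are at most $(h_0-1)/2$ of them; and it asserts that these are the \emph{only} edges whose potential can go up. This yields $\phi(\cset)-\phi(\cset')\ge h_0-1.1(h_0-1)/2\ge 1$ in two lines, with no separate treatment of $\rho$-corrections on $E(B)$ at all. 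So the elaborate residual analysis you sketch is not what the paper does.

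More importantly, your bound on that residual does not go through. The observation ``the additional cross-degree a vertex $v$ accrues from the splitting of its $\cset$-cluster is bounded by the number of $E(A,B)$-edges incident to $v$'' correctly controls the change in the \emph{number of cross edges at $v$}. But the $\rho$-term on a cross edge at $v$ is governed by $z_v=|\out(S_v)|$, the out-degree of $v$'s \emph{entire cluster}, not by the local cross-degree of $v$. When a large cluster $S$ straddles $(A,B)$, the component $R\subseteq S\cap B$ containing $v$ can have $|\out(R)|>|\out(S)|$ (by up to $|E(S\cap A,S\cap B)|$); if $|\out(S)|$ sits just below a bucket boundary $n_j$, one internal cut-edge pushes $|\out(R)|$ into the next bucket, producing a $\rho$-jump of $4\alpha h_0/n_j$ that multiplies \emph{all} of the $\Theta(|\out(S)|)$ cross-$E(B)$ edges incident to $R$, i.e.\ a contribution of order $\alpha h_0$ from this single $S$. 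Since up to $|E(A,B)|<h_0/2$ large straddling clusters can occur (one internal cut-edge each suffices), the residual can be of order $\alpha h_0^2$, not $O(\max\rho\cdot|E(A,B)|)=O(\alpha h_0\log h_0)$. Under the parameters of Section~\ref{subsec: clusterings} one has $\alpha h_0=h_0/(2^{11}r_0\log k)\gg 1$, so such a term would swamp the $h_0/2$ main decrease. Your per-vertex charging via ``$E(A,B)$-edges incident to $v$'' cannot capture this: a single vertex in $S\cap A$ with one edge into $S\cap B$ changes the cluster out-degree seen by \emph{every} vertex of $S\cap B$ simultaneously.
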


 \begin{proof}
   In order to prove the claim, it is enough to prove that
   $\phi(\cset')\leq \phi(\cset)-1$, since, from Theorem~\ref{thm:
     well-linked decomposition of small clusters}, bandwidth
   decompositions of small clusters do not increase the potential.

We now show that $\phi(\cset')\leq \phi(\cset)-1$.
We can bound the changes in the potential as follows:

\begin{itemize}
\item Every edge in $\out(A)$ contributes at most $1.1$ to the
  potential of $\cset''$, and there are at most $ \frac{w_0-1} 2$ such
  edges. 
\item Every edge in $\out(C)$ contributed at least $1$ to the
  potential of $\cset'$, and there are at least $w_0$ such edges,
  since $C$ is a large cluster.
\end{itemize}

For every other edge $e$, the potential of $e$ does not increase. Indeed, let $e=(u,v)$, where $u\in S_1$, $v\in S_2$, with $S_1,S_2\in \cset'$, and $S_1,S_2\not\subseteq A$. Then there are clusters $S_1',S_2'\in \cset$, with $S_1\subseteq S_1'$ and $S_2\subseteq S_2'$. Notice that $S_1'\neq S_2'$, since $S_1$ and $S_2$ correspond to connected components of $S_1'$ and $S_2'$, respectively, and so no edge can connect them. From our construction of $\cset'$, $|\out(S_1)|\leq |\out(S_1')|$ and $|\out(S_2)|\leq |\out(S_2')|$, so the potential of $e$ cannot increase.
Therefore, the decrease in the potential is at least $w_0-\frac{1.1(w_0-1)}2\geq 1$.
\end{proof}

To summarize, given an acceptable clustering $\cset$ of the vertices
of $G$, let $E'$ be the set of edges whose endpoints belong to
distinct clusters of $\cset$. Then $|E'|\leq \phi(\cset)\leq
1.1|E'|$. So the potential is a good estimate on the number of edges
connecting the different clusters. We have also defined two actions on
large clusters of $\cset$: $\partition(C,X,Y)$ can be performed if we are given a large cluster $C\in \cset$ and a
$(w_0,\alpha)$-violating partition $(X,Y)$ of $C$, and $\separate(C,A)$, where
$(A,V(G)\setminus A)$ is a partition of $V(G)$ with $|\out(A)|< w_0/2$, separating a large cluster $C$
from the terminals. Each such action returns a new acceptable
clustering, whose potential goes down by at least $1/n$. Both operations ensure that if $S$ is a large cluster in the new clustering, then there is some
large cluster $S'$ in the original clustering with $S\subseteq S'$.

\label{------------------------------------------------------sec: alg---------------------------------------------}
\subsection{The Algorithm}\label{sec: Alg}

We maintain, throughout the algorithm, a good clustering $\cset$ of $G$. Initially, $\cset$ is a partition of $V(G)$, where every vertex of $G$ belongs to a distinct cluster, that is, $\cset=\set{\set{v}\mid v\in V(G)}$. Clearly, this is a good clustering, as $\Delta<w_0$. The algorithm consists of a number of phases. In every phase, we start with some good clustering $\cset$ and the corresponding legal contracted graph $H_{\cset}$. The phase output is either a good tree-of-sets system, or another good clustering $\cset'$, such that $\phi(\cset')\leq \phi(\cset)-1$. 
In the former case, we terminate the algorithm, and output the tree-of-sets system. In the latter case, we continue to the next phase. After $O(|E(G)|)$ phases, our algorithm will then successfully terminate with a good tree-of-sets system. It is therefore enough to prove the following theorem.

\begin{theorem}\label{thm: find good crossbar or a smaller contracted graph}
Let $\cset$ be a good clustering of the vertices of $G$, and let $H_{\cset}$ be the corresponding legal contracted graph. Then there is an efficient randomized algorithm that with high probability either computes a good tree-of-sets system, or finds a new good clustering $\cset'$, such that $\phi(\cset')\leq \phi(\cset)-1$.
\end{theorem}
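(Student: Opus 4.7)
The plan is to follow the blueprint sketched in the high-level overview: use a random partition of the contracted graph $H_{\cset}$ to produce $r_0 = r^2$ candidate ``routers,'' then iteratively either (i) assemble these routers into a good tree-of-sets system, or (ii) detect that one of them is defective and use one of the potential-reducing operations $\partition$ or $\separate$ from the previous subsection to make progress. Since each application of $\partition$ or $\separate$ drops potential by at least $1/n$, after polynomially many inner iterations we either succeed at building the tree-of-sets system or reach a state in which all candidate routers are small, which as we show corresponds to a new good clustering $\cset'$ with $\phi(\cset') \le \phi(\cset) - 1$.

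First I would set up the initial partition. Let $G' = H_{\cset}$. By Claim~\ref{claim: legal graph has many edges}, $G' \setminus \tset$ has at least $k/3$ edges, and the maximum degree in $G'$ is at most $h_0$. Partition $V(G') \setminus \tset$ uniformly at random into $r_0$ sets $X_1, \dots, X_{r_0}$. A standard Markov/Chernoff computation shows that with constant probability (boostable by repetition), every $X_i$ simultaneously satisfies $|E_{G'}(X_i)| = \Omega(k/r_0^2)$ and $|\out_{G'}(X_i)| \le O(r_0 \cdot |E_{G'}(X_i)|)$. For each $i$, let $X_i' \subseteq V(G) \setminus \tset$ be the un-contraction of $X_i$, and define a clustering $\cset_i$ by keeping every cluster $C \in \cset$ with $C \cap X_i' = \emptyset$, replacing the clusters inside $X_i'$ by the connected components of $G[X_i']$, and running Theorem~\ref{thm: well-linked decomposition of small clusters} on any of those components that are small. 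A direct edge-counting argument, mirroring the proof of Claim~\ref{claim: bound on potential for separation}, shows $\phi(\cset_i) \le \phi(\cset) - 1$; and by construction every large cluster of $\cset_i$ is contained in the single set $X_i'$.

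The main loop then maintains the tuple $(\cset_1, \dots, \cset_{r_0})$ of acceptable clusterings. At the start of an iteration, if some $\cset_i$ contains no large cluster, then $\cset_i$ is already a good clustering with $\phi(\cset_i) \le \phi(\cset) - 1$, and we return it. Otherwise, for each $i$ we pick a large cluster $S_i \in \cset_i$. We invoke the construction from the ``Iteration Theorem'' (the generalization of \cite{ChuzhoyL12} used to produce a tree-of-sets system from $r_0$ candidate good routers, proved in Section~\ref{sec: proof of iteration theorem}) on $(S_1, \ldots, S_{r_0})$. This subroutine either returns a good tree-of-sets system of parameters $(h,r,\alphawl)$ whose clusters are disjoint from $\tset$ (in which case we are done), or produces a certificate that some particular $S_i$ is not a good router, in one of two forms: an $(h_0,\alpha)$-violating partition $(X,Y)$ of $S_i$, or a cut $(A, V(G) \setminus A)$ with $S_i \subseteq A$, $\tset \subseteq V(G) \setminus A$, and $|E_G(A, V(G) \setminus A)| < h_0/2$. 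In the first case we apply $\partition(S_i, X, Y)$ to $\cset_i$; in the second case we apply $\separate(S_i, A)$. By Claims~\ref{claim: bound on potential for partition} and~\ref{claim: bound on potential for separation}, either operation returns a new acceptable clustering with potential smaller by at least $1/n$, and crucially, each preserves the invariant that every large cluster of $\cset_i$ lies in $X_i'$ (since the operations only refine clusters or carve off pieces of one large cluster). We replace $\cset_i$ by this refined acceptable clustering and proceed to the next iteration.

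The main obstacle, and where the bulk of the technical effort lives, is the Iteration Theorem itself: one must argue that if the candidate family $(S_1, \ldots, S_{r_0})$ fails to yield a good tree-of-sets system, the failure can always be localized to a single $S_i$ and witnessed either by an $(h_0,\alpha)$-violating partition or by a small terminal-separating cut. The high-level overview sketches this via the auxiliary graphs $H, H', Z$ and $\tH, \tH', \tZ$, routings through the tree obtained from a spanning tree of $Z$, the re-routing techniques of Section~\ref{sec: from path of sets to grid}, boosted well-linkedness from Theorem~\ref{thm: grouping} and Theorem~\ref{thm: linkedness from node-well-linkedness}, and the Singh--Lau/$1$-toughness trick that produces a degree-$3$ spanning tree of $\tZ$; I would defer the detailed verification of this machinery to Section~\ref{sec: proof of iteration theorem} and simply cite it here. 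Given the theorem, correctness and termination of the present algorithm follow: each inner iteration either reports success or strictly decreases $\sum_i \phi(\cset_i)$ by at least $1/n$, so after at most $O(n \cdot r_0 \cdot |E(G)|)$ inner iterations we must either produce a good tree-of-sets system or expose a large-cluster-free $\cset_i$; with high probability over the initial random partition the outer attempt succeeds, and repeating $O(\log n)$ times amplifies the success probability as required.
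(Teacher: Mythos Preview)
Your proposal is correct and follows essentially the same approach as the paper: random partition of the contracted graph into $r_0$ pieces, construction of the acceptable clusterings $\cset_i$ with $\phi(\cset_i)\le\phi(\cset)-1$, and then an inner loop that repeatedly invokes the Iteration Theorem (Theorem~\ref{thm: iteration}) and applies $\partition$ or $\separate$ on failure. One small inaccuracy: the potential drop $\phi(\cset_i)\le\phi(\cset)-1$ is established in the paper via Claim~\ref{claim:improve-potential} (comparing the $|E_{G'}(X_j)|$ edges whose potential vanishes against the at-most-$\rho(n)$ increase on each edge of $\out_{G'}(X_j)$), not by mirroring Claim~\ref{claim: bound on potential for separation}, though both are indeed edge-counting arguments.
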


The rest of this subsection is dedicated to proving Theorem~\ref{thm:
  find good crossbar or a smaller contracted graph}.  We assume that
we are given a good clustering $\cset$ of the vertices of $G$, and the
corresponding legal contracted graph $G'=H_{\cset}$.

Let $m=|E(G'\setminus \tset)|$. From Claim~\ref{claim: legal graph has
  many edges}, $m\geq k/3$.  As a first step, we randomly partition
the vertices in $G'\setminus \tset$ into $\ell_0$ subsets
$X_1,\ldots,X_{\ell_0}$, where each vertex $v\in V(G')\setminus \tset$
selects an index $1\leq j\leq \ell_0$ independently uniformly at random,
and is then added to $X_j$. We need the following
claim. 


\begin{claim}\label{claim: random partition into gamma sets} With probability at least $\half$, for each $1\leq j\leq \ell_0$, $|\out_{G'}(X_j)|< \frac{10m}{\ell_0}$, while $|E_{G'}(X_j)|\geq \frac{m}{2\ell_0^2}$.
\end{claim}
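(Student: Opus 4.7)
The plan is to compute $\mathbb{E}[|\out_{G'}(X_j)|]$ and $\mathbb{E}[|E_{G'}(X_j)|]$ by linearity, bound their variances, apply Chebyshev's inequality to each, and union-bound over the $2r_0$ candidate bad events (two for each $j$).

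First, I would partition $E(G')$ into the $m$ edges with both endpoints in $V(G') \setminus \tset$ and the $k$ edges incident on a terminal (each terminal has degree one in $G'$ by construction). Since every vertex of $V(G') \setminus \tset$ picks its label independently and uniformly from $\{1,\ldots,r_0\}$, a non-terminal edge lies in $\out_{G'}(X_j)$ with probability $2(1-1/r_0)/r_0$ and in $E_{G'}(X_j)$ with probability $1/r_0^2$, and a terminal-incident edge lies in $\out_{G'}(X_j)$ with probability $1/r_0$ (and never in $E_{G'}(X_j)$). Using Claim~\ref{claim: legal graph has many edges} to upper bound $k$ by $3m$, this gives $\mathbb{E}[|\out_{G'}(X_j)|] \le 5m/r_0$ and $\mathbb{E}[|E_{G'}(X_j)|] = m/r_0^2$, each within a constant factor of the claimed threshold.

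Next, I would bound the variances. Two edge-indicators are dependent only when the underlying edges share a non-terminal vertex, since each terminal has degree $1$ and its label is fixed. A short covariance computation (with care for parallel edges, whose indicators coincide) shows that each dependent pair contributes $O(1/r_0)$ to $\mathrm{Var}(|\out_{G'}(X_j)|)$ and $O(1/r_0^2)$ to $\mathrm{Var}(|E_{G'}(X_j)|)$, while the number of dependent pairs is at most $\sum_v \deg_{G'}(v)^2 \le h_0(2m+k) = O(h_0 m)$, because every vertex of $V(G') \setminus \tset$ has $G'$-degree at most $h_0$ (this uses that $\cset$ is a good clustering). Consequently $\mathrm{Var}(|\out_{G'}(X_j)|) = O(h_0 m/r_0)$ and $\mathrm{Var}(|E_{G'}(X_j)|) = O(h_0 m/r_0^2)$.

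Finally, Chebyshev's inequality applied with deviation $5m/r_0$ and $m/(2r_0^2)$ respectively, together with the values $h_0 = k/(192 r_0^3 \log k)$ and $m \ge k/3$, reduces each failure probability to $O(1/(r_0^2 \log k))$ for the $|\out|$ bound and $O(1/(r_0 \log k))$ for the $|E|$ bound; both are strictly below $1/(4r_0)$ once $k$ exceeds a small absolute constant. A union bound over the $2r_0$ bad events then gives overall failure probability below $1/2$. The main (and essentially only) obstacle is that Markov's inequality alone gives $\Pr[|\out_{G'}(X_j)| \ge 10m/r_0] \le 1/2$, which is useless for the union bound; the variance computation above, crucially driven by the degree cap $h_0$ that the good clustering provides, is what makes the argument go through.
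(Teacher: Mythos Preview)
Your proof is correct, but it follows a genuinely different route from the paper's argument. The paper proves the upper bound on $|\out_{G'}(X_j)|$ via Chernoff on the normalized degree sum $\sum_{v\in X_j} d_H(v)/h_0$ (splitting off a separate Chernoff for terminal-incident edges), and for the lower bound on $|E_{G'}(X_j)|$ it invokes the Hajnal--Szemer\'edi equitable-coloring theorem to partition the non-terminal edges into at most $2h_0$ classes of pairwise non-adjacent edges, then applies Chernoff inside each class and union-bounds over classes. Your approach instead stays with the raw edge indicators, computes the second moment directly, and uses Chebyshev; the degree cap $h_0$ enters only through the bound $\sum_v \deg(v)^2 \le h_0(2m+k)$ on the number of dependent indicator pairs.

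Both approaches hinge on the same structural fact (the degree cap from the good clustering), but exploit it differently. The paper's route gives exponentially small tail probabilities at the cost of the Hajnal--Szemer\'edi detour; your route is more elementary and self-contained, trading sharper tails for a simpler variance calculation---which is perfectly adequate here since only an $O(1/r_0)$ failure probability is needed. One small point worth making explicit in a full write-up: the $O(1/r_0^2)$ covariance bound you quote for $|E_{G'}(X_j)|$ is driven by parallel-edge pairs (for pairs sharing a single vertex the covariance is actually $O(1/r_0^3)$), but the $O(h_0 m)$ count on dependent pairs absorbs both cases and your final $O(h_0 m/r_0^2)$ variance estimate is correct.
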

 


\begin{proof}
  Let $H=G'\setminus\tset$.  Fix some $1\leq j\leq \ell_0$.  Let
  $\event_1(j)$ be the bad event that $\sum_{v\in X_j}d_{H}(v)>
  \frac{2m}{\ell_0}\cdot \left (1+\frac 1 {\ell_0}\right )$. In order to
  bound the probability of $\event_1(j)$, we define, for each vertex
  $v\in V(H)$, a random variable $x_v$, whose value is
  $\frac{d_{H}(v)}{w_0}$ if $v\in X_j$ and $0$ otherwise. Notice that
  $x_v\in [0,1]$, and the random variables $\set{x_v}_{v\in V(H)}$ are
  pairwise independent. Let $B=\sum_{v\in V(H)} x_v$. Then the
  expectation of $B$, $\mu_1=\sum_{v\in V(H)} \frac{d_{H}(v)}{\ell_0
    w_0}=\frac{2m}{\ell_0 w_0}$. Using the standard Chernoff bound (see
  e.g. Theorem 1.1 in~\cite{measure-concentration}),

\[\prob{\event_1(j)}=\prob{B> \left (1+1/\ell_0\right )\mu_1}\leq e^{-\mu_1/(3\ell_0^2)}=e^{-\frac{2m}{3\ell_0^3 w_0}}<\frac 1 {6\ell_0}\]

since $m\geq k/3$ and $w_0=\frac{k}{192\ell_0^3\log k}$.

For each terminal $t\in \tset$, let $e_t$ be the unique edge adjacent
to $t$ in graph $G'$, and let $u_t$ be its other endpoint. Let
$U=\set{u_t\mid t\in \tset}$. For each vertex $u\in U$, let $w(u)$ be
the number of terminals $t$, such that $u=u_t$. Notice that $w(u)\leq
w_0$ must hold. We say that a bad event $\event_2(j)$ happens iff
$\sum_{u\in U\cap X_j}w(u)\geq \frac k{\ell_0}\cdot \left (1+\frac 1
  {\ell_0}\right )$.  In order to bound the probability of the event
$\event_2(j)$, we define, for each $u\in U$, a random variable $y_u$,
whose value is $w(u)/w_0$ if and only if $u\in X_j$, and it is $0$
otherwise. Notice that $y_u\in [0,1]$, and the variables $y_u$ are
independent for all $u\in U$. Let $Y=\sum_{u\in U} y_u$. The
expectation of $Y$ is $\mu_2=\frac{k}{w_0\ell_0}$, and event
$\event_2(j)$ holds if and only if $Y\geq \frac{k}{w_0\ell_0}\cdot \left (1+\frac 1
  {\ell_0}\right )\geq \mu_2\cdot \left (1+\frac 1 {\ell_0}\right )$. Using
the standard Chernoff bound again, we get that:

\[\prob{\event_2(j)}\leq e^{-\mu_2/(3\ell_0^2)}\leq e^{-k/(3w_0\ell_0^3)}\leq \frac 1 {6\ell_0}\]

since $w_0=\frac{k}{192\ell_0^3\log k}$. Notice that if events $\event_1(j),\event_2(j)$ do not hold, then:

\[|\out_{G'}(X_j)|\leq \sum_{v\in X_j}d_{H}(v)+\sum_{u\in U\cap X_j}w(u)\leq \left (1+\frac 1 {\ell_0}\right )\left (\frac{2m}{\ell_0}+\frac{k}{\ell_0}\right )< \frac{10m}{\ell_0}\]

since $m\geq k/3$.

Let $\event_3(j)$ be the bad event that $|E_{G'}(X_j)|<
\frac{m}{2\ell_0^2}$. We next prove that $\prob{\event_3(j)}\leq \frac 1
{6\ell_0}$. We say that two edges $e,e'\in E(G'\setminus \tset)$ are
\emph{independent} if and only if they do not share endpoints. Our first step
is to compute a partition $U_1,\ldots,U_z$ of the set $E(G'\setminus
\tset)$ of edges, where $z\leq 2w_0$, such that for each $1\leq i\leq
z$, $|U_i|\geq \frac m{4w_0}$, and all edges in set $U_i$ are mutually
independent. In order to compute such a partition, we construct an
auxiliary graph $Z$, whose vertex set is $\set{v_e\mid e\in E(H)}$,
and there is an edge $(v_e,v_{e'})$ if and only if $e$ and $e'$ are not
independent. Since the maximum vertex degree in $G'$ is at most $w_0$,
the maximum vertex degree in $Z$ is bounded by $2w_0-2$. Using the
Hajnal-Szemer\'edi Theorem~\cite{Hajnal-Szemeredi}, we can find a
partition $V_1,\ldots,V_z$ of the vertices of $Z$ into $z\leq 2w_0$
subsets, where each subset $V_i$ is an independent set, and $|V_i|\geq
\frac{|V(Z)|}{z}-1\geq \frac{m}{4w_0}$. The partition $V_1,\ldots,V_z$
of the vertices of $Z$ gives the desired partition $U_1,\ldots,U_z$ of
the edges of $G'\setminus \tset$. For each $1\leq i\leq r$, we say
that the bad event $\event_3^i(j)$ happens if and only if $|U_i\cap E(X_j)|<
\frac{|U_i|}{2\ell_0^2}$. Notice that if $\event_3(j)$ happens, then
event $\event_3^i(j)$ must happen for some $1\leq i\leq z$. Fix some
$1\leq i\leq z$. The expectation of $|U_i\cap E(X_j)|$ is
$\mu_3=\frac{|U_i|}{\ell_0^2}$. Since all edges in $U_i$ are independent,
we can use the standard Chernoff bound to bound the probability of
$\event_3^i(j)$, as follows:

\[\prob{\event_3^i(j)}=\prob{|U_i\cap E(X_j)|<\mu_3/2}\leq e^{-\mu_3/8}=e^{-\frac{|U_i|}{8\ell_0^2}}\].

Since $|U_i|\geq \frac{m}{4w_0}$, $m\geq k/3$,
$w_0=\frac{k}{192\ell_0^3\log k}$, this is bounded by
$\frac{1}{k^2}\leq\frac{1}{12w_0\ell_0}$. We conclude that
$\prob{\event_3^i(j)}\leq \frac{1}{12w_0\ell_0}$, and by using the union
bound over all $1\leq i\leq z$, $\prob{\event_3(j)}\leq
\frac{1}{6\ell_0}$.

Using the union bound over all $1\leq j\leq \ell_0$, with probability at
least $\half$, none of the events
$\event_1(j),\event_2(j),\event_3(j)$ for $1\leq j\leq \ell_0$ happen,
and so for each $1\leq j\leq \ell_0$, $|\out_{G'}(X_j)|<
\frac{10m}{\ell_0}$, and $|E_{G'}(X_j)|\geq\frac{m}{2\ell_0^2}$.
\end{proof}

Given a partition $X_1,\ldots,X_{\ell_0}$, we can efficiently check
whether the conditions of Claim~\ref{claim: random partition into
  gamma sets} hold. If they do not hold, we repeat the randomized
partitioning procedure.  From Claim~\ref{claim: random partition into
  gamma sets}, we are guaranteed that with high probability, after $\poly(n)$
iterations, we will obtain a partition with the desired
properties. Assume now that we are given the partition
$X_1,\ldots,X_{\ell_0}$ of $V(G')\setminus \tset$, for which the
conditions of Claim~\ref{claim: random partition into gamma sets}
hold. Then for each $1\leq j\leq \ell_0$,
$|E_{G'}(X_j)|>\frac{|\out_{G'}(X_j)|}{20\ell_0}$. Let $X'_j\sse
V(G)\setminus \tset$ be the set obtained from $X_j$, after we
un-contract each cluster, that is, for each super-node $v_C\in X_j$,
we replace $v_C$ with the vertices of $C$. Notice that
$\set{X'_j}_{j=1}^{\ell_0}$ is a partition of $V(G)\setminus\tset$.

The plan for the rest of the proof is as follows. For each $1\leq
j\leq \ell_0$, we will maintain an acceptable clustering $\cset_j$ of the
vertices of $G$. That is, for each $1\leq j\leq \ell_0$, $\cset_j$ is a
partition of $V(G)$. In addition to being an acceptable clustering, it
will have the following property:

\begin{properties}{P}
\item If $C\in \cset_j$ is a large cluster, then $C\subseteq X'_j$. \label{property of Cj}
\end{properties}

The initial partition $\cset_j$, for $1\leq j\leq \ell_0$ is obtained as
follows. Recall that $\cset$ is the current good clustering of the
vertices of $G$, and every cluster $C\in \cset$ is either contained in
$X'_j$, or it is disjoint from it. First, we add to $\cset_j$ all
clusters $C\in \cset$ with $C\cap X'_j=\emptyset$. Next, we add to
$\cset_j$ all connected components of $G[X'_j]$. If any of these
components is a small cluster, then we perform the bandwidth
decomposition of this cluster, using Theorem~\ref{thm: well-linked
  decomposition of small clusters}, and update $\cset_j$
accordingly. Let $\cset_j$ be the resulting final partition. Clearly,
it is an acceptable clustering, with property (\ref{property of
  Cj}). Moreover, the following claim shows that $\phi(\cset_j)\leq
\phi(\cset)-1$:

\begin{claim}
  \label{claim:improve-potential}
  For each $1\leq j\leq \ell_0$, $\phi(\cset_j)\leq \phi(\cset)-1$.
\end{claim}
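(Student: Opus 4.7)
The plan is to compare $\phi(\cset_j)$ to $\phi(\cset)$ via an intermediate clustering $\cset''$ obtained from $\cset$ by deleting all clusters $C \in \cset$ with $C \subseteq X'_j$ and replacing them by the connected components of $G[X'_j]$; note that $\cset_j$ is then obtained from $\cset''$ by applying the bandwidth decomposition of Theorem~\ref{thm: well-linked decomposition of small clusters} to each small component. Since that theorem does not increase the potential, it suffices to prove $\phi(\cset'') \le \phi(\cset) - 1$, and I would do this by analyzing the change in $\phi(e)$ for each edge $e$ of $G$, grouped by where $e$ lies relative to $X'_j$.

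Edges with no endpoint in $X'_j$ are in identical clusters in $\cset$ and $\cset''$, so contribute nothing to the change. Edges with both endpoints in $X'_j$ split into two sub-cases. If both endpoints lie in the same cluster of $\cset$, then in both $\cset$ and $\cset''$ they are intra-cluster (since the single cluster of $\cset$ is contained in one connected component of $G[X'_j]$), and they contribute $0$ to both potentials. If the two endpoints lie in different clusters of $\cset$, then they contribute $\ge 1$ to $\phi(\cset)$, but $0$ to $\phi(\cset'')$ since both endpoints lie in the same connected component of $G[X'_j]$. These are precisely the edges counted by $|E_{G'}(X_j)|$, giving a total decrease of at least $|E_{G'}(X_j)| \ge m/(2r_0^2)$.

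The only edges whose potential can increase are those in $\out_G(X'_j)$ (in bijection with $\out_{G'}(X_j)$). Such an edge $e$ is between some cluster $C \subseteq X'_j$ of $\cset$ and some cluster $C' \in \cset$ disjoint from $X'_j$ in $\cset$, and between a component $D$ of $G[X'_j]$ and the same $C'$ in $\cset''$. Thus $\phi_{\cset''}(e) - \phi_{\cset}(e) = \rho(|\out(D)|) - \rho(|\out(C)|) \le \rho(|\out(D)|) \le 1/(2^8 r_0)$, using the uniform bound $\rho(z) \le 1/(2^8 r_0)$ established in the definition of $\rho$. Combined with $|\out_G(X'_j)| = |\out_{G'}(X_j)| < 10 m / r_0$, the total increase is at most $10 m / (2^8 r_0^2) < m/(25 r_0^2)$.

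Putting these bounds together, $\phi(\cset'') - \phi(\cset) \le m/(25 r_0^2) - m/(2 r_0^2) \le -m/(4 r_0^2)$. Since $m \ge k/3$ by Claim~\ref{claim: legal graph has many edges}, and $k$ is large enough (in particular $k \ge 12 r_0^2$ holds comfortably given the hypotheses on the size of $k$ relative to $r, h, \Delta$), this yields $\phi(\cset'') \le \phi(\cset) - 1$, and hence $\phi(\cset_j) \le \phi(\cset) - 1$. The only delicate point is making sure that $\rho(|\out(D)|)$ is truly bounded uniformly by $1/(2^8 r_0)$ regardless of how large $D$ is; this is exactly what the piecewise definition of $\rho$ on the geometric sequence $n_i = (3/2)^i h_0$ was designed for, and so the argument goes through without requiring the new components to be small.
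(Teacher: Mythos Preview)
Your proof is correct and follows essentially the same approach as the paper: introduce the intermediate clustering before bandwidth decomposition, invoke Theorem~\ref{thm: well-linked decomposition of small clusters} to reduce to that clustering, then note that the $|E_{G'}(X_j)|$ edges internal to $X'_j$ lose potential at least $1$ each while the at most $|\out_{G'}(X_j)|$ boundary edges gain at most $\rho(\cdot)\le 1/(2^8 r_0)$ each, and finish via the bounds from Claim~\ref{claim: random partition into gamma sets} and $m\ge k/3$. Your edge-by-edge case analysis is slightly more explicit than the paper's, but the argument is the same.
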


\begin{proof}
  Let $\cset'_j$ be the partition of $V(G)$, obtained as follows: we
  add to $\cset'_j$ all clusters $C\in \cset$ with $C\cap
  X_j'=\emptyset$, and we add all connected components of $G[X_j']$ to
  $\cset'_j$ (that is, $\cset'_j$ is obtained like $\cset_j$, except
  that we do not perform the bandwidth decompositions of the small
  clusters). From Theorem~\ref{thm: well-linked decomposition of small
    clusters}, it is enough to prove that $\phi(\cset'_j)\leq
  \phi(\cset)-1$. The changes of the potential from $\cset$ to
  $\cset'_j$ can be bounded as follows:

\begin{itemize}
\item The edges in $E_{G'}(X_j)$ contribute at least $1$ to
  $\phi(\cset)$ and contribute $0$ to $\phi(\cset'_j)$.
\item The potential of the edges in $\out_G(X'_j)$ may increase. The
  increase is at most $\rho(n)\leq \frac{1}{2^8\ell_0}$ per edge. So the
  total increase is at most $\frac{|\out_{G'}(X_j)|}{2^8\ell_0}\leq
  \frac{|E_{G'}(X_j)|}{4}$. These are the only edges whose potential
  may increase.
\end{itemize}

Overall, the decrease in the potential is at least $\frac{|E_{G'}(X_j)|}{2}\geq\frac{m}{4\ell_0^2}\geq \frac{k}{12\ell_0^2}\geq 1$.
\end{proof}

If any of the partitions $\cset_1,\ldots,\cset_{\ell_0}$ is a good
clustering, then we have found a good clustering $\cset'$ with
$\phi(\cset')\leq \phi(\cset)-1$. We terminate the algorithm and
return $\cset'$. Otherwise, we select an arbitrary large cluster
$S_j\in \cset_j$ for each $j$. We then consider the resulting
collection $S_1,\ldots,S_{\ell_0}$ of large clusters, and try to exploit
them to construct a good tree-of-sets system. Since for each $1\leq
j\leq \ell_0$, $S_j\sse X'_j$, the sets $S_1,\ldots,S_{\ell_0}$ are mutually
disjoint and they do not contain terminals.  Our algorithm performs a
number of iterations, using the following theorem.

 \begin{theorem}\label{thm: iteration}
There is an efficient randomized algorithm, that, given a collection $\set{S_1,\ldots,S_{\ell_0}}$ of disjoint vertex subsets of $G$, where for all $1\leq j\leq \ell_0$, $S_j\cap \tset=\emptyset$, with high probability computes one of the following: either
 
 \begin{itemize}
 \item a good tree-of-sets system in $G$; or
 
 \item a $(w_0,\alpha)$-violating partition $(X,Y)$ of $S_j$, for
   some $1\leq j\leq \ell_0$; or
 
 \item a partition $(A,B)$ of $V(G)$ with $S_j\sse A$, $\tset\sse B$ and
   $|E_G(A,B)|<w_0/2$, for some $1\leq j\leq \ell_0$.
 \end{itemize}
 \end{theorem}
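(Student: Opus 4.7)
The plan is to treat each $S_j$ as a candidate good router and check the two properties that make it one; any failure yields a certificate of type (2) or (3), and success lets us construct a tree-of-sets system as outlined in the high-level overview.

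First I would verify, for each $1\leq j\leq r_0$, that $S_j$ has the $(h_0,\alpha)$-bandwidth property, by setting up an instance of sparsest cut on $G[S_j]$ with the interface of $S_j$ as terminals and running $\algsc$; if it returns a cut of sparsity below $\alpha$ involving at most $h_0$ interface vertices on each side, this is an $(h_0,\alpha)$-violating partition of $S_j$ and we output it as outcome (2). Next, for each $j$ for which the bandwidth test succeeds, I would use a max-flow computation to route flow from $S_j$ to $\tset$ in $G$ (unifying $S_j$ into a source and $\tset$ into a sink); if the maximum flow is below $h_0/2$, the min-cut certifies outcome (3). After these tests, every $S_j$ has the $(h_0,\alpha)$-bandwidth property and sends $h_0/2$ flow units to $\tset$; in particular, using the $1$-well-linkedness of $\tset$, every pair $S_i,S_j$ is connected by many node-disjoint paths in $G$.

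Assuming all routers pass, I would construct the tree-of-sets system in two phases, mirroring Sections of~\cite{ChuzhoyL12}. Build an auxiliary multigraph $H$ by contracting each $S_j$ to a supernode $v_j$ and letting $\ttset=\{v_1,\ldots,v_{r_0}\}$. Using the flow guarantees, apply element- and edge-connectivity preserving reductions (Mader-style splittings and contractions of non-terminal vertices) until we obtain a graph $H'$ on vertex set $\ttset$ in which every pair of supernodes is $\Omega(h_0/\poly(\Delta))$-edge-connected and every edge $(v_i,v_j)\in E(H')$ corresponds to an internally vertex-disjoint path in $G$ from $S_i$ to $S_j$, avoiding $\bigcup_\ell S_\ell$ internally. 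Form the graph $Z$ on $\ttset$ keeping only pairs joined by $\Omega(h_0/(r_0^2\poly\Delta))$ parallel edges; standard counting shows $Z$ is connected, so it has a spanning tree $T_0$. Since $r_0=r^2$, by Claim~\ref{claim:path-leaves-in-tree} either $T_0$ contains a root-to-leaf path of length at least $r$ (from which the path-of-sets interpretation gives a degree-$\leq 2$ tree directly), or $T_0$ has at least $r$ leaves; focus on the latter case, let $\{S_1,\ldots,S_r\}$ be the leaf routers, and fix a root router $S^*$.

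In the second phase I would simultaneously route, from each $S_i$ to $S^*$ along leaf-to-root paths in $T_0$, a large collection of node-disjoint paths $\pset_i$; combined with Theorem~\ref{thm: grouping} to boost node-well-linkedness inside $S^*$ and Theorem~\ref{thm: linkedness from node-well-linkedness} to obtain pairwise linkedness, this yields subsets $E_i\subseteq\out_G(S_i)$ of size $h_3=h\cdot\poly(r\Delta)$ such that for every $i\neq j$ there are $h_3$ node-disjoint $S_i$-$S_j$ paths in $G$ whose first and last edges lie in $E_i$ and $E_j$ respectively. I would then build a second auxiliary graph $\tH$ by deleting all edges of $\out_G(S_i)\setminus E_i$ and contracting each $S_i$; now every supernode $v_i$ has degree exactly $h_3$ and pairwise edge-connectivity exactly $h_3$. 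Apply another round of connectivity-preserving splittings to get $\tH'$ on vertex set $\{v_1,\ldots,v_r\}$ with edges still realized by node-disjoint paths in $G$, then define $\tZ$ by keeping only pairs with at least $h_3/r^3$ parallel edges. A counting argument shows degrees and pairwise edge-connectivity in $\tZ$ remain close to $h_3$, and the pair-connectivity ensures $\tZ$ is $1$-tough (or, equivalently, one can write down an LP with fractional degree at most $2$); apply the Singh--Lau spanning-tree algorithm or the $1$-toughness argument of~\cite{graph-toughness} to produce a spanning tree $T$ of $\tZ$ of maximum degree $3$. Translating back, the clusters $\{S_1,\ldots,S_r\}$ together with $T$ and the paths realizing each edge of $T$ give the desired tree-of-sets system with height $\lfloor h\rfloor$, width $\lfloor r\rfloor$, and bandwidth $\alphawl$, disjoint from $\tset$.

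The main obstacle is orchestrating the two phases so that all the losses (in the splittings, the leaf/path dichotomy, the flow rerouting through $S^*$ with boosting, the second splitting, and the degree-$3$ spanning tree construction) compose to produce final parameters matching $h$, $r$, and $\alphawl=\Omega(1/(r^2\log^{1.5}k))$; this is governed by the inequality $k/\log^4 k>chr^{19}\Delta^8$ in the hypothesis, and requires that at each step we only lose a controlled power of $r$, $\Delta$, and $\log k$. A secondary subtlety is ensuring connectivity-preserving splittings can be performed in graphs with the mixed supernode/terminal structure without damaging the relation between edges of $H'$ and node-disjoint paths in $G$, which we handle by working with a vertex-capacitated variant in the style of~\cite{ChuzhoyL12} and appealing to the absence of terminals inside the $S_j$'s.
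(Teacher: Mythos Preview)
Your outline captures the two-phase architecture correctly, but the very first step---checking the $(h_0,\alpha)$-bandwidth property of each $S_j$ \emph{upfront} via sparsest cut on the full interface---is a genuine gap, and in fact is precisely the obstacle the paper works around.

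The clusters $S_j$ here are \emph{large}, meaning $|\out_G(S_j)|\geq h_0$; the full interface $\Gamma_j$ can be arbitrarily large compared to $h_0$. If $\algsc$ on $(G[S_j],\Gamma_j)$ returns a cut of sparsity below $\alpha$, there is no reason the two sides each contain at most $h_0$ interface vertices, so you cannot in general extract an $(h_0,\alpha)$-violating partition from it. Conversely, you cannot conclude that ``after these tests every $S_j$ has the $(h_0,\alpha)$-bandwidth property'': a cluster can pass your test (sparsest cut $\geq\alpha$ globally is not what you check; and a sparse cut with $>h_0$ interface vertices on a side gives you nothing) yet still fail the $(h_0,\alpha)$-bandwidth property on some small pair of subsets. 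The paper's high-level overview states this explicitly: an efficient algorithm to find an $(h_0,\alpha)$-violating cut in a cluster lacking the property is not known, and this is exactly why the constructive proof does \emph{not} test bandwidth upfront.

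The paper's remedy is to defer every bandwidth check until the relevant interface is provably small. Concretely: the flow-to-terminals test comes first; then Phase~1 and Phase~2 proceed with no a priori bandwidth guarantee on the $S_j$. Bandwidth is tested only (i) when routing $h_1$ flow units between two specified subsets of size $\leq h_1<h_0$ inside an intermediate cluster on the way to $S^*$ (failure immediately gives an $(h_0,\alpha)$-violating cut), (ii) on the set $A\subset S^*$ of size $<h_0$ of path-endpoints, and (iii) at the very end, on the interface $\Gamma_i$ of each $S_i$ in the constructed subgraph $G^*$, where $|\Gamma_i|\leq 3h<h_0$ because at most three tree edges touch $v_i$. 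In each case the terminal set fed to $\algsc$ has size below $h_0$, so a returned sparse cut is automatically an $(h_0,\alpha)$-violating partition. Your proposal needs to be reorganized so that all bandwidth checks occur on interfaces of size $<h_0$ that arise during the construction, not on the full boundary of a large cluster.
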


 We provide the proof of Theorem~\ref{thm: iteration} in the following
 subsection, and complete the proof of Theorem~\ref{thm: find good
   crossbar or a smaller contracted graph} here. Suppose we are given
 a good clustering $\cset$ of the vertices of $G$. For each $1\leq
 j\leq \ell_0$, we compute an acceptable clustering $\cset_j$ of $V(G)$
 as described above. If any of the partitions $\cset_j$ is a good
 clustering, then we terminate the algorithm and return
 $\cset_j$. From the above discussion, $\phi(\cset_j)\leq
 \phi(\cset)-1$. Otherwise, for each $1\leq j\leq \ell_0$, we select any
 large cluster $S_j\in \cset_j$, and apply Theorem~\ref{thm:
   iteration} to the current family $\set{S_1,\ldots,S_{\ell_0}}$ of
large clusters. If the outcome of Theorem~\ref{thm:
   iteration} is a good tree-of-sets system, then we terminate the
 algorithm and return this tree-of-sets system, and we say that the
 iteration is successful. Otherwise, we apply the appropriate action:
 $\partition(S_j,X,Y)$, or $\separate(S_j,A)$ to the clustering
 $\cset_j$. As a result, we obtain an acceptable clustering
 $\cset'_j$, with $\phi(\cset'_j)\leq \phi(\cset_j)-1/n$. Moreover, it
 is easy to see that this clustering also has Property~(\ref{property
   of Cj}): if the $\partition$ operation is performed, then we only
 partition $S_j$; if the $\separate$ operation is performed, then for every
 large cluster $S$ in the new partition $\cset'_j$, there is a 
 large cluster $S'\in\cset_j$ with $S\subseteq S'$.

 If all clusters in $\cset'_j$ are small, then we can again terminate
 the algorithm with a good clustering $\cset'_j$, with
 $\phi(\cset'_j)\leq \phi(\cset_j)-1/n \le \phi(\cset)-1 - 1/n$
 (recall that Claim~\ref{claim:improve-potential} shows that $\phi(\cset_j) \le
 \phi(\cset)-1$). Otherwise, we select an arbitrary large cluster $S'_j\in
 \cset'_j$, and continue to the next iteration. Overall, as long as we
 do not complete a successful iteration, and we do not find a good
 clustering $\cset'$ of $V(G)$ with $\phi(\cset')\leq \phi(\cset)-1$,
 we make progress in each iteration by decreasing the potential of one
 of the partitions $\cset_j$ by at least $1/n$, by performing either a
 $\separate$ or a $\partition$ operation on one of the large clusters
 of $\cset_j$. After at most $1.1|E(G)|\cdot n\cdot \ell_0$ iterations we
 are then guaranteed to complete a successful iteration, or find a
 good clustering $\cset'$ with $\phi(\cset')\leq \phi(\cset)-1$, and
 finish the algorithm. Therefore, in order to complete the proof of
 Theorem~\ref{thm: find good crossbar or a smaller contracted graph}
 it is now enough to prove Theorem~\ref{thm: iteration}.

 \label{---------------------------------------------sec: iteration execution--------------------------------}
 \subsection{Proof of Theorem~\ref{thm: iteration}\label{sec: proof of iteration theorem}}
 
  Let $\rset=\set{S_1,\ldots,S_{\ell_0}}$.
  We start by checking that for each $1\leq j\leq \ell_0$, the vertices
  of $S_j$ can send $w_0/2$ flow units in $G$ to the terminals with no
  edge-congestion. If this is not the case for some set $S_j$, then
  there is a partition $(A,B)$ of $V(G)$ with $S_j\sse A$, $\tset\sse
  B$ and $|E_G(A,B)|<w_0/2$. We then return the partition $(A,B)$ of
  $V(G)$ and finish the algorithm.  From now on each
  set $S_j$ can send $w_0/2$ flow units in $G$ to the terminals with
  no edge-congestion. We assume without loss of generality that this flow is integral.

 Since the set $\tset$ of terminals is node-well-linked, every pair $(S_j,S_{j'})$ of vertex subsets from $\rset$ can send $w_0/2$ flow units to each other with edge-congestion at most $3$: concatenate the flows from $S_j$ to a subset $\tset_1$ of the terminals, from $S_{j'}$ to a subset $\tset_2$ of the terminals, and the flow between the two subsets of the terminals.
 Scaling this flow down by factor $3\Delta$ and using the integrality of flow, for each such pair $(S_j,S_{j'})$,  there are at least $\lfloor \frac{w_0}{6\Delta}\rfloor$ node-disjoint paths connecting $S_j$ to $S_{j'}$ in $G$. We can assume that these paths do not contain terminals, as the degree of every terminal in $G$ is $1$. 
 
 The algorithm consists of two phases. In the first phase, we attempt to construct a tree-of-sets system, using the collection $\rset$ of vertex subsets. If we fail to do so, we will either return an $(w_0,\alpha)$-violating cut in some cluster $S_j\in \rset$, or we will identify a subset $\rset'\subseteq \rset$ of $\ell$ clusters, and for each cluster $S_j\in \rset'$, a large subset $E_j\subseteq \out(S_j)$ of edges, such that for each $S_i,S_j\in \rset'$, there are many disjoint paths connecting the edges in $E_j$ to the edges in $E_i$ in $G$. In the second phase, we exploit the clusters in $\rset'$ to build the tree-of-sets system.
 
 Given a graph $\G$ and a subset $\ttset$ of vertices called terminals, we say that a pair $(t,t')$ of terminals is $\lambda$-edge-connected, if and only if there are at least $\lambda$ paths connecting $t$ to $t'$ in $\G$, that are mutually edge-disjoint. Let $\lambda(t,t')$ be the largest value $\lambda$, such that $t$ and $t'$ are $\lambda$-edge-connected, and let $\lambda_{\G}(\ttset)=\min_{t,t'\in \ttset}{\lambda(t,t')}$. We say that a pair $t,t'$ of terminals is $\mu$-element-connected, if and only if there are $\mu$ paths connecting $t$ to $t'$ that are pairwise disjoint in both the edges and the non-terminal vertices of $\G$ (but they are allowed to share terminals). Let $\mu(t,t')$ be the largest value $\mu$, such that $t$ and $t'$ are $\mu$-element-connected, and denote $\mu_{\G}(\ttset)=\min_{t,t'\in \ttset}{\mu(t,t')}$. Clearly, $\lambda_{\G}(\ttset)\geq \mu_{\G}(\ttset)$ always holds. 
 We use the following theorem several times.
 
 \begin{theorem}\label{thm: splitting off}
 There is an efficient algorithm, that, given a graph  $\G$ and a set $\ttset\subseteq V(\G)$  of $\kappa$ vertices called terminals, such that $\mu_{\G}(\ttset)\geq \mu$ for some $\mu\geq 1$, constructs another graph $\H$  with $V(\H)=\ttset$, a partition $\uset$ of $E(\H)$ into groups of size at most $\kappa$, and for each edge $e=(t,t')\in E(\H)$ a path $P_e$ connecting $t$ to $t'$ in $\G$, such that:
 
 \begin{itemize}
 \item $\lambda_{\H}(\ttset)\geq 2\mu$;
 \item for each terminal $t$, $d_{\H}(t)\leq 2d_{\G}(t)$;
 
 \item for each $e\in E(\H)$, path $P_e$ does not contain terminals as inner vertices; 
 
  \item if $e'\in E(\G)$ is incident to a terminal of $\ttset$, then $e'$ belongs to at most $2$ paths in $\set{P_e\mid e\in E(\H)}$; and
 \item if we select, for each group $U\in \uset$, an arbitrary edge $e_U\in U$, then the corresponding paths $\set{P_{e_U}\mid U\in \uset}$ are  node-disjoint in $\G$, except for possibly sharing endpoints.
 \end{itemize}
 \end{theorem}
 
 \begin{proof}
 We use the following theorem of Hind and Oellermann~\cite{element-connectivity} (see also~\cite{ChekuriK09}).
 
 \begin{theorem}\label{thm: splitting for element connectivity}
 Let $\G$ be a graph, $\ttset$ a set of terminals in $\G$, and assume that $\mu_{\G}(\ttset)=\mu$ for some $\mu\geq 0$. Let $(p,q)$ be an edge with $p,q\in V\setminus \ttset$. Let $G_1$ be the graph obtained from $G$ by deleting the edge $(p,q)$, and let $G_2$ be obtained from $G$ by contracting it. Then either $\mu_{G_1}(\ttset)\geq \mu$ or $\mu_{G_2}(\ttset)\geq \mu$.
 \end{theorem}
 
 While our graph $\G$ contains an edge $(p,q)$ connecting two non-terminal vertices $p$ and $q$, we apply Theorem~\ref{thm: splitting for element connectivity} to $\G,\ttset$ and the edge $(p,q)$, and replace $\G$ with the resulting graph, where the edge $(p,q)$ is either deleted or contracted. Let $G'$ be the graph obtained at the end of this procedure. For simplicity, we call the terminal vertices of $G'$ \emph{black vertices}, and the non-terminal vertices \emph{white vertices}. Let $W$ denote the set of all white vertices. Notice that every edge in $G'$ either connects two black vertices, or it connects a white vertex to a black vertex. Moreover, we can assume without loss of generality that for each $t\in \ttset$, $v\in W$, there is at most one edge $(t,v)$ in $G'$: otherwise, if several such parallel edges are present, we delete all but one such edge. This does not affect the element-connectivity of any pair $t,t'$ of terminals, since the paths connecting them are not allowed to share $v$. So we will assume from now on that every such pair $(t,v)$ at most one edge $(t,v)$ is present in $G'$. Notice that for each terminal $t$, $d_{G'}(t)\leq d_{\G}(t)$. 
For every pair $(t,t')$ of terminals, an edge $(t,t')$ is present in $G'$ if and only if it was present in $\G$. Every white vertex $v$ is naturally associated with a connected subgraph $C_v$ of $\G$, containing all edges that were contracted into $v$, and all subgraphs $\set{C_v}_{v\in W}$ are completely disjoint. For each edge $(v,t)$ connecting $v$ to some terminal $t$ in $G'$, there is an edge $(u,t)$ in $\G$, where $u$ is some vertex in $C_v$.  
 Notice that $\lambda_{G'}(\ttset)\geq \mu_{G'}(\ttset)\geq \mu$.

 Next, we replace every edge in $G'$ by two parallel edges, and denote the resulting graph by $G''$, so $G''$ is Eulerian. The degree of every terminal $t$ now becomes at most $2d_{\G}(t)$, and $\lambda_{G''}(\ttset)\geq 2\mu$. We now start constructing the final graph $\H$ and the partition $\uset$ of its edges. We start with $\H=G''$, and for every edge $(t,t')\in E(G')$ connecting a pair $t,t'\in \ttset$ of terminals, we add a new group $U$, containing the two copies of the edge $(t,t')$ in $G''$, to $\uset$. Next, we take care of the white vertices, by using the following edge-splitting operation due to Mader~\cite{edge-connectivity}.
 
 \begin{theorem}\label{thm: edge splitting for edge-connectivity}
 Let $\G$ be an undirected multi-graph, $s$ a vertex of $\G$ whose degree is not $3$, such that $s$ is not incident to a cut edge of $G$. Then $s$ has two neighbors $u$ and $v$, such that the graph $\G'$, obtained from $\G$ by replacing the edges $(s,u)$ and $(s,v)$ with the edge $(u,v)$, satisfies $\lambda_{\G'}(x,y)=\lambda_{\G}(x,y)$ for all $x,y\neq s$.
 \end{theorem}
 
 We process the white vertices $v\in W$ one-by-one. Consider some such
 vertex $v$. Recall that there are at most $2\kappa$ edges incident to
 $v$ in $G''$. We apply Theorem~\ref{thm: edge splitting for
   edge-connectivity} to vertex $v$ repeatedly, until it becomes an
 isolated vertex (since the degree of $v$ is even due to the doubling
 of all edges, and the terminals are $2\mu$-edge-connected, the
 conditions of the theorem are always satisfied). Let $U_v$ be the set
 of all resulting new edges in graph $\H$. We add $U_v$ to
 $\uset$. Notice that $|U_v|\leq \kappa$. Once all vertices $v\in W$
 are processed, we obtain the final graph $\H$.  It is easy to see
 that the degree of every terminal $t\in \ttset$ is at most
 $2d_{\G}(t)$, since the edge-splitting operation does not change the
 degrees of the terminals. Every edge $e=(t,t')$ in $\H$ is naturally
 associated with a path $P_e$ connecting $t$ and $t'$ in $\G$: if edge
 $e=(t,t')$ is present in $\G$, then $P_e=e$. Otherwise, edge $e$ was
 obtained by replacing a pair $(t,v),(v,t')$ of vertices in $G''$ with
 edge $(t,t')$. In this case, there must be vertices $u,u'\in C_v$
 (possibly $u=u'$), with edges $e_1=(u,t),e_2=(u',t')\in E(\G)$. We
 let $P$ be a path connecting $u$ to $u'$ in $C_v$, and set
 $P_e=(e_1,P,e_2)$.  Given two edges $e,e'\in E(\H)$, the only
 possibility that the paths $P_e$ and $P_{e'}$ share inner vertices or
 edges is when $e,e'$ are two copies of the same edge connecting some
 pair of terminals in $G''$, or both edges belong to some set $U_v$,
 for some $v\in W$. Therefore, choosing at most one edge from each
 group $U\in \uset$ ensures that the resulting paths are internally
 node- and edge-disjoint.
 
 Consider now some edge $e'\in E(\G)$, such that $e'$ is incident to some terminal $t$. Notice that exactly two copies of $e'$ were present in $\H$ before the edge-splitting procedure. It is then immediate to verify that $e'$ may belong to at most two paths in set $\set{\pset_e\mid e\in E(\H)}$.
 \end{proof}

 \subsubsection{Phase 1}
 
 Let $H$ be the graph obtained from $G$ by contracting each cluster
 $S_i\in \rset$ into a super-node $v_i$. Let
 $\ttset=\set{v_1,\ldots,v_{\ell_0}}$ be the resulting set of super-nodes
 that we will refer to as terminals in this phase. As observed above,
 every pair $S_i,S_j\in \rset'$ of clusters has at least
 $\mu=\floor{\frac{w_0}{6\Delta}}$ node-disjoint paths connecting them
 in $G$. Therefore, $\mu_{H}(\ttset)\geq \mu$. We apply
 Theorem~\ref{thm: splitting off} to graph $H$, set $\ttset$ of
 terminals and the value $\mu$. Let $H'$ denote the resulting graph,
 and $\uset$ the resulting partition of the edges of $H'$, where each
 group $U\in\uset$ contains at most $\ell_0$ edges. Recall that each edge
 $e=(v_i,v_j)$ in $H'$ corresponds to a path $P_e$ connecting $v_i$ to
 $v_j$ in $H$, where $P_e$ does not contain vertices of $\ttset$,
 except for its endpoints. In turn, path $P_e$ defines a path $P'_e$
 in graph $G$, connecting a vertex of $S_j$ to a vertex of $S_i$ {\bf
   directly}, that is, $P'_e$ does not contain the vertices of
 $\bigcup_{S\in \rset}S$ as inner vertices.
 
 Let $w_1=\lfloor \frac{w_0}{2\Delta \ell_0^2}\rfloor$. We define a new
 graph $Z$, whose vertex set is $\set{v_1,\ldots,v_{\ell_0}}$. We add an
 edge $(v_i,v_j)$ to $Z$ if and only if there are at least $w_1$ parallel edges
 connecting $v_i$ to $v_j$ in $H'$.  It is easy to verify that graph
 $Z$ is connected: indeed, assume otherwise. Let $A$ be some connected
 component of $Z$, and let $B$ contain the rest of the vertices. Let
 $v_j\in A$, $v_{j'}\in B$. Since there are at least $2\mu\geq \lfloor
 \frac{w_0}{6\Delta}\rfloor $ edge-disjoint paths connecting $v_j$ and
 $v_{j'}$ in $H'$, $|E_{H'}(A,B)|\geq \lfloor
 \frac{w_0}{6\Delta}\rfloor $ must hold. Since $|A|+|B|=\ell_0$, at least
 one pair $(v_j,v_{j'})$ with $v_j\in A$, $v_{j'}\in B$, has at least
 $w_1$ parallel edges connecting them in $H'$.

Let $T$ be a spanning tree of $Z$ that is rooted at some arbitrary
node. We say that Case 1 happens if $T$ contains a root-to-leaf path
of length at least $\ell$, and we say that Case 2 happens
otherwise. Since $|V(T)|=\ell_0=\ell^2$, and every vertex of $T$ lies on
some root-to-leaf path, if Case 2 happens, then $T$ contains at least
$\ell$ leaves (see Claim~\ref{claim:path-leaves-in-tree}). We now consider each of the two cases separately. For Case
1, we build a good tree-of-sets system directly, and we only apply
Phase 2 of the algorithm if Case 2 happens.
 
\paragraph{Case 1} Let $P$ be a path of $T$ of length exactly
$\ell$. Assume without loss of generality that
$P=\set{v_1,\ldots,v_{\ell}}$. Let $\rset'=\set{S_1,\ldots,S_{\ell}}$ be the set
of corresponding clusters. We build a tree-of-sets system
$(\rset',T^*,\bigcup_{e\in T^*}\pset^*_e)$. The tree $T^*$ is just a
path connecting $v_1,\ldots,v_{\ell}$ in this order.  In order to define
the sets $\pset^*_e$ of edges, we perform the following
procedure. Recall that each edge $(v_i,v_{i+1})$ in $P$ corresponds to
a collection $E_i$ of at least $w_1$ edges in graph $H'$. For each
group $U\in \uset$, we select one edge $e_U\in U$ uniformly at
random. Let $E^*=\set{e_U\mid U\in \uset}$ be the set of the selected
edges. For each $1\leq i<\ell$, let $E'_i=E_i\cap E^*$. Since the size of
each group $U$ is at most $\ell_0$, the expected number of edges in
$E'_i$ is at least $\frac{w_1}{\ell_0}>w\geq \log k$. Using the standard
Chernoff bound, with high probability, for each $1\leq i<\ell$,
$|E'_i|\geq \floor{\frac{w_1}{2\ell_0}}$. Let
$w_2=\floor{\frac{w_1}{2\ell_0}}$. If $E'_i$ contains more than $w_2$
edges, we discard arbitrary edges from $E'_i$ until $|E'_i|=w_2$
holds. Let $\tilde \pset_i=\set{P_e\mid e\in E'_i}$, and let
$\tilde \pset=\bigcup_{i=1}^{\ell-1}\tilde \pset_i$. Then from Theorem~\ref{thm:
  splitting off}, the paths in $\tilde \pset$ are internally node-disjoint in
$H$, and they do not contain terminals as inner vertices. Therefore,
the corresponding paths in graph $G$ are also internally
node-disjoint, and they do not contain the vertices of
$\bigcup_{i=1}^{\ell}S_i$ as inner vertices.
For each $1\leq i\leq \ell$, let $\pset_i$ be the set of paths in graph $G$, corresponding to the paths in $\tpset_i$, and let $\pset=\bigcup_{i=1}^{\ell-1}\pset_i$. Then the paths in $\pset$ are direct and internally node-disjoint.
 Only two issues
remain. First, it is possible that some paths in $\pset$ share
endpoints, and second, we need to ensure that the clusters $S_i$ have
the bandwidth property in the corresponding graph.
 
In order to solve the first problem, we will define, for each $1\leq
i< \ell$, a subset $\pset_i'\subseteq\pset_i$ of $w$ paths, such that every
vertex of $\bigcup_{i=1}^{\ell}S_i$ belongs to at most one path in
$\bigcup_{i'=1}^{\ell-1}\pset'_{i'}$ (or in other words, the paths in
$\bigcup_{i'=1}^{\ell-1}\pset'_{i'}$ have distinct endpoints).

We start with the set $\pset_1$. Using Claim~\ref{claim: large matching}, we can compute a subset
$\pset_1'\subseteq\pset_1$ of $\floor{\frac{w_2}{2\Delta}}$ paths that
do not share endpoints in $S_1$ or $S_2$.  We then consider the set
$\pset_2$ of paths. We delete from $\pset_2$ all paths that share
endpoints with paths in $\pset_1'$. Since
$|\pset_1'|=\floor{\frac{w_2}{2\Delta}}$, at most
$\frac{w_2}{2}$ paths of $\pset_2$ may share endpoints with paths in
$\pset_1'$, so at least half the paths in $\pset_2$ remain. As before,
we select a subset $\pset_2'$ of $\floor{\frac{w_2}{2\Delta}}$ such
paths that do not endpoints using Claim~\ref{claim: large matching}. We continue this process for all
$1\leq i<\ell$, until all paths in set $\pset'=\bigcup_i\pset'_i$ are
mutually disjoint. The sets $\pset'_i$ of paths are then used to
define the sets $\pset^*_e$ of paths in the tree-of-sets system.
Notice that the size of each set is
$\floor{\frac{w_2}{2\Delta^2}}\geq\frac{w_1}{8\ell_0\Delta^2}\geq
\frac{w_0}{32\Delta^3\ell_0^3}>w$ from Equation~(\ref{eq: bound on h}).

Let $G^*$ be the subgraph of $G$ obtained by the union of $G[S]$ for
$S\in \rset'$ and $\bigcup_{e\in T^*}\pset^*_e$.  We need to verify
that each set $S_i$ has the $\alphaWL$-bandwidth property in
$G^*$. Let $\Gamma_i$ be the interface of the set $S_i$ in $G^*$. We
set up an instance of the sparsest cut problem with the graph $G[S_i]$ and the
set $\Gamma_i$ of terminals, and apply algorithm \algSC to it. If the
outcome is a cut of sparsity less than $\alpha$, then, since
$|\Gamma_i|<w_0$, we obtain an $(w_0,\alpha)$-violating cut of $S_i$
in graph $G$. We return this cut as the outcome of the algorithm. If
\algSC returns a cut of sparsity at least $\alpha$ for each set $S_i$,
for $1\leq i\leq \ell$, then we are guaranteed that each such set has the
$\alphaWL$-bandwidth property in $G^*$, and we have therefore
constructed a good tree-of-sets system. (We are guaranteed that
$S_i\cap \tset=\emptyset$ for each $i$, since each set $S_i\in \rset$
only contains non-terminal vertices).

\paragraph{Case 2}
If Case 2 happens, then we  need to execute the second phase of
the algorithm, but first we need to establish the following useful
fact.  Recall that we have found a tree $T$ in graph $Z$ containing at
least $\ell$ leaves. Let $\rset'\sse \rset$ be an arbitrary subset of $\ell$
clusters, corresponding to the leaves of $T$. For simplicity of
notation, we assume that $\rset'=\set{S_1,\ldots,S_{\ell}}$.  We say that a
path $P$ connects $S_i$ to $S_j$ \emph{directly}, for $S_i,S_j\in
\rset'$ if and only if no inner vertex of $P$ belongs to $\bigcup_{S\in \rset'}S$
(but they may belong to clusters $S\in \rset\setminus\rset'$).

\begin{theorem}\label{thm: case 2 step 1}
  There is an efficient randomized algorithm, that with high probability either
  computes a $(w_0,\alpha)$-violating cut of some set $S\in \rset$,
  or finds, for each set $S_i\in \rset'$, a subset $E_i\subseteq
  \out_G(S_i)$ of edges, and for each $1\leq i\neq
  j\leq \ell$  a collection $\pset_{i,j}$ of paths in $G$ that satisfy the following
  properties:
  \begin{itemize}
  \item $|E_i| = w_3=\Omega\left (\frac{\alpha^2 w_1}{\ell\Delta^7\log k}\right )$
  \item $\pset_{i,j}$ is a collection of $w_3$ node-disjoint paths that 
    directly connect $S_i$ to $S_j$
  \item each path $P\in \pset_{i,j}$ contains an edge of $E_i$ and an
    edge of $E_j$ as its first and last edges, respectively.
  \end{itemize}
\end{theorem}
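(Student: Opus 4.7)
The plan is to construct, simultaneously for every leaf $v_i$ of the tree $T$ (corresponding to $S_i \in \rset'$), a large family $\qset_i$ of paths in $G$ from $S_i$ to a common root cluster $S^* \in \rset \setminus \rset'$, chosen to correspond to an arbitrary internal vertex $v^*$ of $T$, with the key property that all paths in $\bigcup_i \qset_i$ are pairwise node-disjoint. Once this is done, $E_i$ will be defined as the set of first edges of (a suitable subset of) $\qset_i$, and $\pset_{i,j}$ for each pair will be obtained by concatenating a prefix of $\qset_i$, a matching inside $G[S^*]$, and the reverse of a prefix of $\qset_j$. Note that the $\qset_i$-paths only pass through clusters of $\rset \setminus \rset'$ as interior clusters, since a leaf-to-root path in $T$ avoids other leaves; so the resulting $\pset_{i,j}$ are automatically direct with respect to $\rset'$.

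To build the $\qset_i$, I would first pick, uniformly at random, one edge from every group $U \in \uset$ guaranteed by Theorem~\ref{thm: splitting off}, obtaining a set $E^*$ of edges in $H'$ whose corresponding paths in $G$ are all internally node-disjoint. A standard Chernoff bound shows that w.h.p. every $T$-edge retains at least $h_2 = \Omega(h_1/r_0)$ selected parallel edges. I then set up a flow problem in the tree-like subgraph of $H'$ on $V(T)$ induced by these selected edges: each leaf $v_i$ sends $\lfloor h_2/r \rfloor$ flow units to $v^*$ along the unique path $\sigma_i$ from $v_i$ to $v^*$ in $T$. Since any single $T$-edge is traversed by at most $r-1$ such $\sigma_i$, the total flow on it is below $h_2$, so integrality of flow yields $\lfloor h_2/r \rfloor$ edge-disjoint paths per leaf in this multigraph, jointly edge-disjoint across all leaves. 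What remains is to lift these $H'$-paths to actual $G$-paths by routing, inside each intermediate cluster $S_a$ on some $\sigma_i$, from the entry vertices to the matching exit vertices of the various $P_e$'s.

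The main obstacle, which produces the $(h_0,\alpha)$-violating-cut escape valve, is this intermediate-cluster routing. For each such $S_a$, the set $\Gamma_a \subseteq S_a$ of entry and exit vertices satisfies $|\Gamma_a| \le 2h_2 \le h_0$; I would run $\algSC$ on $G[S_a]$ with $\Gamma_a$ as terminals. If it returns a cut $(X,Y)$ of sparsity less than $\alpha$, the cut, paired with $\Gamma_X = X \cap \Gamma_a$ and $\Gamma_Y = Y \cap \Gamma_a$, is an $(h_0,\alpha)$-violating partition of $S_a$, and we return it. Otherwise $\Gamma_a$ is $\alphaWL$-well-linked in $G[S_a]$, so the required internal matching can be routed with edge-congestion $O(1/\alphaWL)$ by the flow-cut-gap bound. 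Gluing these internal routings to the disjoint external $G$-paths produces a flow of value $\Omega(h_2/r)$ from each $S_i$ to $S^*$ of total edge-congestion $O(1/\alphaWL)$; scaling by $O(\Delta/\alphaWL)$ and applying integrality of flow extracts the desired $\qset_i$ of size $\Omega(h_2 \alphaWL/(r\Delta))$, mutually node-disjoint across leaves.

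Finally, to assemble the $\pset_{i,j}$, I would perform an analogous $\algSC$ check on $G[S^*]$ with the $S^*$-endpoints of the $\qset_i$ as terminals (their number being at most $h_0$ after a harmless restriction): if sparsity is below $\alpha$ the algorithm returns an $(h_0,\alpha)$-violating cut of $S^*$, and otherwise Theorem~\ref{thm: grouping} applied separately to each $\Gamma_i^* := V(\qset_i) \cap S^*$ combined with Corollary~\ref{cor: node-well-linkedness and linkedness} produces subsets $\hat\Gamma_i^* \subseteq \Gamma_i^*$ of size $\Omega(\alphaWL|\Gamma_i^*|/(\Delta^6 \alphasc(h_0)))$ that are node-well-linked in $G[S^*]$ and pairwise linked in $G[S^*]$. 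Restricting each $\qset_i$ to its paths ending in $\hat\Gamma_i^*$ yields $\qset_i'$ of size $h_3 = \Omega(\alpha^2 h_1 / (r\Delta^7 \log k))$; taking $E_i$ to be the set of first edges of the paths in $\qset_i'$ and $\pset_{i,j}$ to be the concatenation of the $\qset_i'$-prefixes, the $h_3$ disjoint matching paths inside $G[S^*]$ certified by linkedness, and the reverses of the $\qset_j'$-prefixes, completes the construction.
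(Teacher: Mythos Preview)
Your outline mirrors the paper's: route each leaf $S_i\in\rset'$ to a hub $S^*$ along the tree $T$, check intermediate clusters (or output a violating cut), then link the endpoints inside $G[S^*]$ via Corollary~\ref{cor: node-well-linkedness and linkedness} and concatenate. The $S^*$ step and the final assembly match the paper exactly. Two earlier choices, however, cost polynomial factors, so your $h_3$ comes out only as $\Omega\bigl(\alpha^2 h_1/(r^3\Delta^7\log^{1.5}k)\bigr)$, short of the stated bound by roughly $r^2\sqrt{\log k}$.

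The larger loss is the random sampling from $\uset$: selecting one edge per group drops you to $h_2=\Theta(h_1/r_0)$ usable edges per tree-edge, so after distributing across the $r$ leaves and scaling for congestion you extract only $\Theta(h_2\alphaWL/(r\Delta))$ node-disjoint paths per leaf. The paper does not sample. It keeps all $h_1$ parallel $H'$-edges on each tree-edge and simply accepts edge-congestion $r_0$ among the corresponding external $G$-segments; since $\alpha=1/(2^{11}r_0\log k)$, this $r_0$ is absorbed by the $1/\alpha$ congestion of the internal routing, so each per-leaf flow $F_j$ has value $h_1$ with total congestion at most $2/\alpha$. Summing the $r$ flows and scaling by $2r\Delta/\alpha$ then yields $\Theta(h_1\alpha/(r\Delta))$ jointly node-disjoint paths per leaf. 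The smaller loss is your $\algSC$ call at each intermediate cluster, which only certifies $\alphaWL$-well-linkedness and hence internal congestion $1/\alphaWL$; the paper instead performs a direct max-flow attempt at congestion $1/\alpha$, and if it fails the min-cut between the two terminal sets (each of size at most $h_1<h_0$) is itself an $(h_0,\alpha)$-violating partition, saving the $\alphasc(k)$ factor. Both fixes are local and leave the rest of your argument intact.
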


Notice that the theorem implies that the edges in each set $E_i$ do not share endpoints.

\begin{proof}
  Recall that each edge $e\in E(H')$ corresponds to a path $P_e$ in
  $H$. Since for any choice of edges $e_U\in U$ for each $U\in \uset$,
  we are guaranteed that the corresponding set $\set{P_{e_U}\mid U\in
    \uset}$ of paths is edge-disjoint in $H$, and $|U|\leq \ell_0$ for
  all $U\in \uset$, the total edge congestion in $H$ due to the paths
  in $\set{P_e\mid e\in E(H')}$ is at most $\ell_0$. Each such path $P_e$
  naturally defines a direct path $P'_e$ in $G$, and the total edge
  congestion due to the paths in $\set{P'_e\mid e\in E(H')}$ is at
  most $\ell_0$ in $G$.

  Let $v^*$ be the root of the tree $T$, and let $S^*\in \rset$ be the
  corresponding vertex subset. We use the following claim.
  
  \begin{claim} For each set $S_j\in \rset'$, there is a flow $F_j$ of value $\ceil{\frac{w_1}{2\Delta}}$, originating from the
  vertices of $S^*$ and terminating at the vertices of $S_j$ in $G$, with
  edge-congestion at most  $\frac{2}{\alpha}$, such that the flow-paths in $F_j$ do not contain
 the vertices of $\bigcup_{S\in \rset'}S$ as inner vertices.
  \end{claim}
  
  \begin{proof}
   Consider the path $(v^*=v_{i_1},v_{i_2},\ldots,v_{i_x}=v_j)$ in the
  tree $T$, connecting $v^*$ to $v_j$.  For each edge
  $e_z=(v_{i_z},v_{i_{z+1}})$ on this path, there are $w_1$ parallel
  edges corresponding to $e_z$ in graph $H'$. Let $\qset_z$ be the
  corresponding set of $w_1$ direct paths connecting the vertices of
  $S_{i_z}$ to the vertices of $S_{i_{z+1}}$ in $G$.  
  Recall that from Theorem~\ref{thm: splitting off}, every edge incident to a vertex of $S_{i_z}\cup S_{i_{z+1}}$ belongs to at most two paths in set $\qset_z$. Therefore, every vertex of $S_{i_z}\cup S_{i_{z+1}}$ serves as an endpoint of at most $2\Delta$ paths. We can construct a bipartite graph, whose vertex set is $(S_{i_z}\cup S_{i_{z+1}})$, and there is an edge $(u,v)$ for $u\in S_{i_z}$, $v\in S_{i_{z+1}}$ for every path in $\qset_z$ connecting $u$ to $v$. Then the resulting bipartite graph has maximum vertex degree at most $2\Delta$, and so from Claim~\ref{claim: large matching}, we can find a matching of cardinality $\ceil{\frac{|\qset_z|}{2\Delta}}$ in this graph. This matching defines a subset $\qset_z'\subseteq \qset_z$ of $\ceil{\frac{w_1}{2\Delta}}$ paths, whose endpoints are all distinct.
  Denote

\[\Gamma^2_z=\set{v\in S_{i_z}\mid v\mbox{ is the first vertex on some path in }\qset'_z},\]

 and similarly 
 
 \[\Gamma^1_{z+1}=\set{v\in S_{i_{z+1}}\mid v\mbox{ is the last vertex on some path in }\qset'_z}.\]
 
 For each $1<z<x$, we have now defined two subsets
 $\Gamma^1_z,\Gamma^2_z\sse S_{i_z}$ of vertices of cardinality $\ceil{\frac{w_1}{2\Delta}}<w_0$
 each. Let $n_z=|\Gamma^1_z\setminus \Gamma^2_z|=|\Gamma^2_z\setminus \Gamma^1_z|$. We now try to send flow in $G[S_{i_z}]$, where every vertex in $\Gamma^1_z\setminus \Gamma^2_z$ sends one flow unit, and every vertex in $\Gamma^2_z\setminus \Gamma^1_z$ receives one flow unit,
 with edge-congestion at most $1/\alpha$. If such a flow does not
 exist, then the minimum edge-cut separating these two vertex subsets
 defines a $(w_0,\alpha)$-violating cut of $S_{i_z}$. We then
 terminate the algorithm and return this cut. We assume therefore that
 such a flow exists, and denote it by $F'_z$.
  
 Concatenating the flows
 $(\qset'_1,F'_1,\qset'_2,F'_2,\ldots,F'_{x-1},\qset'_{x-1})$, we obtain
 the desired flow $F_j$ of value $\ceil{\frac{w_1}{2\Delta}}$. The total congestion caused by
 paths in $\bigcup_{z=1}^{x-1}\qset_z$ is at most $\ell_0$, while each
 flow $F'_z$ causes congestion at most $1/\alpha$ in graph
 $G[S_{i_z}]$. Therefore, the total congestion due to flow $F_j$ is
 bounded by $\frac{1}{\alpha}+\ell_0\leq \frac{2}{\alpha}$.
\end{proof}

 Scaling all flows $F_j$, for $S_j\in \rset'$ down by factor
 $2\ell\Delta/\alpha$, we obtain a new flow $F$, where every set $S_j\in
 \rset'$ sends at least $\floor{\frac{w_1\alpha}{4\ell\Delta^2}}$ flow units to $S^*$, and
 the total vertex-congestion due to $F$ is at most $1$.  The flow-paths of $F$ do not contain the vertices of $\bigcup_{j=1}^{\ell}S_j$ as inner vertices. We now use the following claim.
 
 \begin{claim}
 There is an efficient algorithm to compute a collection $\set{\pset_j}_{j=1}^{\ell}$ of
 path sets, where for each $1\leq j\leq \ell$, set $\pset_j$ contains
 $\floor {\frac{w_1\alpha}{4\ell\Delta^2}}$ paths connecting $S_j$ to $S^*$,
 the paths in $\bigcup_{j=1}^{\ell}\pset_j$ are node-disjoint, and they do
 not contain the vertices of $\bigcup_{j=1}^{\ell}S_j$ as inner
 vertices.
 \end{claim}
 \begin{proof}
 We set up a flow network $\nset$: start with the graph $G$ and bi-direct all its edges, setting the capacity of every vertex to $1$. Delete all edges entering the clusters $S_j$, and all edges with both endpoints in $S_j$, for all $1\leq j\leq \ell$. Delete all edges leaving the cluster $S^*$, and every edge with both endpoints in $S^*$.
 For each $1\leq j\leq \ell$, add a vertex $s_j$ of capacity $\floor{\frac{w_1\alpha}{4\ell\Delta^2}}$, and connect $s_j$ to every vertex of $S_j$ with a directed edge.
 Finally, add a source vertex $s$ of infinite capacity, connecting it to every vertex in $\set{s_1,\ldots,s_{\ell}}$, and add a destination vertex $t$ of infinite capacity, to which every vertex of $S^*$ connects. Notice that flow $F$ defines a feasible $s$-$t$ flow of value $\ell\cdot \floor{\frac{w_1\alpha}{4\ell\Delta^2}}$ in this network. From the integrality of flow, there is an integral flow of this value in $\nset$. It is immediate to verify that this integral flow defines the desired collections $\pset_j$ of paths, for $1\leq j\leq \ell$.
 \end{proof}

 For each $1\leq j\leq \ell$, let $A_j\sse S^*$ be the set of vertices
 that serve as endpoints of paths in $\pset_j$, and let
 $A=\bigcup_{j=1}^{\ell}A_j$. Notice that $|A|=\ell\cdot \ceil{
   \frac{w_1\alpha}{4\ell\Delta^2}}<w_0$. We set up an instance of the
 sparsest cut problem in graph $G[S^*]$, where the vertices of $A$ act
 as terminals, and apply algorithm \algSC to this problem. If the
 algorithm returns a cut whose sparsity is less than $\alpha$, then we
 have found a $(w_0,\alpha)$-violating cut in $S^*\in \rset$. We
 return this cut, and terminate the algorithm. Otherwise, we are
 guaranteed that the set $A$ is $\alphaBW$-well-linked in $G[S^*]$.

 We apply Corollary~\ref{cor: ultimate boosting} to graph $G[S^*]$ and the sets $A_1,\ldots,A_{\ell}$ of vertices, obtaining, for each $1\leq j\leq \ell$, a subset $A^*_j\subseteq A_j$ of $\Omega\left (\frac{\alphaWL |A_j|}{\Delta^5\sqrt{\log k}}\right )=\Omega\left (\frac{\alpha^2 w_1}{\ell \Delta^7 \log k}\right )=w_3$ vertices, such that for all $1\leq i\neq j\leq \ell$, $A^*_j$ and $A^*_i$ are linked in $G[S^*]$. Let $\qset_{i,j}$ be the
 set of $w_3$ node-disjoint paths connecting $A^*_j$ to $A^*_i$ in
 $G[S^*]$.

 For each $1\leq j\leq \ell$, let $\pset'_j\subseteq\pset_j$ be the subset
 of paths whose endpoint belongs to $A^*_j$, and let $E_j\subseteq
 \out(S_j)$ be the set of edges $e$, where $e$ is the first edge on
 some path of $\pset'_j$, so $|E_j|=w_3$. Consider a pair $1\leq
 i<j\leq \ell$ of indices. The desired set of paths connecting $S_i$ to
 $S_j$ is obtained by concatenating the paths in
 $\pset'_i,\qset_{i,j}$ and $\pset'_j$.\end{proof}

To summarize, we have found a collection $\rset'=\set{S_1,\ldots,S_{\ell}}$
of $\ell$ disjoint vertex subsets, and for each set $S_j$, a collection
$E_j\subseteq \out_G(S_j)$ of $w_3$ edges, such that for each $1\leq
j\neq i\leq \ell$, there is a set of $w_3$ node-disjoint paths in $G$, connecting $S_i$ to $S_j$ directly,
such that each path contains an edge of $E_j$ and an edge of $E_i$ as
its first and last edges, respectively.

\subsubsection{Phase 2}

We construct a new graph $\tH$, obtained from $G$ as follows. First,
for each $1\leq j\leq \ell$, we delete all edges in $\out(S_j)\setminus
E_j$ from $G$. Let $B_j\subseteq S_j$ be the subset of vertices containing the
endpoints of the edges in $E_j$ that belong to $S_j$. We delete all
vertices of $S_j\setminus B_j$ and add a new super-node $v_j$ that
connects to every vertex in $B_j$ with an edge. Recall that from the
above discussion, $|B_j|=w_3$, so the degree of every super-node $v_j$
in $\tH$ is $w_3$, and every pair $v_j,v_i$ of super-nodes are
connected by $w_3$ paths, that are completely disjoint, except for
sharing the first and the last vertex. We will think of the super-node
$v_j$ as representing the set $S_j\in \rset'$. In this phase, the
vertices of $\set{v_1,\ldots,v_{\ell}}$ are called terminals, and a path
$P$ connecting a vertex of $S_i$ to a vertex of $S_{i'}$ in $G$ is
called direct if and only if it does not contain the vertices of
$\bigcup_{j=1}^{\ell}S_j$ as inner vertices.

We apply Theorem~\ref{thm: splitting off} to graph $\tH$, with the set
$\set{v_1,\ldots,v_{\ell}}$ of terminals and $\mu=w_3$. Let $\tH'$ denote
the resulting graph, and $\uset$ the resulting partition of the edges
of $\tH$ into groups of size at most $\ell$. Recall that the degree of
every vertex of $\tH'$ is at most $2w_3$, and every pair of vertices
is $(2w_3)$-edge-connected. This can only happen if the degree of
every vertex is exactly $2w_3$. The main difference between graph
$\tH'$ and the graph $H'$ that we computed in Phase 1 is that now the
degree of every terminal, and the edge-connectivity of every pair of terminals, are the same. It
is this property that allows us to build the tree-of-sets system.  To
simplify notation, denote $h=2w_3$.

Suppose we choose, for each group $U\in \uset$, some edge $e_U\in
U$. Then Theorem~\ref{thm: splitting off} guarantees that all paths in
$\set{P_{e_U}\mid U\in \uset}$ are node-disjoint in $\tH$, except for
possibly sharing endpoints, and they do not contain terminals as inner
vertices. For each such edge $e_U=(v_i,v_j)\in E(\tH')$, path
$P_{e_U}$ in $\tH$ naturally defines a direct path $P'_{e_U}$,
connecting a vertex of $S_i$ to a vertex of $S_j$ in $G$. Moreover,
from the definition of graph $\tH$, the paths in $\set{P'_{e_U}\mid
  U\in \uset}$ are completely node-disjoint in $G$. For each edge $e\in E(\tilde H')$, let $P'_e$ denote the path in graph $G$ corresponding to the path $P_e$ in $\tilde H$.

We build an auxiliary undirected graph $\tZ$ on the set
$\set{v_1,\ldots,v_{\ell}}$ of vertices, as follows. For each pair
$v_j,v_{j'}$ of vertices, there is an edge $(v_j,v_{j'})$ in graph
$\tZ$ if and only if there are at least $h/\ell^3$ edges connecting $v_j$ and
$v_{j'}$ in $\tH'$. If edge $e=(v_j,v_{j'})$ is present in graph
$\tZ$, then its capacity $c(e)$ is set to be the number of edges
connecting $v_j$ to $v_{j'}$ in $\tH'$. For each vertex $v_j$, let
$C(v_j)$ denote the total capacity of edges incident on $v_j$ in graph
$\tZ$.  We need the following simple observation.

\begin{observation}\label{observation: graph Z}
\ 
\begin{itemize}

\item For each vertex $v\in V(\tZ)$, $(1-1/\ell^2)h \leq C(v)\leq h$. 
\item For each pair $(u,v)$ of vertices in graph $\tZ$, we can send at
  least $(1-1/\ell)h$ flow units from $u$ to $v$ in $\tZ$ without
  violating the edge capacities.
\end{itemize}
\end{observation}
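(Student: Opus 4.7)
The plan is to prove the two parts of the observation via direct counting, exploiting the fact that $\tilde H'$ is a multigraph on just the $r$ terminals in which every vertex has degree exactly $\ell=2h_3$ and every pair of vertices is $\ell$-edge-connected (these properties follow from Theorem~\ref{thm: splitting off} applied with $\mu=h_3$, together with the fact that $\tilde H$ has minimum degree $h_3$ at every terminal, forcing the degree bound to be tight).

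For the first bullet, I would argue as follows. The upper bound $C(v)\leq \ell$ is immediate: every edge counted in $C(v)$ is in $\tilde H'$, and the total degree of $v$ in $\tilde H'$ is exactly $\ell$. For the lower bound, the only edges of $\tilde H'$ incident on $v$ that are not counted in $C(v)$ are those belonging to a bundle of parallel edges $(v, v')$ in $\tilde H'$ of size strictly less than $\ell/r^3$. There are at most $r-1$ such bundles (one per other terminal), contributing together fewer than $(r-1)\cdot \ell/r^3 \leq \ell/r^2$ edges. Hence $C(v) \geq \ell - \ell/r^2 = (1-1/r^2)\ell$.

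For the second bullet, I would use max-flow/min-cut. Fix any two vertices $u, v \in V(\tilde Z)$ and consider an arbitrary cut $(A,B)$ in $\tilde Z$ with $u\in A$, $v\in B$. Viewing $\tilde Z$ as a weighted subgraph of $\tilde H'$, the capacity of $(A,B)$ in $\tilde Z$ equals the capacity of $(A,B)$ in $\tilde H'$ minus the total number of ``dropped'' edges crossing $(A,B)$, i.e.\ edges belonging to pairs $(v_i,v_j)$ with $v_i\in A$, $v_j\in B$ that had fewer than $\ell/r^3$ parallel edges in $\tilde H'$. The capacity of $(A,B)$ in $\tilde H'$ is at least $\ell$ by the $\ell$-edge-connectivity of the terminals. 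The number of dropped pairs is at most $|A|\cdot |B|\leq r^2$, and each drops fewer than $\ell/r^3$ edges, so the total number of dropped edges is strictly less than $\ell/r$. Therefore the capacity of $(A,B)$ in $\tilde Z$ is greater than $(1-1/r)\ell$. Since this bound holds for every $(u,v)$-cut, the min $(u,v)$-cut in $\tilde Z$ has value at least $(1-1/r)\ell$, and the flow bound follows from the max-flow/min-cut theorem.

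There is no real obstacle: both parts are counting arguments, and the only ``hazard'' is making sure the roles of the two parameters $\ell/r^2$ (total edges dropped at a single vertex) and $\ell/r$ (total edges dropped across a bipartition) are correctly tracked, since the bipartite counting loses a factor of $r$ relative to the single-vertex counting. The proof should be very short.
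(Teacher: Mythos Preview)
Your proof is correct and follows essentially the same approach as the paper's: both bullets are counting arguments bounding the edges dropped when passing from $\tilde H'$ to $\tilde Z$, using that each vertex has degree exactly $\ell$ and each pair is $\ell$-edge-connected in $\tilde H'$. The only cosmetic difference is that the paper phrases the second part as a proof by contradiction rather than a direct min-cut bound, but the underlying inequality (at most $r^2$ bad pairs, each contributing fewer than $\ell/r^3$ dropped edges, hence strictly fewer than $\ell/r$ in total) is identical.
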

\begin{proof}
  In order to prove the first assertion, recall that each vertex in
  graph $\tH'$ has $h$ edges incident to it. So $C(v)\leq h$ for
  all $v\in V(\tZ)$.  Call a pair $(v_j,v_{j'})$ of vertices bad iff
  there are fewer than $h/\ell^3$ edges connecting $v_j$ to $v_{j'}$
  in $\tH'$. Notice that each vertex $v\in V(\tZ)$ may participate in
  at most $\ell$ bad pairs, as $|V(\tZ)|=\ell$. Therefore, $C(v)\geq
  h-\ell h/\ell^3=h(1-1/\ell^2)$ must hold.

  For the second assertion, assume for contradiction that it is not
  true, and let $(u,v)$ be a violating pair of vertices. Then there is
  a cut $(A,B)$ in $\tZ$, with $u\in A$, $v\in B$, and the total
  capacity of edges crossing this cut is at most $(1-1/\ell)h$. Since
  $u$ and $v$ were connected by $h$ edge-disjoint paths in graph
  $\tH'$, this means that there are at least $h/\ell$ edges in graph
  $\tH'$ that connect bad pairs of vertices. But since we can only
  have at most $\ell^2$ bad pairs, and each pair has fewer than
  $h/\ell^3$ edges connecting them, this is impossible.
\end{proof}

The following claim allows us to find a spanning tree of $\tZ$ with
maximum vertex degree at most $3$. It relies on the specific
properties of the graph $\tZ$ outlined in the preceding observation.
This low-degree spanning tree will be used to define the tree-of-sets
system.

\begin{claim}\label{claim: small-degree spanning tree}
  There is an efficient algorithm to find a spanning tree $T^*$ of
  $\tZ$ with maximum vertex degree at most $3$.
\end{claim}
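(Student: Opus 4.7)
The plan is to construct a feasible fractional point of the spanning tree polytope of $\tZ$ whose fractional degree at every vertex is strictly below $2$, and then invoke the iterative rounding theorem of Singh and Lau~\cite{Singh-Lau} to convert it into an integral spanning tree of maximum degree at most $3$.

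Let $W=\sum_{v\in V(\tZ)}C(v)$ and set $M=W/(2(r-1))$. For every edge $e\in E(\tZ)$, define $x_e=c(e)/M$. I would first verify that $x$ lies in the spanning tree polytope, i.e., $x_e\geq 0$, $x(E(\tZ))=r-1$, and $x(E(S))\leq |S|-1$ for every $S\subseteq V(\tZ)$ with $|S|\geq 2$. The equality $x(E(\tZ))=r-1$ follows from $\sum_e c(e)=W/2$ and the definition of $M$. For the subtour constraints, the identity $2\,c(E(S))+c(\delta(S))=\sum_{v\in S}C(v)$ yields
\[
x(E(S)) \;=\; \frac{1}{2M}\!\left(\sum_{v\in S}C(v)-c(\delta(S))\right).
\]
Combining the upper bound $C(v)\leq \ell$ with the cut lower bound $c(\delta(S))\geq (1-1/r)\ell$ (guaranteed by Observation~\ref{observation: graph Z} via max-flow--min-cut), together with the lower bound $W\geq (r^2-1)\ell/r$, a short calculation reduces the desired inequality $x(E(S))\leq |S|-1$ to the trivially true statement $|S|\geq 2$. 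The fractional degree bound is immediate: $x(\delta(v))=C(v)/M\leq 2(r-1)\ell/W\leq 2r/(r+1)<2$.

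Having verified $x$ as a feasible fractional solution to the LP for spanning tree with degree upper bound $B_v=2$ at every vertex, I would apply the Singh--Lau theorem, which in polynomial time produces an integral spanning tree $T^*$ of $\tZ$ with $\deg_{T^*}(v)\leq B_v+1=3$ for every $v$. This is the required spanning tree.

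The only delicate step is the subtour verification: the inequality $c(\delta(S))\geq (1-1/r)\ell$ is used in an essentially tight way, since it is precisely this slack that allows the fractional degrees to sit just below $2$ while still satisfying $x(E(S))\leq |S|-1$ for $|S|=2$. An alternative non-constructive route would be to observe that Observation~\ref{observation: graph Z} makes $\tZ$ sufficiently ``tough'' for the classical theorem that every $1$-tough graph has a spanning tree of maximum degree at most $3$~\cite{graph-toughness} to apply; however the Singh--Lau route has the advantage of being directly algorithmic.
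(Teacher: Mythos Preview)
Your proof is correct and follows essentially the same approach as the paper: construct a fractional spanning tree with fractional degree strictly below $2$ at every vertex, then apply Singh--Lau to obtain an integral spanning tree of maximum degree at most $3$. The only difference is the choice of fractional point: the paper uses the locally normalized weights $x_e=\frac{r-1}{r}\bigl(\frac{c(e)}{C(u)}+\frac{c(e)}{C(v)}\bigr)$, whereas you use the uniform scaling $x_e=c(e)/M$; your choice is arguably cleaner, and your verification of the subtour constraints via the identity $2\,c(E(S))+c(\delta(S))=\sum_{v\in S}C(v)$ is tidy and correct.
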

\begin{proof}
  We use the algorithm of Singh and Lau~\cite{Singh-Lau} for
  constructing bounded-degree spanning trees. Suppose we are given a
  graph $G=(V,E)$, and our goal is to construct a spanning tree $T$ of
  $G$, where the degree of every vertex is bounded by some integer $B$. For each
  subset $S\sse V$ of vertices, let $E(S)$ denote the subset of edges
  with both endpoints in $S$, and $\delta(S)$ the subset of edges with
  exactly one endpoint in $S$. Singh and Lau consider a natural
  LP-relaxation for the problem. We note that their algorithm works
  for a more general problem where edges are associated with costs,
  and the goal is to find a minimum-cost tree that respects the degree
  requirements; since we do not need to minimize the tree cost, we
  only discuss the unweighted version here. For each edge $e\in E$, we
  have a variable $x_e$ indicating whether $e$ is included in the
  solution. We are looking for a feasible solution to the following
  LP.

\begin{eqnarray}
&\sum_{e\in E}x_e=|V|-1&\label{LP: total edge weights}\\
&\sum_{e\in E(S)}x_e\leq |S|-1&\forall S\subsetneq V \label{LP: sum for subsets}\\
&\sum_{e\in \delta(v)}x_e\leq B&\forall v\in V \label{LP: degree constraints}\\
&x_e\geq 0&\forall e\in E
\end{eqnarray}

Singh and Lau~\cite{Singh-Lau} show an efficient algorithm, that,
given a feasible solution to the above LP, produces a spanning tree
$T$, where for each vertex $v\in V$, the degree of $v$ is at most
$B+1$ in $T$. Therefore, in order to prove the claim, it suffices to
show a feasible solution to the LP, where $B=2$. Recall that
$|V(\tZ)|=\ell$. The solution is defined as follows. Let $e=(u,v)$ be any
edge in $E(\tZ)$. We set the LP-value of $e$ to be
$x_e=\frac{\ell-1}{\ell}\cdot \left
  (\frac{c(e)}{C(v)}+\frac{c(e)}{C(u)}\right )$.  We say that
$\frac{\ell-1}{\ell}\cdot \frac{c(e)}{C(v)}$ is the contribution of $v$ to
$x_e$, and $\frac{\ell-1}{\ell}\cdot \frac{c(e)}{C(u)}$ is the contribution
of $u$. We now verify that all constraints of the LP hold.

First, it is easy to see that $\sum_{e\in E}x_e=\ell-1$, as
required. Next, consider some subset $S\subsetneq V$ of vertices. Notice
that it suffces to establish Constraint~(\ref{LP: sum for subsets})
for subsets $S$ with $|S|\geq 2$. From Observation~\ref{observation:
  graph Z}, the total capacity of edges in $E_{\tZ}(S,\nots)$ must be
at least $(1-1/\ell)h$. Since for each $v\in S$, $C(v)\leq h$, the
total contribution of the vertices in $S$ towards the LP-weights of
edges in $E_{\tZ}(S,\nots)$ is at least $\frac{\ell-1}{\ell}\cdot
(1-1/\ell)=(1-1/\ell)^2$. Therefore,

\[\sum_{e\in E(S)}x_e\leq \frac{\ell-1}{\ell}|S|-(1-1/\ell)^2=|S|-|S|/\ell-1-1/\ell^2+2/\ell\leq |S|-1\]

since we assume that $|S|\geq 2$. This establishes
Constraint~(\ref{LP: sum for subsets}).  Finally, we show that for
each $v\in V(\tZ)$, $\sum_{e\in \delta_v}x_e\leq 2$. First, the
contribution of the vertex $v$ to this summation is bounded by
$1$. Next, recall that for each $u\in V(\tZ)$, $C(u)\geq
(1-1/\ell^2)h$, while the total capacity of edges in $\delta(v)$ is at
most $h$. Therefore, the total contribution of other vertices to
this summation is bounded by $\frac{h}{(1-1/\ell^2)h}\cdot
\frac{\ell-1}{\ell}\leq \frac{\ell}{\ell+1}\leq 1$. The algorithm of Singh and Lau
can now be used to obtain a spanning tree $T^*$ for $\tZ$ with maximum
vertex degree at most $3$.
\end{proof}

We are now ready to define the tree-of-sets system
$(\rset',T^*,\bigcup_{e\in E(T^*)}\pset^*(e))$. The tree $T^*$ is the
tree computed by Claim~\ref{claim: small-degree spanning tree}. In
order to define the sets $\pset^*(e)$ of paths, recall that each edge
$e$ of $\tilde Z$ (and hence of $T^*$) corresponds to a set $S_e$ of
at least $2w_3/\ell^3$ edges of $\tH'$. For each group $U\in \uset$, we
randomly choose one edge $e_U\in \uset$, and we let $E^*\subseteq
E(\tH)$ be the set of all selected edges. For each edge $e\in E(T^*)$, let
$S'_e=S_e\cap E^*$. The expected size of $S'_e$ is at least
$\frac{2w_3}{\ell^4}$, and using the standard Chernoff bound, with high
probability, for each edge $e\in E(T^*)$, $|S'_e|\geq
\frac{w_3}{\ell^4}$, since $w_3/\ell^4\geq w\geq 4\log k$. This is since
$\frac{w_3}{\ell^4}=\Omega\left
  (\frac{w_1\alpha^2}{\ell^5\Delta^7\log k}\right
)=\Omega\left(\frac{w_0\alpha^2}{\Delta^8\ell^9\log k}\right )\geq w$
from Equation~(\ref{eq: bound on h}).  The final set $\pset^*_e$ of
paths is $\set{P'_{e'}\mid e'\in S'_e}$. Notice that $|\pset^*_e|\geq
w_3/\ell^4\geq w$. We delete paths from $\pset^*_e$ as necessary, until
$|\pset^*_e|=w$. From the definition of the graph $\tH$, and from
Theorem~\ref{thm: splitting off}, all paths in $\bigcup_{e\in
  E(T^*)}\pset^*_e$ are mutually node-disjoint.

Let $G^*$ be the subgraph of $G$ obtained by taking the union of
$G[S_j]$ for $S_j\in \rset'$, and $\bigcup_{e\in E(T^*)}\pset^*_e$.
We need to verify that each set $S_i$ has the $\alphaWL$-bandwidth
property in $G^*$. Let $\Gamma_i$ be the interface of the set $S_i$ in
$G^*$. We set up a sparsest cut problem instance with the graph
$G[S_i]$ and the set $\Gamma_i$ of terminals, and apply algorithm
\algSC to it. If the outcome is a cut of sparsity less than $\alpha$,
then, since $|\Gamma_i|<w_0$, we obtain an $(w_0,\alpha)$-violating
partition of $S_i$ in graph $G$. We return this partition as the outcome of the
algorithm. If \algSC returns a cut of sparsity at least $\alpha$ for
each set $S_i$, for $1\leq i\leq \ell$, then we are guaranteed that each
such set has the $\alphaWL$-bandwidth property in $G^*$, and we have
therefore constructed a good tree-of-sets system.

\label{----------------------------Proof of corollary w paths--------------------------------}
\section{Extensions}\label{sec: extensions}

The following theorem gives a slightly stronger version of
Theorem~\ref{thm: path-of-sets: main}, that we believe will be useful
in designing approximation algorithms for maximum throughput routing
problems such as All-or-Nothing flow and Disjoint Paths 
in node capacited graphs. For brevity, given a collection $\sset$ of vertex
subsets, we denote $V(\sset)=\bigcup_{S_i\in \sset}S_i$.

\begin{theorem}\label{thm: for routing}
  There is a universal constant $\hat c > 1$ and an efficient
  randomized algorithm, that, given as input (i) a graph $G$ with
  maximum vertex degree $\Delta$; (ii) a subset $\tset^*$ of $k^*$
  vertices of $G$ called terminals that have degree $1$ in $G$, such
  that $\tset^*$ is $\alpha^*$-well-linked in $G$ for some
  $0<\alpha^*<1$, and (iii) parameters $\ell^*,w^*>2$, such that
  $\frac{\alpha^* k^*}{\Delta^{23}\log^{8.5}k^*}>\hat c
  w^*(\ell^*)^{48}$, with high probability computes the following:

\begin{itemize}
\item a subgraph $G^*\subseteq G$;

\item a strong path-of-sets system $(\sset^*,\bigcup_{i=1}^{\ell^*-1}\pset_i,A_1,B_{\ell^*})$ in $G^*$ with $\tset^*\cap V(\sset^*)=\emptyset$, such that every set $S\in \sset^*$  has the  $\alphawl^*=\Omega\left(\frac{(\alpha^*)^2}{\Delta^5(\ell^*)^{16}\log^3k^*}\right)$-bandwidth property in $G^*$; and
\item for each $S\in \sset^*$, a set $\qset(S)$ of paths in $G^*$, connecting every terminal in $\tset^*$ to some vertex of $S$, such that the paths in $\qset(S)$ cause edge-congestion at most $\eta^*=O\left(\frac{\Delta^5(\ell^*)^{16}\log^{3}k^*}{(\alpha^*)^2}\right )$.
\end{itemize}\end{theorem}

We note that in approximation algorithms for routing problems, such as
Edge-Disjoint Paths and Node-Disjoint Paths, the typical setting of
parameters is $\ell^*=\Theta(\log^2k)$ (roughly equal to the number of
rounds in the cut-matching game), $\Delta=\poly\log k^*$, and
$w^*=k^*/\poly\log k^*$, where $k^*$ is the number of terminals.

We briefly explain the relevance of the additional properties
guaranteed by the preceding theorem for routing problems. A
path-of-sets system can be used to embed an expander in $G^*$ and this
in turn can be used as a crossbar routing structure; we refer the reader to
prior work \cite{CKS,Chuzhoy11,ChuzhoyL12} for more details on this
approach. However, a technical issue that arises in using the crossbar
for connecting the given input pairs is the following: we need to
connect the input pairs to the interface of the crossbar. To avoid
additional congestion in the routing, we would like the paths
connecting the terminals to the interface to be disjoint from the
crossbar itself. Theorem~\ref{thm: for routing} helps in addressing
this technical issue. We mention that the following simple
approach does not work to yield the desired properties.  We could
start with a subgraph $G'$ containing a path-of-sets system and then
try to add paths from the terminals $\tset^*$ to each $S \in \sset^*$
by using the well-linkedness properties of the terminals. However,
these paths may add new edges and alter the boundaries of the sets in
$\sset^*$ and hence a set $S \in \sset^*$ which was previously
boundary well-linked in $G'$ may not have the property in the new
subgraph $G''$ obtained from $G'$ by addding the paths from the
terminals to the path-of-sets system.

The remainder of this section is devoted to the proof of
Theorem~\ref{thm: for routing}.  Using Theorem~\ref{thm: grouping}, we
compute a subset $\tset\subseteq \tset^*$ of
$k=\ceil{\frac{\alpha^*k^*}{32\Delta^4\alphasc(k^*)}}$ terminals, such
that $\tset$ is node-well-linked in $G$. From the assumption in
Theorem~\ref{thm: for routing}, $\frac{k}{\Delta^{19}\log^8k}\geq
\frac{k}{\Delta^{19}\log^8k^*}\geq \hat c w^*(\ell^*)^{48}$ for some
large enough constant $\hat c$.  We set $\ell=3(\ell^*)^2+1$ and
$w=\frac{\hat c}{2^{38}\cdot c}\cdot
w^*(\ell^*)^{10}\Delta^{11}\log^4k$, so $w>4\log k$ holds, where $c$
is the constant from Theorem~\ref{thm: meta-tree}.  Clearly:
 
 \[cw\ell^{19}\Delta^8\leq (\frac{\hat c}{2^{38}} w^*(\ell^*)^{10}\Delta^{11}\log^4k)\cdot (2\ell^*)^{38}\Delta^8<\hat c \cdot w^*(\ell^*)^{48}\Delta^{19}\log^4k\leq\frac{k}{\log^4k}.\]
 
 Therefore, $\frac{k}{\log^4k}>cw\ell^{19}\Delta^8$, and the 
 conditions of Theorem~\ref{thm: meta-tree} hold for $G,\tset$ and
 parameters $k,\ell$ and $w$. The main ingredient of the proof of
 Theorem~\ref{thm: for routing} is the following generalization of
 Theorem~\ref{thm: meta-tree} for the construction of a tree-of-sets system.

\begin{theorem}\label{thm: meta-tree and paths}
  There is an efficient randomized algorithm, that, given input as in
  Theorem~\ref{thm: for routing}, the set $\tset\subseteq \tset^*$ of
  terminals, and parameters $\ell,w,\alphawl^*$ and $\eta^*$ as above, with high
  probability computes:
 
 \begin{itemize}
 \item a subgraph $G^*$ of $G$; 
 \item a tree-of-sets system $(\sset,T,\bigcup_{e\in E(T)}\pset_e)$ in
   $G^*$, with parameters $\ceil{\frac{\ell-1}{3}},w$ and $\alphawl=
   \Omega(\frac{1}{\ell^2 \log^{1.5} k})$, such that $\tset^*\cap
   V(\sset)=\emptyset$, and every set $S\in \sset$ has the
   $\alphawl^*$-bandwidth property in $G^*$; and
 \item for each $S\in \sset$, a set $\qset(S)$ of paths in $G^*$,
   connecting every terminal in $\tset^*$ to some vertex of $S$, such
   that the paths in $\qset(S)$ cause edge-congestion at most
   $\eta^*$.
\end{itemize}\end{theorem}

Notice that the definition of the tree-of-sets system only requires
that each set $S_i\in \sset$ has the $\alphawl$-bandwidth property in
the subgraph of $G$ induced by the vertices of the tree-of-sets
system. The preceding theorem requires a slightly stronger property,
that additionally $S_i$ must have the $\alphawl^*$-bandwidth property
in the graph $G^*$, that contains both the tree-of-sets system, and
the set $\qset = \bigcup_{S\in \sset} \qset(S)$ of paths.

We first complete the proof of Theorem~\ref{thm: for routing} assuming
Theorem~\ref{thm: meta-tree and paths}, and then provide a proof of the
latter. This is done exactly as in the proof of Theorem~\ref{thm:
  path-of-sets: main}, by first turning the tree-of-sets system into a
strong one, and then into a path-of-sets system.  Consider the graph
$G^*$, the tree-of-sets system $(\sset,T,\bigcup_{e\in E(T)}\pset_e)$,
and the sets of paths $\set{\qset(S)}_{S\in \sset}$ returned by
Theorem~\ref{thm: meta-tree and paths}.  As before, we use
Lemma~\ref{lem:strong-tree-of-set-system} to convert
$(\sset,T,\bigcup_{e\in E(T)}\pset_e)$ into a strong tree-of-sets
system $(\sset,T,\bigcup_{e\in E(T)}\pset^*_e)$ with parameters $\ell$
and $\tilde{w} = \Omega(\frac{\alphabw^2}{\Delta^{10} (\alphasc(w))^2}
\cdot w)$ using Lemma~\ref{lem:strong-tree-of-set-system}. If $\hat c$ is chosen to be large enough, $\tilde w>
16w^*(\ell^*)^2+1$ must hold. We then apply
Theorem~\ref{thm:tree-of-set-to-path-of-set} to obtain a path-of-sets
system $(\sset^*,\bigcup_{i=1}^{\ell^*-1}\pset_i, A_1, B_{\ell^*})$
with width $w^*$ and length
$\ell^*$. Theorem~\ref{thm:tree-of-set-to-path-of-set} guarantees that
$\sset^*\subseteq \sset$, and hence every set $S\in \sset^*$ still has
the $\alphawl^*$-bandwidth property in $G^*$, and we can use the set
$\qset(S)$ of paths computed for it by
Theorem~\ref{thm: meta-tree and paths}. It remains to prove Theorem~\ref{thm: meta-tree and paths}.

\begin{proofof}{Theorem~\ref{thm: meta-tree and paths}}
%
%
The proof closely follows the proof of Theorem~\ref{thm: meta-tree}, using the parameters $w,\ell,\alphawl$, together with the set $\tset$ of terminals. 
As before, if $(\sset,T,\bigcup_{e\in E(T)}\pset_e)$ is a tree-of-sets system in $G$, with parameters
$w,\ell,\alphawl$, and for each $S_i\in \sset$, $S_i\cap \tset=\emptyset$, then we say that it is a \emph{good tree-of-sets system} (but we allow the sets in $\sset$ to contain terminals of $\tset^*\setminus\tset$).
We define the potential function, acceptable clustering, and good clustering exactly as before, using the parameters $w,\ell,\alphawl$, and we set parameters $\ell_0,w_0,\alpha$ exactly as before, so $\ell_0=\ell^2$ and $w_0=\frac{k}{192\ell_0^3\log k}$. Notice that under this definition of good clustering, the terminals of $\tset^*\setminus\tset$ are treated as regular vertices, and they do not necessarily reside in separate clusters. The algorithm again consists of a number of phases, where the input to every phase is a good clustering $\cset$ of $V(G)$, and the output is either another good clustering $\cset'$ with $\phi(\cset')\leq \phi(\cset)-1$, or a valid output for Theorem~\ref{thm: meta-tree and paths}, that is, a subgraph $G^*$ of $G$, a tree-of-sets system  $(\sset,T,\bigcup_{e\in E(T)}\pset_e)$ in $G^*$, and the sets ${\qset(S)}_{S\in \sset}$ of paths as required. The initial clustering is defined exactly as before: $\set{\set{v}\mid v\in V(G)}$.

We now proceed to describe each phase. Suppose the input to the current phase is a good clustering $\cset$, and let $G'$ be the corresponding legal contracted graph. We find the partition $\set{X_1,\ldots,X_{\ell_0}}$ of $V(G')\setminus \tset$, and compute, for each $1\leq j\leq \ell_0$, an acceptable clustering $\cset_j$ exactly as before. Our only departure from the proof of Theorem~\ref{thm: meta-tree} is that we replace Theorem~\ref{thm: iteration} with the following theorem.

 \begin{theorem}\label{thm: iteration-generalized}
 There is an efficient randomized algorithm, that, given a collection $\set{S_1,\ldots,S_{\ell_0}}$ of disjoint vertex subsets of $G$, where for all $1\leq j\leq \ell_0$, $S_j\cap \tset=\emptyset$,  with high probability computes one of the following:
 
\begin{itemize}
\item either a $(w_0,\alpha)$-violating partition $(X,Y)$ of $S_j$, for some $1\leq j\leq \ell_0$; or
 
\item a partition $(A,B)$ of $V(G)$ with $S_j\sse A$, $\tset\sse B$ and $|E_G(A,B)|<w_0/2$, for some $1\leq j\leq \ell_0$; or
 
\item a valid output for Theorem~\ref{thm: meta-tree and paths}, that is:
 
\begin{itemize}
\item a subgraph $G^*$ of $G$; 
\item a tree-of-sets system $(\sset,T,\bigcup_{e\in E(T)}\pset_e)$ in $G^*$, with parameters $\ceil{\frac{\ell-1}{3}},w$ and $\alphawl$, such that $\tset^*\cap V(\sset)=\emptyset$, and every set $S\in \sset$  has the  $\alphawl^*$-bandwidth property in $G^*$; and
\item for each $S\in \sset$, a set $\qset(S)$ of paths in $G^*$, connecting every terminal in $\tset^*$ to some vertex of $S$, such that the paths in $\qset(S)$ cause edge-congestion at most $\eta^*$.
\end{itemize}
 \end{itemize}
 \end{theorem}

Just as in the proof of Theorem~\ref{thm: meta-tree}, the proof of Theorem~\ref{thm: meta-tree and paths} follows from the proof of Theorem~\ref{thm: iteration-generalized}: We start with the initial collection $\cset_1,\ldots,\cset_{\ell_0}$ of acceptable clusterings, where for each $1\leq j\leq \ell_0$, $\phi(\cset_j)\leq \phi(\cset)-1$. If any of these clusterings $\cset_j$ is a good clustering, then we terminate the phase and return this clustering. Otherwise, each clustering $\cset_j$ must contain a large cluster $S_j\in \cset_j$. We then iteratively apply Theorem~\ref{thm: iteration-generalized} to clusters $\set{S_1,\ldots,S_{\ell_0}}$. If the outcome is a valid output for Theorem~\ref{thm: meta-tree and paths}, then we terminate the algorithm and return this output. Otherwise, we obtain either a $(w_0,\alpha)$-violating partition of some cluster $S_j$, or a partition $(A,B)$ of $V(G)$ with $S_j\sse A$, $\tset\sse B$ and $|E_G(A,B)|<w_0/2$, for some $1\leq j\leq \ell_0$. We then apply the appropriate action: $\partition(S_j,X,Y)$, or $\separate(S_j,A)$ to the clustering $\cset_j$, and obtain an acceptable clustering $\cset'_j$, with $\phi(\cset'_j)\leq \phi(\cset_j)-1/n$. If $\cset'_j$ is a good clustering, then we terminate the phase and return $\cset'_j$. Otherwise, we select an arbitrary large cluster $S'_j$ in $\cset'_j$, replace $S_j$ with $S'_j$ and continue to the next iteration. As before, we are guaranteed that after polynomially-many iterations, the algorithm will terminate with the desired output.

From now on we focus on proving Theorem~\ref{thm: iteration-generalized}.
Given the input collection $\set{S_1,\ldots,S_{\ell_0}}$ of vertex subsets, we run the algorithm from Theorem~\ref{thm: iteration} on it. If the outcome is  
a $(w_0,\alpha)$-violating partition $(X,Y)$ of $S_j$, for some $1\leq j\leq \ell_0$, or  a partition $(A,B)$ of $V(G)$ with $S_j\sse A$, $\tset\sse B$ and
$|E_G(A,B)|<w_0/2$, for some $1\leq j\leq \ell_0$, then we terminate the algorithm and return this partition.
   
Therefore, we can assume from now on that the algorithm from
Theorem~\ref{thm: iteration} has computed a good tree-of-sets system
$(\sset,T,\bigcup_{e\in E(T)}\pset_e)$ in $G$, where
$\sset=\set{S_1,\ldots,S_{\ell_0}}$. Let $U=V(\sset)$. Recall that the
algorithm also ensures that each set $S_j$ can send $w_0/2$ flow units
to the terminals of $\tset$ with no edge-congestion, since otherwise
we could find a cut separating $S_i$ from $\tset$ and containing fewer
than $w_0/2$ edges. While we are guaranteed that $\tset\cap
U=\emptyset$, it is possible that some terminals of
$\tset^*\setminus\tset$ belong to $U$ --- we take care of this issue
later. The rest of the proof consists of three steps. In the first
step, we construct the sets $\qset(S)$ of paths for all $S\in \sset$,
slightly alter the tree $T$ by discarding parts of it, and define the
graph $G^*$. In the second step, we ensure that every set $S\in \sset$
has the $\gamma$-bandwidth property in $G^*$, for a sufficiently large
value $\gamma$. In the final step, we remove the vertices of
$\tset^*\setminus \tset$ from the clusters $S$, and ensure that the
resulting clusters have the $\alphawl^*$-bandwidth property in
$G^*$. If either of these steps fail, then we return a
$(w_0,\alpha)$-violating partition of some cluster $S\in \sset$.

\paragraph{Step 1: finding the sets $\qset(S)$ of paths.}
The following lemma allows us to compute the sets $\qset(S)$ of paths.

\begin{lemma}\label{lem: build paths}
There is an efficient algorithm, that either computes a $(w_0,\alpha)$-violating partition of some set $S'\in \sset$, or computes, for every set $S\in \sset$ a collection $\qset(S)$ of paths, such that:

\begin{itemize}
\item paths in $\qset(S)$ connect every terminal in $\tset^*$ to some vertex of $S$, and they are internally disjoint from $\tset^*\cup S$;
\item paths in $\qset(S)$ cause edge-congestion at most $\eta^*$ in $G$; and
\item for each $S'\in \sset$, for every path $P\in \qset(S)$, $P\cap G[S']$ has at most three connected components.
\end{itemize}
\end{lemma}

\begin{proof}
  We fix some set $S\in \sset$. Recall that there is a flow $F$ of
  value $w_0/2$ from the terminals in $\tset$ to the vertices of $S$
  with edge-congestion at most $1$, where each terminal sends at most
  one unit of flow. This implies, by converting the fractional flow
  into an integral flow, that there is a subset $\tset_0\subseteq
  \tset$ of at least $w_0/2$ terminals, and a collection $\qset_0(S)$
  of edge-disjoint paths, connecting $\tset_0$ to $S$. We partition
  the terminals of $\tset^*\setminus\tset_0$ into $r\leq 4k^*/w_0$
  subsets $\tset_i$, of cardinality at most $w_0/2$ each. From the
  $\alpha^*$-well-linkedness of the terminals, for each such set
  $\tset_i$, there is a collection $\qset_i(S)$ of paths, connecting
  $\tset_i$ to $\tset_0$ with edge-congestion at most $1/\alpha^*$.

  For $i>0$, we view the paths in $\qset_i(S)$ as directed from the
  vertices of $\tset_i$ to the vertices of $\tset_0$, and we view the
  paths of $\qset_0(S)$ as directed from the vertices of $\tset_0$ to
  the vertices of $S$.  Fix some $0\leq i\leq r$. We now re-route the
  paths in $\qset_i(S)$, to ensure that for each cluster $S'\in
  \sset$, the intersection of each such path with $G[S']$ has at most
  two connected components.

  Consider some set $S'\in \sset$. Let $\qset_i(S,S')\subseteq
  \qset_i(S)$ be the following subset of paths: $P \in \qset_i(S,S')$,
  if $P \in \qset_i(S)$ and $P\cap G[S']$ has more than two connected
  components. For $P \in \qset_i(S,S')$, let $\Sigma(P)$ be the set of
  these connected components. We can order these components based on
  the orientation of $P$.  Note that
  the first component of $\Sigma(P)$ may contain a terminal of
  $\tset^*\setminus \tset$, if it belongs to $S'$. In such a case we
  discard the corresponding component from $\Sigma(P)$. We denote by
  $s_P,t_P$ the first and the last vertices of $P$, respectively, that
  belong to any remaining component of $\Sigma(P)$; it can be seen
  that $s_P,t_P\in \Gamma_G(S')$. We will refer to $s_P$ and $t_P$ as
  the source and the destination vertex, respectively, of the pair
  $(s_P,t_P)$. Our goal is to find, for each $P \in \qset(S,S')$, an
  alternate path from $s_p$ to $t_p$ that is completely contained
  inside $G[S']$; we do this by exploiting the boundary well-linkeness
  properties of $S'$.  Let $\mset_i(S')=\set{(s_P,t_P)\mid P\in
    \qset_i(S,S')}$; in fact this is a multi-set since $(s_P,t_P)$ can
  be the source and destination vertices for multiple paths $P$. Observe
  that $|\mset_i(S')|\leq |\qset_i(S)|\leq w_0/2$, and a vertex of
  $\Gamma_G(S')$ may belong to at most $\Delta/\alpha^*$ pairs in
  $\mset_i(S')$ (since the paths in $\qset_i(S)$ cause
  vertex-congestion at most $\Delta/\alpha^*$). Let $\Gamma'\subseteq
  \Gamma_G(S')$ contain all vertices that participate in the pairs in
  $\mset_i(S')$, hence $|\Gamma'|\leq w_0$.

  We use algorithm \algsc in order to approximately compute the
  sparsest cut $(A,B)$ of $G[S']$ with respect to the set $\Gamma'$ of
  terminals. If the sparsity of the cut is at less than $\alpha$, then
  $(A,B)$ is a $(w_0,\alpha)$-violating partition of $S'$. We then
  return this partition and terminate the algorithm. Otherwise, we are
  guaranteed that the vertices of $\Gamma'$ are $\alphawl$-well-linked
  in $G[S']$. Let $X=\set{s_P\mid P\in \qset_i(S,S')}$ and
  $Y=\set{t_P\mid P\in \qset_i(S,S')}$, where each set is a multi-set,
  that is, a vertex $v$ that serves as a source vertex in $n_v$ pairs
  of $\mset_i(S')$ appears $n_v$ times in $X$, and the same holds for
  vertices of $Y$.

\begin{observation}
  There is a collection $\rset$ of paths in $G[S']$, connecting
  vertices $X$ to vertices $Y$ with edge-congestion at most
  $O\left(\frac{\Delta}{\alphawl\cdot \alpha^*}\right )$, such that if
  $v$ appears $n_v$ times in $X$ then exactly $n_v$ paths of $\rset$
  originate at $v$, and if $v$ appears $n'_v$ times in $Y$, then
  exactly $n'_v$ paths of $\rset$ terminate at $v$.
\end{observation}
\begin{proof}
  Since every vertex of $X\cup Y$ may participate in at most
  $\Delta/\alpha^*$ pairs in $\mset_i(S')$, we can partition the set
  $\mset_i(S')$ into at most $z=\ceil{2\Delta/\alpha^*}$ subsets
  $\nset_1,\ldots,\nset_z$, such that for all $1\leq j\leq z$, every
  vertex of $X\cup Y$ participates in at most one pair in
  $\nset_j$. For each $1\leq j\leq z$, we then denote by $X_j$ and
  $Y_j$ the sets of all source and all destination vertices,
  respectively, of the pairs in $\nset_j$. Since $S'$ has the
  $\alphawl$-bandwidth property, we are guaranteed that $X_j\cup Y_j$
  is $\alphawl$-well-linked in $G[S']$. Therefore, there is a set
  $\rset_j$ of paths in $G[S']$, connecting every vertex of $X_j$ to a
  distinct vertex of $Y_j$, with edge-congestion at most
  $1/\alphawl$. We then set $\rset=\bigcup_j\rset_j$.
\end{proof}

For every path $P\in \qset_i(S,S')$, we discard the segment of the
path between $s_P$ and $t_P$, obtaining two sub-paths $P_1,P_2$ of
$P$. We then use the paths in $\rset$ in order to glue all path
segments in $\set{P_1\mid P\in \qset_i(S,S')}$ and $\set{P_2\mid P\in
  \qset_i(S,S')}$, obtaining a new collection of paths, connecting
each vertex of $\tset_i$ to a distinct vertex of $\tset_0$ (if $i=0$,
then the paths connect every vertex of $\tset_0$ to some vertex of
$S$). These new paths then replace the paths of $\qset_i(S,S')$ in
$\qset_i(S)$. Note that after rerouting, a terminal in $\tset_i$ (if $i > 0$)
may not connect to the same terminal in $\tset_0$ as it did previously.

Once we process every cluster $S'\in \sset$ in this fashion, the final
set $\qset_i(S)$ of paths causes edge-congestion at most
$O\left(\frac{\Delta}{\alphawl\cdot \alpha^*}\right )$, and has the
property that for every $S'\in \sset$ and every path $P\in
\qset_i(S)$, $P\cap G[S']$ contains at most two connected components
(if $P\in \qset_0(S')$, then $P\cap G[S']$ may contain at most one
connected component, as $P$ originates from a vertex of $\tset$, that
cannot lie in $S'$). In our final step, we take the union of all paths
in $\bigcup_{i=1}^r\qset_i(S)$, and concatenate them with $r$ copies
of the paths in $\qset_0(S)$. This final set of paths is denoted by
$\qset(S)$. It is immediate to verify that the paths in $\qset(S)$
connect every vertex of $\tset^*$ to a vertex of $S$, and for all
$S'\in \sset$ and $P\in \qset(S)$, $P\cap G[S']$ has at most three
connected components. The congestion caused by the paths in $\qset(S)$
is bounded by:

\[\begin{split}2r\cdot O\left(\frac{\Delta}{\alphawl\cdot \alpha^*}\right )&=O\left(\frac{k^*\cdot \Delta}{w_0\cdot \alphawl\cdot \alpha^*}\right )\\
&=O\left(\frac{k^*\Delta\ell_0^3\log k }{k \cdot \alpha^*}\cdot \ell_0\log k\alphasc(k)\right )\\
&=O\left(\frac{k^*\Delta\ell^8\log^{2.5} k}{\alpha^*}\cdot\frac{\Delta^4\alphasc(k^*)}{\alpha^*k^*}\right )\\
&=O\left (\frac{\Delta^5\ell^8\log^{3} k^*}{(\alpha^*)^2}\right )\\
&=O\left(\frac{\Delta^5(\ell^*)^{16}\log^{3}k^*}{(\alpha^*)^2}\right )=\eta^*.
\end{split}\]
\end{proof}

Let $\sset=\set{S_1,\ldots,S_{\ell}}$, and let
$V(T)=\set{v_1,\ldots,v_{\ell}}$, where $v_i$ is the vertex
corresponding to the clusters $S_i$. For a cluster $S_i$, let
$\tset(S_i)=\tset^*\cap S_i$, and let $i^* = \argmax_i |\tset(S_i)|$
be the index of the cluster containing the largest number of
terminals. Let $T'$ be the largest connected component of $T\setminus
\set{v_{i^*}}$. Since $T$ has maximum degree $3$, $|V(T')|\geq
\ceil{\frac{\ell-1}{3}}$. If $|V(T')|>\ceil{\frac{\ell-1}{3}}$, 
we discard leaves of $T'$ until the equality holds. We discard from
$\sset$ all clusters except those corresponding to the vertices of
$T'$, obtaining a new tree-of-sets system $(T',\sset,\bigcup_{e\in
  E(T')}\pset(e))$. For simplicity, we will denote $T'$ by $T$ from now
on. The new tree-of-sets system has the property that for each $S_i\in
\sset$, $|\tset(S_i)|\leq k^*/2$. Since the terminals of $\tset^*$ are
$\alpha^*$-well-linked, there is a set $\rset_i$ of paths in $G$,
connecting every terminal in $\tset(S_i)$ to some terminal of
$\tset^*\setminus\tset(S_i)$ with edge-congestion at most
$1/\alpha^*$. By appropriately truncating each path $P$ in $\rset_i$,
we can ensure that it terminates at a vertex $u_P\in \Gamma_G(S_i)$,
and that $P\subseteq G[S_i]$. We denote by $R_i\subseteq
\Gamma_G(S_i)$ the set of endpoints of the resulting paths in
$\rset_i$. Then $|R_i|\leq k^*/2$, and every vertex in $R_i$ serves as
an endpoint of at most $\Delta/\alpha^*$ paths in $\rset_i$.

We are now ready to define the graph $G^*$. This graph is the union of
all subgraphs $G[S_i]$ for $S_i\in \sset$ and paths
$\left(\bigcup_{e\in E(T)}\pset(e)\right )\cup \left(\bigcup_{S_i\in
    \sset}\qset(S_i)\right )$.

\paragraph{Step 2: Ensuring Bandwidth Property of Clusters}

Consider some cluster $S_i\in \sset$, and recall that we have already
defined a subset $R_i\subseteq \Gamma_G(S_i)$ of its vertices. We let
$R'_i=\Gamma_{G^*}(S_i)$. Recall that $|R_i|\leq k^*/2$, and set
$R'_i$ contains, for each path $Q\in \bigcup_{S\in \sset}\qset(S)$,
at most six vertices of $Q$ (as $Q\cap G[S_i]$ contains at most three
connected components). Set $R'_i$ also contains at most $3w$ vertices for
the paths from $\bigcup_{e \in E(T)} \pset_e$ that terminate in $S_i$.
Let $\hat R_i=R_i\cup R'_i$. Then:

\[|\hat R_i|\leq \frac{k^*}{2}+6\ceil{\frac{\ell-1}{3}}\cdot \frac{w_0}{2} + 3w.\]

Since $k^*=\Theta\left(\frac{\Delta^4k\alphasc(k^*)}{\alpha^*}\right
)=\Theta\left(\frac{w_0\Delta^4\ell^6\alphasc(k^*)\log
    k}{\alpha^*}\right)=\Theta\left(\frac{w_0\Delta^4\ell^6\log^{1.5}k^*}{\alpha^*}\right
)$, and $w \le k^*$, we get that $|\hat R_i|=w_0\cdot
O\left(\frac{\Delta^4\ell^6\log^{1.5}k^*}{\alpha^*}\right )$.

We let $\rho=\Theta\left(\frac{\Delta^4\ell^6\log^{1.5}k^*}{\alpha^*}\right )$, so that $|\hat R_i|\leq w_0\rho$, and we denote $\gamma=\frac{\alpha}{\rho\alphasc(k^*)}$. We use the following claim.

 \begin{claim}\label{claim: bw prop of sets}
 There is an efficient algorithm, that, given a cluster $S_i\in \sset$, either certifies that $\hat R_i$ is $\gamma$-well-linked in $G[S_i]$, or returns a $(w_0,\alpha)$-violating partition of $S_i$.
 \end{claim}
 
 \begin{proof}
 We use algorithm \algsc in order to approximately compute the sparsest cut $(A,B)$ of $G[S_i]$ with respect to the set $\hat R_i$ of terminals. If the sparsity of the cut is at least $\gamma\cdot\alphasc(k^*)$, then we are guaranteed that $S$ has the $\gamma$-bandwidth property in $G^*$. Assume now that the sparsity of the cut is less than $\gamma\cdot \alphasc(k^*)$. We claim that in this case, cut $(A,B)$ is a $(w_0,\alpha)$-violating partition of $S$. 
 
Indeed, 

\[\begin{split}
|E(A,B)|&<\gamma\cdot \alphasc(k^*)\cdot \min\set {|A\cap \hat R_i|, |B\cap \hat R_i|}\\
&=\frac{\alpha}{\rho}\cdot  \min\set {|A\cap \hat R_i|, |B\cap \hat R_i|}\\
&\leq \frac{\alpha |\hat R_i|}{2\rho}\\
&\leq \frac{\alpha w_0}{2}.\end{split}\]

In particular, this shows that $|E(A,B)|<\alpha\cdot \min{\set{|A\cap
    \hat R_i|, |B\cap \hat R_i|,w_0/2}}$, so this is indeed a
$(w_0,\alpha)$-violating partition of $S_i$.\end{proof}

If for any set $S_i\in \sset$, Claim~\ref{claim: bw prop of sets}
returns a $(w_0,\alpha)$-violating partition, then we terminate the
algorithm and return this partition. Therefore, we assume from now on
that for each set $S_i$, the vertex set $\hat R_i$ is
$\gamma$-well-linked in $G[S_i]$

So far we have obtained a graph $G^*\subseteq G$, a tree-of sets
system $(T,\sset,\bigcup_{e\in E(T)}\pset(e))$, and the sets
$\set{\qset(S)}_{S\in \sset}$ of paths as required, except that it is
still possible that the terminals of $\tset^*\setminus \tset$ belong
to the vertex sets $S\in \sset$.  We rectify this in our final step.

\paragraph{Step 3: Removing the Terminals from the Clusters.}


In this step, we define a new tree-of-sets system, by replacing every
cluster $S\in \sset$ with cluster $S'=S\setminus \tset^*$. Let
$\sset'$ denote the resulting set of clusters. Recall that the
terminals in $\tset^*$ all have degree $1$ in $G$, and hence they cannot
participate in the paths $\bigcup_{e\in E(T)}\pset(e)$. Therefore,
$(\sset',T,\bigcup_{e\in E(T)}\pset(e))$ remains a valid tree-of-sets
system. Consider a cluster $S'_i\in \sset'$, and let
$\Gamma'_i\subseteq S'_i$ be the set of vertices serving as endpoints
of the paths in $\pset(e)$ for all edges $e\in E(T)$ incident to the
vertex $v_i\in V(T)$ that corresponds to the set $S_i$. Since the
vertices of $\tset^*\cap S_i$ all have degree $1$, their removal from
$S_i$ does not affect the well-linkedness of the set $\Gamma'_i$ of
vertices, and hence $\Gamma'_i$ remains $\alphabw$-well-linked in
$G[S_i']$. Similarly, the set $\hat R_i$ of vertices remains
$\gamma$-well-linked in $G[S'_i]$.

If some vertex $t\in \tset^*$ originally belonged to some cluster
$S_i$, then without loss of generality, its corresponding path in
$\qset(S_i)$ contained a single vertex --- the vertex $t$. We now
replace this path with a path containing a single edge, connecting $t$
to its unique neighbor $u_t$, that must belong to $S'_i$. Notice that
vertex $u_t$ now belongs to the boundary of $S'_i$ in $G^*$, even
though it may not belong to the boundary of $S_i$. The resulting set
$\qset(S_i)$ of paths still connects all vertices of $\tset^*$ to the
vertices of $S'_i$ with edge-congestion at most $\eta^*$, but now we
need to prove that each resulting cluster $S'_i\in \sset'$ has the
$\alphawl^*$-bandwidth property in $G^*$. The following claim will
finish the proof of Theorem~\ref{thm: iteration-generalized}.

\begin{claim} Each set $S'_i\in \sset'$ has the $\alphawl^*$-bandwidth
  property in $G^*$.
\end{claim}
\begin{proof}

  Let $\Gamma'=\Gamma_{G^*}(S'_i)\cup R_i$. It suffices to prove that
  $\Gamma'$ is $\alphawl^*$-well-linked in $G[S_i]$. Consider a
  partition $(A,B)$ of $S'_i$. Let $Z_A=\Gamma'\cap A$ and
  $Z_B=\Gamma'\cap
  B$.

  We partition the vertices of $Z_A$ into two subsets: $Z'_A$ contains
  all vertices that belonged to $\hat R_i$, and $Z''_A$ contains all
  remaining vertices, so each vertex in $Z''_A$ is a neighbor of some
  terminal in $\tset^*\cap S_i$. We define a partition of $Z_B$ into
  $Z'_B$ and $Z''_B$ similarly. We now consider two cases.

  Assume first that both $|Z'_A|\geq \frac{\alpha^* |Z''_A|}{2\Delta}$
  and $|Z'_B|\geq \frac{\alpha^* |Z''_B|}{2\Delta}$. In this case,
  since $\hat R_i$ is $\gamma$-well-linked in $G[S_i]$ (from
  Claim~\ref{claim: bw prop of sets}), we get that:

\[|E(A,B)|\geq \gamma\min\set{|Z'_A|,|Z'_B|}\geq \frac{\alpha^* \gamma}{4\Delta}\cdot \min\set{|Z_A|,|Z_B|}.\]

Recall that $\gamma=\frac{\alpha}{\rho\alphasc(k^*)}=\Theta\left(\frac{\alpha\alpha^*}{\Delta^4\ell^6\log^2k^*}\right )$, while $\alpha=\Omega\left(\frac{1}{\ell^2\log k}\right )=\Omega\left(\frac{1}{\ell^2\log k^*}\right )$, and hence:

\[\begin{split}
|E(A,B)|&\geq\Omega\left ( \frac{\alpha (\alpha^*)^2}{\Delta^5\ell^6\log^2k^*}\right )\cdot \min\set{|Z_A|,|Z_B|}
\\ &\geq \Omega\left(\frac{(\alpha^*)^2}{\Delta^5\ell^{8}\log^3k^*}\right)\cdot\min\set{|Z_A|,|Z_B|}\\
&=\Omega\left(\frac{(\alpha^*)^2}{\Delta^5(\ell^*)^{16}\log^3k^*}\right)\cdot\min\set{|Z_A|,|Z_B|}\\
&\geq \alphawl^*\cdot \min\set{|Z_A|,|Z_B|}.\end{split}\]

Assume now that $|Z'_A|<\frac{\alpha^* |Z''_A|}{2\Delta}$. Let
$\tset''\subseteq \tset^*$ be the set of terminals $t$, such that
$t\in S_i$, and the unique neighbor of $t$ belongs to $Z''_A$. Then
every vertex in $Z''_A$ has a neighbor in $\tset''$, and
$|\tset''|\geq |Z''_A|\geq 2\Delta|Z'_A|/\alpha^*$.  Recall that we
have defined a set $\rset_i$ of paths in $G[S_i]$, connecting
$\tset^*\cap S_i$ to $R_i\subseteq \hat R_i$, with edge-congestion at
most $\alpha^*/\Delta$. Let $\rset'\subseteq\rset_i$ be the set of
paths originating at the vertices of $\tset''$. Then at most half the
paths in $\rset'$ may terminate at the vertices of $Z'_A$, and each
one of the remaining paths must contain an edge of $E(A,B)$. Since the
paths cause edge-congestion at most $\alpha^*$, we conclude that:

\[E(A,B)\geq \frac{\alpha^*\cdot |Z''_A|}{2} \geq \frac{\alpha^*}{4}|Z''(A)|+2\Delta |Z'_A|\geq \frac{\alpha^*}{4}|Z(A)|>\alphawl^*|Z(A)|.\]

The case where $|Z'_B|<\frac{\alpha^*|Z''_B|}{2\Delta}$ is analyzed similarly.
\end{proof}
\end{proofof}

\paragraph{Acknowledgement.} We thank Tasos Sidiropoulos 
for suggesting the problem of improving the bounds in the
grid minor theorem, and in particular for asking whether 
the ideas in routing algorithms are useful for this purpose.
We also thank the two reviewers for detailed comments and
suggestions.

\bibliographystyle{alpha}
\bibliography{improved-GMT}

\newpage

\appendix
\label{---------------------------------------Appendix-------------------------------------}
\label{--------------------------------------------Proofs from Prelims---------------------------------------------------}
\section{Proofs Omitted from Section~\ref{sec: prelims}}

\label{---------------------------------------Grouping thm proof-------------------------------------}

\subsection{Proof of Theorem~\ref{thm: grouping}}
We start with a non-constructive proof, since it is much simpler and
gives better parameters. This proof can be turned into an algorithm
whose running time is $\poly(n)\cdot 2^{\kappa}$. We then show a
constructive proof with running time $\poly(n,\kappa)$.

\subsubsection{A non-constructive proof}
A separation in graph $G$ is two subgraphs $Y,Z$ of $G$, such that
every edge of $G$ belongs to exactly one of $Y,Z$, and $G=Y\cup Z$. The order of the
separation is $|V(Y)\cap V(Z)|$. We say that a separation $(Y,Z)$ is
\emph{balanced with respect to $\tset$}, if and only if $|V(Y)\cap \tset|,|V(Z)\cap \tset|\geq
|\tset|/4$. Let $(Y,Z)$ be a balanced separation of $G$ with respect to $\tset$ of minimum
order, and let $X=V(Y)\cap V(Z)$. Assume without loss of generality
that $|V(Y)\cap \tset|\geq |V(Z)\cap \tset|$, so $|V(Y)\cap \tset|\geq
|\tset|/2$. We claim that $X$ is node-well-linked in graph $Y$.

\begin{claim}
Set $X$ of vertices is node-well-linked in graph $Y$.
\end{claim}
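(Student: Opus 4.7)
The plan is to argue by contradiction against the minimality of the order of $(Y,Z)$. Suppose $X$ is not node-well-linked in $Y$; then by Menger's theorem there exist disjoint subsets $A,B\subseteq X$ with $|A|=|B|=k'$ and a vertex cut $S\subseteq V(Y)$ of size $|S|<k'$ separating $A$ from $B$ in $Y$, and we may assume $S$ is disjoint from $A\cup B$. Partition $V(Y)\setminus S$ into $V_A$, $V_B$, and $V_0$, where $V_A$ (resp.\ $V_B$) is the union of the components of $Y\setminus S$ that meet $A$ (resp.\ $B$), and $V_0$ collects the remaining components; since $S$ separates $A$ from $B$, the sets $V_A,V_B$ are disjoint. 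Writing $X_A=X\cap V_A$, $X_B=X\cap V_B$, $X_S=X\cap S$, $X_0=X\cap V_0$, we have $A\subseteq X_A$ and $B\subseteq X_B$, so in particular $|X_A|,|X_B|\geq k'>|S|$.

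I would then construct two new separations $(Y_1,Z_1)$ and $(Y_2,Z_2)$ of $G$ as follows: for $i\in\{1,2\}$ (with $V_1=V_A$, $V_2=V_B$), put into $Y_i$ exactly those edges of $Y$ whose endpoints both lie in $V_i\cup S$, and put every other edge of $G$ (including all edges of $Z$) into $Z_i$. A vertex of $V_i\setminus X$ has all its $Y$-edges inside $V_i\cup S$ and no $Z$-edge at all, so $V(Y_i)\cap V(Z_i)\subseteq S\cup X_i$ and the order of $(Y_i,Z_i)$ is at most $|S|+|X_i|$. Combining $|S|<|X_{3-i}|$ with $|X_i|+|X_{3-i}|\leq|X|$ shows that both new separations have order strictly less than $|X|$.

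It remains to show that at least one of the two separations is balanced. On the $Z_i$-side, $V(Z_i)$ contains the pairwise disjoint sets $V_{3-i}$, $V_0$, $V(Z)\setminus V(Y)$, $X_i$, and $X_S$ (the first two because $V_{3-i}\cup V_0$-vertices have $Y$-edges that fall outside $Y_i$; the last two because those vertices belong to $V(Z)$ and so have $Z$-edges). Writing $t(U):=|U\cap\tset|$ and using $t(X)=t(X_A)+t(X_B)+t(X_S)+t(X_0)$, a short calculation gives
\[
t\bigl(V(Z_i)\bigr)\;\geq\;\bigl(t(V_{3-i})-t(X_{3-i})\bigr)+\bigl(t(V_0)-t(X_0)\bigr)+t\bigl(V(Z)\bigr)\;\geq\;t\bigl(V(Z)\bigr)\;\geq\;|\tset|/4.
\]
On the $Y_i$-side, $t(V(Y_i))=t(V_i)+t(S)$, and summing over $i=1,2$ yields $t(V_A)+t(V_B)+2t(S)\geq t(V(Y))\geq|\tset|/2$, so at least one $i$ satisfies $t(V(Y_i))\geq|\tset|/4$. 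For that $i$, the new separation $(Y_i,Z_i)$ is balanced and has order strictly less than $|X|$, contradicting the minimality of $(Y,Z)$. The step I would verify most carefully is the identification $V(Y_i)\cap V(Z_i)\subseteq S\cup X_i$; once this and the lower bound on $t(V(Z_i))$ are in hand, the balancedness of at least one separation follows from a two-line averaging argument.
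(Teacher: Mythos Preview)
Your approach is close in spirit to the paper's, but the balancedness step has a real gap. You assert that $t(V_A)+t(V_B)+2t(S)\geq t(V(Y))$; since $t(V(Y))=t(V_A)+t(V_B)+t(V_0)+t(S)$, this is equivalent to $t(S)\geq t(V_0)$, which need not hold. The components of $Y\setminus S$ that miss both $A$ and $B$ can carry as many terminals as you like while $S$ carries none (nothing ties terminals to $A,B\subseteq X$). In that situation both $t(V(Y_i))$ vanish and neither $(Y_i,Z_i)$ is balanced, so your averaging argument breaks down.

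The paper sidesteps this by putting the uncommitted part $V_0$ on the $Y'$-side of a \emph{single} new separation, together with whichever of $V_A,V_B$ has more terminals (say $V_A$): set $V(Y')=V_A\cup V_0\cup S$ and $V(Z')=V(Z)\cup V_B\cup S$. Then $t(V(Z'))\geq t(V(Z))\geq|\tset|/4$ is immediate, and if $t(V(Y'))<|\tset|/4$ then also $t(V_B)\leq t(V_A)<|\tset|/4$, forcing $t(V(Y))<|\tset|/2$, a contradiction. Your two-separation approach can be salvaged by moving $V_0$ into each $Y_i$; the order bound becomes $|S|+|X_i|+|X_0|<|X|$, which still follows from $|S|<|X_{3-i}|$.

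A smaller issue: the assumption that $S$ can be taken disjoint from $A\cup B$ is not without loss of generality. For instance, if $Y$ is the path $a_1\text{--}a_2\text{--}b_1\text{--}b_2$ with $A=\{a_1,a_2\}$ and $B=\{b_1,b_2\}$, every size-$1$ separator lies in $A\cup B$. The paper does not make this assumption; its order bound uses only $B\subseteq (R_2\cap X)\cup(S\cap X)$, which remains valid when $S\cap B\neq\emptyset$. Your inequality $|X_{3-i}|\geq k'$ depends on this assumption, so you would need to adjust that step as well.
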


\begin{proof}
  Let $A,B$ be any two equal-sized subsets of $X$, and assume that
  $|A|=|B|=z$. It is enough to show that there is a set $\pset$ of $z$
  disjoint paths connecting $A$ to $B$ in $Y$. Assume otherwise. Then
  there is a set $S$ of at most $z-1$ vertices separating $A$ from $B$
  in $Y$.

  Let $\cset$ be the set of all connected components of $Y\setminus
  S$. We partition $\cset$ into three subsets: $\cset_1$ contains all
  components containing the vertices of $A$; $\cset_2$ contains all
  components containing the vertices of $B$, and $\cset_3$ contains all
  remaining components (notice that all three sets of clusters are
  pairwise disjoint). Let $R_1=\bigcup_{C\in \cset_1}V(C)$, and define
  $R_2$ and $R_3$ for $\cset_2$ and $\cset_3$, respectively. Assume
  without loss of generality that $|R_1\cap \tset|\geq |R_2\cap
  \tset|$. We define a new separation $(Y',Z')$, as follows. The set
  of vertices $V(Y')=R_1\cup R_3\cup S$, and $V(Z')=V(Z)\cup R_2\cup
  S$. Let $X'=V(Y')\cap V(Z')$. The edges of $Y'$ include all edges of
  $G$ with both endpoints in $V(Y')\setminus X'$, and all edges of $G$
  with one endpoint in $V(Y')\setminus X'$ and the other endpoint in
  $X'$. The edges of $Z'$ include all edges with both endpoints in
  $Z'$.

  We claim that $(Y',Z')$ is a balanced separation with respect to $\tset$. Clearly,
  $|V(Z')\cap \tset|\geq |\tset|/4$, since $V(Z)\subseteq V(Z')$, and
  $|V(Z)\cap \tset|\geq |\tset|/4$.  We next claim that $|V(Y')\cap
  \tset|\geq |\tset|/4$. Assume otherwise. Then, from our assumption,
  $|R_2\cap \tset|<|\tset|/4$, and so $|V(Y)\cap \tset|=|R_2\cap \tset|+|V(Y')\cap \tset|<|\tset|/2$, a
  contradiction. Therefore, $(Y',Z')$ is a balanced
  separator with respect to $\tset$. Finally, we claim that its order is less than $|X|$,
  contradicting the minimality of $X$. Indeed, $|V(Y')\cap V(Z')|\leq
  |X|-|B|+|S|<|X|$.
\end{proof}

Let $\tset_1=\tset\cap V(Z)$, $\tset_2=\tset\cap V(Y)$, and let $\tset_1'\subseteq \tset_1$, $\tset_2'\subseteq \tset_2$ be two disjoint subsets containing $\ceil{\kappa /4}$ vertices each. From Observations~\ref{obs: wl-properties} and~\ref{obs: sparsest cut to flow}, there is a flow $F$ from $\tset_1'$ to $\tset_2'$, such that every vertex in $\tset_1'$ sends one flow unit, every vertex in $\tset_2'$ receives one flow unit, and the congestion on every edge is at most $1/\alpha$. We now bound the vertex-congestion caused by the flow $F$. For every vertex $v\in V(G)$, let $F_1(v)$ be the total amount of flow on all paths that originate or terminate at $v$, and let $F_2(v)$ be the total amount of flow on all paths that contain $v$ as an inner vertex. It is immediate to verify that $F_1(v)\leq 1$, while $F_2(v)\leq \frac{\Delta}{2\alpha}$, since every flow-path $P$ that contains $v$ as an inner vertex contributes flow $F(P)$ to two edges incident to $v$. Therefore, the total flow through $v$ is at most $\frac{\Delta}{2\alpha}+1\leq \frac{5\Delta}{6\alpha}$, as $\Delta\geq 3$ and $\alpha\leq 1$.
By sending $\frac{6\alpha}{5\Delta}\cdot F(P)$ flow units via every path $P$, we obtain a flow of value at least $\frac{\kappa}{4}\cdot \frac{6\alpha}{5\Delta}=\frac{3\alpha \kappa}{10\Delta}$ from vertices of $\tset_1'$ to vertices of $\tset_2'$, that causes vertex-congestion at most $1$. From the integrality of flow, there is a set 
%
%
%
$\pset'$ of $\kappa'=\ceil{\frac{3\alpha \kappa}{10\Delta}}$
node-disjoint paths connecting terminals in $\tset_1'$ to terminals in
$\tset_2'$ in $G$. Each such path has to contain a vertex of $X$. For each
path $P'\in \pset'$, we truncate the path $P'$ to the first vertex of
$X$ on $P'$ (where the path is directed from $\tset_1$ to
$\tset_2$). Let $\pset$ be the resulting set of truncated paths. Then
$\pset$ is a set of $\kappa'$ disjoint paths, connecting  vertices
of $\tset_1'$ to vertices of $X$; every path in $\pset$ is
completely contained in graph $Z$, and is disjoint from $X$ except for
its last endpoint that belongs to $X$.

Let $\tset''\subseteq \tset_1'$ be the set of terminals from which the
paths in $\pset$ originate, and let $X'\subseteq X$ be the set of
vertices where they terminate. We claim that $\tset''$ is
node-well-linked in $G$. Indeed, let $A,B\subseteq\tset''$ be any pair of
equal-sized subsets of terminals.  Let $U=A\cap B$, $A'=A\setminus U$
and $B'=B\setminus U$. 

We define the set $\tilde{A}'\subseteq X'$ as follows: for each terminal
$t\in A'$, let $P_t\in \pset$ be the path originating at $t$, and let
$x_t$ be its other endpoint, that belongs to $X$. We then set
$\tilde{A}'=\set{x_t\mid t\in A'}$. We define a set $\tB'\subseteq X$
similarly for $B'$. Let $\pset_A\subseteq \pset$ be the set of paths
originating at the vertices of $A'$, and let $\pset_B\subseteq \pset$
be the set of paths originating at the vertices of $B'$.  Notice that
both sets of paths are contained in $Z$, and are internally disjoint
from $X$. The paths in $\pset_A\cup \pset_B$ are also mutually
disjoint, and they avoid $U$.

Let $U'=U\cap X$, and consider the two subsets $\tA=\tA'\cup U'$ and
$\tB=\tB'\cup U'$ of vertices of $X$. Denote $|\tA|=|\tB|=z$.  Since
$X$ is node-well-linked in $Y$, there is a set $\qset$ of $z$ disjoint
paths connecting $\tA$ to $\tB$ in $Y$. The paths in $\qset$ are then
completely disjoint from the paths in $\pset_1,\pset_2$ (except for
sharing endpoints with them). The final set of paths connecting $A$ to
$B$ is obtained by concatenating the paths in $\pset_1,\qset,\pset_2$, and adding a collection $\qset'$ of paths that contains, for every vertex $v\in U\setminus U'$, a path $P_v$ consisting of only the vertex $v$ itself.



\subsubsection{A Constructive Proof}

We assume that $\kappa\geq 32\Delta^4\alphasc(\kappa)/\alpha$, since otherwise we can return a set $\tset'$ consisting of a single terminal.
For every subset $C\subseteq V$ of vertices, let $\tset_C=C\cap
\tset$.  We say that a partition $(A,B)$ of $V$ is \emph{balanced}
with respect to $\tset$, if $|\tset_A|,|\tset_B|\geq
\frac{\kappa}{2\Delta}$. We need the following lemma that follows from the well-linkedness
of $\tset$.
\begin{lemma}
  \label{lem:tB-tA-paths}
  Let $(A,B)$ be any balanced partition of $V$ with respect to $\tset$. There is an efficient
  algorithm that computes a collection $\pset$ of node-disjoint paths
  from $\tset_B$ to $\tset_A$ where $|\pset| \ge \ceil{\frac{\kappa
    \alpha}{2\Delta^2}}$.
\end{lemma}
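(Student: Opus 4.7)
The plan is to use the $\alpha'$-well-linkedness of $\tset$ in $G$ to obtain a fractional flow from $\tset_B$ to $\tset_A$, and then to convert it to an integer collection of vertex-disjoint paths by exploiting the bounded degree of $G$ together with the integrality of vertex-capacitated maximum flow. First I would assume without loss of generality that $|\tset_B|\leq|\tset_A|$. Every partition $(X,Y)$ of $V(G)$ with $\tset_B\subseteq X$ and $\tset_A\subseteq Y$ satisfies $\min(|X\cap\tset|,|Y\cap\tset|)\geq|\tset_B|$, so the $\alpha'$-well-linkedness of $\tset$ yields $|E_G(X,Y)|\geq\alpha'|\tset_B|$. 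By max-flow/min-cut there is therefore an edge-capacity-$1$ flow $F$ from $\tset_B$ to $\tset_A$ in $G$ of value at least $\alpha'|\tset_B|\geq\alpha'\kappa'/(2\Delta)$, using that $(A,B)$ is balanced.

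Next, since the maximum degree of $G$ is $\Delta$ and each edge carries at most one unit of $F$, the total flow through any non-terminal vertex is at most $\Delta$, while each terminal of $\tset_B\cup\tset_A$ carries at most one unit of flow because it has degree exactly $1$ in $G$. Scaling $F$ down by a factor of $\Delta$ produces a fractional $\tset_B$--$\tset_A$ flow of value at least $\alpha'\kappa'/(2\Delta^2)$ whose vertex congestion is at most $1$.

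Finally, by the integrality of vertex-capacitated maximum flow, obtained through the standard reduction in which every non-terminal vertex is split into an in/out pair joined by a unit-capacity arc, a super-source is attached to each terminal of $\tset_B$, and a super-sink is attached to each terminal of $\tset_A$, there exists an integer flow of value at least $\lceil\kappa'\alpha'/(2\Delta^2)\rceil$, and its path decomposition yields the required collection $\pset$ of vertex-disjoint paths. The argument is a routine well-linkedness-to-flow calculation and presents no real obstacle; the only care needed is the factor-$\Delta$ loss incurred when passing from edge congestion to vertex congestion, which accounts for the $\Delta^2$ appearing in the denominator of the bound.
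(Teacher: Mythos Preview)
Your approach is correct and essentially the same as the paper's: both obtain a fractional $\tset_B$--$\tset_A$ flow of value at least $\frac{\kappa'\alpha'}{2\Delta^2}$ with vertex congestion at most $1$, and then appeal to integrality. The paper routes $\min\{|\tset_A|,|\tset_B|\}$ units with edge congestion $1/\alpha'$ and scales by $\Delta/\alpha'$, while you route $\alpha'\min\{|\tset_A|,|\tset_B|\}$ units with edge congestion $1$ and scale by $\Delta$; the arithmetic is identical.

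One small inaccuracy: you assert that ``each terminal of $\tset_B\cup\tset_A$ carries at most one unit of flow because it has degree exactly $1$ in $G$.'' In the setting of Theorem~\ref{thm: grouping} (where this lemma lives) no such degree-$1$ assumption on terminals is made; that hypothesis belongs to Theorem~\ref{thm: meta-tree}, a different statement. Fortunately the claim is unnecessary: every vertex, terminal or not, has degree at most $\Delta$, so in the edge-capacity-$1$ flow the total flow incident to any vertex is at most $\Delta$; after dividing by $\Delta$ this is at most $1$, which is all you need for the vertex-capacitated integrality step.
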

\begin{proof}
Assume without loss of generality that $|\tset_B|\leq |\tset_A|$.
Since $\tset$ is $\alpha$-well-linked in $G$, from Observations~\ref{obs: wl-properties} and \ref{obs: sparsest cut to flow}, there is a flow $F$ in $G$, where every vertex of $\tset_B$ sends one flow unit, every vertex in $\tset_A$ receives at most one flow unit, and the edge-congestion is at most $1/\alpha$. Therefore, the amount
  of flow through any vertex is at most
  $\Delta/\alpha$. Scaling this flow down by factor $\Delta/\alpha$,
  we obtain a $\tset_B$-$\tset_A$ flow of value at least $\frac{\kappa
    \alpha}{2\Delta^2}$ and vertex-congestion at most $1$.  From the
  integrality of flow, there is a set of $\ceil{\frac{\kappa
      \alpha}{2\Delta^2}}$ disjoint paths connecting terminals in
  $\tset_B$ to terminals in $\tset_A$.
\end{proof}

Given a balanced partition $(A,B)$ of $V$ with respect to $\tset$ and a collection of paths $\pset$
from $\tset_B$ to $\tset_A$. For each path $P \in \pset$, let $v(P)$ be the first vertex of $P$ that lies in $A$, and let $\Gamma_A(\pset)=\set{v_P\mid P\in \pset}$.
We now show an algorithm to construct a balanced partition with some useful properties.

\begin{theorem}\label{thm: find good balanced partition}
  There is an efficient algorithm to compute a balanced partition $(A,B)$
  of $V$ with respect to $\tset$ and a collection $\pset$ of $\ceil{\frac{\kappa
      \alpha}{2\Delta^2}}$ node-disjoint paths from $\tset_B$ to $\tset_A$
  such that $G[B]$ is connected, and set $\Gamma_A(\pset)$ is $1/\alphasc(\kappa)$-well-linked in $G[A]$.
\end{theorem}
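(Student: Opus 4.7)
The plan is to build the partition via iterative refinement driven by calls to $\algsc$. I start with any balanced partition $(A,B)$ such that $G[B]$ is connected, and repeatedly invoke $\algsc$ on $G[A]$ with $\Gamma_A(\pset)$ as the terminal set; when $\algsc$ certifies that the sparsest cut has value at least $1/\alphasc(\kappa')$ the theorem's conclusion holds, and otherwise I use the sparse cut that $\algsc$ produces to migrate part of $A$ into $B$ and iterate.

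For initialization I would take a spanning tree $T$ of $G$, root it, and let $v$ be the deepest vertex whose rooted subtree $T_v$ contains at least $\kappa'/(2\Delta)$ terminals. Every child subtree of $v$ then has fewer than $\kappa'/(2\Delta)$ terminals, so $|V(T_v)\cap \tset|\leq \Delta\cdot \kappa'/(2\Delta)+1$; setting $B:=V(T_v)$ and $A:=V\setminus B$ gives a balanced partition with $G[B]$ connected. At each iteration I invoke Lemma~\ref{lem:tB-tA-paths} to obtain a path collection $\pset$ with $|\pset|\geq \ceil{\kappa'\alpha'/(2\Delta^2)}$, set $X:=\Gamma_A(\pset)$, and run $\algsc$ on $G[A]$ with terminal set $X$. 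If the sparsity returned is at least $1/\alphasc(\kappa')$ I return $(A,B)$ together with $\pset$. Otherwise $\algsc$ outputs a cut $(A_1,A_2)$ of $G[A]$ with $|E_{G[A]}(A_1,A_2)|<|X\cap A_1|/\alphasc(\kappa')$, where WLOG $|X\cap A_1|\leq |X\cap A_2|$; I then migrate into $B$ the union $A_1'$ of those connected components of $G[A_1]$ that are adjacent to $B$, setting $B':=B\cup A_1'$ and $A':=V\setminus B'$, and iterate on $(A',B')$.

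The natural progress potential is $\Phi(A,B):=|E_G(A,B)|$. Every vertex of $X\cap A_1$ is adjacent to $B$ via the first edge of its path in $\pset$, and these edges are distinct since the paths are node-disjoint, so $|E(A_1',B)|\geq |X\cap A_1|$ (using that every $X$-vertex is adjacent to $B$ and hence lies in $A_1'$). On the other hand $|E(A',A_1')|\leq |E(A_1,A_2)|<|X\cap A_1|/\alphasc(\kappa')$, so $\Phi$ strictly drops by at least $|X\cap A_1|(1-1/\alphasc(\kappa'))$ per iteration, yielding termination in polynomially many rounds. The main obstacle is preserving both invariants — connectivity of $G[B]$ and balance of the partition — throughout the refinement. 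Connectivity is handled by migrating only components of $G[A_1]$ already adjacent to $B$, so that $G[B']$ remains connected. Balance is the more delicate point: the cumulative number of terminals pushed into $B$ over all iterations must be controlled so that $|\tset_A|\geq \kappa'/(2\Delta)$ is maintained until termination. This should follow by amortizing the total transfers against the integer drop of $\Phi$ bounded by $|E(G)|$, together with the observation that $|\pset|$, and hence each individual $|X\cap A_1|$, is at most $\kappa'\alpha'/(2\Delta^2)$, which is much smaller than $\kappa'$.
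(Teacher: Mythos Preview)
Your overall architecture — iterative refinement with potential $|E(A,B)|$, using $\algsc$ on $G[A]$ with terminal set $\Gamma_A(\pset)$ — matches the paper's. However, there are two problems, one minor and one fatal.

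\textbf{Minor (threshold).} Your stopping rule ``sparsity returned $\geq 1/\alphasc(\kappa')$'' only certifies $1/\alphasc(\kappa')^2$-well-linkedness, since $\algsc$ is an $\alphasc$-approximation. The correct threshold is $1$: if $\algsc$ returns a cut of sparsity at least $1$, the true sparsest cut is at least $1/\alphasc(\kappa')$. With threshold $1$ your potential-drop calculation still goes through (and is actually cleaner).

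\textbf{Fatal (balance).} Your amortization for balance does not work. You bound $|X\cap A_1|$ by $|\pset|$, but $|X\cap A_1|$ counts \emph{path-entry vertices}, not \emph{terminals}. The set $A_1$ is determined by the sparse cut in $G[A]$ and can contain almost all of $\tset_A$ while containing very few vertices of $X$; nothing ties the number of terminals in $A_1'$ to the potential drop. A single iteration can therefore destroy balance, after which Lemma~\ref{lem:tB-tA-paths} no longer applies and the output is not a balanced partition as required.

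The paper avoids this not by amortization but by a per-iteration guarantee. It maintains the stronger invariant that \emph{both} $G[A]$ and $G[B]$ are connected; before each iteration it swaps $A$ and $B$ if necessary so that $|\tset_A|\geq|\tset_B|\geq \kappa'/2$; and it chooses which side of the sparse cut to move into $B$ based on which side has fewer \emph{terminals of $\tset_A$} (not fewer $X$-vertices). Since the moved side then has at most $|\tset_A|/2$ terminals, balance is preserved outright. To make this work it also refines the sparse cut $(X,Y)$, via an iterative component-shifting argument, into one where $G[X]$ and $G[Y]$ are each connected while the sparsity stays below $1$; this is what keeps both sides of the new partition connected and hence allows swapping in later rounds. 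Your ``migrate only components of $A_1$ adjacent to $B$'' idea handles connectivity of $B'$ but gives no control over terminals and cannot be repaired without moving to the paper's scheme.
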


\begin{proof}
  We say that a balanced partition $(A,B)$ of $V$ with respect to $\tset$ is \emph{good} if and only if both
  $G[A]$ and $G[B]$ are connected. We start with some initial good
  balanced partition $(A,B)$ and apply Lemma~\ref{lem:tB-tA-paths} to
  find a collection of paths $\pset$, and then perform a number of
  iterations. In every iteration, we will either find a new good
  balanced partition $(A',B')$ with $|E(A',B')|<|E(A,B)|$, or we will
  establish that the current partition has the required properties (after possibly switching $A$ and $B$). In the former
  case, we continue to the next iteration, and in the latter case we
  terminate the algorithm and return the current partition $(A,B)$ and the set
  $\pset$ of paths. Clearly, after at most $|E|$ iterations, our algorithm is
  guaranteed to terminate with the desired output.

  The initial partition $(A,B)$ is computed as follows. Let $T$ be any
  spanning tree of $G$, rooted at any vertex. Let $v$ be the lowest
  vertex of $T$ whose subtree contains at least
  $\frac{\kappa}{2\Delta}$ terminals. Since the degree of every vertex
  is at most $\Delta$, the subtree of $T$ rooted at $v$ contains at
  most $\frac{\kappa} 2+1$ terminals. We let $A$ contain all vertices
  in the subtree of $T$ rooted at $v$ (including $v$), and we let $B$
  contain all remaining vertices. Then both $A$ and $B$ contain at
  least $\frac{\kappa}{2\Delta}$ terminals, and both $G[A]$ and $G[B]$
  are connected.

  Given any good balanced partition $(A,B)$ of $V$, we perform an
  iteration as follows. Assume without loss of generality that
  $|\tset_A|\geq |\tset_B|$ (otherwise, we switch $A$ and $B$). First, we apply
  Lemma~\ref{lem:tB-tA-paths} to find a collection $\pset$ of
  $\ceil{\frac{\kappa \alpha}{2\Delta^2}}$ disjoint paths from
  $\tset_B$ to $\tset_A$.  Let $S =\Gamma_A(\pset)$; note that $|S| \le
  \kappa/2$. For a subset $Z\subseteq A$ of vertices, we denote
  $S_Z=Z\cap S$.  We set up an instance of the sparsest cut
  problem in graph $G[A]$ with the set $S$ of terminals. Let
  $(X,Y)$ be the partition of $A$ returned by the algorithm \algSC on
  this instance. If
  $\frac{|E(X,Y)|}{\min\set{|S_X|,|S_Y|}}\geq 1$, then we
  are guaranteed that $S$ is $1/\alphasc(\kappa)$-well-linked in
  $G[A]$. We then return $(A,B)$ and $\pset$, that are guaranteed to satisfy the
  requirements of the theorem. We now assume that
  $\frac{|E(X,Y)|}{\min\set{|S_X|,|S_Y|}}=\rho<1$.

  Our next step is to show that there is a partition $(X',Y')$ of $A$,
  such that $G[X']$ and $G[Y']$ are both connected, and the sparsity
  of the cut $(X',Y')$ in $G[A]$ (with respect to $S$) is at most $\rho$. In order to show this, we
  start with the cut $(X,Y)$, and perform a number of iterations.  Let
  $\cset$ be the set of all connected components of $G[A]\setminus
  E(X,Y)$. Each iteration will reduce the number of the connected
  components in $\cset$ by at least $1$, while preserving the sparsity
  of the cut. Let $\cset_1\subseteq \cset$ be the set of all connected
  components contained in $X$, and let $\cset_2\subseteq \cset$ be the
  set of connected components contained in $Y$. Assume without loss of generality that
  $|S_X|\leq |S_Y|$. If there is some component $C\in \cset$ with $|S_C|=0$, then we can move the vertices of $C$ to the opposite side of the partition $(X,Y)$, and obtain a new partition $(X',Y')$ whose sparsity is less than $\rho$, and the number of connected components in $G[A]\setminus E(X',Y')$ is strictly smaller than $|\cset|$. Therefore, we assume from now on that for each $C\in \cset$, $|S_C|>0$.

  Assume first that $|\cset_1|>1$. Then $|E(X,Y)|=\rho\cdot
  |S_X|$, and so there is a connected component $C\in \cset_1$
  with $|E(C,Y)|\geq \rho\cdot |S_C|$. Moreover, $|S_X|>|S_C|$, since we have assumed that for each $C'\in \cset$, $|S_{C'}|>0$. Consider a new partition
  $(X',Y')$ of $A$, with $X'=X\setminus C$ and $Y'=Y\cup C$.  Notice
  that the number of the connected components in $G[A]\setminus E(X',Y')$
  is strictly smaller than $|\cset|$. We claim that the sparsity of
  the new cut is at most $\rho$. Indeed, the sparsity of the new cut
  is:

  \[\frac{|E(X',Y')|}{|S_{X'}|}=\frac{|E(X,Y)|-|E(C,Y)|}{|S_X|-|S_C|}\leq
  \frac{\rho|S_X|-\rho
    |S_C|}{|S_{X}|-|S_C|}=\rho.\]

  Assume now that $|\cset_2|>1$, and denote
  $|E(X,Y)|/|S_Y|=\rho'$. Then $\rho'\leq \rho$. As before, there
  is a connected component $C\in \cset_2$ with $|E(C,X)|\geq
  \rho'|S_C|$ and $|S_C|<|S_Y|$. Consider a new partition $(X',Y')$ of $A$, where
  $X'=X\cup C$ and $Y'=Y\setminus C$. As before, the number of
  connected components in $G[A]\setminus E(X',Y')$ is strictly smaller
  than $|\cset|$. We now show that the sparsity of the new cut is at
  most $\rho$. If $|S_{Y'}|\leq |S_{X'}|$, then the sparsity
  of the new cut is:

  \[\frac{|E(X',Y')|}{|S_{Y'}|}=\frac{|E(X,Y)|-|E(C,X)|}{|S_Y|-|S_C|}\leq
  \frac{\rho'|S_Y|-\rho'
    |S_C|}{|S_{X}|-|S_C|}=\rho'\leq \rho.\]
 
  Otherwise, $|S_{X'}|< |S_{Y'}|$, and the sparsity
  of the new cut is:
 
  \[\frac{|E(X',Y')|}{|S_{X'}|}=\frac{|E(X,Y)|-|E(X,C)|}{|S_X|+|S_C|}<\frac{|E(X,Y)|}{|S_X|}=\rho.\]

  We continue this procedure, until $|\cset|=2$ holds, so in the end,
  $\cset=\set{G[X],G[Y]}$, and the sparsity of the cut $(X,Y)$ is at
  most $\rho<1$.  Assume without loss of generality that $|X\cap
  \tset_A|\leq |Y\cap \tset_A|$. We obtain a new partition $(A',B')$
  of $V$, by setting $A'=A\setminus X$ and $B'=B\cup X$. Notice that this
  is guaranteed to be a balanced partition with respect to $\tset$, since $|\tset_A|\geq
  \kappa/2$. The number of the edges in the new cut, $|E(A',B')|\leq
  |E(A,B)|-|E(X,B)|+|E(X,Y)|\leq |E(A,B)|-|S_X|+|E(X,Y)|<|E(A,B)|$,
  since $|E(X,Y)|<|S_X|$.  Moreover, since $X$ contains at least one
  vertex of $S$, $G[B\cup X]$ is connected, while we have established
  above that $G[Y]$ is also connected. We then replace the partition
  $(A,B)$ with $(A',B')$ and continue to the next iteration.
\end{proof}

We apply the preceding theorem to find a balanced partition $(A,B)$ of $V$ with respect to $\tset$ and
a corresponding collection $\pset'$ of paths. 

Let $\Gamma \subseteq A$ be the set of vertices of $A$ incident to the
edges of $E(A,B)$ and similarly let $\Gamma'\subseteq B$ be the set of
vertices of $B$ incident to the edges of $E(A,B)$.  Consider some path
$P'\in \pset'$, and let $e=(v,u)$ be the first edge of $P'$ that
belongs to $E(A,B)$, with $v\in \Gamma'$, $u\in \Gamma$. (Recall that the paths are directed from $\tset_B$ to $\tset_A$). We truncate
the path $P'$ at vertex $v$, and we say that $u$ is a \emph{special
  neighbor} of $v$. Let $\pset$ be the resulting set of truncated
paths. Then $\pset$ is a set of $\ceil{\frac{\kappa
    \alpha}{2\Delta^2}}$ node-disjoint paths, connecting the vertices of
$\tset_B$ to the vertices of $\Gamma'$, and every path in $\pset$ is
completely contained in $G[B]$. Moreover, if $\Gamma''\subseteq \Gamma'$
denotes the set of endpoints of the paths in $\pset$, then for each
$v\in \Gamma''$, we have defined a special neighbor $u\in \Gamma$,
such that, if $v\neq v'\in \Gamma''$, then their special neighbors are
distinct. Let $S$ be the set of the special neighbors. Recall that from Theorem~\ref{thm: find good balanced partition}, $S$ is $1/\alphasc(\kappa)$-well-linked in
$G[A]$. Let $q=2\Delta\alphasc(\kappa)$.  We need the following
theorem that allows us to group the paths in $\pset$ inside the graph
$G[B]$.

\begin{theorem}
  There is an efficient algorithm to find a subset $\tpset\subseteq
  \pset$ of at least $|\pset|/2$ paths, and a collection $\cset$ of
  disjoint connected subgraphs of $G[B]$, such that each path $P\in
  \tpset$ is completely contained in some subgraph $C\in \cset$, and
  each such subgraph contains at least $q$ and at most $4\Delta q$
  paths in $\tpset$.
\end{theorem}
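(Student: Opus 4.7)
The plan is to use path-contraction followed by bottom-up grouping on a spanning tree, with a special case to handle super-vertices of high tree-degree. Form the multigraph $H$ from $G[B]$ by contracting every $P\in \pset$ into a super-vertex $v_P$, and let $U=\{v_P:P\in \pset\}$ so that $|U|=|\pset|$. Since $G[B]$ is connected by Theorem~\ref{thm: find good balanced partition}, $H$ is connected, so we pick any spanning tree $T$ of $H$ and root it arbitrarily. Non-super-vertices of $H$ have tree-degree at most $\Delta$ (inherited from the degree bound in $G[B]$), while super-vertices may have arbitrary tree-degree; the algorithm must cope with both.

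Process $T$ bottom-up, maintaining for each vertex $v$ a residue count $s(v)$ equal to the number of unclustered vertices of $U$ in the subtree rooted at $v$. Repeatedly select the deepest vertex $v$ with $s(v)\geq q$; by minimality, every child $u$ satisfies $s(u)<q$. If $v$ is either a non-super-vertex (tree-degree $\leq \Delta$) or a super-vertex $v_P$ with $s(v_P)\leq 4\Delta q$, cut off the entire subtree at $v$ as a single cluster, which has at most $1+\Delta(q-1)\leq \Delta q$ specials in the former case and at most $4\Delta q$ in the latter. Un-contracting the super-vertices in the cluster yields a connected subgraph of $G[B]$ that completely contains every represented path. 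The cluster is then removed from $T$ and processing continues upward.

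The crucial case is $v=v_P$ with $s(v_P)>4\Delta q$, which forces its tree-degree to exceed $4\Delta$. We exploit the linear structure of $P=(p_1,\ldots,p_m)$: each child of $v_P$ in $T$ corresponds to an edge of $G[B]$ incident to some $p_i$, so we order the children by the index $i$ of their attachment. Greedily form consecutive bundles of children whose residue sums lie in $[q,2q)$, and assign each bundle a cluster consisting of (i) the children's residual subtrees (with super-vertices un-contracted) together with (ii) the sub-segment of $P$ spanning the attachment points of the bundled children. Because of the sorted ordering, the sub-segments assigned to distinct bundles are disjoint, so the resulting clusters are pairwise vertex-disjoint and each contains between $q$ and $2q\leq 4\Delta q$ paths. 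Since $P$ itself is split across several clusters, declare $P\notin \tpset$; the final leftover bundle (residue $<q$) is also discarded.

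The main obstacle is to verify $|\tpset|\geq |\pset|/2$. A super-vertex handled in the special case contributes at most $q+1$ discards (the path $P$ plus the leftover bundle of $<q$ specials) but produces at least $\lfloor s(v_P)/(2q)\rfloor\geq 2\Delta$ clusters totalling at least $2\Delta q$ clustered specials, so the ratio of discards to clustered specials at any such super-vertex is at most $1/\Delta$. Aggregating, the total discards from special cases are at most $|\pset|/\Delta$; adding the final root residue of at most $q$, the total discards are at most $|\pset|/\Delta+q\leq |\pset|/2$. This last inequality holds because the hypothesis $\kappa\geq 64\Delta^4\alphasc(\kappa)/\alpha$ of Theorem~\ref{thm: grouping} yields $|\pset|=\Omega(\kappa\alpha/\Delta^3)$, which is comfortably larger than $q=2\Delta\alphasc(\kappa')$. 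Thus we obtain $\tpset$ and $\cset$ meeting the requirements.
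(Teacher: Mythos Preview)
Your approach matches the paper's: contract each path to a super-vertex, take a spanning tree of the resulting graph, and group bottom-up, handling heavy super-vertices via the linear structure of the underlying path. Two steps need fixing, though.

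First, your disjointness claim for the sub-segments of $P$ is not justified. Several children of $v_P$ can attach at the \emph{same} vertex $p_i$ of $P$ (up to roughly $\Delta$ of them, since $p_i$ has degree at most $\Delta$ in $G$). If your greedy bundle boundary falls between two children both attached at $p_i$, then both adjacent bundles need $p_i$ for connectivity, so the two clusters share $p_i$ and are not disjoint. The paper avoids this by first aggregating: assign each vertex $x\in P$ the total residue of all children attached at $x$ (so each vertex-weight is at most $\Delta q$), and then greedily segment $P$ \emph{by vertices}, obtaining segments of weight in $[q,2\Delta q]$. Consecutive vertex-segments are automatically disjoint, and this also leaves no leftover (every child gets covered), so only $P$ itself is lost at each heavy super-vertex.

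Second, your accounting is too loose when $\Delta=2$: with discard-to-clustered ratio at most $1/\Delta$ you get total discards at most $|\pset|/\Delta+q=|\pset|/2+q$, which does not yield $|\tpset|\geq|\pset|/2$. The paper's count is tighter because, by the vertex-segmenting above, exactly one path is discarded per heavy super-vertex against at least $4\Delta q$ clustered, giving total discards at most $|\pset|/(4\Delta q+1)+q$, which is comfortably at most $|\pset|/2$ under the standing assumption $\kappa\geq 64\Delta^4\alphasc(\kappa)/\alpha$.
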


\begin{proof}
  Start from $G[B]$ and build a new graph $H$, by contracting every
  path $P\in \pset$ into a super-node $u_P$. Let $U=\set{u_P\mid P\in
    \pset}$ be the resulting set of super-nodes. Let $T$ be any
  spanning tree of $H$, rooted at any vertex $r$. Given a vertex $v\in V(T)$, let $T_v$ be
  the sub-tree of $T$ rooted at $v$. Let $J_v\subseteq V(G)$ be the set of all vertices of
  $T_v$ that do not belong to $U$, and all vertices on paths $P$ with
  $u_P\in T_v$. In other words, $J_v=(V(T_v)\setminus U)\cup \left(\bigcup_{u_P\in V(T_v)\cap U}V(P)\right )$. 
  Denote $G_v=G[J_v]$. 
  Over the course of the
  algorithm, we will delete some vertices of $T$. The notation $T_v$
  and $G_v$ is always computed with respect to the most current tree
  $T$. We start with $\cset=\emptyset,\tpset=\emptyset$, and then iterate.

  Each iteration is performed as follows. If $q\leq |V(T)\cap U|\leq
  4\Delta q$, then we add $G_r$ to $\cset$, and terminate
  the algorithm. If $|V(T)\cap U|<q$, then we also terminate the
  algorithm (we will show later that $\tpset$ must contain at least
  $|\pset|/2$ paths at this point). Otherwise, let $v$ be the lowest
  vertex of $T$ with $|T_v\cap U|\geq q$. If $v\not \in U$, then,
  since the degree of every vertex is at most $\Delta$, $|T_v\cap
  U|\leq \Delta q$. We add $G_v$ to $\cset$, and all paths in
  $\set{P\mid u_P\in T_v}$ to $\tpset$. We then delete all vertices of
  $T_v$ from $T$, and continue to the next iteration.

  Assume now that $v=u_P$ for some path $P\in \pset$. If $|T_v\cap
  U|\leq 4\Delta q$, then we add $G_v$ to $\cset$, and all paths in
  $\set{P'\mid u_{P'}\in T_v}$ to $\tpset$ and continue to the next
  iteration. So we assume that $|T_v\cap U|> 4\Delta q$.

  Let $v_1,\ldots,v_z$ be the children of $v$ in $T$.  Build a new
  tree $T'$ as follows. Start with the path $P$, and add the vertices
  $v_1,\ldots,v_z$ to $T'$. For each $1\leq i\leq z$, let
  $(x_i,y_i)\in E(G[B])$ be any edge connecting some vertex $x_i\in V(P)$ to
  some vertex $y_i\in V(G_{v_i})$; such an edge must exist from the definition
  of $G_{v_i}$ and $T$. Add the edge $(v_i,x_i)$ to $T'$. Therefore,
  $T'$ is the union of the path $P$, and a number of disjoint stars
  whose centers lie on the path $P$, and whose leaves are the vertices
  $v_1,\ldots,v_z$. The degree of every vertex of $P$ is at most
  $\Delta$. The weight of the vertex $v_i$ is defined to be the number
  of paths in $\pset$ contained in $G_{v_i}$. Recall that the weight
  of each vertex $v_i$ is at most $q$, by the choice of $v$. For each
  vertex $x\in P$, the weight of $x$ is the total weight of its
  children in $T'$. Recall that the the total weight of the vertices
  of $P$ is at least $4\Delta q$, and the weight of every vertex is at
  most $\Delta q$. We partition $P$ into a number of disjoint segments
  $\Sigma=(\sigma_1,\ldots,\sigma_{\ell})$ of weight at least $q$ and
  at most $2\Delta q$ each, as follows. Start with $\Sigma=\emptyset$,
  and then iterate. If the total weight of the vertices of $P$ is at
  most $2\Delta q$, we build a single segment, containing the whole
  path. Otherwise, find the shortest segment $\sigma$ starting from
  the first vertex of $P$, whose weight is at least $q$. Since the
  weight of every vertex is at most $\Delta q$, the weight of $\sigma$
  is at most $\Delta q$. We then add $\sigma$ to $\Sigma$, delete it
  from $P$ and continue. Consider the final set $\Sigma$ of
  segments. For each segment $\sigma$, we add a new graph
  $C_{\sigma}$ to $\cset$. Graph $C_\sigma$ consists of the union of
  $\sigma$, the graphs $G_{v_i}$ for each $v_i$ that is connected to a
  vertex of $\sigma$ with an edge in $T'$, and the corresponding edge
  $(x_i,y_i)$. Clearly, $C_{\sigma}$ is a connected subgraph of
  $G[B]$, containing at least $q$ and at most $2\Delta q$ paths of
  $\pset$. We add all those paths to $\tpset$, delete all vertices of
  $T_v$ from $T$, and continue to the next iteration. We note that path $P$ itself is not added to $\tpset$, but all paths $P'$ with $u_{P'}\in V(T_v)$ are added to $\tpset$.

  At the end of this procedure, we obtain a collection $\tpset$ of
  paths, and a collection $\cset$ of disjoint connected subgraphs of
  $G$, such that each path $P\in \tpset$ is contained in some $C\in
  \cset$, and each $C\in \cset$ contains at least $q$ and at most
  $4\Delta q$ paths from $\tpset$. It now remains to show that
  $|\tpset|\geq |\pset|/2$. We discard at most $q$ paths in the last
  iteration of the algorithm. Additionally, when $v=u_P$ is processed,
  if $|T_v\cap U|> 4\Delta q$, then path $P$ is also discarded, but at
  least $4\Delta q$ paths are added to $\tpset$. Therefore, overall,
  $|\tpset|\geq |\pset|-\frac{|\pset|}{4\Delta q+1}-q\geq |\pset|/2$,
  since $|\pset|= \ceil{\frac{\kappa\alpha}{2\Delta^2}}$, while
  $q=2\Delta\alphasc(\kappa)$, and we have assumed that $\kappa\geq
  \frac{32\Delta^4\alphasc(\kappa)}{\alpha}$.
\end{proof}

For each graph $C\in \cset$, we select one path $P_C\in \tpset$ that
is contained in $C$, and we let $t_C$ be the terminal that serves as
an endpoint of $P_C$. Let $\Gamma'_C\subseteq \Gamma'$ be the set of all
vertices of $\Gamma'$ that serve as endpoints of paths of $\tpset$
that are contained in $C$. Then $|\Gamma'_C|\geq q$. We delete vertices
from $\Gamma'_C$ as necessary, until $|\Gamma'_C|=q$ holds. Our final
set $\tset'$ of terminals is $\tset'=\set{t_C\mid C\in
  \cset}$. Observe that $|\tset'|\geq\frac{|\tpset|}{4\Delta q} \geq
\frac{|\pset|}{16\Delta^2\alphasc(\kappa)}\geq
\frac{\kappa\alpha}{32\Delta^4\alphasc(\kappa)}$, as required.

It now only remains to show that $\tset'$ is node-well-linked in
$G$. Let $\tset_1,\tset_2$ be any pair of equal-sized subsets of
$\tset'$. Let $\tset^*=\tset_1\cap \tset_2$,
$\tset_1'=\tset_1\setminus \tset_2$, and $\tset_2'=\tset_2\setminus
\tset_1$. We set up an $s$-$t$ flow network, by adding a source $s$
and connecting it to every vertex of $\tset_1'$ with a directed edge,
and adding a sink $t$, and connecting every vertex of $\tset_2'$ to
it. We also delete all vertices of $\tset^*$ from the graph, and set
all vertex capacities, except for $s$ and $t$, to $1$; the capacities of $s$ and $t$ are infinite. From the
integrality of flow, it is enough to show a valid $s$-$t$ flow of
value $|\tset_1'|=|\tset_2'|$ in this flow network. This flow will be
a concatenation of three flows, $F_1,F_2,F_3$.

We start by defining the flows $F_1$ and $F_3$. Consider some terminal
$t'\in \tset'_1\cup \tset_2'$, and let $C\in \cset$ be the subgraph to
which $t'$ belongs. Let $T_C$ be any spanning tree of $C$. Terminal $t'$
sends one flow unit toward the vertices of $\Gamma'_C$ along the tree
$T_C$, such that every vertex in $\Gamma'_C$ receives $1/q$ flow
units. Let $F_1$ be the union of all these flows for all $t'\in
\tset_1'$, and $F_3$ the union of all these flows for all $t'\in
\tset_2'$ (we will eventually think of the flow in $F_3$ as directed
towards the terminals). Notice that for every vertex $v\in B\setminus
\tset^*$, the total flow that goes through vertex $v$ or terminates at
$v$ is at most $1$. We say that the flow is of type 1 if it originates
at a terminal in $\tset_1'$, and it is of type 2 otherwise.

We now proceed to define flow $F_2$. For every cluster $C\in \cset$,
each vertex $v\in \Gamma'_C$ sends the $1/q$ flow units it receives to
its special neighbor $u\in \Gamma$, along the edge $(v,u)$. Recall
that every vertex $u\in \Gamma$ serves as a special neighbor of at
most one vertex in $\Gamma'$. Let $\Gamma_1\subseteq \Gamma$ be the set
of vertices that receive flow of type $1$, and $\Gamma_2\subseteq\Gamma$
is the set of vertices that receive flow of type $2$. Then
$|\Gamma_1|=|\Gamma_2|$, and we denote $|\Gamma_1|=\kappa^*$.  It is
enough to show that there is a flow $F_2$ in $G[A]$, where every
vertex in $\Gamma_1$ sends $1/q$ flow units, every vertex in
$\Gamma_2$ receives $1/q$ flow units, and the total vertex congestion
due to this flow is at most $1/2$.

In order to define this flow, recall that since $\Gamma_1 \cup \Gamma_2
\subseteq S$ and $S$ is $1/\alphasc(\kappa)$-well-linked in $G[A]$, from Observations~\ref{obs: wl-properties} and~\ref{obs: sparsest cut to flow}, there is a flow in $G[A]$, where every vertex
in $\tset_1$ sends one flow unit, every vertex in $\tset_2$
receives one flow unit, and the
edge-congestion is bounded by $\alphasc(\kappa)$. The total flow through every vertex is then at most
$\Delta \alphasc(\kappa)$. Scaling this flow down by factor
$q=2\Delta\alphasc(\kappa)$, we obtain the flow $F_2$, where every
vertex of $\Gamma_1$ sends $1/q$ flow units, every vertex in $\Gamma_2$
receives $1/q$ flow units, and the total vertex congestion is at most
$1/2$. Combining together the flows $F_1,F_2,F_3$, we obtain the final
flow $F$.
From the integrality of flow, there is a set of $|\tset_1|=|\tset_2|$ disjoint paths connecting the vertices of $\tset_1$ to the vertices of $\tset_2$ in $G$.

\subsection{Proof of Theorem~\ref{thm: many leaves or a long 2-path}}

Let $T$ be any spanning tree of the graph $Z$.  If $T$ contains at least $L$ leaves, then we are done. 
Assume now that $T$ contains fewer than $L$ leaves.
 We will next try to perform some improvement steps in order to increase the number of leaves in $T$. 

Assume first that $T$ contains three vertices $a,b,c$, that have degree $2$ in $T$ each, where $b$ is the unique child of $a$ and $c$ is the unique child of $b$, and assume further that there is an edge $(a,c)$ in $Z$. We can then delete the edge $(b,c)$ and add the edge $(a,c)$ to $T$. It is easy to see that the number of leaves increases, with the new leaf being $b$ (see Figure~\ref{fig: procedure}).

Assume now that $v$ is a degree-$2$ vertex in $T$, such that both its father $v_1$ and grandfather $v_2$ are degree-$2$ vertices. Moreover, assume that the unique child $v_1'$ of $v$ is a degree-$2$ vertex, and so is the unique grandchild $v_2'$ of $v$. Assume that an edge $(v,u)$ belongs to $Z$, where $u\neq v_1,v_1'$. Notice that if $u=v_2$ or $u=v_2'$, then we can apply the transformation outlined above. Therefore, we assume that $u\neq v_2$ and $u\neq v_2'$. Two cases are possible. First, if $u$ is not a descendant of $v$, then we add the edge $(u,v)$ to $T$, and delete the edge $(v_1,v_2)$ from $T$. Notice that the number of leaves increases, as two new vertices become leaves - $v_1,v_2$, while in the worst case at most one vertex stops being a leaf (vertex $u$). The second case is when $u$ is a descendant of $v$. Then we add an edge $(u,v)$ to $T$, and delete the edge $(v_1',v_2')$ from $T$. Again, the number of leaves increases by at least $1$, since both $v_1'$ and $v_2'$ are now leaves.
(See Figure~\ref{fig: procedure}).

\begin{figure}[h]
\centering
\subfigure{\scalebox{0.4}{\includegraphics{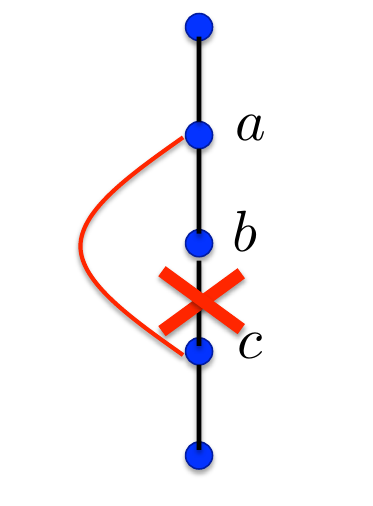}}\label{fig: case0}}
\hspace{1cm}
\subfigure[Case 1]{\scalebox{0.4}{\includegraphics{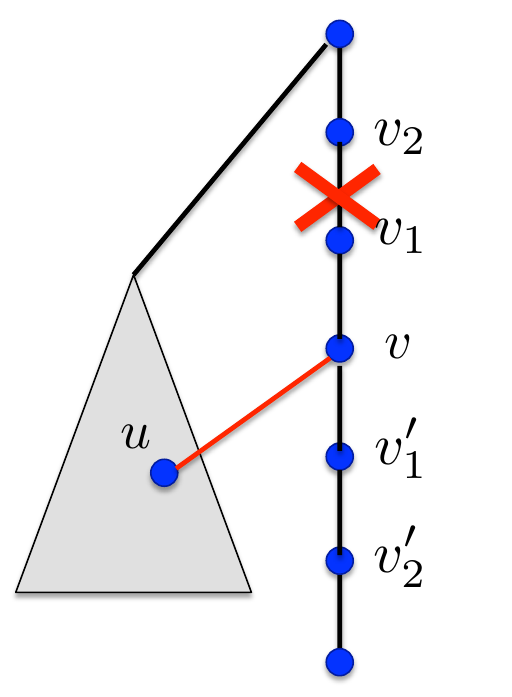}}\label{fig: case1}}
\hspace{1cm}
\subfigure[Case 2]{
\scalebox{0.4}{\includegraphics{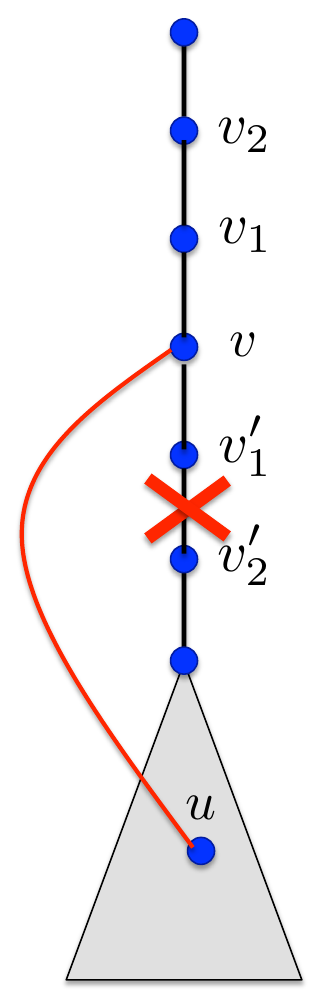}}\label{fig: case2}}
\caption{Improvement steps to increase the number of leaves. \label{fig: procedure}}
\end{figure}

We perform the above improvement step while possible. Let $T$ denote the final tree, where no such operation is possible. If $T$ has at least $L$ leaves, then we are done. Assume therefore that $T$ has fewer than $L$ leaves. Then the number of inner vertices of $T$ whose degree is greater than $2$ in $T$ is at most $L$. If $P$ is a maximal $2$-path in $T$, then the child of its lowermost vertex must be either a leaf or a vertex whose degree is more than $2$ in $T$. Therefore, there are at most $2L$ maximal $2$-paths in $T$, and at least one such path must contain at least $\frac{n-2L}{2L}\geq p+4$ vertices. Let $P'$ be the path obtained from $P$ by deleting the first two and the last two vertices. Then $P'$ contains at least $p$ vertices, and, since no improvement step was possible, $P'$ must be a $2$-path in $Z$.

\subsection{Proof of Lemma~\ref{lemma: re-routing of vertex-disjoint paths}}

The proof we provide here was suggested by Paul Seymour~\cite{PS-comm}. A different proof, using stable matchings, was shown by
Conforti, Hassin and Ravi~\cite{CHR}.

Let $\hat G'\subseteq \hat G$ be obtained from the union of the paths in $\xset_1\cup \xset_2$. Let $U_1'\subseteq U_1$ be the set of vertices where the paths of $\xset_1$ originate, and define $U_2'\subseteq U_2$ for the set $\xset_2$ of paths similarly.
Let $E_1$ be the set of all edges participating in the paths in $\xset_1$. While there is an edge $e\in E(\hat G')\setminus E_1$, such that graph $\hat G'\setminus\set{e}$ contains a set of $\ell_2$ nearly-disjoint $U_2'$--$s$ paths, we delete $e$ from $\hat G'$. At the end of this procedure, the final graph $\hat G'$ has the property that for every edge $e\in E(\hat G')\setminus E_1$, the largest number of nearly-disjoint $U_2'$--$s$ paths in graph $\hat G'\setminus\set{e}$ is less than $\ell_2$. Notice that $\xset_1\subseteq \hat G'$, and graph $\hat G'$ contains $\ell_2$ nearly-disjoint $U_2'$--$s$ paths. We need the following claim.

\begin{claim}\label{claim: augmenting}
There is a set $\xset'$ of $\ell_1$ nearly-disjoint $(U_1'\cup U_2')$--$s$ paths in graph $\hat G'$, such that exactly $\ell_2$ paths of $\xset'$ originate at the vertices of $U_2'$.
\end{claim}

Before we prove Claim~\ref{claim: augmenting}, we show that the set $\xset'$ of paths has the properties required by Lemma~\ref{lemma: re-routing of vertex-disjoint paths}. Let $\xset'_1\subseteq \xset'$ be the set of paths originating from the vertices of $U_1'\setminus U'_2$, and let $\xset'_2=\xset'\setminus\xset'_1$. We only need to show that $\xset'_1\subseteq \xset_1$. Assume otherwise. Then there is some edge $e\in E(\hat G')\setminus E_1$, that lies on some path in $\xset'_1$. But then $\xset'_2\subseteq \hat G'\setminus \set{e}$, and edge $e$ should have been removed from graph $\hat G'$. It now only remains to prove Claim~\ref{claim: augmenting}.

\begin{proofof}{Claim~\ref{claim: augmenting}}
The proof follows standard arguments. We construct a directed node-capacitated flow network $H$: start from graph $\hat G'$, and assign capacity $1$ to each vertex of $\hat G'$, except for vertex $s$, whose capacity is $\ell_1$. We add two new vertices: vertex $t_1$ of capacity $\ell_1-\ell_2$, that connects to every vertex of $U_1'$ with a directed edge, and vertex $t_2$ of capacity $\ell_2$ that connects to every vertex of $U_2'$ with a directed edge. Finally, we add a vertex $t$ of capacity $\ell_1$, that connects to $t_1$ and $t_2$ with directed edges. It is enough to show that there is a valid $t$--$s$ flow of value $\ell_1$ in the resulting flow network: we can then use the integrality of flow to obtain an integral flow of the same value, which in turn immediately defines the desired set $\xset'$ of paths.

Assume for contradiction that there is no $t$--$s$ flow of value $\ell_1$ in $H$. Then there is a set $Z$ of vertices, whose total capacity is less than $\ell_1$, such that $H\setminus Z$ contains no path connecting $t$ to $s$. Since the capacities of $t$ and $s$ are $\ell_1$ each, $s,t\not\in Z$. Also, since the capacities of $t_1$ and $t_2$ sum up to $\ell_1$, both these vertices cannot simultaneously belong to $Z$.

Assume first that $t_1\in Z$. Then, since $t_2\not \in Z$, set $Z$ contains at most $\ell_2-1$ additional vertices, each of which must have capacity $1$. Since there is a set of $\ell_2$ nearly-disjoint $U_2'$--$s$ paths in $\hat G'$, at least one such path $P$ is disjoint from $Z$, and so $H\setminus Z$ must contain a path connecting $t$ to $s$, a contradiction.

Similarly, if $t_2\in Z$, then $t_1\not\in Z$, and set $Z$ contains at most $\ell_1-\ell_2-1$ additional vertices, whose capacities must be all unit. But then at least one path in $\xset_1$ is disjoint from $Z$, giving a path connecting $t$ to $s$ in $H\setminus Z$, a contradiction.

Therefore, we assume that all vertices of $Z$ are capacity-$1$ vertices, that belong to $\hat G'$. But then $Z$ contains at most $\ell_1-1$ vertices, so at least one path in $\xset_1$ is disjoint from $Z$, giving again a path connecting $t$ to $s$ in $H\setminus Z$, a contradiction.
\end{proofof}

\subsection{Proof of Theorem~\ref{thm: starting point}}

Since $G$ has treewidth $k$, we can efficiently find a set $X$ of $\Omega(k)$ vertices of $G$ with properties guaranteed by Lemma~\ref{lem: find well-linked set}.

Using the cut-matching game and Theorem~\ref{thm: CMG}, we can embed an expander $H=(X,F)$ into
$G$ as follows. Each iteration $j$ of the cut-matching game requires
the matching player to find a matching $M_j$ between a given partition
of $X$ into two equal-sized sets $Y_j,Z_j$. From Lemma~\ref{lem: find well-linked set}, there exist a collection $\pset_j$ of
paths from $Y_j$ to $Z_j$, that cause congestion at most $1/\alpha^*$ on the vertices of $G$; these paths naturally define
the required matching $M_j$. The game terminates in $\cKRV(|X|)$
steps. Consider the collection of paths $\mP = \bigcup_j \pset_j$ and let
$G'$ be the subgraph of $G$ obtained by taking the union of these
paths. Let $H=(X,F)$ be the expander, whose vertex set is $X$ and edge set is $F=\bigcup_j M_j$. By the construction, for each $j$, a node $v$
of $G$ appears in at most $1/\alpha^*$ paths in $\pset_j$.  Therefore, the maximum
vertex degree in $G'$ is at most $2 \cKRV(|X|)/\alpha^*=O(\log^3k)$, and moreover the vertex- (and
hence also edge-) congestion caused by the set $\pset$ of paths in $G$ is also
upper-bounded by the same quantity. 
We apply the algorithm $\algsc$ to the sparsest cut instance defined by the graph $H$, where all vertices of $H$ serve as terminals. If the outcome is a cut whose sparsity is less than $\alphaKRV(|X|)$, then the algorithm fails; we discard the current graph $H$ and repeat the algorithm again. Otherwise, if the outcome is a cut of sparsity at least $\alphaKRV(|X|)$, then we are guaranteed that $X$ is an $\alphaKRV(|X|)/\alphasc(|X|)=\Omega(\sqrt{\log |X|})$-expander, and in particular, it is an $\alpha$-expander, for $\alpha=\half$. Since each execution of the cut-matching game is guaranteed to succeed with a constant probability, after $|X|$ such executions, the algorithm is guaranteed to succeed with high probability.

Since $H=(X,F)$ is an $\alpha$-expander, $X$ is
$\alpha$-well-linked in $H$. Since
$H$ is embedded into $G'$ with congestion at most $2\cKRV(|X|)/\alpha^*$, $X$ is
$\frac{\alpha\cdot \alpha^*}{2\cKRV(|X|)}=\Omega\left(\frac{1}{\log^3k}\right )$-well-linked in $G'$. Since the maximum
vertex degree in $G'$ is at most $2\cKRV(|X|)/\alpha^*=O(\log^3k)$, we can apply
Theorem~\ref{thm: grouping} to find a subset $X'\subseteq X$ of $\Omega\left(\frac{k}{\log^{15.5}k}\right)$ vertices, such that $X'$ is node-well-linked in $G'$.

\label{-----------------------------sec: proof of thm from path-system to grid----------------------------}
\section{Proof of Theorem~\ref{thm: find grid minor or good linkage}}
A set $\lset$ of $w$ disjoint paths in $G$ that connect the vertices of $A$ to the vertices of $B$   is called an \emph{$A$-$B$ linkage}. Since the sets $A,B$ of vertices are linked in
$G$, such a linkage $\lset$ exists and can be found efficiently. 

Given an $A$-$B$ linkage $\lset$,
we construct a graph $H=H(\lset)$ as follows.  The
vertices of $H$ are $U=\set{u_P\mid P\in \lset}$, and there is an
edge between $u_P$ and $u_{P'}$ if and only if there is a path $\gamma_{P,P'}$ in $G$, whose first vertex belongs to $P$,
last vertex belongs to $P'$, and the inner vertices do not belong to any
paths in $\lset$. Notice that since $G$ is a connected graph, so
is $H(\lset)$ for any $A$-$B$ linkage $\lset$. We say that an $A$-$B$ linkage $\lset$ is \emph{good} if and only if the longest $2$-path in the corresponding graph $H(\lset)$ contains fewer than $8h_1+1$ vertices.

Assume first that we are given a good linkage $\lset$ in $G$. Then Theorem~\ref{thm: many leaves or a long 2-path} guarantees that there is a spanning tree $\tau$ in $H_{\lset}$ with at least $\frac{w}{2(8h_1+5)}\geq h_2$ leaves. We let $\pset$ contain all paths $P\in \lset$ whose corresponding vertex $u_P$ is a leaf of $\tau$. Then $\pset$ contains at least $h_2$ node-disjoint paths, connecting vertices of $A$ to vertices of $B$. Consider any pair $P,P'\in \pset$ of paths with $P\neq P'$, and let $Q$ be the path connecting $u_P$ to $u_{P'}$ in $\tau$. Let $H_{P,P'}\subseteq G$ be the graph consisting of the union of all paths $P''$ with $u_{P''}\in V(Q)$, and paths $\gamma_{P_1,P_2}$ where $(u_{P_1},u_{P_2})$ is an edge of $Q$. Then graph $H_{P,P'}$ contains a path $\beta_{P,P'}$, connecting a vertex of $P$ to a vertex of $P'$, such that all inner vertices of $\beta_{P,P'}$ are disjoint from $\bigcup_{P''\in \pset}V(P'')$.



In order to complete the proof of Theorem~\ref{thm: find grid minor or good linkage}, we show, using the following theorem, that we can either find a model of the $(h_1\times h_1)$-grid minor in $G$, or compute a good $A$-$B$ linkage $\lset$.

\begin{theorem}\label{thm: improvement step}
There is an efficient algorithm, that, given an $A$-$B$ linkage $\lset$, such that $\lset$ is not a good linkage, returns one of the following:

\begin{itemize}
\item either a model of the $\left (h_1\times h_1\right)$-grid minor in $G$; or
\item a new $A$-$B$ linkage $\lset'$, such that the number of the degree-$2$ vertices in $H(\lset')$ is strictly smaller than the number of the degree-$2$ vertices in $H(\lset)$.
\end{itemize}
\end{theorem}

The proof of Theorem~\ref{thm: find grid minor or good linkage} immediately follows from Theorem~\ref{thm: improvement step}.
We start with an arbitrary $A$-$B$ linkage $\lset$, and iterate. While $\lset$ is not a good linkage, we apply Theorem~\ref{thm: improvement step} to it. If the outcome is a model of the $\left(h_1\times h_1\right )$-grid minor, then we terminate the algorithm and return this model.  Otherwise, if $\lset'$ is a good linkage, then we compute a subset $\pset\subseteq \lset'$ of paths as described above and terminate the algorithm.   Otherwise, we replace $\lset$ with $\lset'$ and continue to the next iteration. After $O(w)$ iterations, the number of degree-$2$ vertices in the graph $H(\lset)$ is guaranteed to fall below $8h_1+1$ (unless the algorithm terminates earlier). It now remains to prove Theorem~\ref{thm: improvement step}.

\subsection{Proof of Theorem~\ref{thm: improvement step}}
Since $\lset$ is not a good $A$--$B$ linkage, there is a $2$-path $R^*=(u_{P_0},\ldots,u_{P_{8h_1}})$ of length $8h_1+1$ in the corresponding graph $H=H_{\lset}$. Let $z=2h_1$, and consider the following four subsets of paths: $\pset_1=\set{P_1,\ldots,P_z}$, $\pset_2=\set{P_{z+1},\ldots,P_{2z}}$, $\pset_3=\set{P_{2z+1},\ldots,P_{3z}}$, and $\pset_4=\set{P_{3z+1},\ldots,P_{4z}}$, whose corresponding vertices participate in the $2$-path $R^*$. (Notice that $P_0\not\in \pset_1$, but the degree of $u_{P_0}$ is $2$ in $H(\lset)$ - we use this fact later). Let $X\subseteq A$ be the set of the endpoints of the paths in $\pset_2$ that belong to $A$, and let $Y\subseteq B$ be the set of the endpoints of the paths in $\pset_4$ that belong to $B$ (see Figure~\ref{fig-setup}). Since $A,B$ are linked in $G$, we can find a set $\qset$ of $z$ disjoint paths connecting $X$ to $Y$ in $G$. We view the paths in $\qset$ as directed from $X$ to $Y$


Let $Q\in \qset$ be any such path. Observe that, since $R^*$ is a $2$-path in $H(\lset)$, path $Q$ has to either intersect all paths in $\pset_1$, or all paths in $\pset_3$ before it reaches $Y$. Therefore, it must intersect $P_{z+1}$ or $P_{2z}$. Let $v$ be the last vertex of $Q$ that belongs to $P_{z+1}\cup P_{2z}$. Let $Q'$ be the segment of $Q$ starting from $v$ and terminating at a vertex of $Y$. Assume first that $v\in P_{z+1}$. We say that $Q$ is a type-1 path in this case. Let $u$ be the first vertex on $Q'$ that belongs to $P_0$. (Such a vertex must exist again due to the fact that $R^*$ is a $2$-path.) Let $Q^*$ be the segment of $Q'$ between $v$ and $u$. Then
$Q^*$ intersects every path in $\pset_1\cup\set{P_0,P_{z+1}}$, and does not intersect any other path in $\lset$, while $|V(Q^*)\cap V(P_0)|=|V(Q^*)\cap P_{z+1}|=1$ (see Figure~\ref{fig-setup}).

Similarly, if $v\in P_{2z}$, then we say that $Q$ is a type-2 path. Let $u$ be the first vertex of $Q'$ that belongs to $P_{3z+1}$, and let $Q^*$ be the segment of $Q'$ between $u$ and $v$. Then $Q^*$ intersects every path in $\pset_3\cup\set{P_{2z}\cup P_{3z+1}}$, and does not intersect any other path in $\lset$, while $|V(Q^*)\cap V(P_{2z})|=|V(Q^*)\cap V(P_{3z+1})|=1$. 

\begin{figure}[h]
\scalebox{0.6}{\includegraphics{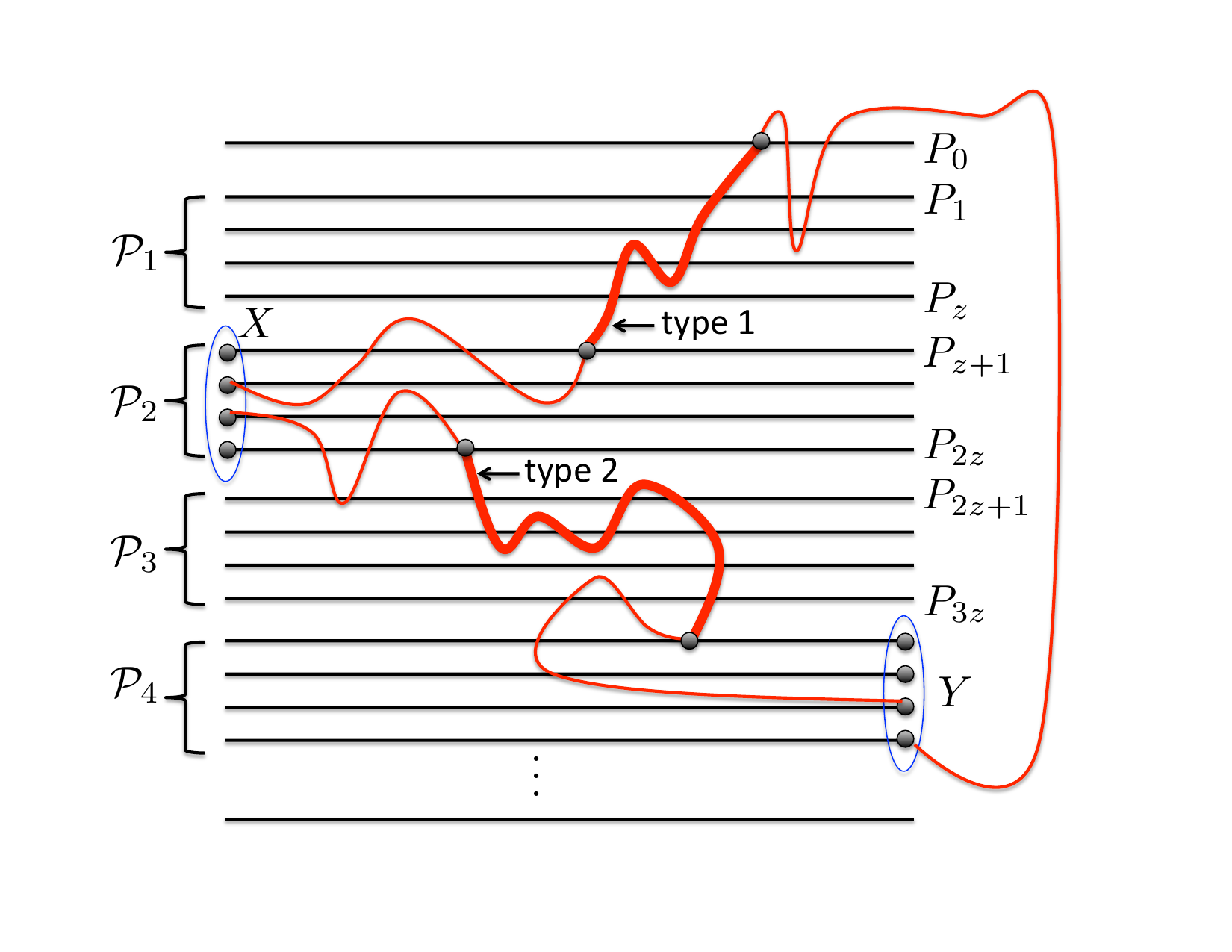}}
\caption{Two examples for paths in $\qset$ - a type-1 and a type-2 path - are shown in red, with the $Q^*$ segment highlighted.\label{fig-setup}}
\end{figure}

 Clearly, either at least half the paths in $\qset$ are type-1 paths, or at least half the paths in $\qset$ are type-2 paths. We assume without loss of generality that the former is true.
 Let $\qset'$ be the set of the sub-paths $Q^*$ for all type-1 paths $Q\in \qset$, that is, $\qset'=\set{Q^*\mid Q\in \qset \mbox{ and $Q$ is type-1}}$. Then $|\qset'|\geq z/2=h_1$.

The rest of the proof is based on the following idea. We will show that either the graph obtained from the union of the paths in $\qset'\cup \pset_1$ is a planar graph, in which case we recover a grid minor directly, or we will find a new $A$-$B$ linkage $\lset'$, such that $H(\lset')$ contains fewer degree-2 vertices than $H(\lset)$. To accomplish this we will iteratively simplify the intersection pattern of the paths in $\qset'$ and $\pset_1'$.

The algorithm performs a number of iterations. Throughout the algorithm, the set $\qset'$ of paths remains unchanged. The input to every iteration consists of a set $\pset_1'$ of paths, such that  the following hold:

\begin{itemize}
\item $\lset'=(\lset\setminus \pset_1)\cup \pset_1'$ is an $A$-$B$ linkage;

\item the graphs $H=H(\lset)$ and $H'=H(\lset')$ are isomorphic to each other, where the vertices $u_P$ for $P\not\in \pset_1$ are mapped to themselves; and

\item every path in $\qset'$ intersects every path in $\pset_1'\cup \set{P_0,P_{z+1}}$, and no other paths of $\lset'$.
 \end{itemize}

 The input to the first iteration is $\pset_1'=\pset_1$. Throughout
 the algorithm, we maintain a graph $\tilde H$ - the subgraph of $G$
 induced by the edges participating in the paths of $\pset_1'\cup
 \qset'$. We define below two combinatorial objects: a bump and a
 cross. We show that if $\tilde H$ has either a bump or a cross, then
 we can find a new set $\pset''_1$ of paths, such that
 $\lset''=(\lset'\setminus \pset_1')\cup \pset_1''$ is an $A$-$B$
 linkage. Moreover, either $H''=H(\lset'')$ contains fewer degree-$2$
 vertices than $H'$, or the two graphs are isomorphic to each
 other. In the former case, we terminate the algorithm and return the
 linkage $\lset''$. In the latter case, we show that we obtain a valid
 input to the next iteration, and $|E(\qset')\cup E(\pset_1')|>
 |E(\qset')\cup E(\pset_1'')|$. In other words, the number of edges in
 the graph $\tilde H$ strictly decreases in every iteration. We also show that,
 if $\tilde H$ contains no bump and no cross, then a large subgraph of
 $\tilde H$ is planar, and contains a grid minor of size $(h_1\times
 h_1)$. Therefore, after $|E(G)|$ iterations the algorithm is
 guaranteed to terminate with the desired output. We now proceed to
 define the bump and the cross, and their corresponding actions. We recall
 the useful observation that for any $A$-$B$ linkage $\lset'$, the
 corresponding graph $H(\lset')$ is a connected graph, since $G$ is
 connected. 

 \paragraph{A bump.} Let $\pset_1'$ be the current set of paths, and
 $\lset'=(\lset\setminus \pset_1)\cup \pset_1'$ the corresponding
 linkage.  We say that the corresponding graph $\tilde H$ contains a
 bump, if there is a sub-path $Q'$ of some path $Q\in \qset'$, whose
 endpoints, $s$ and $t$, both belong belong to the same path $P_j\in
 \pset_1'$, and all inner vertices of $Q'$ are disjoint from all paths
 in $\pset_1'$. (See Figure~\ref{fig: bump}).  Let $a_j\in A,b_j\in B$
 be the endpoints of $P_j$, and assume that $s$ appears before $t$ on
 $P_j$, as we traverse it from $a_j$ to $b_j$. Let $P'_j$ be the path
 obtained from $P_j$, by concatenating the segment of $P_j$ between
 $a_j$ and $s$, the path $Q'$, and the segment of $P_j$ between $t$
 and $b_j$.

\begin{figure}[h]
\scalebox{0.6}{\includegraphics{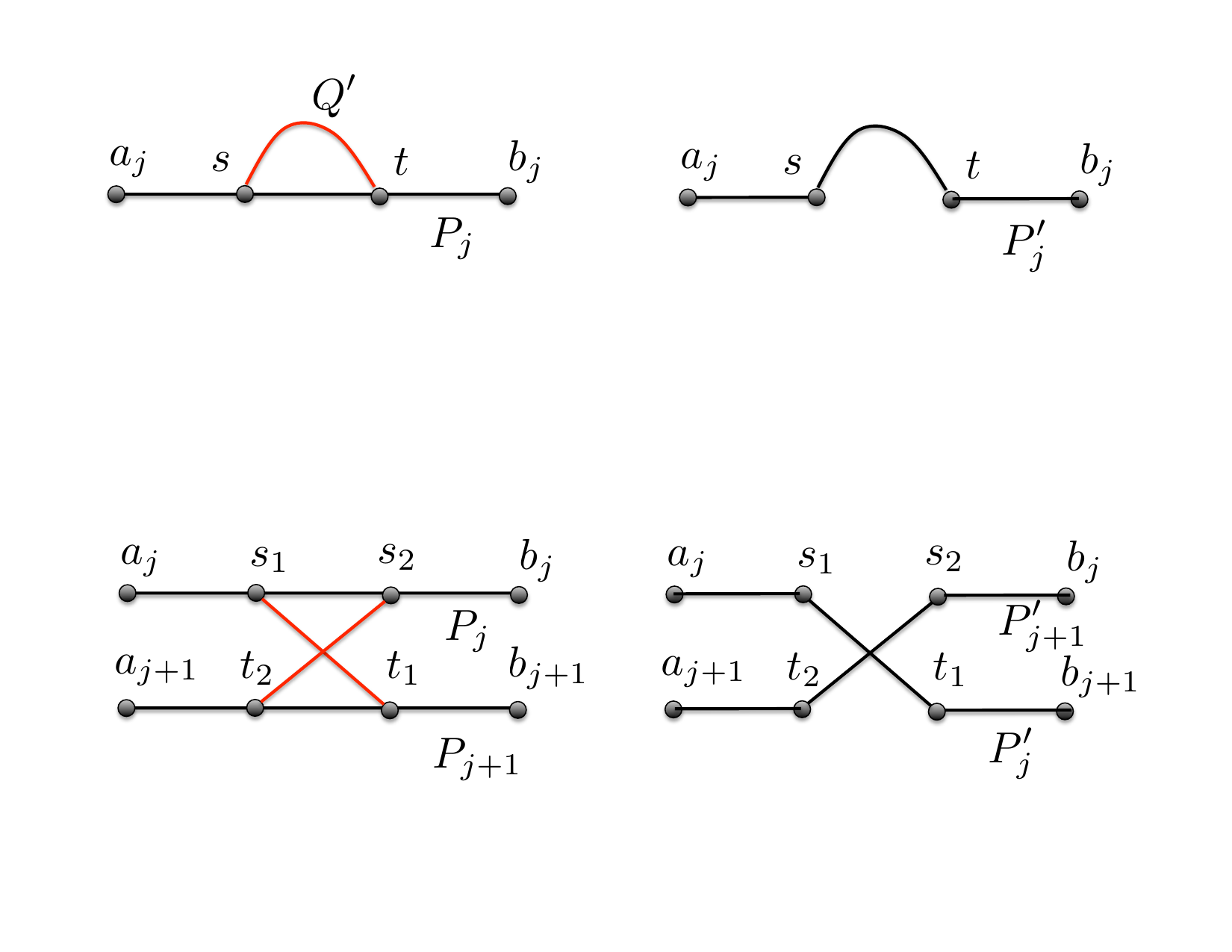}}
\caption{A bump and the corresponding action.\label{fig: bump}}
\end{figure}

Let $\pset_1''$ be the set of paths obtained by replacing $P_j$ with
$P'_j$ in $\pset_1'$, and let $\lset''=(\lset'\setminus \pset_1')\cup
\pset_1''=(\lset \setminus \pset_1)\cup \pset_1''$. It is immediate to
verify that $\lset''$ is an $A$-$B$ linkage. Let $H'=H(\lset')$, and
$H''=H(\lset'')$, and let $E'$ be the set of edges in the symmetric
difference of the two graphs (that is, edges, that belong to exactly
one of the two graphs). Then for every edge in $E'$, both endpoints
must belong to the set $\set{u_{P_{j-1}},u_{P_j},u_{P_{j+1}}}$; this is
because the vertices $\{u_P \mid P \in \pset_1\}$ are part of a $2$-path
in $H(\lset)$. In
particular, the only vertices whose degree may be different in the two
graphs are $u_{P_{j-1}},u_{P_j},u_{P_{j+1}}$. If the degree of any one
of these three vertices is different in $H''$ and $H'$, then, since
their degrees are $2$ in both $H'$ and the original graph $H$, we
obtain a new $A$-$B$ linkage $\lset''$, such that $H(\lset'')$
contains fewer degree-$2$ vertices than $H$. Otherwise, if the degrees
of all three vertices remain equal to $2$, then it is immediate to
verify that $H''$ is isomorphic to $H'$, where each vertex is mapped
to itself, except that we replace $u_{P_j}$ with $u_{P_{j'}}$. It is
easy to verify that all invariants continue to hold in this case. Let
$\tilde H$ be the graph obtained by the union of the paths in
$\pset_1'$ and $\qset'$, and define $\tilde H'$ similarly for
$\pset_1''$ and $\qset'$. Then $\tilde H'$ contains fewer edges than
$\tilde H$, since the portion of the path $P_j$ between $s$ and $t$
belongs to $\tilde H$ but not to $\tilde H'$.

\paragraph{A cross.}
Suppose we are given two disjoint paths $Q_1',Q_2'$, where $Q_1'$ is a
sub-path of some path $Q_1\in \qset'$, and $Q_2'$ is a sub-path of
some path $Q_2\in \qset'$ (with possibly $Q_1=Q_2$). Assume that the
endpoints of $Q_1'$ are $s_1,t_1$ and the endpoints of $Q_2'$ are
$s_2,t_2$. Moreover, suppose that $s_1,s_2$ appear on some path
$P_j\in \pset_1'$ in this order, and $t_2,t_1$ appear on $P_{j+1}\in
\pset_1'$ in this order (where the paths in $\pset_1'$ are directed
from $A$ to $B$), and no inner vertex of $Q_1'$ or $Q_2'$ belongs to
any path in $\pset_1'$. We then say that $Q_1',Q_2'$ are a cross. (See
Figure~\ref{fig: cross}.)

\begin{figure}[h]
\scalebox{0.6}{\includegraphics{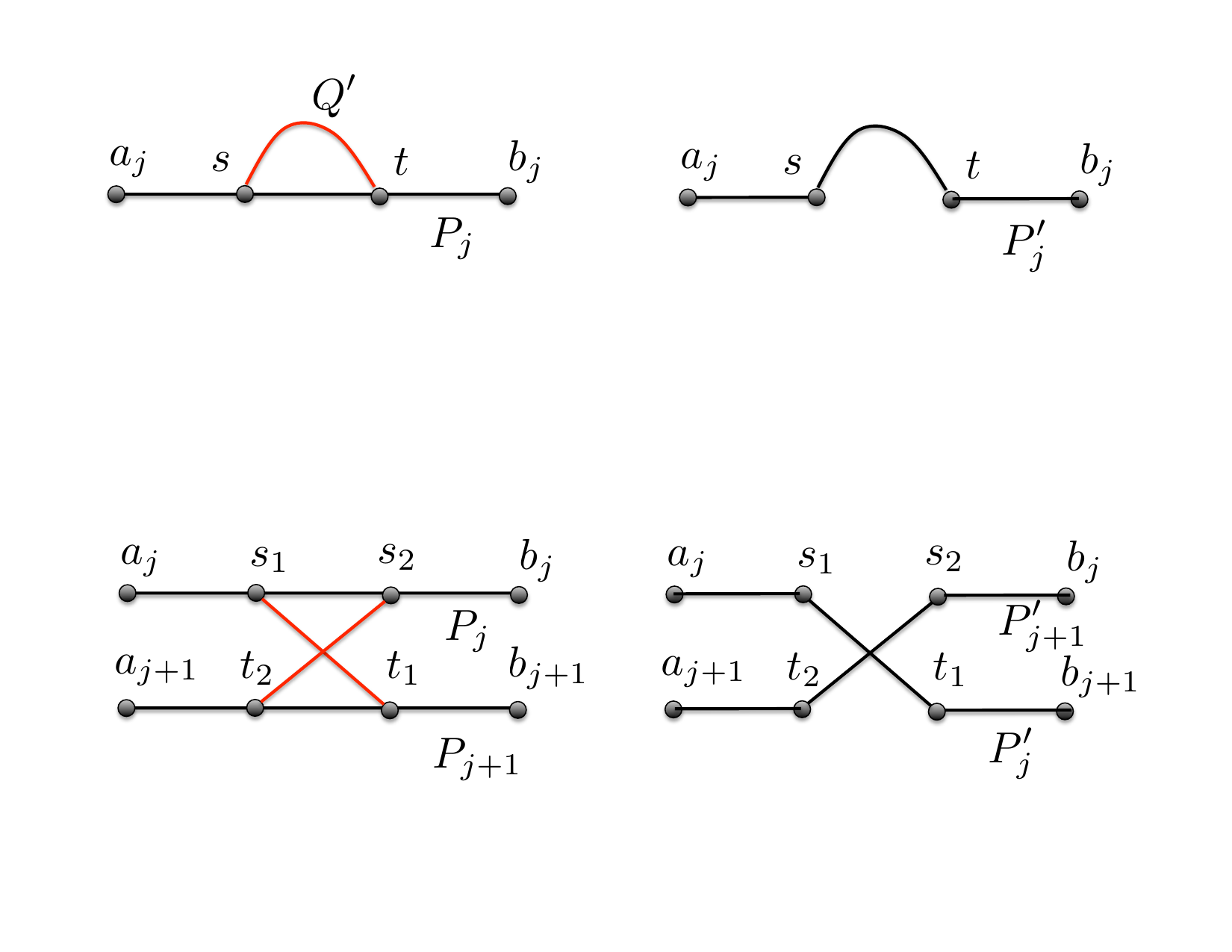}}
\caption{A cross and the corresponding action.\label{fig: cross}}
\end{figure}

Given a cross as above, we define two new paths, as follows. Assume
that the endpoints of $P_j$ are $a_j\in A$, $b_j\in B$, and similarly
the endpoints of $P_{j+1}$ are $a_{j+1}\in A$, $b_{j+1}\in B$. Let
$P'_j$ be obtained by concatenating the segment of $P_j$ between $a_j$
and $s_1$, the path $Q_1'$, and the segment of $P_{j+1}$ between $t_1$
and $b_{j+1}$. Let $P'_{j+1}$ be obtained by concatenating the segment
of $P_{j+1}$ between $a_{j+1}$ and $t_2$, the path $Q_2'$, and the
segment of $P_j$ between $s_2$ and $b_j$. We obtain the new set
$\pset''_1$ of paths by replacing $P_j,P_{j+1}$ with $P'_j,P'_{j+1}$
in $\pset_1'$. Let $\lset''=(\lset'\setminus \pset_1')\cup
\pset_1''=(\lset \setminus \pset_1)\cup \pset_1''$. It is immediate to
verify that $\lset''$ is an $A$-$B$ linkage. As before, let
$H'=H(\lset')$, and $H''=H(\lset'')$, and let $E'$ be the set of edges
in the symmetric difference of the two graphs. Then for every edge in
$E'$, both endpoints must belong to the set
$\set{u_{P_{j-1}},u_{P_j},u_{P_{j+1}},u_{P_{j+2}}}$. Again, this is
because the vertices $\{u_P \mid P \in \pset_1\}$ are part of a
$2$-path in $H(\lset)$.  The only vertices whose degree may be
different in the two graphs are
$u_{P_{j-1}},u_{P_j},u_{P_{j+1}},u_{P_{j+2}}$. If the degree of any
one of these four vertices is different in $H''$ and $H'$, then, since
their degrees are $2$ in both $H'$ and the original graph $H$, we
obtain a new $A$-$B$ linkage $\lset''$, such that $H(\lset'')$
contains fewer degree-$2$ vertices than $H$. Otherwise, if the degrees
of all four vertices remain equal to $2$, then it is immediate to
verify that $H''$ is isomorphic to $H'$, where each vertex is mapped
to itself, except that we replace $u_{P_j}, u_{P_{j+1}}$ with
$u_{P_{j}'}, u_{P_{j+1}'}$ (possibly switching them). It is easy to
verify that all invariants continue to hold in this case. Let $\tilde
H$ be the graph obtained by the union of the paths in $\pset_1'\cup
\qset'$, and define $\tilde H'$ similarly for $\pset_1''\cup
\qset'$. Then $\tilde H'$ contains fewer edges than $\tilde H$, since
the portion of the path $P_j$ between $s_1$ and $s_2$ belongs to
$\tilde H$ but not to $\tilde H'$.

We are now ready to complete the description of our algorithm. We start with $\pset_1'=\pset_1$, and then iterate. In every iteration, we construct a graph $\tilde H$ --- the subgraph of $G$ induced by $\pset_1'\cup \qset'$. If $\tilde H$ contains a bump or a cross, we apply the appropriate action. If the resulting linkage $\lset''$ has the property that $H(\lset'')$ has fewer degree-2 vertices than $H(\lset)$, then we terminate the algorithm and return $\lset''$. Otherwise, we obtain a valid input to the next iteration, and moreover, the number of edges in the new graph $\tilde H$ strictly decreases. Therefore, we are guaranteed that within $O(|E(G)|)$ iterations, either the algorithm terminates with the desired linkage $\lset''$, or the graph $\tilde H$ contains no bump and no cross. We now assume that the latter happens.

Consider the final graph $\tilde H$. For each path $Q\in \qset'$, let $v_Q$ be the first vertex of $Q$ that belongs to $V(\pset_1')$, and let $u_Q$ be the last vertex of $Q$ that belongs to $V(\pset_1')$. Let $\tilde Q$ be the sub-path of $Q$ between $v_Q$ and $u_Q$. Delete from $\tilde H$ all vertices of $V(Q)\setminus V(\tilde Q)$ for all $Q\in \qset'$, and let $\tilde H'$ denote the resulting graph. Let $\tilde \qset=\set{\tilde Q\mid Q\in \qset}$. We need the following claim. 

\begin{claim} If $\tilde H$ contains no cross and no bump, then $\tilde H'$ is planar.
\end{claim}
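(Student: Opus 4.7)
The strategy is to produce an explicit plane drawing of $\tilde H'$. I would begin by fixing a ``backbone'' drawing of the paths of $\pset_1'$: draw $P_1,\ldots,P_z$ as $z$ parallel horizontal segments, stacked top-to-bottom at heights $y=-1,-2,\ldots,-z$, each oriented from its $A_i$-endpoint $a_j$ on the left to its $B_i$-endpoint $b_j$ on the right. This fixes, for each $j$, a left-to-right linear ordering of the intersection points of the paths in $\tilde\qset$ with $P_j$. For each $\tilde Q\in\tilde\qset$, list its intersection points with $V(\pset_1')$ in the order they appear along $\tilde Q$ as $w_1^{\tilde Q},\ldots,w_{k(\tilde Q)}^{\tilde Q}$, with $w_\ell^{\tilde Q}\in P_{j_\ell}$, and call the subpath of $\tilde Q$ between $w_\ell^{\tilde Q}$ and $w_{\ell+1}^{\tilde Q}$ an \emph{elementary segment}; its interior is disjoint from $V(\pset_1')$ by construction, and by the no-bump hypothesis we have $j_\ell\neq j_{\ell+1}$.

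The first main step is to handle elementary segments whose endpoints lie on two \emph{adjacent} paths $P_j,P_{j+1}$. For these, I plan to draw the segment as a simple arc inside the horizontal strip between $y=-j$ and $y=-(j+1)$. The no-cross hypothesis applied to any two such segments $\sigma_1,\sigma_2$ (possibly belonging to the same $\tilde Q$ or to different ones) is exactly the statement that their endpoints on $P_j$ and $P_{j+1}$ cannot appear in opposite left-to-right orders on the two paths; hence the bipartite matching between the $P_j$-endpoints and $P_{j+1}$-endpoints of these segments is order-preserving, and such a matching is always realizable by disjoint arcs in a strip. Together with the subarcs of $P_j$ and $P_{j+1}$ between consecutive endpoints, this produces a crossing-free drawing of the ``adjacent part'' of $\tilde H'$ in each strip.

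The second main step is to deal with elementary segments that span non-adjacent paths $P_j,P_{j'}$ with $j'-j\geq 2$. Such a segment $\sigma$ has its interior disjoint from \emph{every} $P_{j''}$ with $j\leq j''\leq j'$; I would route it in the drawing by taking it out of the strip picture and around the right-hand ends of $P_{j+1},\ldots,P_{j'-1}$ (equivalently, through the unbounded face of the backbone drawing). To control these ``long-range'' arcs, the plan is to argue that they are essentially forced to nest: if two long-range segments have their endpoints interleaved on the paths they touch, then by combining the no-bump hypothesis (which prevents either segment from re-entering a $P_{j''}$ it is skipping) with a two-step application of the no-cross hypothesis on a pair of adjacent paths in their common span, one derives that they form a forbidden cross. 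Hence one can linearly order all long-range segments by inclusion of spans (breaking ties by endpoint position) and draw them as disjoint nested arcs outside the backbone, after which the previously constructed adjacent drawing is glued inside.

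The hard part, as suggested above, will be the long-range case: the no-cross hypothesis is only stated for segments whose endpoints live on a single adjacent pair $P_j,P_{j+1}$, so the combinatorial work is to show that this local condition, together with the no-bump condition, already implies the global nesting property needed to embed the long-range segments without crossings. I expect this to reduce, after a little case analysis, to the following clean statement: for any two elementary segments $\sigma,\sigma'$, their endpoint pairs on the (at most two) common paths $P_{j''}\in\pset_1'$ that they both touch must be non-interleaved; once this is established for all pairs, a standard outer-to-inner peeling argument produces the required planar embedding of $\tilde H'$.
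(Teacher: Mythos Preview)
Your first step (the ``adjacent'' case) is fine and is essentially what the paper does once it knows all elementary segments are adjacent. The gap is your second step: you assume that elementary segments can jump from $P_j$ to $P_{j'}$ with $j'-j\ge 2$, and you try to embed these ``long-range'' arcs around the outside using a nesting argument derived from the no-cross condition. This does not work, for two reasons.

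First, the no-cross hypothesis cannot be applied to a long-range segment. A cross, as defined here, involves two subpaths whose endpoints lie on \emph{the same pair of consecutive} paths $P_j,P_{j+1}$, with interiors disjoint from $V(\pset_1')$. A long-range segment from $P_j$ to $P_{j'}$ has no vertex on any $P_{j''}$ with $j<j''<j'$ (its interior avoids $V(\pset_1')$), so there is no ``pair of adjacent paths in the common span'' on which you can invoke the cross condition for it. Your ``two-step application'' has nothing to apply to. Consequently you cannot deduce the nesting property you need, and your plane drawing of $\tilde H'$ is not established.

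Second, and more to the point, long-range segments do not exist, and this is the observation you are missing. The invariant maintained across iterations is that $H(S_i,\lset_i')$ is isomorphic to the original $H_i$, in which the vertices corresponding to $P_0,\ldots,P_{h_0-1}$ form a $2$-path. Now take any elementary segment $\sigma$: its interior is disjoint from $V(\pset_1')$ by construction, and it is a subpath of some $\tilde Q$; by the third invariant the corresponding $Q\in\qset'$ meets no path of $\lset_i'$ outside $\pset_1'\cup\{P_0,P_{z+1}\}$, and the trimming that defines $\tilde Q$ removes the single vertices where $Q$ touches $P_0$ and $P_{z+1}$. Hence the interior of $\sigma$ is disjoint from \emph{all} of $\lset_i'$, so $\sigma$ witnesses an edge $(u_{P_j},u_{P_{j'}})$ in $H(S_i,\lset_i')$. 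Since that graph restricted to the $\pset_1'$-vertices is a $2$-path, this forces $|j-j'|\le 1$; combined with the no-bump assumption, $|j-j'|=1$. With this in hand your first step already finishes the proof: draw the $P_j$ as horizontal lines and, in each strip, draw the segments between $P_j$ and $P_{j+1}$ as non-crossing arcs, which is possible precisely because the no-cross condition makes the induced matching between the two sides non-interleaving.
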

\begin{proof}
Consider some path $\tilde Q\in \tilde \qset$. Delete from $\tilde Q$ all edges that participate in the paths in $\pset_1'$, and let $\Sigma(\tilde Q)$ be the resulting set of sub-paths of $\tilde Q$. While some path $\sigma\in \Sigma(\tilde Q)$ contains a vertex $v\in V(\pset_1')$ as an inner vertex, we replace $\sigma$ with two sub-paths, where each subpath starts at one of the endpoints of $\sigma$ and terminates at $v$.
Let $\Sigma=\bigcup_{\tilde Q\in \tilde \qset}\Sigma(\tilde Q)$ be the resulting set of paths.
Then for each path $\sigma\in \Sigma$, both endpoints of $\sigma$ belong to $V(\pset_1')$, and the inner vertices are disjoint from $V(\pset_1')$. Moreover, since the paths in $\pset_1'$ induce a $2$-path in the corresponding graph $H(\lset')$, and since there are no bumps, the endpoints of each such path $\sigma$ connect two consecutive paths in $\pset_1'$. Since no crosses are allowed, it is easy to see that the graph $\tilde H'$ is planar.
\end{proof}

We now show how to construct a grid minor in graph $\tilde H'$. We start from the union of the paths in $\pset_1'$ and $\tilde \qset$, and perform the following transformation. We say that a segment $\sigma$ of a path $Q\in \tilde \qset$ is a \emph{hill} if and only if (i) the endpoints $s,t$ of $\sigma$ lie on some path $P_i\in \pset_1'$; (ii) the segment $\sigma'$ of $P_i$ whose endpoints are $s$ and $t$ does not contain any vertex of $V(\tilde\qset\setminus\set{Q})$; and (iii) $\sigma$ intersects $P_{i-1}$ and is internally disjoint from all vertices of $V(\pset_1'\setminus\set{P_{i-1}})$.  While there is a hill in $\pset_1'\cup \tilde \qset$, we modify the corresponding path $Q$ by replacing the segment $\sigma$ with $\sigma'$. If this creates a cycle on $Q$ (this can happen if $\sigma'$ contained a vertex of $Q$), we discard all such cycles until $Q$ becomes a simple path. We continue performing such transformations, until there is no hill in the set $\pset_1'\cup \tilde \qset$ of paths. Notice that this transformation cannot create any bumps. 
We need the following claim:

\begin{claim}\label{claim: grid}
When the above algorithm terminates, for all $P_i\in \pset_1'$ and $Q\in \tilde \qset$, $P_i\cap Q$ is a path.
\end{claim}

Notice that it is now immediate to obtain the $(h_1\times h_1)$-grid minor from the union of the paths in $\pset_1'\cup \tilde \qset$, by first contracting every path $P_i\cap Q$ for all $P_i\in \pset_1'$ and $Q\in \tilde \qset$, and then suppressing all degree-2 vertices, after which we discard the $h_1$ extra rows.  It is now enough to prove Claim~\ref{claim: grid}.

\begin{proof}
Assume otherwise. Then there must be some path $Q\in \qset$, and some segment $\sigma$ of $Q$, whose two endpoints $s,t$ lie on some path $P_i\in \pset_1'$, such that $\sigma$ intersects $P_{i-1}$, and it is internally disjoint from  $V(\pset_1'\setminus\set{P_i})$. Notice that since there are no bumps, the intersection of $Q$ and $P_{i-1}$ is a path. Among all such pairs $(Q,P_i)$, choose the one maximizing $i$. Since $\sigma$ is not a hill, there must be some path $Q'\neq Q$ in $\tilde \qset$ that intersects the segment $\sigma'$ of $P_i$, lying between $s$ and $t$. Let $v$ be any vertex in $Q'\cap \sigma'$, and let $\sigma''$ be the longest contiguous sub-path of $Q'$ contained in $\sigma'$. Let $u$ be the last vertex of $Q'$ before $\sigma''$ that belongs to $V(\pset_1')$, and let $u'$ be the first vertex of $Q'$ after $\sigma''$ that belongs to $V(\pset_1')$. Then it is easy to verify that $u,u'\in V(P_{i+1})$ (and in particular $i\neq 2h_1$). But then we should have chosen the pair $(Q',P_{i+1})$ instead of $(Q,P_i)$, a contradiction.
\end{proof}

\section{Proof of Corollary~\ref{cor: paths from the path-set system}}
We say that a cluster $S_i\in \sset$ is \emph{even} if $i$ is even,
and otherwise we say that $S_i$ is \emph{odd}.  We apply
Theorem~\ref{thm: find grid minor or good linkage} to graph $G[S_i]$
for every even cluster $S_i$, using $A=A_i$ and $B=B_i$. If, for any
even cluster $S_i$, the outcome is the $(h_1\times h_1)$-grid minor,
then we terminate the algorithm and return the model of this
minor. Therefore, we assume that for every even index $i$,
Theorem~\ref{thm: find grid minor or good linkage} returns a
collection $\lset_i$ of $h_2$ node-disjoint paths contained in
$G[S_i]$, that connect some subset $A_i'\subseteq A_i$ of $h_2$
vertices to a subset $B_i'\subseteq B_i$ of $h_2$ vertices, such that
for every pair $P,P'\in \lset_i$ of paths, there is a path
$\beta_i(P,P')$ in $G[S_i]$, connecting a vertex of $P$ to a vertex of
$P'$, where $\beta_i(P,P')$ is internally disjoint from $V(\lset_i)$.

Fix some $1\leq i\leq \floor{\ell/2}$. Let $A'_{2i}\subseteq A_{2i}$
and $B'_{2i}\subseteq B_{2i}$ be the sets of endpoints of the paths in
$\lset_{2i}$. Let $\lset^-_{2i-1}\subseteq \pset_{2i-1}$ be the set of
paths terminating at the vertices of $A'_{2i}$. If $2i<\ell$, then let
$\lset^+_{2i}\subseteq \pset_{2i}$ be the set of paths originating at
the vertices of $B'_{2i}$; otherwise, let $\lset^+_{2i}$ contain $h_2$
paths, each of which consists of a single distinct vertex of
$B_{2i}'$.

We use the odd clusters, via the well-linkedness properties of the
clusters, to connect the path collections from the even clusters into
the desired path collection $\qset$.

Consider now some odd-indexed cluster $S_{i}$. If $i\neq 1$, then let
$A'_{i}\subseteq A_{i}$ be the set of vertices where the paths of
$\lset^+_{i-1}$ terminate, and otherwise let $A'_{i}$ be any set of
$h_2$ vertices of $A_{i}$. If $i<\ell$, then let $B_{i}'\subseteq
B_{i}$ be the set of vertices where the paths of $\lset^-_{i+1}$
originate, and otherwise let $B_i'$ be any set of $h_2$ vertices of
$B_{i}$. Since $A_{i},B_{i}$ are linked in $G[S_i]$, there is a set
$\rset_{i}$ of $h_2$ node-disjoint paths, that are contained in
$G[S_{i}]$, and connect $A_{i}'$ to $B_{i}'$.

We now define a the set $\qset$ of  paths, obtained by the concatenation of all paths $\lset^-_i,\lset_i,\lset^+_i$ where $S_i$
is an even cluster, and paths $\rset_j$, where $S_j$ is an odd cluster.  The resulting set $\qset$ contains $h_2$ disjoint paths, originating at the vertices of $A_1$ and terminating at the vertices of $B_{\ell}$, where for every $1\leq i\leq \ell$, for every path $Q\in \qset$, $Q\cap S_i$ is a path, and $S_1\cap Q,S_2\cap Q,\ldots,S_{\ell}\cap Q$ appear on $Q$ in this order. Moreover, for every even integer $1\leq i\leq \ell$, for every pair $Q,Q'\in \qset$ of paths, there is a path $\beta_i(Q,Q')\subseteq G[S_i]$, that connects a vertex of $Q$ to a vertex of $Q'$, and is internally disjoint form all paths in $\qset$. It is immediate to verify that all paths in $\qset$ are contained in $G'$.

\subsection{Proof of Corollary~\ref{cor: from path-set system to grid minor}}

We apply Corollary~\ref{cor: paths from the path-set system} to the
path-of-sets system, with parameters $h_1=g$ and $h_2=g$, so $w\geq
16g^2+10g$ as required. If the outcome is the $(g\times g)$-grid
minor, then we terminate the algorithm and return its
model. Therefore, we assume that the outcome of Corollary~\ref{cor:
  paths from the path-set system} is a set $\qset$ of $g$ paths
connecting vertices of $A_1$ to vertices of $B_{\ell}$, that we denote
by $\qset=\set{Q_1,\ldots,Q_g}$ (the ordering is arbitrary).



Consider the following graph $G^*$. Start with a grid containing $g$
rows and $g(g-1)$ columns, with the columns indexed
$C_0,C_1,\ldots,C_{g(g-1)-1}$ from left to right. For all $0\leq
i<g(g-1)$, let $t_i=i\mod (g-1)$. We delete from the $i$th column all
edges except for $(t_i+1)$th edge from the top; thus, after this
operation the are exactly $g(g-1)$ vertical edges, one for each of the
columns. Finally, we repeatedly delete degree-$1$ vertices and
suppress degree-$2$ vertices (see Figure~\ref{fig: almost grid}). This
finishes the definition of the graph $G^*$. It is immediate to verify
that $G^*$ contains the $(g\times g)$-grid as a minor. In order to
finish the proof, it is enough to show that graph $G$ contains a
subdivision of $G^*$.  Each row $i$ corresponding to a horizontal path
of $G^*$ is mapped to the path $Q_i$. Each vertical edge of $G^*$ will
be mapped to a path in one of the even clusters as described below.
We denote the vertical edges of $G^*$ by
$e_0,e_1,\ldots,e_{g(g-1)-1}$, where $e_i$ is an edge that was lying
on column $C_i$ of the grid. For each $0\leq i<g(g-1)$, let $v_i,u_i$
be the two endpoints of $e_i$ in $G^*$; let $i_1$ be the index of the
row to which $v_i$ belongs, and assume without loss of generality that
$u_i$ belongs to row $(i_1+1)$. Let $P = Q_{i_1}$ the path that row
$i_1$ is mapped to, let $P' = Q_{i_1+1}$ the path that row $i_1+1$ is
mapped to.  In the graph $G[S_{2(i+1)}]$ there is path
$\beta_{2(i+1)}({P,P'})$ connecting a vertex $a$ lying on $P$ to a
vertex $b$ lying on path $P'$. We map the edge $e_i$ to path
$\beta_{2(i+1)}({P,P'})$, vertex $v_i$ to $a$ and $u_i$ to $b$.  Once
we complete the mapping of all vertices and vertical edges of $G^*$,
for every horizontal edge $e$ that lies say on the $j$th row of $G^*$,
we obtain a natural mapping of $e$ to the segment of $Q_j$ between the
two vertices to which the endpoints of $e$ are mapped.

\begin{figure}[h]
\centering
\scalebox{0.3}{\includegraphics{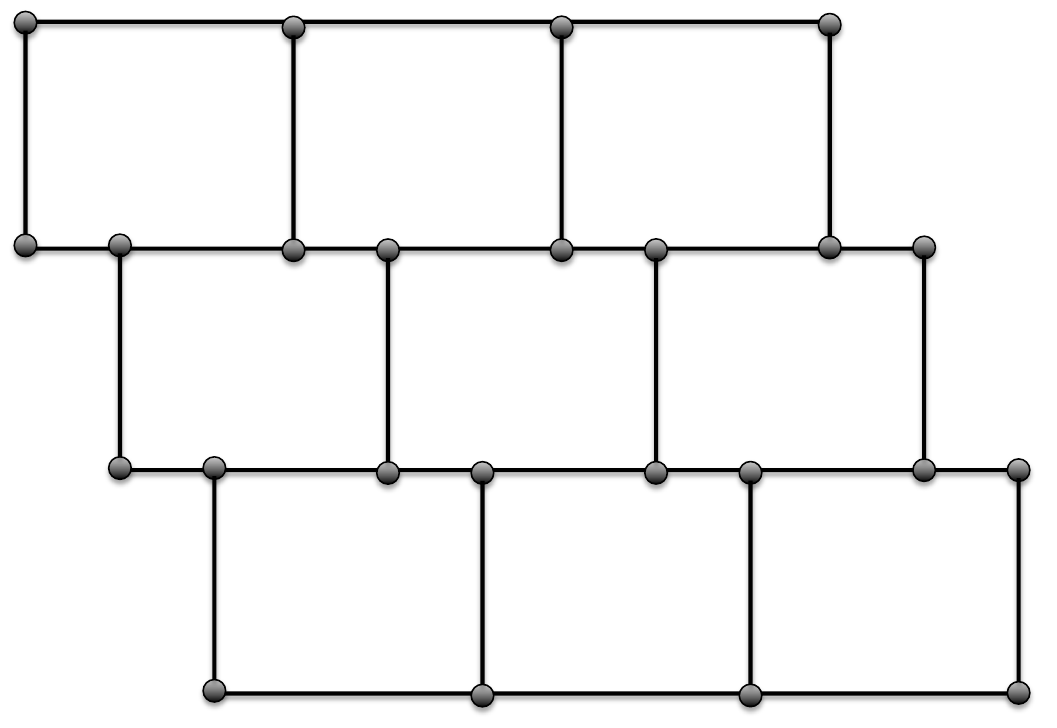}}
\caption{Graph $G^*$ for $g=4$. \label{fig: almost grid}}
\end{figure}

\label{------------------------------------------------------The End-------------------------------------------------------------------} 
\end{document}

 -------------------------------------------------------------------------------------

 We say that a path $P$ is \emph{direct} if and only if it does not contain the
 vertices of $\tset\cup\left (\bigcup_{j=1}^{\ell_0}S_j\right )$ as its
 inner vertices.
 
 We build the following graph $Z$. Let $h_1=\lfloor \frac{w_0}{2\Delta
   \ell_0^2}\rfloor$.  The vertices of $Z$ are
 $\set{v_1,\ldots,v_{\ell_0}}$, where vertex $v_j$ represents the set
 $S_j$. There is an edge $(v_j,v_{j'})$ for $j\neq j'$ in $Z$ iff
 there are at least $h_1$ direct node-disjoint paths connecting $S_j$
 to $S_{j'}$ in $G$. (Notice that this can be checked efficiently).
 
 It is easy to verify that graph $Z$ is connected: indeed, assume
 otherwise. Let $A$ be any connected component of $Z$, and let $B$
 contain the rest of the vertices. Let $v_j\in A$, $v_{j'}\in
 B$. Since there are at least $\lfloor \frac{w_0}{6\Delta}\rfloor $
 node-disjoint paths connecting $S_j$ and $S_{j'}$ in $G$, and these
 paths do not contain the terminals, there must be at least $\lfloor
 \frac{w_0}{6\Delta}\rfloor $ {\bf direct} node-disjoint paths
 connecting the vertices of $\bigcup_{v_i\in A}S_i$ to the vertices of
 $\bigcup_{v_i\in B}S_i$. Since $|A|+|B|=\ell_0$, at least one pair
 $(S_i,S_{i'})$ with $v_i\in A$, $v_{i'}\in B$, has at least $h_1$
 direct node-disjoint paths connecting them.

We apply Theorem~\ref{thm: many leaves or a long 2-path} to graph $Z$
with parameters $\ell=r$ and $p=2r$ (notice that $2\ell^2p\leq r$ as
required). We say that Case 1 happens if the outcome is a $2$-path $P$
in $Z$ containing $2r$ vertices. Otherwise, if the outcome is a tree
with $r$ leaves, then we say that Case 2 happens.  We consider each of
the two cases separately.

\subsubsection{Case 1}
Assume without loss of generality that $P=(v_1,\ldots,v_{2r})$, and
let $\rset'=\set{S_1,\ldots,S_{2r}}$.
We need the following claim:

\begin{claim}\label{claim: properties of path P}
  For each $1\leq i,i'\leq 2r$ with $|i-i'|>1$, $n(v_i,v_{i'})< h_1$.
  Moreover, there is at most one index $1\leq i<2r$, with
  $n(v_i,v_{i+1})< \frac{w_0}{48\Delta}$.
\end{claim}

\begin{proof}
  The first assertion is immediate from the fact that $P$ is a
  $2$-path in $Z$. We now turn to prove the second assertion.

Assume for contradiction that there are two indices $1\leq i<j<2r$
with $n(v_i,v_{i+1}),n(v_j,v_{j+1})< \frac{w_0}{48\Delta}$.  Remove
the edges $(i,i+1),(j,j+1)$ from path $P$ to obtain three segments of
$P$, and let $\sigma$ be the middle segment. Let $\aset\subseteq
\rset'$ be the set of all clusters whose corresponding vertex belongs
to $\sigma$, and let $\bset=\rset'\setminus \aset$. Since every pair
of clusters in $\rset'$ is connected by at least
$\floor{\frac{w_0}{6\Delta}}$ node-disjoint paths, there must be
$\floor{\frac{w_0}{6\Delta}}$ {\bf direct} node-disjoint paths
connecting the clusters in $\aset$ to the clusters in $\bset$. The
pairs $(S_i,S_{i+1})$ and $(S_j,S_{j+1})$ of clusters contribute at
most $\frac{h_1}{48\Delta}$ such paths each. For every other pair
$(S,S')$ with $S\in \aset$, $S'\in \bset$, $n(S,S')<h_1$, since $P$ is
a $2$-path in $Z$. The number of all such pairs $(S,S')$ is at most
$\ell_0^2/4$. Therefore, the maximum number of direct node-disjoint paths
connecting the clusters of $\aset$ to the clusters of $\bset$ is
bounded by $2\frac{w_0}{48\Delta}+\frac{\ell_0^2}{4}h_1\leq
\frac{w_0}{24\Delta}+\frac{\ell_0^2}{4}\cdot \frac{w_0}{2\Delta
  \ell_0^2}<\frac{w_0}{6\Delta}$, a contradiction.
\end{proof}

We conclude that there is a sub-path $P'$ of $P$ containing at least
$r$ vertices, where for every consecutive pair $v_i,v_{i+1}$ of
vertices on $P'$, there are at least $\frac{w_0}{48\Delta}$ disjoint
paths connecting $S_i$ to $S_{i+1}$ directly. We focus on $P'$ from
now on, and we assume without loss of generality that
$P'=(v_1,\ldots,v_r)$, and we let $\rset''=\set{S_1,\ldots,S_r}$. We
construct a tree-of-sets system $(\rset'',T^*,\bigcup_{e\in
  E(T^*)}\pset(e))$, where the tree $T^*$ is the path $P'$. It is
enough to define, for each consecutive pair $(v_i,v_{i+1})$ of
vertices of $P'$, a set $\pset_i^*$ of $h$ paths connecting the
vertices of $S_i$ to the vertices of $S_{i+1}$, such that all vertices
in $\bigcup_{i=1}^{r-1}\pset_i^*$ are node-disjoint. (We will also
need to ensure the bandwidth property of the sets $S_i$).

For each $1\leq i<r$, let $\pset_i$ be any set of $\lceil
\frac{w_0}{48\Delta}\rceil$ direct node-disjoint paths, connecting
$S_i$ to $S_{i+1}$.  While the paths in each set $\pset_i$ are
node-disjoint, the same vertex may belong to several paths in
$\bigcup_{i=1}^{r-1}\pset_i$. In our first step, we will define, for
each $i$, a subset $\pset_i'\subset\pset_i$ of
$h_2=\floor{\frac{w_0}{96\Delta}}\leq \half |\pset_i|$ paths, such
that every vertex of $\bigcup_{i=1}^rS_i$ belongs to at most one path
in $\bigcup_{i'=1}^{r-1}\pset'_{i'}$ (or in other words, the paths in
$\bigcup_{i'=1}^{r-1}\pset'_{i'}$ have distinct endpoints).

We start by choosing an arbitrary subset $\pset_1'\subset \pset_1$ of
$h_2$ paths. Next, consider the set $\pset_2$ of paths. We delete from
$\pset_2$ all paths whose first vertex (that belongs to $S_2$) belongs
to some path of $\pset'_1$ (where it is the last vertex of that
path). This defines a new subset $\pset'_2\subseteq \pset_2$ of at
least $|\pset_2|-h_2\geq h_2$ paths. If $|\pset_2'|>h_2$, then we
delete arbitrary paths from $\pset'_2$, until $|\pset'_2|=h_2$. We
then continue to $\pset_3$. At the end of this procedure, we define,
for each $1\leq i<h$, a subset $\pset'_i\subseteq \pset_i$ of $h_2$
paths, such that for each $1\leq i'\leq r$, every vertex of $S_i$
serves as an endpoint of at most one path in
$\bigcup_{i=1}^{r-1}\pset_i'$ (and recall that these paths do not
contain the vertices of $\bigcup_{i'=1}^rS_{i'}$ as inner vertices).

Our next step is to define, for each consecutive pair $(S_i,S_{i+1})$,
a new subset $\pset^*_i$ of paths, such that the paths in
$\bigcup_{i'=1}^{r-1}\pset^*_{i'}$ are node-disjoint. Each path in
$\pset^*_i$ will start from a vertex of $S_i$, end at a vertex of
$S_{i+1}$, and will not contain any vertices of
$\bigcup_{i'=1}^rS_{i'}$ as inner vertices. We do so by setting an
appropriate single-source single-sink flow problem. This is a directed
flow network, with capacities on
vertices. 

For each $1\leq i<r$, let $B_i\subseteq S_i$ be the set of vertices
where the paths in $\pset_i'$ originate, and let $A_{i+1}\subseteq
S_{i+1}$ are the vertices where those paths terminate. For
convenience, we denote $A_1=B_r=\emptyset$. Recall that we have
ensured, that for each $1\leq i\leq r$, $A_i\cap B_i=\emptyset$.

In order to define the flow network, we start with the graph $G$.  We
then delete, for each $1\leq i\leq r$, all vertices of $S_i$ except
for the vertices in $A_i$ and $B_i$; unify the vertices of $A_i$ into
a new vertex $t_{i-1}$, and direct all edges incident on $t_{i-1}$
towards it. Similarly, we unify all vertices of $B_i$ into a new
vertex $s_i$, and direct all edges incident on $s_i$ away from it (see
Figure~\ref{figure: paths}). All other edges of $G$ are replaced with
bi-directed edges. Finally, we add a source $s$, that connects with a
directed edge to every vertex $s_i$ for $1\leq i<r$, and a sink $t$,
connecting every vertex $t_i$, for $2\leq i\leq r$ to $t$ with a
directed edge. Vertex capacities are defined as follows. The
capacities of $s$ and $t$ are $h_2$, the capacities of all vertices
$s_i$ and $t_i$ are $\lceil h_2/r\rceil$, and all other vertex
capacities are unit.

\begin{figure}[h]
\scalebox{0.6}{\includegraphics{paths2.pdf}}
\vspace{3mm}
\scalebox{0.65}{\includegraphics{paths3.pdf}}
\caption{Construction of the flow network. For $i\neq j$, paths in $\pset_i'$ and $\pset_j'$ may share inner vertices.\label{figure: paths}}
\end{figure}

Notice that the paths $\bigcup_{i=1}^{r-1}\pset'_i$ define a valid
flow of value $h_2$ in this flow network, where we send $1/r$ flow
units along each path. Therefore, from the integrality of flow, there
is an integral flow of the same value in this flow network. This
integral flow defines a collection $\qset$ of paths in the flow
network, where the paths are completely disjoint, except for sharing
$s,t$ and the vertices $\set{s_i,t_i}_{i=1}^r$. For each $1\leq i\leq
r-1$, let $\qset_i$ be the set of paths leaving $s_i$. Then
$|\qset_i|\leq \lceil h_2/r\rceil$ must hold for each $i$, and so for
each $i$, $|\qset_i|\geq \frac{h_2}{r}-r\geq \frac{0.9h_2}r$, since
$h_2=\floor{\frac{w_0}{96\Delta}}$, and \fbox{$w_0\geq 960\Delta
  \ell^3$}.  Let $\qset'_i\subseteq \qset_i$ be the subset of paths
connecting $s_i$ to $t_{i}$. We next show that $|\qset'_i|\geq
\frac{h_2}{2r}$.

\begin{claim}
For each $1\leq i\leq r-1$, $|\qset'_i|\geq \frac{h_2}{2r}$.
\end{claim}

\begin{proof}
  For each $1\leq j\leq r-1$, let $\qset_{i,j}\subseteq \qset_i$ be
  the subset of paths connecting $s_i$ to $t_j$. If $j\neq i$, then
  $\qset_{i,j}$ defines a collection of direct node-disjoint paths
  connecting $S_i$ to $S_{j+1}$. Therefore, from Claim~\ref{claim:
    properties of path P}, for all values $j$ where $|i-(j+1)|>1$
  $|\qset_{i,j}|\leq h_1=\floor{\frac{w_0}{2\Delta \ell_0^2}}$.

  We now bound $|\qset_{i,i-1}|$ and $|\qset_{i,{i-2}}|$. Paths in
  $\qset_{i,i-1}$ define, in graph $G$, a collection of paths,
  directly connecting $B_i$ to $A_i$ (that is, the paths in
  $\qset_{i,i-1}$ do not contain any vertices of $\bigcup_{S\in
    \rset''}S$ as inner vertices). Recall that each vertex of $A_i$
  has a path in $\pset'_{i-1}$ terminating at it and originating at a
  vertex of $S_{i-1}$. Similarly, every vertex of $B_{i}$ has a path
  in $\pset'_{i}$ originating at it and terminating at a vertex of
  $S_{i+1}$. So concatenating $\pset'_{i-1},\qset_{i,i-1},\pset'_i$,
  we obtain a set of $|\qset_{i,i-1}|$ direct paths connecting
  $S_{i-1}$ to $S_{i+1}$, with vertex congestion at most $3$. Sending
  $1/3$ flow units along each such path, we obtain a flow of value
  $|\qset_{i,i-1}|/3$ connecting $S_{i-1}$ to $S_{i+1}$, with
  congestion $1$ on vertices. From the integrality of flow, there are
  at least $\lceil\frac{|\qset_{i,i-1}|} 3\rceil$ direct node-disjoint
  paths connecting $S_{i-1}$ to $S_{i+1}$. From Claim~\ref{claim:
    properties of path P}, $|\qset_{i,i-1}|\leq 3h_1\leq
  \frac{3w_0}{2\Delta \ell_0^2}$.
 
  Similarly, paths in $\qset_{i,i-2}$ define, in graph $G$, a
  collection of paths, directly connecting $B_i$ to $A_{i-1}$. Each
  vertex in $A_{i-1}$ has a path in $\pset_{i-2}'$ terminating at it,
  whose first vertex belongs to $S_{i-2}$. Concatenating the paths in
  $\qset_{i,i-2}$ and $\pset_{i-2}'$, we obtain a set of
  $|\qset_{i,i-2}|$ paths connecting $S_i$ to $S_{i-2}$ directly, with
  vertex congestion at most $2$. Using the integrality of flow, there
  are at least $ |\qset_{i,i-2}|/2$ node-disjoint paths directly
  connecting $S_i$ to $S_{i-2}$. From Claim~\ref{claim: properties of
    path P}, $|\qset_{i,i+1}|<\frac{w_0}{\Delta \ell_0^2}$.

  Overall, $\sum_{j\neq i}|\qset_{i,j}|\leq \frac{(r+1)w_0}{2\Delta
    \ell_0^2}<\frac{h_2}{4r}$, since $h_2=\floor{\frac{w_0}{96\Delta}}$,
  $\ell_0=4\ell^3$, and we can assume that $r>3$. Since $|\qset_i|\geq 0.9
  h_2/r$, we conclude that $|\qset'_i|=|\qset_{i,i}|\geq |\qset_i|/2$.
 \end{proof}
 
 Observe that \fbox{$\frac{h_2}{2r}\geq h$}.  For each $1\leq i\leq
 r-1$, we delete paths from $\qset_i'$ until $|\qset_i'|=h$. We are
 now ready to define the tree-of-sets system
 $(\rset'',T^*,\bigcup_{e\in T^*}\pset_e)$. The tree $T^*$ is just a
 path connecting $v_1,\ldots,v_r$ in this order. For each $1\leq i\leq
 r-1$, the set $\pset^*_e$ of paths corresponding to the edge
 $e=(v_i,v_{i+1})$ of the tree is given by $\qset_i'$. Notice that all
 paths in $\qset'=\bigcup_{i=1}^{r-1}\qset'_i$ are node-disjoint, and
 they do not contain the vertices of $\bigcup_{S\in \rset''}S$ as
 inner vertices.  Let $G'$ be the subgraph of $G$ obtained by the
 union of $G[S]$ for $S\in \rset''$ and $\bigcup_{e\in T^*}\pset^*_e$.
 Finally, we need to verify that each set $S_i$ has the
 $\alphaWL$-bandwidth property in $G'$. Let $\Gamma_i$ be the
 interface of the set $S_i$ in $G'$. We set up a sparsest cut problem
 instance with the graph $G[S_i]$ and the set $\Gamma_i$ of terminals,
 and apply algorithm \algSC to it. If the outcome is a cut of sparsity
 less than $\alpha$, then, since $|\Gamma_i|<w_0$, we obtain a
 $(w_0,\alpha)$-violating cut of $S_i$ in graph $G$. We return this
 cut as the outcome of the current iteration. If \algSC returns a cut
 of sparsity at least $\alpha$ for each set $S_i$, for $1\leq i\leq
 r$, then we are guaranteed that each such set has the
 $\alphaWL$-bandwidth property in $G'$, and we have therefore
 constructed a good tree-of-sets system. (We are guaranteed that
 $S_i\cap \tset=\emptyset$ for each $i$, since each set $S_i\in \sset$
 only contains non-terminal vertices).

An attempt at extension.

In this section we provide slight generalizations of Theorem~\ref{thm: path-of-sets: main} and Theorem~\ref{thm: meta-tree}. We believe that these extensions will be useful in approximation algorithms for routing problems. We start with a generalization of Theorem~\ref{thm: meta-tree}. The proof requires a slight modification of the proof of Theorem~\ref{thm: meta-tree}.


\begin{theorem}\label{thm: meta-tree extended}
  Suppose we are given a graph $G$ of maximum vertex degree
  $\Delta$, and a subset $\tset$ of $k$ vertices called
  terminals, such that $\tset$ is node-well-linked in $G$, and the
  degree of every vertex in $\tset$ is $1$. Additionally, assume that
  we are given any parameters $r>1,h'>4\log k$, where $h'$ is an even integer, such that
  $k/\log^{4}k>c'h'\ell^{19}\Delta^8$, where $c'$ is a large enough
  constant. Assume that we are also given some subset $\tset'\subset
  \tset$ of $h'$ terminals.  Then there is an efficient randomized
  algorithm that with high probability computes a subgraph $G^*$ of
  $G$, and a tree-of-sets system $(\sset,T,\bigcup_{e\in
    E(T)}\pset_e)$ in $G^*$, with parameters $h',\floor{r}$
  and $\alphawl=\Omega\left (\frac 1 {\ell^2\log^{1.5}k}\right
  )$. Moreover,
  
  \begin{itemize}
  \item For all $S_i\in \sset$, $S_i\cap \tset=\emptyset$, and $S_i$ has the $\alphawl$-bandwidth property in $G^*$; 
  \item The set $\tset'$ of terminals is $\Omega(\alphawl/(\Delta\log^4k))$-well-linked in $G^*$; and
  \item There is a set $\qset$ of $\floor{2h'/3}$ node-disjoint paths
    connecting the terminals in $\tset'$ to the vertices of
    $\bigcup_{S_i\in \sset}S_i$ in graph $G^*$, such that the paths in
    $\qset$ do not contain any vertices of $V\left (\bigcup_{e\in
        E(T)}\pset_e\right )$, and are internally disjoint from
    $\bigcup_{S_i\in \sset}S_i$.
  \end{itemize}
\end{theorem}

(Notice that the definition of the tree-of-sets system only requires
that each set $S_i\in \sset$ has the $\alphawl$-bandwidth property in
the subgraph $G'$ of $G$ induced by the vertices of the tree-of-set
system. The Theorem requires a slightly stronger property, that
$S_i$ has the $\alphawl$-bandwidth property in the larger graph $G^*$.)

\begin{proof}
Let $h=3h'$. The proof very closely follows the proof of Theorem~\ref{thm: meta-tree}, using the parameters $h,r,\alphawl$. 
As before, if $(\sset,T,\bigcup_{e\in
  E(T)}\pset_e)$ is a tree-of-sets system in $G$, with parameters
$h,r,\alphawl$, and for each $S_i\in
\sset$, $S_i\cap \tset=\emptyset$, then we say that it is a \emph{good
  tree-of-sets system}.
We define the potential function, acceptable clustering, and good clustering exactly as before, using the parameters $h,r,\alphawl$. As before, the algorithm consists of a number of phases, where the input to every phase is a good clustering $\cset$ of $V(G)$, and the output is either another good clustering $\cset'$ with $\phi(\cset')\leq \phi(\cset)$, or a subgraph $G^*$ of $G$, together with a good tree-of-sets system in $G^*$, for which the conditions of the theorem hold. The initial clustering is defined exactly as before: $\set{\set{v}\mid v\in V(G)}$.

We now proceed to describe each phase. Suppose the input to the current phase is a good clustering $\cset$, and let $G'$ be the corresponding legal contracted graph. We find the partition $\set{X_1,\ldots,X_{\ell_0}}$ of $V(G')\setminus \tset$, and compute, for each $1\leq j\leq \ell_0$, an acceptable clustering $\cset_j$ exactly as before. Our only departure from the proof of Theorem~\ref{thm: meta-tree} is that we replace Theorem~\ref{thm: iteration} with the following theorem.

 \begin{theorem}\label{thm: iteration-generalized}
   Suppose we are given a collection $\set{S_1,\ldots,S_{\ell_0}}$ of
   disjoint vertex subsets of $G$, where for all $1\leq j\leq \ell_0$,
   $S_j\cap \tset=\emptyset$. Then there is an efficient randomized
   algorithm, that with high probability computes one of the following:
 
 \begin{itemize}
  \item either an $(w_0,\alpha)$-violating partition $(X,Y)$ of $S_j$, for
   some $1\leq j\leq \ell_0$;
 
 \item or a partition $(A,B)$ of $V(G)$ with $S_j\sse A$, $\tset\sse B$ and
   $|E_G(A,B)|<w_0/2$, for some $1\leq j\leq \ell_0$;

 \item or a valid output for Theorem~\ref{thm: meta-tree extended}, that is:
 
 \begin{itemize}
 \item a subgraph $G^*$ of $G$, such that $\tset'$ is $\Omega(\alphaWL/(\Delta \log^4k))$-well-linked in $G^*$;
 \item a tree-of-sets system $(\sset,T,\bigcup_{e\in
    E(T)}\pset_e)$ in $G^*$ with parameters $h',r,\alphawl$, where for each $S_i\in \sset$, $S_i\cap \tset=\emptyset$, and $S_i$ has the $\alphawl$-bandwidth property in $G^*$; and
    \item a set $\qset$ of $\floor {2h'/3}$ node-disjoint paths connecting the terminals in $\tset'$ to the vertices of $\bigcup_{S_i\in \sset}S_i$ in $G^*$, such that the paths in $\qset$ do not contain any vertices of $V\left (\bigcup_{e\in
    E(T)}\pset_e\right )$, and are internally disjoint from $\bigcup_{S_i\in \sset}S_i$.
\end{itemize}
  \end{itemize}
 \end{theorem}

Just as in the proof of Theorem~\ref{thm: meta-tree}, the proof of Theorem~\ref{thm: meta-tree extended} follows from the proof of Theorem~\ref{thm: iteration-generalized}: We start with the initial collection $\cset_1,\ldots,\cset_{\ell_0}$ of acceptable clusterings, where for each $1\leq j\leq \ell_0$, $\phi(\cset_j)\leq \phi(\cset)-1$. If any of these clusterings $\cset_j$ is a good clustering, then we terminate the phase and return this clustering. Otherwise, each clustering $\cset_j$ must contain a large cluster $S_j\in \cset_j$. We then iteratively apply Theorem~\ref{thm: iteration-generalized} to clusters $\set{S_1,\ldots,S_{\ell_0}}$. If the outcome is a valid output for Theorem~\ref{thm: meta-tree extended}, then we terminate the algorithm and return this output. Otherwise, we obtain either an $(w_0,\alpha)$-violating partition of some cluster $S_j$, or a partition $(A,B)$ of $V(G)$ with $S_j\sse A$, $\tset\sse B$ and
   $|E_G(A,B)|<w_0/2$, for some $1\leq j\leq \ell_0$. We then apply the appropriate action:
 $\partition(S_j,X,Y)$, or $\separate(S_j,A)$ to the clustering
 $\cset_j$, and obtain an acceptable clustering
 $\cset'_j$, with $\phi(\cset'_j)\leq \phi(\cset_j)-1/n$. If $\cset'_j$ is a good clustering, then we terminate the phase and return $\cset'_j$. Otherwise, we select any large cluster $S'_j$ in $\cset'_j$, replace $S_j$ with $S'_j$ and continue to the next iteration. As before, we are guaranteed that after polynomially-many iterations, the algorithm will terminate with the desired output.

From now on we focus on proving Theorem~\ref{thm: iteration-generalized}.

\begin{proofof}{Theorem~\ref{thm: iteration-generalized}}
Given the input collection $\set{S_1,\ldots,S_{\ell_0}}$ of vertex subsets, we run the algorithm from Theorem~\ref{thm: iteration} on it. If the outcome is 
an $(w_0,\alpha)$-violating partition $(X,Y)$ of $S_j$, for
   some $1\leq j\leq \ell_0$, or  a partition $(A,B)$ of $V(G)$ with $S_j\sse A$, $\tset\sse B$ and
   $|E_G(A,B)|<w_0/2$, for some $1\leq j\leq \ell_0$, then we terminate the algorithm and return this partition.
   
Therefore, we can assume from now on that the algorithm from Theorem~\ref{thm: iteration} has computed a good tree-of-sets system $(\sset,T,\bigcup_{e\in E(T)}\pset_e)$ in $G$, where $\sset=\set{S_1,\ldots,S_{\ell_0}}$. Let $U=\bigcup_{S_i\in \sset}S_i$. 
Consider the graph $G_1$ obtained by the union of the paths in $\bigcup_{e\in E(T)}\pset_e$ and $\bigcup_{i=1}^{\ell_0}G[S_i]$. Observe that $|\out_{G_1}(S_i)|\leq 2h$ for all $S_i\in \sset$. In the next two steps, we add new vertices and edges to $G_1$, to ensure that the terminals of $\tset'$ have the well-linkedness property, and that they can be routed to $U$.

\paragraph{Step 1: Ensuring well-linkedness of $\tset'$}
Let $G'$ be the graph obtained from $G$ by contracting each cluster $S_i\in \sset$ into a super-node $v_i$. Notice that the set $\tset'$ remains $1$-well-linked in $G'$, though the node-well-linkedness is lost due to the contractions. We will embed an expander $X$ on $h'$ vertices into $G'$, as follows. Assume that $\tset'=\set{t_1,\ldots,t_{h'}}$, and let $V(X)=\set{u_1,\ldots,u_{h'}}$. For each $1\leq i\leq h'$, we embed $u_i$ into $t_i$. In order to construct the edges of the expander, we employ the cut-matching game. We start with $X$ containing no edges. For each $1\leq j\leq \gammaKRV(h')$, we use the cut player of the cut-matching game to compute a partition $(A,B)$ of $V(X)$ into equal-sized subsets. Since $\tset'$ is $1$-well-linked in $G'$, we can find a set $\rset_j$ of $h'/2$ edge-disjoint paths connecting the vertices of $A'=\set{t_i\mid u_i\in A}$ to the vertices of $B'=\set{t_i\mid u_i\in B}$. We assume w.l.o.g. that the paths in $\rset_j$ are simple. The set $\rset_j$ of paths naturally defines a complete matching $M_j$ between $A$ and $B$. We add the edges of $M_j$ to $X$, and embed each edge $e=(v_i,v_{i'})\in M_j$ into the corresponding path in $\rset_j$, that connects $t_i$ to $t_{i'}$. 
Let $\rset=\bigcup_{j=1}^{\gammaKRV(h')}\rset_j$.
Then after $\gammaKRV(h')$ iterations, we construct an $\alphaKRV(h')$-expander $X$, whose maximum vertex degree is $\gammaKRV(h')\leq O(\log^2k)$, and an embedding of $X$ into $G'$, where each vertex of $X$ is embedded into a distinct terminal of $\tset'$, and each edge of $X$ is embedded into a distinct path in $\rset$, such that each edge of $G'$ participates in at most $\gammaKRV(h')$ paths in $\rset$.

Let $G_2$ be the graph obtained from the union of $G_1$ and the edges participating in the paths in $\rset$. Since the paths in $\rset$ are simple, for each cluster $S_j\in \sset$, $|\out_{G_2}(S_j)|\leq 2h+h'\cdot\gammaKRV(h')$.
We will later ensure that each cluster $S_j\in \sset$ has the $\alphaWL$-bandwidth property in $G_2$. For now, we prove the following claim.

\begin{claim}\label{claim: well-linkedness of terminals}
Assume that each cluster $S_j\in \sset$ has the $\alphaWL$-bandwidth property in $G_2$. Then set $\tset'$ is $\Omega(\alphaWL/(\Delta \log^4k))$-well-linked in $G_2$.
\end{claim}

\begin{proof}
Let $(A,B)$ be any partition of $V(G_2)$. Denote $\tset_A=\tset'\cap A$, $\tset_B=\tset'\cap B$, and assume w.l.o.g. that $|\tset_A|\leq |\tset_B|$. We denote $|\tset_A|=\kappa$. We prove that $|E_{G_2}(A,B)|\geq \kappa\alphaWL/(\Delta(\gammaKRV(h')^2)=\Omega(\alphaWL/(\Delta \log^4k))\cdot \kappa$.

We start by defining two subsets of vertices of $X$: $U_A=\set{v_i\mid t_i\in \tset_A}$, and  $U_B=\set{v_i\mid t_i\in \tset_B}$. Since the expander $X$ is $\alphaKRV(h')$-well-linked, there is a set of $\kappa$ edge-disjoint paths connecting $U_A$ to $U_B$ in $X$.
Since the maximum vertex degree in $X$ is $\gammaKRV(h')$, by sending $1/\gammaKRV(h')$ flow units along each such path, and using the integrality of flow, we conclude that there is a set $\qset$ of at least $\kappa/\gammaKRV(h')$ node-disjoint paths connecting the vertices of $U_A$ to the vertices of $U_B$ in $X$.

For each path $Q\in \qset$, we define a path $Q'$ in the contracted graph $G'$, as follows. For each vertex $u_i\in V(Q)$, we include the corresponding terminal $t_i$ in $Q'$. For each edge $e\in E(Q)$, we include the path $P_e$ into which $e$ is embedded in $G'$. Let $\qset'=\set{Q'\mid Q\in \qset}$ be the resulting set of $\kappa/\gammaKRV(h')$ paths. Then every path in $\qset'$ connects a distinct vertex of $\tset_A$ to a distinct vertex of $\tset_B$, and, since each edge of $G'$ participates in at most $\gammaKRV(h')$ paths in $\rset$, the paths in $\qset'$ cause edge-congestion at most $\gammaKRV(h')$ in $G'$. 

Assume for contradiction that $|E_{G_2}(A,B)|< \kappa\alphaWL/(\Delta(\gammaKRV(h')^2)$. We construct a partition $(A',B')$ of $V(G')$, such that $\tset_A\subseteq A'$, $\tset_B\subseteq B'$, and $E_{G'}(A',B')<\kappa/(\gammaKRV(h'))^2$. Since every path in $\qset'$ must contain an edge in $E_{G'}(A',B')$, and each edge belongs to at most $\gammaKRV(h')$ such paths, while $|\qset'|=\kappa/\gammaKRV(h')$, this is a contradiction.

From now on we focus on constructing a partition $(A',B')$ of $V(G')$ with the above properties.
In order to do so, we gradually transform the sets $A$ and $B$ to ensure that for each $S_i\in \sset$, either $S_i\subseteq A$, or $S_i\subseteq B$ holds. We process the clusters $S_i$ one-by-one. Consider an iteration when a cluster $S_i$ is processed. If $S_i$ is completely contained in $A$ or $B$, then we do nothing. Otherwise, let $\Gamma$ be the interface vertices of $S_i$. If $|\Gamma\cap A|\geq |\Gamma\cap B|$, then we move all vertices of $S_i$ to $A$; otherwise we move all vertices of $S_i$ to $B$. Assume w.l.o.g. that the former happened. Let $A_i=A\cap S_i$, and $B_i=B\cap S_i$, and let $E_i=E_{G_2}(A_i,B_i)$. As the result of this procedure, the edges in $E_i$ no longer belong to $E_{G_2}(A,B)$, but the edges of $\out(S_i)\cap \out(B_i)$ were added to $E_{G_2}(A,B)$. Since $S_i$ has the $\alphaWL$-bandwidth property, $|E_{G_2}(A_i,B_i)|\geq \alphaWL|\Gamma\cap B|$, and since $|\out(S_i)\cap \out(B_i)|\leq \Delta\cdot |\Gamma\cap B|$, we get that $|\out(S_i)\cap \out(B_i)|\leq \frac{\Delta}{\alphaWL}|E_G(A_i,B_i)|$. We charge the edges of $\out(S_i)\cap \out(B_i)$ to the edges of $E_i$. Notice that we only charge edges connecting different clusters to edges internal to the clusters $S_i$. In particular, the edges of $E_i$ will never be charged again, and the edges of $\out(S_i)\cap \out(B_i)$ cannot be charged in the future. Let $(\tilde A,\tilde B)$ denote the final partition of $V(G_2)$ once the algorithm terminates. Then $\tset_A\subseteq \tilde A$ and $\tset_B\subseteq \tilde B$, as the terminals do not belong to the clusters $S_i$, and $|E_{G_2}(\tilde A,\tilde B)|\leq \frac{\Delta}{\alphaWL}|E_{G_2}(A,B)|<\kappa/(\gammaKRV(h'))^2$. Since for each cluster $S_i$, either $S_i\subseteq \tilde A$, or $S_i\subseteq \tilde B$, partition $(\tilde A,\tilde B)$ of $V(G_2)$ naturally defines a partition $(A',B')$ of $V(G')$, such that $\tset_A\subseteq A'$, $\tset_B\subseteq B'$, and $E_{G'}(A',B')<\kappa/(\gammaKRV(h'))^2$, a contradiction.

\end{proof}

\paragraph{Step 2: Connecting the terminals}

Recall that the algorithm ensures that each set $S_j\in \sset$ can send $w_0/2$ flow units to the terminals with no edge-congestion (or we can execute PROCEDURE PARTITION).  Scaling the flow for $S_1$ down by factor $\Delta$ and using the integrality of flow, we conclude that there is a set $\tset_1$ of $w_0/(2\Delta)$ terminals, and a set $\qset_1$ of $w_0/(2\Delta)$ node-disjoint paths connecting the terminals in $\tset_1$ to the vertices of $U$. We discard paths from $\qset_1$ until $|\qset_1|=2h'$ holds (since $2h'<h<w_0/(2\Delta)$, this is possible), and we discard from $\tset_1$ terminals that do not serve as endpoints for the paths in $\qset_1$, so $|\tset_1|=2h'$.

Partition the set $\tset_1$ into two equal-sized subsets, $\tset_1'$ and $\tset_2'$, each containing $h'$ terminals. Since the set $\tset$ of terminals is node-well-linked in $G$, there is a set $\qset_2$ of $h'$ node-disjoint paths from the terminals of $\tset'$ to the terminals of $\tset_1'$, and a set $\qset_2'$ of $h'$ node-disjoint paths from the terminals of $\tset'$ to the terminals of $\tset_2'$ in $G$. Taking the union of $\qset_2$ and $\qset_2'$, and concatenating them with the paths in $\qset_1$, we obtain a collection $2h'$ paths, connecting the terminals in $\tset'$ to the vertices of $U$ with total vertex-congestion of at most $3$. By sending $1/3$ flow unit along each such path, we obtain a flow of value $2h'/3$ from $\tset'$ to $U$ with no vertex-congestion. From the integrality of flow, there is a collection $\tilde{Q}$ of $\floor {2h'/3}$ node-disjoint paths from $\tset'$ to $U$. We assume w.l.o.g. that the paths in $\tilde Q$ do not contain the vertices of $U$ as inner vertices, by suitably truncating them if necessary. 

Let $G^*$ be the graph obtained from the union of $G_2$ and the paths in $\tilde Q$. For each $1\leq j\leq \ell_0$, $|\out_{G^*}(S_j)|\leq 2h+h'+h'\gammaKRV(h')<w_0$. Therefore, we can check whether each set $S_i\in \sset$ has the $\alphawl$-bandwidth property in $G^*$ as before: Let $\Gamma_i$ be the interface of the set $S_i$ in
$G^*$. We set up a sparsest cut problem instance on the graph
$G[S_i]$ and the set $\Gamma_i$ of terminals, and apply algorithm
\algSC to it. If the outcome is a cut of sparsity less than $\alpha$,
then, since $|\Gamma_i|<w_0$, we obtain an $(w_0,\alpha)$-violating
partition of $S_i$ in graph $G$. We return this cut as the outcome of the
algorithm. If \algSC returns a cut of sparsity at least $\alpha$ for
each set $S_i$, for $1\leq i\leq r$, then we are guaranteed that each
such set has the $\alphaWL$-bandwidth property in $G^*$.
Since $G_2\subseteq G^*$, for each cluster $S_i\in \sset$, the interface vertices of $S_i$ in $G_2$ are a subset of the interface vertices of $S_i$ in $G^*$, and so $S_i$ has the $\alphaWL$-bandwidth property in $G_2$ as well. From Claim~\ref{claim: well-linkedness of terminals}, terminals of $\tset'$ are $\Omega\left(\frac{\alphaWL}{\Delta\log^4k}\right )$-well-linked in $G_2$ and hence in $G^*$.
We conclude with the following claim.

\begin{claim}\label{claim: connecting the terminals}
We can efficiently find a set $\qset$ of $\floor{2h'/3}$ node-disjoint paths connecting the terminals of $\tset'$ to the vertices of $U$ in $G^*$, and for each $e\in E(T)$, a subset $\pset'_e\subseteq \pset_e$ of $h'$ paths, such that the paths in $\qset\cup\left(\bigcup_{e\in E(T)}\pset_e'\right )$ are node-disjoint, and the paths in $\qset$ are internally disjoint from $U$.
\end{claim}
\begin{proof}
The only remaining problem is that the paths in $\tilde \qset$ may not be disjoint from the paths in $\pset^*=\bigcup_{e\in E(T)}\pset_e$.
In order to overcome this difficulty, we re-route the paths in $\tilde \qset$, and discard up to $2h'$ paths from $\pset^*$. This will give us a tree-of-sets system with slightly weaker parameters, where $h$ is replaced by $h'$, but now the paths in $\tilde \qset$ will be disjoint from the paths in sets $\pset_e$ of the new tree-of-sets system.

If any vertex $v\in U$ serves as an endpoint of some path in $\tilde Q$, and some path $P\in \pset^*$, then we discard $P$ from $\pset^*$, and delete the edges of $P$ that do not belong to paths in $\tilde \qset$ from the graph $G^*$. Notice that we discard at most $h'$ paths from $\pset^*$ in this step. 
For every set $S_j\in \sset$, for every vertex $v\in S_j$, $v$ may now belong to at most one path in $\pset^*\cup \tilde\qset$. 

If some path $P\in \pset^*$ contains only one edge, then $P$ is disjoint from every path in $\tilde Q$. We ignore all such paths $P$. Let $\pset'\subset \pset^*$ be the set of all remaining paths, that have not been deleted so far, and which contain more than one edge. For each path $P\in \pset'$, let $v$ be any inner vertex of $P$. We split $P$ into two sub-paths at the vertex $v$, $P_1,P_2$, both of which are directed away from $v$. We let $\xset$ be the resulting set of paths, after we unify all vertices of $U$ into a destination vertex $s$. We let $\yset$ be the set $\tilde Q$ of paths, which are all directed towards $s$. We now use Lemma~\ref{lemma: re-routing of vertex-disjoint paths} to find a subset $\xset'\subset \xset$ of $|\xset|-h'$ paths, and for each $Q\in \yset$, a path $\hat Q$ with the same endpoints as $Q$, such that, if we denote $\yset'=\set{\hat Q\mid Q\in \yset}$, then all paths in $\xset'\cup \yset'$ are pairwise disjoint, except for sharing the last vertex $s$.

The final set $\qset$ of $\floor {2h'/3}$ node-disjoint paths, connecting the terminals of $\tset'$ to the vertices of $U$ in $G^*$ is defined by the set $\yset'$ of paths. For each edge $e\in E(T)$, we now define the corresponding subset $\pset'_e\subseteq \pset_e$ of $h'$ paths, as follows. Consider some path $P\in \pset_e$. If $P$ has only one edge, we do nothing. Otherwise, consider the two corresponding paths $P_1,P_2$ that we have constructed from $P$. If $P_1$ or $P_2$ do not belong to $\xset'$, then we discard $P$ from $\pset_e$. Notice that the total number of discarded paths is now at most $2h'$. Therefore, at least $h'$ paths remain in $\pset_e$. We let $\pset'_e$ be any subset of $h'$ of the remaining paths.
\end{proof}

This completes the construction of the tree-of-sets system $(\sset, T,\bigcup_{e\in E(T)}\pset'_e)$ with parameters $r,h',\alphawl$, and the set $\qset$ of $\floor {2h'/3}$ paths connecting the terminals of $\tset'$ to the vertices of $\bigcup_{S_j\in \sset}S_j$. The paths in $\qset$ are now guaranteed to be disjont from the paths in $\bigcup_{e\in E(T)}\pset'_e$, and they do not contain the vertices of $\bigcup_{S_j\in \sset}S_j$ as inner vertices. From the above discussion, each set $S_j\in \sset$ has the $\alphawl$-bandwidth property in graph $G^*$, and the set $\tset'$ of terminals is $\Omega(\alphaWL/(\Delta\log^4k))$-well-linked in $G^*$.
\end{proofof}
\end{proof}

Finally, we prove the following generalization of Theorem~\ref{thm: path-of-sets: main}
\begin{theorem}\label{thm: path-of-sets: extended}
 Suppose we are given a graph $G$ of maximum vertex degree
  $\Delta$, and a subset $\tset$ of $k$ vertices called
  terminals, such that $\tset$ is node-well-linked in $G$, and the
  degree of every vertex in $\tset$ is $1$. Let $w^*,\ell^*>2$ be integral parameters, such that for some large enough constants $\tc$ and $\tc'$, $k/\log^{\tc}k>\tc'' w^*(\ell^*)^{48}$. Assume that we are also given some subset $\tset'\subset
  \tset$ of $h'$ terminals.  Then there is an efficient randomized
  algorithm that with high probability computes a subgraph $G^*$ of
  $G$, and a strong path-of-sets system $(\sset,\bigcup_{i=1}^{\ell^*-1}\pset_i)$ of height $w^*$ and width $\ell^*$ in $G^*$, such that:
  
  \begin{itemize}
  \item For all $S_i\in \sset$, $S_i\cap \tset=\emptyset$, and $S_i$ has the $\alphawl$-bandwidth property in $G^*$; 
  \item The set $\tset'$ of terminals is $\Omega(\alphawl/(\Delta\log^4k))$-well-linked in $G^*$; and
  \item There is a set $\qset$ of $\floor{2h'/3}$ node-disjoint paths
    connecting the terminals in $\tset'$ to the vertices of
    $\bigcup_{S_i\in \sset}S_i$ in graph $G^*$, such that the paths in
    $\qset$ do not contain any vertices of $V\left (\bigcup_{1\leq i<r}\pset_i\right )$, and are internally disjoint from
    $\bigcup_{S_i\in \sset}S_i$.
  \end{itemize}
\end{theorem} 

\begin{proof}
We assume that $k$ is large enough, so, e.g. $k^{1/30}>c^*\log k$
  for some large enough constant $c^*$. 
  
  We can now assume that $\frac{k}{\Delta^{19}\log^8k}>\hat c w^*(\ell^*)^{48}$ for some large enough constant $\hat c$.

 We set $r=(\ell^*)^2$ and $h=\frac{\hat c}{c}\cdot w^*(\ell^*)^{10}\Delta^{11}\log^4k$, so $h>4\log k$ holds.
 Clearly:
 
 \[ch\ell^{19}\Delta^8=(\hat c w^*(\ell^*)^{10}\Delta^{11}\log^4k)\cdot (\ell^*)^{38}\Delta^8=\hat c w^*(\ell^*)^{48}\Delta^{19}\log^{4k}.\]
 
 Therefore, $\frac{k}{\log^4k}>ch\ell^{19}\Delta^8$.
  We then apply Theorem~\ref{thm:
    meta-tree} to $G$ and $\tset$ to obtain a tree-of-sets system
  $(\sset,T,\bigcup_{e\in E(T)}\pset_e)$, with parameters
  $r$, $h$ and $\alphabw =
  \Omega(\frac{1}{\ell^2 \log^{1.5} k})$. 

  We use Lemma~\ref{lem:strong-tree-of-set-system} to convert
  $(\sset,T,\bigcup_{e\in E(T)}\pset_e)$ into a strong tree-of-set
  system $(\sset,T,\bigcup_{e\in E(T)}\pset^*_e)$ with parameters $r$
  and $\tilde{h} = \Omega(\frac{\alphabw^2}{\Delta^{11}
    (\alphasc(h))^2} \cdot h)$. If $\hat c$ is chosen to be large enough, 
  $\tilde h> 16w^*(\ell^*)^2+1$ must hold. We then apply
  Theorem~\ref{thm:tree-of-set-to-path-of-set} to obtain a path-of-set
  system with height $w^*$ and width
  $\ell^*$.
\end{proof}